\newtheorem{thm}{Theorem}
\newtheorem{lem}{Lemma}
\newtheorem{assum}{Assumption}
\newtheorem{rem}{Remark}
\newtheorem{cor}{Corollary}
	\providecommand\BibTeX{{%
			\normalfont B\kern-0.5em{\scshape i\kern-0.25em b}\kern-0.8em\TeX}}}
\journal{Applied Soft Computing}
\begin{document}
\captionsetup[figure]{labelfont={bf},labelformat={default},labelsep=period,name={Fig.}}
\begin{frontmatter}
\title{Community detection by spectral methods in multi-layer networks}
\author[label1]{Huan Qing\corref{cor1}}
\ead{qinghuan@u.nus.edu 
\&~qinghuan07131995@163.com
\&~qinghuan@cqut.edu.cn}
\cortext[cor1]{Corresponding author.}
\address[label1]{School of Economics and Finance, Chongqing University of Technology, Chongqing, 400054, China}
\begin{abstract}
Community detection in multi-layer networks is a crucial problem in network analysis. In this paper, we analyze the performance of two spectral clustering algorithms for community detection within the framework of the multi-layer degree-corrected stochastic block model (MLDCSBM) framework. One algorithm is based on the sum of adjacency matrices, while the other utilizes the debiased sum of squared adjacency matrices. We also provide their accelerated versions through subsampling to handle large-scale multi-layer networks. We establish consistency results for community detection of the two proposed methods under MLDCSBM as the size of the network and/or the number of layers increases. Our theorems demonstrate the advantages of utilizing multiple layers for community detection. Our analysis also indicates that spectral clustering with the debiased sum of squared adjacency matrices is generally superior to spectral clustering with the sum of adjacency matrices. Furthermore, we provide a strategy to estimate the number of communities in multi-layer networks by maximizing the averaged modularity. Substantial numerical simulations demonstrate the superiority of our algorithm employing the debiased sum of squared adjacency matrices over existing methods for community detection in multi-layer networks, the high computational efficiency of our accelerated algorithms for large-scale multi-layer networks, and the high accuracy of our strategy in estimating the number of communities. Finally, the analysis of several real-world multi-layer networks yields meaningful insights.
\end{abstract}
\begin{keyword}
Multi-layer networks\sep community detection \sep spectral clustering \sep bias-adjusted\sep multi-layer degree-corrected stochastic block model
\end{keyword}
\end{frontmatter}
\section{Introduction}\label{sec1}
Multi-layer networks, also known as multiplex networks, refer to a type of complex network structure that comprises multiple layers, each representing a distinct type of relationship or interaction between nodes \citep{mucha2010community,kivela2014multilayer,boccaletti2014structure}. These networks provide a powerful framework for modeling real-world systems, where entities are often involved in diverse types of relationships that span multiple dimensions. The significance of multi-layer networks lies in their ability to capture the rich nature of complex systems, enabling more accurate and comprehensive analyses \citep{pilosof2017multilayer,de2023more}. Compared to single-layer networks, multi-layer networks contain richer information and offer a more nuanced understanding of system structure and dynamics. In real-world applications, multi-layer networks have found widespread use in various fields, including social science, biology, transportation, neuroscience, etc. For instance, in social networks, individuals may be connected through email, messaging, social media, etc., each represented by a separate layer \citep{papalexakis2013more, jia2023clustering}. In gene co-expression multi-layer networks, genes co-express at different developmental stages of the animal, and each developmental state represents a layer \citep{narayanan2010simultaneous,bakken2016comprehensive,zhang2017finding}. In transportation systems, roads, railways, and airways may be viewed as separate layers, interconnected and influencing each other \citep{boccaletti2014structure}. In this paper, we focus on a specific type of multi-layer network with the same nodes set of each layer and nodes being solely connected within their respective layers, precluding cross-layer connections.

One of the key challenges in multi-layer networks is the problem of community detection, which is significant in the fields of statistical science, social science, and computer science, as it offers valuable insights into the underlying structure of complex systems. In recent years, community detection in multi-layer networks has garnered significant attention, particularly under the multi-layer stochastic block model (MLSBM). Within this framework, the network for each layer is generated from the classical stochastic block model (SBM) \citep{holland1983stochastic}. \citep{han2015consistent} investigated the consistency of community detection for the spectral method using the sum of adjacency matrices, in scenarios where the number of layers increases while the number of nodes remains fixed under MLSBM. \citep{pensky2019spectral} proposed a variant of the spectral method for nodes that might switch their community memberships across consecutive layers in MLSBM. \citep{paul2020spectral} analyzed the consistency of the spectral method using the sum of adjacency matrices and the matrix factorization method within the MLSBM framework. \citep{jing2021community} studied the consistency of a regularized tensor decomposition approach under a mixed multi-layer SBM, assuming that both nodes and layers have community structures. \citep{lei2020consistent} delved into least-squares estimation and established its consistency under MLSBM. More recently, \citep{lei2023bias} introduced a novel bias-adjusted (i.e., debiased) sum of squared adjacency matrices and proved the consistency of bias-adjusted spectral clustering in MLSBM. Subsequently, \citep{su2023spectral} extended this bias-adjusted spectral method to detect communities in directed networks, establishing its consistency under the directed version of MLSBM. In this paper, we focus specifically on the problem of community detection in multi-layer networks that share a common community structure.

Though MLSBM is powerful in modeling multi-layer networks with community structures, similar to SBM, it cannot capture multi-layer networks in which nodes have heterogeneous degrees since it assumes that the probability of generating an edge between two nodes merely depends on their communities. For single networks, the degree-corrected stochastic block model (DCSBM) \citep{karrer2011stochastic} extended SBM by considering a heterogeneity parameter such that the probability of generating an edge between two nodes depends both on their communities and themselves. Many effective methods have been developed to detect communities under DCSBM \citep{qin2013regularized,lei2015consistency, SCORE,chen2018convexified,jing2022community,qing2024applications}. In this paper, we consider the community detection problem under the multi-layer degree-corrected stochastic block model by showing the consistency of two spectral methods. Meanwhile, though numerical results in \citep{lei2023bias,su2023spectral} show that spectral methods using the bias-adjusted sum of squared adjacency matrices outperform spectral methods using the sum of adjacency matrices, a theoretical understanding of this phenomenon is still a mystery. In this paper, we will reveal this mystery theoretically. Our main contributions are as follows:
\begin{itemize}
  \item We propose two spectral methods for community detection within the framework of the multi-layer degree-corrected stochastic block model. The first approach relies on the conventional sum of adjacency matrices, whereas the second incorporates the new bias-adjusted sum of squared adjacency matrices.We also provide their accelerated versions based on the idea of subsampling to detect communities in large-scale multi-layer networks.
  \item We derive theoretical upper bounds on the error rates for the two methods and establish their consistency as the number of nodes and/or layers increases under the multi-layer degree-corrected stochastic block model. Our theoretical findings underscore the advantages of leveraging multiple layers for community detection, as the sparsity requirement for multi-layer networks is significantly weaker compared to that of single networks, and community detection becomes more precise as the number of layers increases.
  \item We conduct a systematic comparison of the theoretical upper bounds on error rates for both methods and observe that, while the spectral method rooted in the sum of adjacency matrices may appear superior to the spectral method employing the bias-adjusted sum of squared adjacency matrices occasionally, it necessitates an unreasonably high number of layers, rendering it impractical in most real-world scenarios. This accounts for the consistent superiority of the latter approach.
  \item We estimate the number of communities in multi-layer networks by maximizing the average modularity \cite{paul2021null}. We then assess the performance of our two proposed methods against several existing approaches, evaluate the computational efficiency of their accelerated versions, and demonstrate the accuracy of our strategy in determining the number of communities on both simulated and real-world multi-layer networks.
\end{itemize}

The rest of this paper is organized as follows. Section \ref{sec2} describes the multi-layer degree-corrected stochastic block model and the clustering error considered in this paper. Section \ref{sec3} presents our spectral methods. Section \ref{sec4} describes the consistency results and the comparison results. Section \ref{secK} presents a strategy for the estimation of the number of communities in multi-layer networks. Section \ref{sec5} and Section \ref{sec6realdata} present our simulation study and real data analysis, respectively. Section \ref{sec7} concludes. All proofs are given in \ref{SecProofs}.

\emph{Notation.} For any positive integer $m$, $[m]$ denotes the set $\{1,2,\ldots,m\}$ and $I_{m\times m}$ denotes the $m$-by-$m$ identity matrix. For any value $a$, $\mathrm{round}(a)$ means the integer closest to $a$. For any vector $x$, $\|x\|_{q}$ denotes its $l_{q}$ norm for any $q>0$. For any matrix $M$, $M', \|M\|, \|M\|_{F}, \mathrm{rank}(M), M(i,:), M(:,\mathcal{I})$, and $\lambda_{k}(M)$ denote its transpose, spectral norm, Frobenius norm, rank, $i$-th row, columns in the index set $\mathcal{I}$ of $M$, and $k$-th largest eigenvalue ordered by the magnitude. $\mathbb{E}[\cdot]$ and $\mathbb{P}(\cdot)$ denote expectation and probability, respectively. For convenience, we summarize the main symbols used in this paper in Table \ref{table-symbol}.
\begin{table*}[ht]
\centering
\rowcolors{1}{white!22}{white!22}
\begin{tabular}{cc|cc}
\hline
Symbol&Meaning&Symbol&Meaning\\
\hline
$n$&Number of nodes&$L$&Number of layers\\
$A_{l}\in\{0,1\}^{n\times n}$&$l$-th adjacency matrix&$K$&Number of communities\\
$\mathcal{C}_{k}$&$k$-th community&$\hat{\mathcal{C}}_{k}$&$k$-th estimated community\\
$\ell$&$n\times1$ node label vector&$\hat{\ell}$&$n\times1$ estimated node label vector\\
$Z\in\{0,1\}^{n\times K}$&Community membership matrix&$n_{k}$&Size of $k$-th community\\
$n_{\mathrm{min}}$&Minimum community size&$n_{\mathrm{max}}$&Maximum community size\\
$\theta\in[0,1]^{n\times1}$&Degree heterogeneity parameter vector&$\Theta$&Diagonal form of $\theta$\\
$\theta_{\mathrm{min}}$&Minimum entry of $\theta$&$\theta_{\mathrm{max}}$&Maximum entry of $\theta$\\
$B_{l}\in[0,1]^{K\times K}$&$l$-th connectivity matrix&$\Omega_{l}$&$A_{l}$'s expectation\\
$\mathcal{B}$&$\{B_{1}, B_{2}, \ldots, B_{L}\}$&$\hat{f}$&Clustering error\\
$\mathcal{P}_{K}$&Set of all permutations of $\{1,2,\ldots,K\}$&$\pi$&Permutation\\
$A_{\mathrm{sum}}$&$\sum_{l\in[L]}A_{l}$&$\Omega_{\mathrm{sum}}$&$\sum_{l\in[L]}\Omega_{l}$\\
$D_{l}$&$n\times n$ diagonal matrix with $D_{l}(i,i)=\sum_{j\in[n]}A_{l}(i,j)$&$S_{\mathrm{sum}}$&$\sum_{l\in[L]}(A^{2}_{l}-D_{l})$\\
$\tilde{S}_{\mathrm{sum}}$&$\sum_{l\in[L]}\Omega^{2}_{l}$&$U\in\mathbb{R}^{n\times K}$&$K$ leading eigenvectors of $\Omega_{\mathrm{sum}}$\\
$U_{*}$&Row-normalized version of $U$&$X$&$K\times K$ matrix such that $U_{*}=ZX$\\
$\hat{U}\in\mathbb{R}^{n\times K}$&$K$ leading eigenvectors of $A_{\mathrm{sum}}$&$\hat{U}_{*}$&Row-normalized version of $\hat{U}$\\
$V\in\mathbb{R}^{n\times K}$&$K$ leading eigenvectors of $\tilde{S}_{\mathrm{sum}}$&$V_{*}$&Row-normalized version of $V$\\
$Y$&$K\times K$ matrix such that $V_{*}=ZY$&$\hat{V}\in\mathbb{R}^{n\times K}$&$K$ leading eigenvectors of $S_{\mathrm{sum}}$\\
$\hat{V}_{*}$&Row-normalized version of $\hat{V}$&$\hat{f}_{NSoA}$&NSoA's Clustering error\\
$n_{\mathrm{sample}}$&Subsample size of nodes&$\mathcal{S}_{\mathrm{nodes}}$&Set of selected nodes\\
$L_{\mathrm{sample}}$&Subsample size of layers&$\mathcal{S}_{\mathrm{layers}}$&Set of selected layers\\
$A^{\mathrm{sub}}_{\mathrm{sum}}$&Sample version of $A_{\mathrm{sum}}$&$S^{\mathrm{sub}}_{\mathrm{sum}}$&Sample version of $S_{\mathrm{sum}}$\\
$\varpi$&A value depending on $n$&$\tilde{U}\in\mathbb{R}^{n\times K}$&$K$ leading left-singular vectors of $A^{\mathrm{sub}}_{\mathrm{sum}}$\\
$\tilde{U}_{*}$&Row-normalized version of $\tilde{U}$&$\tilde{V}\in\mathbb{R}^{n\times K}$&$K$ leading left-singular vectors of $S^{\mathrm{sub}}_{\mathrm{sum}}$\\
$\tilde{V}_{*}$&Row-normalized version of $\tilde{V}$&$\nu$&Proportion of edges in real networks\\
$\hat{f}_{NDSoSA}$&NDSoSA's Clustering error&$\rho$&Sparsity parameter\\
$Q_{MNavrg}$&Averaged modularity metric&$Q$ and $Q_{s}$&Two K × K orthogonal matrices\\
\hline
\end{tabular}
\caption{Main symbols adopted throughout this paper.}
\label{table-symbol}
\end{table*}
\section{The multi-layer degree-corrected stochastic block model (MLDCSBM)}\label{sec2}
\begin{figure*}
\centering
\subfigure[]{\includegraphics[width=0.33\textwidth]{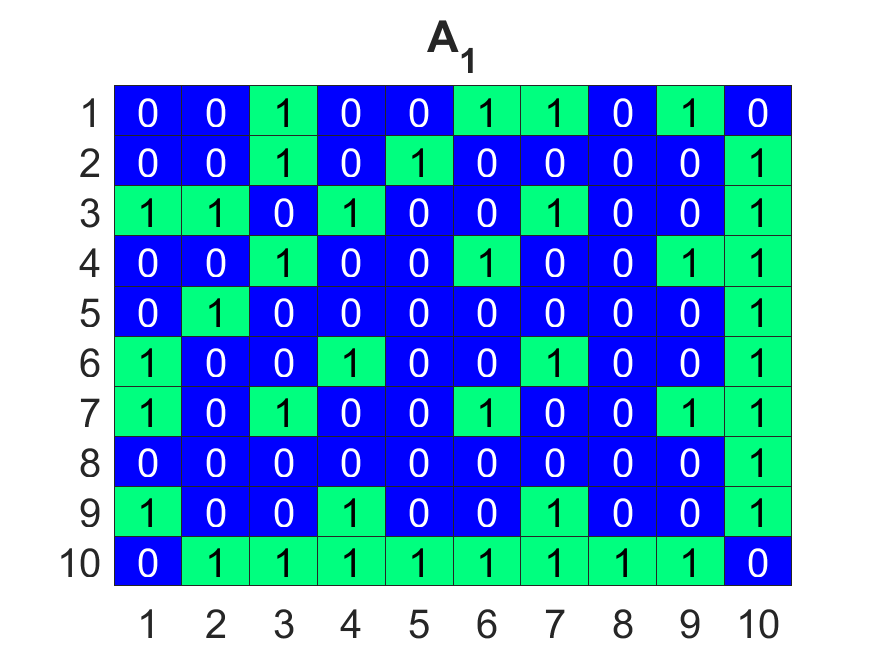}}
\subfigure[]{\includegraphics[width=0.33\textwidth]{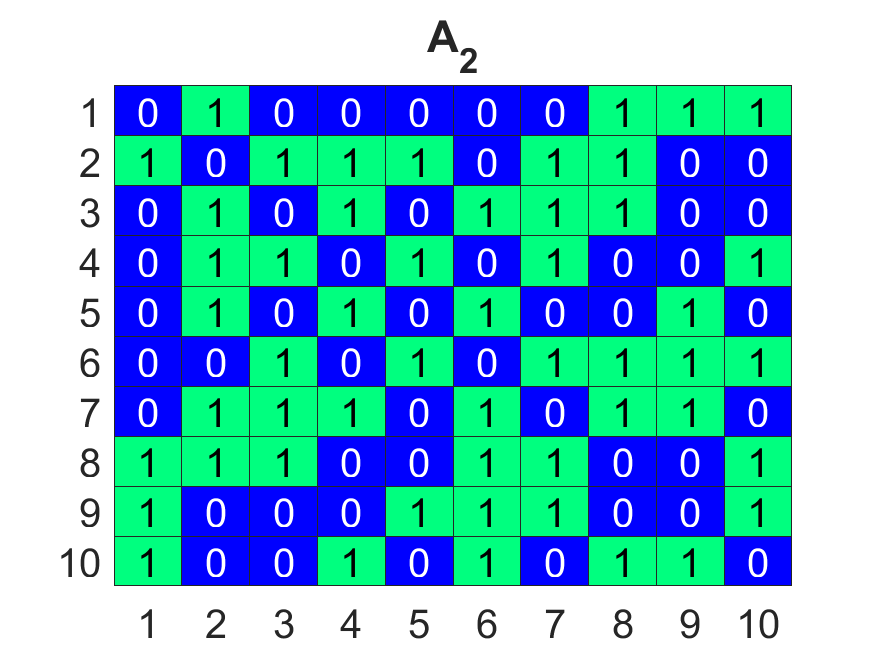}}
\subfigure[]{\includegraphics[width=0.33\textwidth]{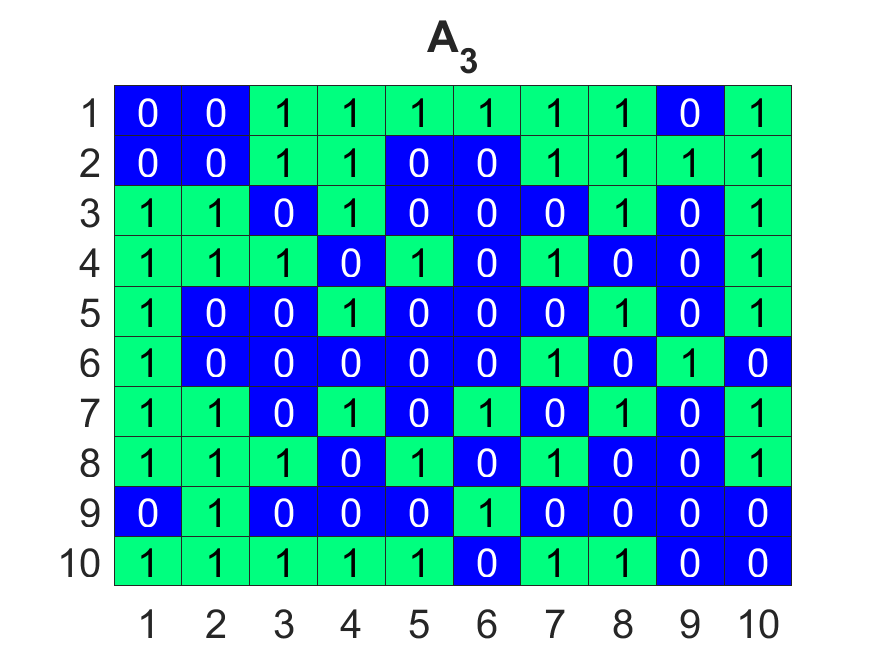}}
\caption{Adjacency matrices of a simple example of multi-layer network with 10 nodes and 3 layers.}
\label{Asim3} 
\end{figure*}
Consider the multi-layer network with $L$-layers and $n$ common nodes. For the $l$-th network, it can be represented by a binary symmetric matrix $A_{l}\in\{0,1\}^{n\times n}$ for all $l\in[L]$. In this paper, we allow the networks to have self-edges (loops). Fig.~\ref{Asim3} presents the adjacency matrices of a toy example of multi-layer network with 10 nodes and 3 layers. Assume that all the layers share a common community structure and all nodes belong to $K$ disjoint communities $\{\mathcal{C}_{1},\mathcal{C}_{2},\ldots, \mathcal{C}_{K}\}$, where $\mathcal{C}_{k}$ is a collection of nodes belonging to the $k$-th community and it has at least one node (i.e., it is non-empty) for $k\in[K]$. Throughout this paper, the number of communities $K$ is assumed to be known. Estimating $K$ with theoretical guarantees is a challenging problem and will not be pursued in this paper. Alternatively, we will introduce an efficient method without theoretical guarantees for estimating K in Section \ref{secK}. Let $\ell$ be an $n\times 1$ vector such that $\ell(i)$ takes a value from $\{1,2,\ldots,K\}$ and represents the community label for node $i$ for all $i\in[n]$. Let $Z\in\{0,1\}^{n\times K}$ be the community membership matrix such that $Z(i,k)=1$ if $\ell(i)=k$ and 0 otherwise for $i\in[n],k\in[K]$. Since each node only belongs to a single community, only the $\ell(i)$-th entry of the $i$-th row of $Z$ is 1 while the other $(K-1)$ entries are zeros. Since each community has at least one node, $Z$'s rank is $K$. Set $n_{k}$ as the number of nodes in the $k$-th community, i.e., $n_{k}=\sum_{i\in[n]}Z(i,k)$ for $k\in[K]$. Set $n_{\mathrm{min}}=\mathrm{min}_{k\in[K]}n_{k}$ and $n_{\mathrm{max}}=\mathrm{max}_{k\in[K]}n_{k}$. Let $\theta$ be an $n\times 1$ vector shared by all layers such that $\theta(i)\in(0,1]$ is the degree heterogeneity of node $i$ for $i\in[n]$. Set $\theta_{\mathrm{min}}=\mathrm{min}_{i\in[n]}\theta(i)$ and $\theta_{\mathrm{max}}=\mathrm{max}_{i\in[n]}\theta(i)$. Let $\Theta$ be an $n\times n$ diagonal matrix with the $i$-th diagonal entry being $\theta(i)$ for $i\in[n]$. Let $B_{l}\in[0,1]^{K\times K}$ be a symmetric matrix and call it block connectivity matrix for the $l$-th layer network for $l\in[L]$. In this paper, we assume that the block connectivity matrices can be different for each layer. For any pair of nodes $i\in[n],j\in[n]$ and any layer $l\in[L]$, the multi-layer DCSBM (MLDCSBM) considered in this paper assumes that each $A_{l}(i,j)$ is generated independently according to
 \begin{align}\label{AlijMLDCSBM}
 A_{l}(i,j)\sim\mathrm{Bernoulli}(\theta(i)\theta(j)B_{l}(\ell(i),\ell(j))).
 \end{align}
Under the MLDCSBM, define $\Omega_{l}\equiv\mathbb{E}[A_{l}]$ for $l\in[L]$. By Equation (\ref{AlijMLDCSBM}), for $l\in[L]$, $\Omega_{l}$ can be written as a product of matrices,
\begin{align}\label{AlMLDCSBM}
\Omega_{l}=\Theta ZB_{l}Z'\Theta.
\end{align}
Set $\mathcal{B}=\{B_{1}, B_{2},\ldots,B_{L}\}$. By Equation (\ref{AlMLDCSBM}), we know that the MLDCSBM considered in this paper is parameterized by the three model parameters $Z,\Theta$, and $\mathcal{B}$, so we denote it by $\mathrm{MLDCSBM}(Z,\Theta,\mathcal{B})$. When all entries of $\theta$ are equal, MLDCSBM reduces to the MLSBM considered in \citep{han2015consistent,paul2016consistent,paul2020spectral,lei2020consistent,lei2023bias}. When $L=1$, MLDCSBM degenerates to the popular DCSBM model \citep{karrer2011stochastic}. Since $\theta(i)\leq\theta_{\mathrm{max}}$ for all $i\in[n]$, we have $\Omega_{l}(i,j)\leq\theta^{2}_{\mathrm{max}}$ for $i\in[n],j\in[n],l\in[L]$, suggesting that $\theta_{\mathrm{max}}$ controls the overall sparsity of the multi-layer network.

Given $Z,\Theta$, and  $\{B_{l}\}^{L}_{l=1}$, $A_{l}$ can be generated from MLDCSBM via Equation (\ref{AlijMLDCSBM}) for all $l\in[L]$. Given the observed adjacency matrices $A_{1},\ldots,A_{L}$, the goal of community detection is to estimate the community label vector $\ell$. Let $\hat{\ell}\in\{1,2,\ldots,K\}^{n\times 1}$ be an estimated community label vector. Set $\hat{\mathcal{C}}_{k}=\{i: \hat{\ell}(i)=k \mathrm{~for~}i\in[n]\}$ as the set of nodes in the $k$-th estimated community for $k\in[K]$. In this paper, to measure the performance of a community detection algorithm when the ground truth community partition $\{\mathcal{C}_{1},\mathcal{C}_{2},\ldots,\mathcal{C}_{K}\}$ is known, we consider the \emph{Clustering error} introduced in \cite{joseph2016impact},
\begin{align}\label{ErrorRate}
\hat{f}=\mathrm{min}_{\pi\in \mathcal{P}_{K}}\mathrm{max}_{k\in[K]}\frac{|\mathcal{C}_{k}\cap \hat{\mathcal{C}}^{c}_{\pi(k)}|+|\mathcal{C}^{c}_{k}\cap \hat{\mathcal{C}}_{\pi(k)}|}{n_{k}},
\end{align}
where $\mathcal{P}_{K}$ denotes the set of all permutations of $\{1,2,\ldots, K\}$ and the superscript $c$ means a complementary set. When the $L$ adjacency matrices $\{A_{l}\}^{L}_{l=1}$ are generated from our model $\mathrm{MLDCSBM}(Z,\Theta,\mathcal{B})$, our primary goal is to design efficient methods for detecting the latent communities of nodes from $\{A_{l}\}^{L}_{l=1}$. Suppose that $\mathcal{M}$ represents such a community detection method and $\hat{\ell}$ is the estimated node labels obtained by applying algorithm $\mathcal{M}$ to $\{A_{l}\}^{L}_{l=1}$ with $K$ communities. In this paper, our secondary goal is to demonstrate that the method $\mathcal{M}$ achieves consistent community detection. Specifically, we aim to show that its Clustering error $\hat{f}$ approaches zero as the number of nodes $n$ and/or the number of layers $L$ tend to infinity. To accomplish this goal, we establish theoretical guarantees for $\mathcal{M}$ by deriving a theoretical upper bound for its Clustering error $\hat{f}$.
\section{Algorithms}\label{sec3}
In this section, we present two spectral clustering algorithms to detect node community labels within the MLDCSBM model. Both algorithms collapse the multi-layer networks into an aggregate one. The first algorithm is constructed utilizing the aggregate matrix $A_{\mathrm{sum}}=\sum_{l\in[L]}A_{l}$, where $A_{\mathrm{sum}}$ represents the sum of adjacency matrices. This matrix has been previously employed in the design of spectral algorithms for community detection under the MLSBM framework \citep{han2015consistent,paul2020spectral}. The second algorithm is designed based on the aggregate matrix $S_{\mathrm{sum}}=\sum_{l\in[L]}(A^{2}_{l}-D_{l})$, where $D_{l}$ is an $n\times n$ diagonal matrix with the $i$-th diagonal entry being $\sum_{j\in[n]}A_{l}(i,j)$, the degree of node $i$ in layer $l$ for $i\in[n],l\in[L]$. $S_{\mathrm{sum}}$ is the bias-adjusted sum of squared adjacency matrices and it was initially introduced in \citep{lei2023bias} under MLSBM. In Sections \ref{SecNSoA} and \ref{SecNDSoSA}, we will introduce the two algorithms in detail. In Section \ref{SecSubsampling}, we will accelerate the two algorithms through subsampling to detect communities in large-scale multi-layer networks.
\subsection{Spectral clustering on the sum of adjacency matrices}\label{SecNSoA}
For the aggregate matrix $A_{\mathrm{sum}}$, we have $\mathbb{E}[A_{\mathrm{sum}}]=\sum_{l\in[L]}\Omega_{l}=\sum_{l\in[L]}\Theta ZB_{l}Z'\Theta=\Theta Z(\sum_{l\in[L]}B_{l})Z'\Theta$. Set $\Omega_{\mathrm{sum}}=\sum_{l\in[L]}\Omega_{l}$. For convenience, when we say ``leading eigenvalues'', we are comparing the magnitudes of the eigenvalues throughout this paper. To specify the spectral clustering algorithm designed based on $A_{\mathrm{sum}}$, we first analyze the algebraic properties of $\Omega_{\mathrm{sum}}$, the expectation matrix of $A_{\mathrm{sum}}$, through the following lemma.
\begin{lem}\label{EigOsum}
Under $\mathrm{MLDCSBM}(Z,\Theta,\mathcal{B})$, suppose $\mathrm{rank}(\sum_{l\in[L]}B_{l})=K$. Set $\Omega_{\mathrm{sum}}=U\Sigma U'$ as the compact eigen-decomposition of $\Omega_{\mathrm{sum}}$ such that $U$ is an $n\times K$ matrix satisfying $U'U=I_{K\times K}$ and $\Sigma$ is a $K\times K$ diagonal matrix with the $k$-th diagonal entry being the $k$-th leading eigenvalue of $\Omega_{\mathrm{sum}}$ for $k\in[K]$. Let $U_{*}$ be an $n\times K$ matrix with the $i$-th row being $U_{*}(i,:)=\frac{U(i,:)}{\|U(i,:)\|_{F}}$ for $i\in[n]$. Then $U_{*}=ZX$ with $X$ being a $K\times K$ full-rank matrix and
$\|X(k,:)-X(\tilde{k},:)\|_{F}=\sqrt{2}$ for all $1\leq k<\tilde{k}\leq K$.
\end{lem}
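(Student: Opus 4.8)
The plan is to exploit the product structure of $\Omega_{\mathrm{sum}}$ together with the row-normalization to pin down the rows of $X$ exactly. First I would write $B_{\mathrm{sum}}=\sum_{l\in[L]}B_{l}$, so that by Equation (\ref{AlMLDCSBM}) we have $\Omega_{\mathrm{sum}}=\Theta Z B_{\mathrm{sum}}Z'\Theta$. Since $\Theta$ is invertible, $Z$ has rank $K$, and $\mathrm{rank}(B_{\mathrm{sum}})=K$ by assumption (so $B_{\mathrm{sum}}$ is an invertible $K\times K$ matrix), the matrix $\Omega_{\mathrm{sum}}$ has rank exactly $K$ and its column space coincides with that of $\Theta Z$. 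Because $U$ collects the $K$ eigenvectors of $\Omega_{\mathrm{sum}}$ with nonzero eigenvalue, the columns of $U$ span the same space, so there exists a $K\times K$ matrix $H$ with $U=\Theta Z H$; matching ranks forces $H$ to be full-rank.

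Next I would show that nodes sharing a community have parallel rows in $U$. If $\ell(i)=k$, then $Z(i,:)$ is the $k$-th standard basis row vector, hence $U(i,:)=\theta(i)H(k,:)$. Row-normalizing and using $\theta(i)>0$ cancels the scalar $\theta(i)$, giving $U_{*}(i,:)=H(k,:)/\|H(k,:)\|_{F}$ for every node $i$ in community $k$. Defining $X(k,:)=H(k,:)/\|H(k,:)\|_{F}$ then yields $U_{*}=ZX$ directly, and $X$ inherits full rank from $H$ since it is obtained from $H$ by positive row-scaling.

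The crux is the $\sqrt{2}$ identity, and here I would convert the orthonormality $U'U=I_{K\times K}$ into a statement about $X$. Substituting $U=\Theta Z H$ gives $H'(Z'\Theta^{2}Z)H=I_{K\times K}$, and $Z'\Theta^{2}Z$ is the diagonal matrix $\mathrm{diag}(\Delta_{1},\ldots,\Delta_{K})$ with $\Delta_{k}=\sum_{i\in\mathcal{C}_{k}}\theta^{2}(i)>0$. Writing $W=\mathrm{diag}(\sqrt{\Delta_{1}},\ldots,\sqrt{\Delta_{K}})H$, this identity says exactly that $W'W=I_{K\times K}$, so $W$ is orthogonal and its rows are orthonormal. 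Since $H(k,:)=W(k,:)/\sqrt{\Delta_{k}}$ and $\|W(k,:)\|_{F}=1$, we get $\|H(k,:)\|_{F}=1/\sqrt{\Delta_{k}}$, whence $X(k,:)=H(k,:)/\|H(k,:)\|_{F}=W(k,:)$. Thus the rows of $X$ are precisely the orthonormal rows of $W$, so for $k\neq\tilde{k}$ we have $\|X(k,:)-X(\tilde{k},:)\|_{F}^{2}=\|X(k,:)\|_{F}^{2}+\|X(\tilde{k},:)\|_{F}^{2}-2\langle X(k,:),X(\tilde{k},:)\rangle=1+1-0=2$, giving the claimed value $\sqrt{2}$.

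I expect the main obstacle to be the orthonormality-to-geometry translation in the last paragraph: one has to recognize that $Z'\Theta^{2}Z$ is diagonal and that absorbing its square root into $H$ turns the constraint $U'U=I_{K\times K}$ into genuine orthogonality of $W$, which is what makes the normalized rows $X(k,:)$ coincide with an orthonormal system and produces the clean $\sqrt{2}$ separation. The earlier steps are essentially bookkeeping about column spaces and the cancellation of positive scalars under normalization.
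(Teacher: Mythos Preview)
Your argument is correct. The factorization $U=\Theta ZH$, the recognition that $Z'\Theta^{2}Z$ is diagonal so that $W=(Z'\Theta^{2}Z)^{1/2}H$ is orthogonal, and the observation that row-normalizing $H$ returns exactly the rows of $W$, together give the $\sqrt{2}$ separation cleanly.

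By contrast, the paper does not reproduce any of this: its entire proof of Lemma~\ref{EigOsum} is a one-line citation of Lemma~3 in \citep{qing2023community}. Your self-contained derivation is essentially the content that lemma would supply, so you have unpacked what the paper outsources. What you gain is transparency (the reader sees why the normalization cancels $\theta(i)$ and why the orthogonality of $W$ forces the $\sqrt{2}$ gap) at the cost of a few extra lines; the paper's version is terse but opaque unless one consults the reference.
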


In Lemma \ref{EigOsum}, $U_{*}=ZX$ suggests that $U_{*}$ has $K$ distinct rows and $U_{*}(i,:)=U_{*}(j,:)$ if $\ell(i)=\ell(j)$ for $i,j\in[n]$, i.e., the two rows $U_{*}(i,:)$ and $U_{*}(j,:)$ are identical if nodes $i$ and $j$ belong to the same community. Hence, applying the K-means algorithm to all rows of $U_{*}$ with $K$ clusters exactly recovers $\ell$ up to a permutation of community labels. Summarizing the above analysis yields the following four-stage oracle algorithm, which we call Ideal NSoA. Input: $\{\Omega_{l}\}^{L}_{l=1}$ and $K$. Output: $\ell$.
\begin{itemize}
  \item Compute $\Omega_{\mathrm{sum}}=\sum_{l\in[L]}\Omega_{l}$.
  \item Obtain $U\Sigma U'$, the leading $K$ eigen-decomposition of $\Omega_{\mathrm{sum}}$.
  \item Normalize the rows of $U$ to get $U_{*}$.
  \item Run K-means algorithm to $U_{*}$'s rows with $K$ clusters to obtain $\ell$.
\end{itemize}

In practice, $\Omega_{\mathrm{sum}}$ is unknown but $A_{\mathrm{sum}}$ is given. Let $\hat{U}\hat{\Sigma}\hat{U}'$ be the leading $K$ eigen-decomposition of $A_{\mathrm{sum}}$ such that $\hat{U}$ is an $n\times K$ matrix satisfying $\hat{U}'\hat{U}=I_{K\times K}$ and the $K\times K$ diagonal matrix $\hat{\Sigma}$ contains $A_{\mathrm{sum}}$'s leading $K$ eigenvalues. Let $\hat{U}_{*}$ be an $n\times K$ matrix with the $i$-th row being $\hat{U}_{*}(i,:)=\frac{\hat{U}(i,:)}{\|\hat{U}(i,:)\|_{F}}$ for $i\in[n]$. Because $\Omega_{\mathrm{sum}}$ is the expectation of $A_{\mathrm{sum}}$ under MLDCSBM, $\hat{U}_{*}$ is a slightly perturbed version of $U_{*}$ and it should have roughly $K$ distinct rows. Thus, applying the K-means algorithm to $\hat{U}_{*}$  should return a good community partition. The above analysis leads to a spectral clustering algorithm applied to $A_{\mathrm{sum}}$, which is summarized in Algorithm \ref{alg:NSoA}. It is evident that Ideal NSoA represents the ideal form of NSoA, given that $\Omega_{l}$ corresponds to the expectation of $A_{l}$ for all $l \in [L]$. Note that in the naming of Algorithm \ref{alg:NSoA}, normalized means the normalization step that calculates $\hat{U}_{*}$.
\begin{algorithm*}
\caption{\underline{N}ormalized spectral clustering based on \underline{s}um \underline{o}f \underline{a}djacency matrices (NSoA)}
\label{alg:NSoA}
\begin{algorithmic}[1]
\Require Adjacency matrices $A_{1}, A_{2}, \ldots, A_{L}$, and number of communities $K$.
\Ensure Estimated node labels $\hat{\ell}$.
\State Calculate $A_{\mathrm{sum}}=\sum_{l\in[L]}A_{l}$.
\State Obtain $\hat{U}\hat{\Sigma}\hat{U}'$, the leading $K$ eigen-decomposition of $A_{\mathrm{sum}}$.
\State Calculate $\hat{U}_{*}$ by normalizing each row of $\hat{U}$ to have unit length.
\State Run K-means algorithm on all rows of $\hat{U}_{*}$ with $K$ clusters to obtain $\hat{\ell}$.
\end{algorithmic}
\end{algorithm*}

Here, we provide the space complexity and time complexity of NSoA:
\begin{itemize}
  \item For the space complexity, the storage requirements for $\{A_{l}\}^{L}_{l=1}$, $A_{\mathrm{sum}}$, $\hat{U}$, $\hat{\Sigma}$, $\hat{U}_{*}$, and $\hat{\ell}$ are $O(Ln^{2})$, $O(n^{2})$, $O(nK)$, $O(K^{2})$, $O(nK)$, and $O(n)$, respectively. Given that $K \ll n$, the overall space complexity of NSoA is $O(Ln^{2})$.
  \item For the time complexity, steps 1-4 exhibit time complexities of $O(Ln^{2})$, $O(n^{2}K)$, $O(nK)$, and $O(nK^{2}T_{\mathrm{iter}})$, where $T_{\mathrm{iter}}$ represents the number of iterations for the K-means algorithm (set to 100 in this paper). Since $K \ll n$, the aggregate time complexity of NSoA is $O(Ln^{2} + Kn^{2})$.
\end{itemize}
\subsection{Spectral clustering on the debiased sum of
squared adjacency matrices}\label{SecNDSoSA}
To explain the intuition of the design of the second algorithm using the aggregate matrix $S_{\mathrm{sum}}$, we provide the following lemma which provides the eigen-decomposition of $\tilde{S}_{\mathrm{sum}}\equiv\sum_{l\in[L]}\Omega^{2}_{l}$.
\begin{lem}\label{EigOsum2}
Under $\mathrm{MLDCSBM}(Z,\Theta,\mathcal{B})$, suppose $\mathrm{rank}(\sum_{l\in[L]}B^{2}_{l})=K$. Set $\tilde{S}_{\mathrm{sum}}=V\Lambda V'$ as the compact eigen-decomposition of $\Omega_{\mathrm{sum}}$ such that $V'V=I_{K\times K}$. Let $V_{*}$ be an $n\times K$ matrix with the $i$-th row being $V_{*}(i,:)=\frac{V(i,:)}{\|V(i,:)\|_{F}}$ for $i\in[n]$. Then $V_{*}=ZY$ with $Y$ being a $K\times K$ full-rank matrix and
$\|Y(k,:)-Y(\tilde{k},:)\|_{F}=\sqrt{2}$ for all $1\leq k<\tilde{k}\leq K$.
\end{lem}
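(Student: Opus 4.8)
The plan is to mirror the argument behind Lemma \ref{EigOsum}, since $\tilde{S}_{\mathrm{sum}}$ turns out to have exactly the same product structure as $\Omega_{\mathrm{sum}}$; the only genuinely new ingredient is reducing the rank hypothesis on $\sum_{l\in[L]}B_{l}^{2}$ to an invertibility statement about the effective block matrix that appears inside $\tilde{S}_{\mathrm{sum}}$. First I would rewrite $\tilde{S}_{\mathrm{sum}}$ in the factored form. Using $\Omega_{l}=\Theta ZB_{l}Z'\Theta$ and setting $G\equiv Z'\Theta^{2}Z$, which is a $K\times K$ \emph{diagonal} matrix with strictly positive diagonal entries $G(k,k)=\sum_{i\in\mathcal{C}_{k}}\theta^{2}(i)$, a direct computation gives $\Omega_{l}^{2}=\Theta Z(B_{l}GB_{l})Z'\Theta$, hence
\begin{align*}
\tilde{S}_{\mathrm{sum}}=\sum_{l\in[L]}\Omega_{l}^{2}=\Theta Z\tilde{B}Z'\Theta,\qquad \tilde{B}\equiv\sum_{l\in[L]}B_{l}GB_{l}.
\end{align*}
This puts $\tilde{S}_{\mathrm{sum}}$ in precisely the structural form handled in Lemma \ref{EigOsum}, with $\tilde{B}$ playing the role of $\sum_{l\in[L]}B_{l}$.

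Next I would verify that $\tilde{B}$ is invertible, which is where the stated rank hypothesis enters. Since $G$ is positive definite I can factor $G=G^{1/2}G^{1/2}$ and write $\tilde{B}=\sum_{l\in[L]}(G^{1/2}B_{l})'(G^{1/2}B_{l})$, a sum of positive semidefinite matrices. Thus $x'\tilde{B}x=0$ if and only if $G^{1/2}B_{l}x=0$ for every $l$, and because $G^{1/2}$ is invertible this is equivalent to $B_{l}x=0$ for every $l$. Applying the same reasoning to $\sum_{l\in[L]}B_{l}^{2}=\sum_{l\in[L]}B_{l}'B_{l}$ shows that $\tilde{B}$ and $\sum_{l\in[L]}B_{l}^{2}$ share the common null space $\bigcap_{l\in[L]}\ker(B_{l})$, so $\mathrm{rank}(\tilde{B})=\mathrm{rank}(\sum_{l\in[L]}B_{l}^{2})=K$ and $\tilde{B}$ is a full-rank $K\times K$ matrix.

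With invertibility of $\tilde{B}$ in hand, the remaining steps parallel Lemma \ref{EigOsum}. Because $\Theta$ is invertible, $Z$ has full column rank $K$, and $\tilde{B}$ is invertible, the column space of $\tilde{S}_{\mathrm{sum}}$ coincides with that of $\Theta Z$; hence $\tilde{S}_{\mathrm{sum}}$ has rank $K$ and its leading eigenvectors satisfy $V=\Theta ZM$ for some invertible $K\times K$ matrix $M$. For a node $i$ in community $k$ one has $V(i,:)=\theta(i)M(k,:)$, so row-normalization cancels the factor $\theta(i)$ and yields $V_{*}(i,:)=M(k,:)/\|M(k,:)\|_{F}$. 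Defining $Y(k,:)=M(k,:)/\|M(k,:)\|_{F}$ then gives $V_{*}=ZY$, and the linear independence of the rows of $M$ (inherited from invertibility, and preserved under positive scaling) makes $Y$ full-rank.

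The final and most delicate point is the exact value $\sqrt{2}$, which forces the rows of $Y$ to be orthonormal rather than merely distinct. Here I would exploit the eigenvector constraint $V'V=I_{K\times K}$: substituting $V=\Theta ZM$ gives $M'GM=I_{K\times K}$, equivalently $MM'=G^{-1}$. Since $G$ is diagonal, so is $G^{-1}$, whence the rows of $M$ are mutually orthogonal with $\|M(k,:)\|_{F}^{2}=1/G(k,k)$; after normalization the rows of $Y$ are orthonormal, and for $k\neq\tilde{k}$,
\begin{align*}
\|Y(k,:)-Y(\tilde{k},:)\|_{F}^{2}=\|Y(k,:)\|_{F}^{2}+\|Y(\tilde{k},:)\|_{F}^{2}-2\langle Y(k,:),Y(\tilde{k},:)\rangle=2.
\end{align*}
The hard part is exactly this last step: translating the orthonormality of the eigenvectors into the diagonality of $MM'$, since that is what upgrades ``distinct rows'' to the sharp separation $\sqrt{2}$. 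The rank reduction for $\tilde{B}$ is the other ingredient specific to the squared-matrix setting, but it is comparatively routine once the positive definiteness of $G$ is noted.
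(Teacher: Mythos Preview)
Your proposal is correct and follows essentially the same route as the paper: both factor $\tilde{S}_{\mathrm{sum}}=\Theta Z\tilde{B}Z'\Theta$ with $\tilde{B}=\sum_{l}B_{l}(Z'\Theta^{2}Z)B_{l}$, reduce the rank hypothesis on $\sum_{l}B_{l}^{2}$ to invertibility of $\tilde{B}$ via the positive definiteness of $G=Z'\Theta^{2}Z$, and then appeal to the Lemma~\ref{EigOsum}-style structure. The only cosmetic difference is that the paper delegates the final orthonormality step to Lemma~3 of \citep{qing2023community}, whereas you unpack it explicitly through $M'GM=I$ and hence $MM'=G^{-1}$.
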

\begin{figure*}
\centering
\subfigure[]{\includegraphics[width=0.245\textwidth]{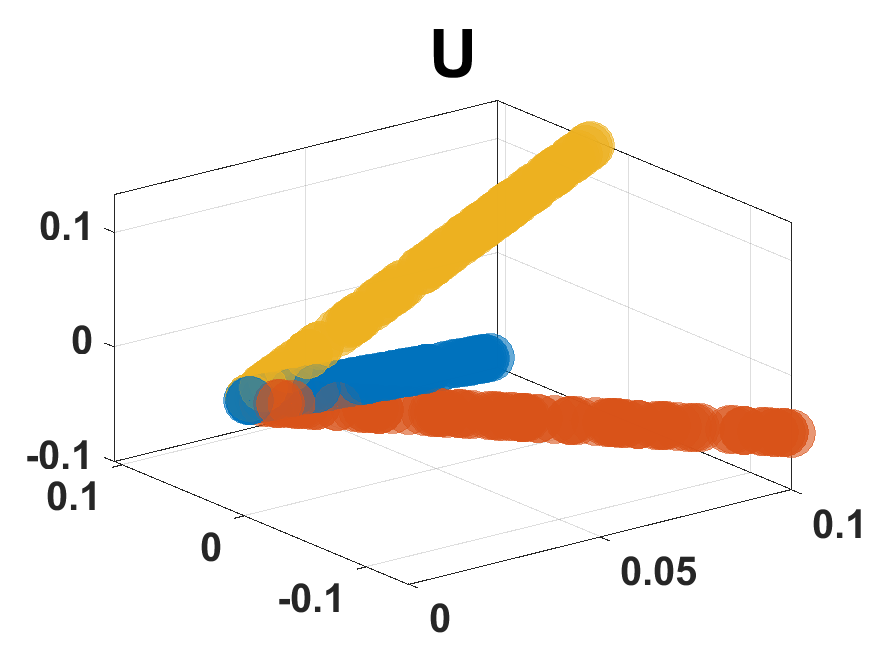}}
\subfigure[]{\includegraphics[width=0.245\textwidth]{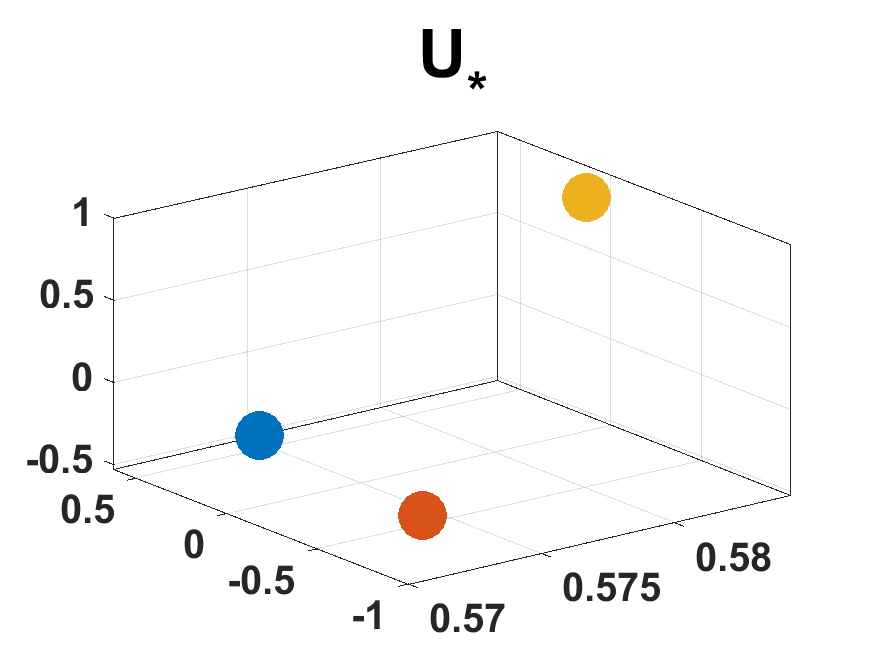}}
\subfigure[]{\includegraphics[width=0.245\textwidth]{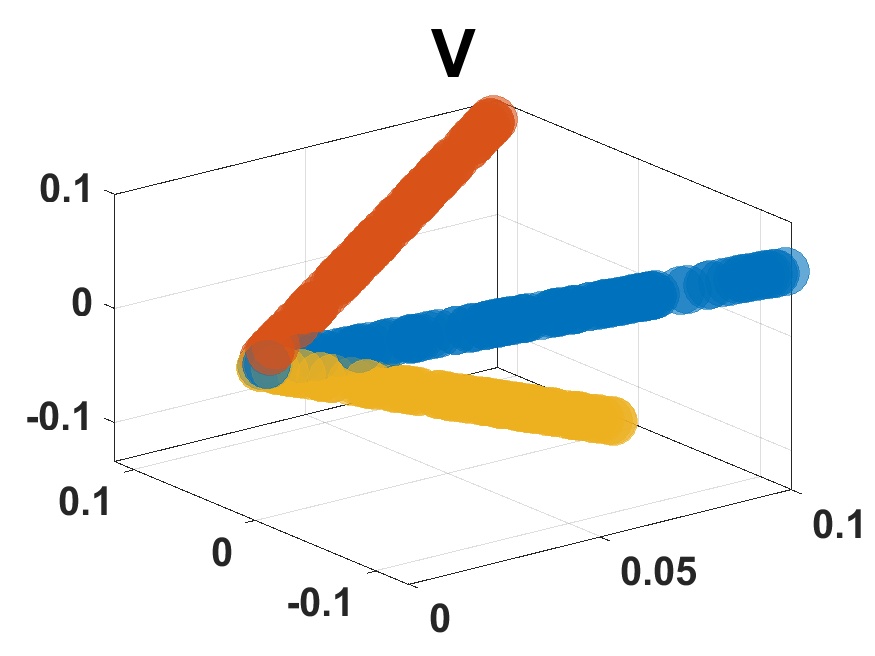}}
\subfigure[]{\includegraphics[width=0.245\textwidth]{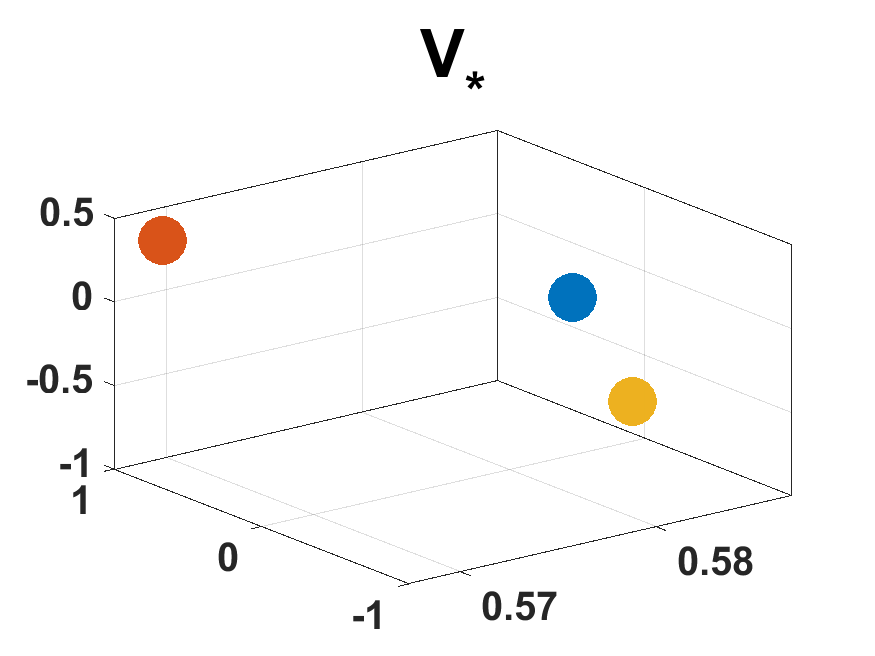}}
\caption{In this simulated example, we set $n=300$, $K=3$, and $L=20$, where each community comprises 100 nodes. Each entry of $B_{l}$ (as well as $\theta$) is a random value from a Uniform distribution on $[0,1]$ for $l\in [L]$. After configuring $Z$, $\mathcal{B}$, and $\Theta$, we calculate $\Omega_{l}$ using Equation (\ref{AlMLDCSBM}), and subsequently derive $\Omega_{\mathrm{sum}}$ and $\tilde{S}_{\mathrm{sum}}$. Then we compute $U$ and $V$, as well as their normalized counterparts $U_{*}$ and $V_{*}$, from $\Omega_{\mathrm{sum}}$ and $\tilde{S}_{\mathrm{sum}}$, respectively. Panels (a)-(d) illustrate points, where each point represents a row from $U$, $U_{*}$, $V$, and $V_{*}$, respectively. Across all panels, points of the same color indicate nodes belonging to the same community. The observations from Panels (a) and (c) reveal that, before normalization, nodes within the same community exhibit \texttt{identical directions} in the projected space. Conversely, Panels (b) and (d) demonstrate that, after normalization, nodes within the same community occupy \texttt{identical positions} in the projected space.}
\label{UUstarVVstar} 
\end{figure*}

Similar to $U_{*}$, applying the K-means algorithm to all rows of $V_{*}$ with $K$ clusters allows for the accurate detection of communities. Analogous to the Ideal NSoA algorithm, we present the Ideal NDSoSA algorithm based on Lemma \ref{EigOsum2}. Input: $\{\Omega_{l}\}^{L}_{l=1}$ and $K$. Output: $\ell$.
\begin{itemize}
  \item Set $\tilde{S}_{\mathrm{sum}}=\sum_{l\in[L]}\Omega^{2}_{l}$.
  \item Obtain $V\Lambda V'$, the leading $K$ eigen-decomposition of $\tilde{S}_{\mathrm{sum}}$.
  \item Normalize $V$'s rows to get $V_{*}$.
  \item Run K-means on $V_{*}$'s rows to obtain $\ell$.
\end{itemize}

To gain a deeper understanding of Lemmas \ref{EigOsum} and \ref{EigOsum2}, we visualize the geometric structure of $U, U_{*}, V$, and $V_{*}$ in Fig.~\ref{UUstarVVstar} using a simulated example with three communities. Panel (a) of Fig.~\ref{UUstarVVstar} indicates that $U$ contains more than $K$ unique rows, making it infeasible to use K-means clustering on $U$ to recover $\ell$ accurately. However, Panel (b) of Fig.~\ref{UUstarVVstar} demonstrates that $U_{*}$ has exactly $K$ distinct rows, and applying K-means clustering to $U_{*}$ successfully recovers $\ell$. Similar conclusions hold for Panels (c) and (d). By \citep{lei2023bias}, we know that $\sum_{l\in[L]}A^{2}_{l}$ is a biased estimate of $\tilde{S}_{\mathrm{sum}}$ while the debiased sum of squared adjacency matrices $S_{\mathrm{sum}}$ is a good estimate of $\tilde{S}_{\mathrm{sum}}$. Let $\hat{V}\hat{\Lambda}\hat{V}'$ be the leading $K$ eigen-decomposition of $S_{\mathrm{sum}}$ such that $\hat{V}'\hat{V}=I_{K\times K}$ and the $K\times K$ diagonal matrix $\hat{\Lambda}$ contains $S_{\mathrm{sum}}$'s leading $K$ eigenvalues as its diagonal elements. Let $\hat{V}_{*}$ be $V$'s row-normalization version such that $\hat{V}_{*}(i,:)=\frac{\hat{V}(i,:)}{\|\hat{V}(i,:)\|_{F}}$ for $i\in[n]$. The community partition can be obtained by running K-means on $\hat{V}_{*}$.  Algorithm \ref{alg:NDSoSA}
designed based on $S_{\mathrm{sum}}$ summarizes the above analysis. By comparing the four steps of Ideal NDSoSA with those of NDSoSA, it becomes apparent that Ideal NDSoSA represents the oracle scenario for NDSoSA. We'd emphasize that both NSoA and NDSoSA solely necessitate the $L$ adjacency matrices $\{A_{l}\}^{L}_{l=1}$ and the number of communities $K$ as inputs, and do not require any additional tuning parameters or regularizers.
\begin{algorithm*}
\caption{\underline{N}ormalized spectral clustering based on \underline{d}ebiased \underline{s}um \underline{o}f
\underline{s}quared \underline{a}djacency matrices (NDSoSA)}
\label{alg:NDSoSA}
\begin{algorithmic}[1]
\Require Adjacency matrices $A_{1}, A_{2}, \ldots, A_{L}$, and number of communities $K$.
\Ensure Estimated node labels $\hat{\ell}$.
\State Calculate $S_{\mathrm{sum}}=\sum_{l\in[L]}(A^{2}_{l}-D_{l})$.
\State Obtain $\hat{V}\hat{\Lambda}\hat{V}'$, the leading $K$ eigen-decomposition of $S_{\mathrm{sum}}$.
\State Calculate $\hat{V}_{*}$ by normalizing each row of $\hat{V}$ to have unit length.
\State Run K-means algorithm on all rows of $\hat{V}_{*}$ with $K$ clusters to obtain $\hat{\ell}$.
\end{algorithmic}
\end{algorithm*}

The space complexity of NDSoSA is identical to that of NSoA, specifically $O(Ln^{2})$. The primary contributor to the time cost of NDSoSA is its initial step, which exhibits a time complexity of $O(Ln^{3})$. Consequently, the overall time complexity of NDSoSA is $O(Ln^{3})$. Given that NSoA's time complexity is $O(Ln^{2}+Kn^{2})$, we see that NSoA yields lower computational cost compared to NDSoSA. Meanwhile, by comparing Algorithms \ref{alg:NSoA} and \ref{alg:NDSoSA}, we see that both NSoA and NDSoSA algorithms involve aggregating adjacency matrices from different layers, performing an eigen-decomposition to extract leading eigenvectors, normalizing these eigenvectors, and then applying K-means clustering to assign community labels to nodes. The key difference lies in the aggregation step, where NSoA simply sums the adjacency matrices, while NDSoSA sums the squared adjacency matrices after adjusting for degrees, which helps in debiasing and improving the performance of community detection as we will analyze further in Section \ref{CompareDA}. In summary, while both algorithms use the multi-layer structure for community detection in multi-layer networks, NDSoSA exhibits superior community detection performance at the cost of increased time complexity compared to NSoA.
\begin{rem}
In Section \ref{sec4}, we will provide theoretical upper bounds on the error rates for our two proposed methods, NSoA and NDSoSA. These theoretical results are based on Assumptions \ref{Assum1}-\ref{Assum22}. It is crucial to emphasize that these assumptions serve solely for theoretical purposes and do not restrict the practical implementations of our methods. This can be clearly seen in Algorithms \ref{alg:NSoA} and \ref{alg:NDSoSA}, as neither algorithm requires conditions related to these assumptions.
\end{rem}
\subsection{Subsampling spectral clustering in large-scale multi-layer networks}\label{SecSubsampling}
In this paper, we call a multi-layer network with $n$ nodes and $L$ layers as a large-scale multi-layer network if either $n$ or $L$ is large. Recall that the computational costs of NSoA and NDSoSA are $O(Ln^{2}+Kn^{2})$ and $O(Ln^{3})$ respectively, it becomes evident that both algorithms can be time-consuming, particularly when $n$ and/or $L$ are large. Recently, Deng et al. \citep{deng2024subsampling} proposed a novel subsampling spectral clustering method designed based on a randomly sampled subnetwork from the entire network to detect communities in single-layer networks. Inspired by the subsampling concept outlined in \citep{deng2024subsampling}, we develop the accelerated versions of NSoA and NDSoSA to identify communities in large-scale multi-layer networks here. Let $\mathcal{S}_{\mathrm{nodes}}$ be a subset of nodes that records $n_{\mathrm{sample}}$ distinct integers, chosen randomly from 1 to $n$ without repetition, where $n_{\mathrm{sample}}$ is the subsample size of nodes. Similarly, let $\mathcal{S}_{\mathrm{layers}}$ be a subset of layers that records $L_{\mathrm{sample}}$ distinct integers, chosen randomly from 1 to $L$ without repetition, where $L_{\mathrm{sample}}$ is the subsample size of layers. For simplicity, let $\mathcal{S}_{\mathrm{layers}}=\{l_{1}, l_{2}, \ldots, l_{L_{\mathrm{sample}}}\}$, where $l_{1}<l_{2}<\ldots<l_{L_{\mathrm{sample}}-1}<l_{L_{\mathrm{sample}}}\leq L$ and $l_{j}$ denotes the $j$-th selected layer for $j\in[L_{\mathrm{sample}}]$. Sure, $A_{l_{j}}$ denotes the adjacency matrix of the $j$-th selected layer for $j\in[L_{\mathrm{sample}}]$. Define an $n\times n_{\mathrm{sample}}$ matrix $A^{\mathrm{sub}}_{l_{j}}$ as a sub-adjacency matrix of $A_{l_{j}}$ such that $A^{\mathrm{sub}}_{l_{j}}=A_{l_{j}}(:,\mathcal{S}_{\mathrm{nodes}})$ for $j\in[L_{\mathrm{sample}}]$. Let $D^{\mathrm{sub}}_{l_{j}}$ be an $n\times n$ matrix with $i$-th diagonal element being the degree of node $i$ in $A^{\mathrm{sub}}_{l_{j}}$, i.e., $D^{\mathrm{sub}}_{l_{j}}(i,i)=\sum_{\bar{i}\in[n_{\mathrm{sample}}]}A^{\mathrm{sub}}_{l_{j}}(i,\bar{i})$ for $j\in[L_{\mathrm{sample}}]$. Define an $n\times n_{\mathrm{sample}}$ matrix $A^{\mathrm{sub}}_{\mathrm{sum}}$ as
\begin{align}\label{SubSum}
A^{\mathrm{sub}}_{\mathrm{sum}}=\sum_{j\in[L_{\mathrm{sample}}]}A^{\mathrm{sub}}_{l_{j}}.
\end{align}
Define an $n\times n$ matrix $\hat{S}_{\mathrm{sum}}$ as
\begin{align}\label{SubSoS1}
\hat{S}_{\mathrm{sum}}=\sum_{j\in[L_{\mathrm{sample}}]}(A^{\mathrm{sub}}_{l_{j}}\times(A^{\mathrm{sub}}_{l_{j}})'-D^{\mathrm{sub}}_{l_{j}}).
\end{align}
Define an $n\times n_{\mathrm{sample}}$ matrix $S^{\mathrm{sub}}_{\mathrm{sum}}$ as
\begin{align}\label{SubSoS2}
S^{\mathrm{sub}}_{\mathrm{sum}}=\hat{S}_{\mathrm{sum}}(:,\mathcal{S}_{\mathrm{nodes}}).
\end{align}
After defining the two $n\times n_{\mathrm{sample}}$ matrices $A^{\mathrm{sub}}_{\mathrm{sum}}$ and $S^{\mathrm{sub}}_{\mathrm{sum}}$, we now introduce the accelerated versions of NSoA and NDSoSA in Algorithms \ref{alg:SNSoA} and \ref{alg:SNDSoSA}, respectively. Below, we present the computational analysis of SNSoA and SNDSoSA:
\begin{itemize}
  \item For the time complexity of SNSoA, its steps 1-5 have time complexities of $O(n+L), O(L_{\mathrm{sample}}n_{\mathrm{sample}}n)$, $O(Kn_{\mathrm{sample}}n)$, $O(nK)$, and $O(nK^{2}T_{\mathrm{iter}})$, respectively. Since $K$ is generally a small integer, as a result, SNSoA's total complexity is $O(L_{\mathrm{sample}}n_{\mathrm{sample}}n)$.
  \item For the time complexity of SNDSoSA, its steps 1-5 have time complexities of $O(n+L), O(L_{\mathrm{sample}}n_{\mathrm{sample}}n^{2})$, $O(Kn_{\mathrm{sample}}n), O(nK)$, and $O(nK^{2}T_{\mathrm{iter}})$, respectively. As a result, SNDSoSA's total complexity is $O(L_{\mathrm{sample}}n_{\mathrm{sample}}n^{2})$.
  \item In this paper, we set $L_{\mathrm{sample}}=\mathrm{round}(\mathrm{log}^{2}(L))$ when $L\geq10$ and $L_{\mathrm{sample}}=L$ otherwise. Inspired by the setting of sample size of nodes $n_{\mathrm{sample}}$ in the numerical study of \citep{deng2024subsampling}, we set $n_{\mathrm{sample}}=\mathrm{round}(\varpi\mathrm{log}^{2}(n))$ if $n\geq500$ and $n_{\mathrm{sample}}=n$ if $n<500$, where $\varpi$ is an integer whose value depends on $n$. In this paper, we set $\varpi=5$ when $500\leq n<2000$, and $\varpi=15$ when $2000\leq n\leq20000$. For values of $n$ exceeding this range, $\varpi$ should be set to a larger value to ensure a larger number of sample nodes are selected. Under the above settings, for large values of $n$ and/or $L$, the computational cost of SNSoA is $O(\varpi \mathrm{log}^{2}(L) \mathrm{log}^{2}(n) n)$, which can be significantly lesser than NSoA's $O(Ln^{2} + Kn^{2})$. Similarly, SNDSoSA's computational cost is $O(\varpi \mathrm{log}^{2}(L) \mathrm{log}^{2}(n) n^{2})$, significantly lower than NDSoSA's $O(Ln^{3})$. Evidently, by considering fewer layers and sub-networks, we achieve acceleration in both NSoA and NDSoSA for large-scale multi-layer networks.
\end{itemize}

\begin{algorithm*}
\caption{\underline{S}ubsampling \underline{NSoA} (SNSoA)}
\label{alg:SNSoA}
\begin{algorithmic}[1]
\Require Adjacency matrices $\{A_{l}\}^{L}_{l=1}$, number of communities $K$, subsample size of nodes $n_{\mathrm{sample}}$, and subsample size of layers $L_{\mathrm{sample}}$.
\Ensure Estimated node labels $\hat{\ell}$.
\State Randomly select $n_{\mathrm{sample}}$ distinct nodes from the entire $n$ nodes and record the selected $n_{\mathrm{sample}}$ distinct nodes in the set $\mathcal{S}_{nodes}$. Randomly select $L_{\mathrm{sample}}$ distinct layers from the entire $L$ layers and record the selected $L_{\mathrm{sample}}$ distinct layers in the set $\mathcal{S}_{layers}$.
\State Calculate $A^{\mathrm{sub}}_{\mathrm{sum}}$ via Equation (\ref{SubSum}).
\State Let $\tilde{U}$ be an $n\times K$ matrix such that $\tilde{U}'\tilde{U}=I_{K\times K}$ and $\tilde{U}(:,k)$ is the left-singular vector of the $k$-th largest singular value of $A^{\mathrm{sub}}_{\mathrm{sum}}$ for $k\in[K]$.
\State Calculate $\tilde{U}_{*}$ by normalizing each row of $\tilde{U}$ to have unit length.
\State Run K-means algorithm on all rows of $\tilde{U}_{*}$ with $K$ clusters to obtain $\hat{\ell}$.
\end{algorithmic}
\end{algorithm*}

\begin{algorithm*}
\caption{\underline{S}ubsampling \underline{NDSoSA} (SNDSoSA)}
\label{alg:SNDSoSA}
\begin{algorithmic}[1]
\Require Adjacency matrices $\{A_{l}\}^{L}_{l=1}$, number of communities $K$, subsample sizes $n_{\mathrm{sample}}$ (nodes) and $L_{\mathrm{sample}}$ (layers).
\Ensure Estimated node labels $\hat{\ell}$.
\State Randomly select $n_{\mathrm{sample}}$ nodes and $L_{\mathrm{sample}}$ layers, recording them in sets $\mathcal{S}_{nodes}$ and $\mathcal{S}_{layers}$, respectively.
\State Calculate $S^{\mathrm{sub}}_{\mathrm{sum}}$ via Equation (\ref{SubSoS2}).
\State Compute $\tilde{V}$, the left-singular vectors of the top $K$ singular values of $S^{\mathrm{sub}}_{\mathrm{sum}}$.
\State Normalize each row of $\tilde{V}$ to obtain $\tilde{V}_{*}$.
\State Run K-means algorithm on all rows of $\tilde{V}_{*}$ with $K$ clusters to obtain $\hat{\ell}$.
\end{algorithmic}
\end{algorithm*}
\section{Consistency results}\label{sec4}
In this section, we show how multi-layer networks benefit community detection by investigating the asymptotic consistency of community detection using NSoA and NDSoSA. In our theoretical analysis, we let both $n$ and $L$ grow, where increasing $n$ means that the size of the multi-layer network grows and increasing $L$ represents that more layers of networks are recorded. Our theoretical results also keep track of other model parameters such as $K, \theta_{\mathrm{min}}, \theta_{\mathrm{max}}, n_{\mathrm{min}}$, and $n_{\mathrm{max}}$. After showing the consistency results, we also provide a detailed comparison of conditions and error rates between NSoA and NDSoSA.
\subsection{Consistency results for NSoA}
In this subsection, we establish the consistency results for NSoA. First, for NSoA's theoretical analysis, we need the following assumption to control the overall sparsity of the multi-layer network.
\begin{assum}\label{Assum1}
 $\theta_{\mathrm{max}}\|\theta\|_{1}L\geq\mathrm{log}(n+L)$.
\end{assum}
Since $\|\theta\|_{1}\leq\theta_{\mathrm{max}}n$, we have $\theta^{2}_{\mathrm{max}}\geq\frac{\mathrm{log}(n+L)}{nL}$ by Assumption \ref{Assum1}. Recall that $\theta_{\mathrm{max}}$ controls the sparsity of each network, $\theta^{2}_{\mathrm{max}}\geq\frac{\mathrm{log}(n+L)}{nL}$ means that a multi-layer network needs a lower requirement on sparsity than a single network since $\frac{\mathrm{log}(n+L)}{nL}$ is much smaller than $\frac{\mathrm{log}(n)}{n}$ for large $L$, and this suggests that a multi-layer network benefits community detection by a lower bound requirement on sparsity of a single network. Assumption \ref{Assum1} is required when we aim at bounding $\|A_{\mathrm{sum}}-\Omega_{\mathrm{sum}}\|$ using the Matrix Bernstein theorem in \cite{tropp2012user} in this paper. It is worth noting that such sparsity requirements are ubiquitous and crucial in establishing theoretical guarantees for community detection methods. For instance, Theorem 4.1 in \citep{qin2013regularized}, Theorem 3.1 in \citep{lei2015consistency}, Theorem 2.2 in \citep{SCORE}, Theorem 1 in \citep{wang2020spectral}, Theorem 3.2 in \citep{mao2021estimating}, Theorem 1 in \citep{su2023spectral}, Theorem 3 in \citep{jing2021community}, Theorem 1 in \citep{lei2023bias}, and Corollary 2 in \citep{xu2023covariate} all impose similar sparsity conditions on their respective community detection algorithms, whether applied to single-layer or multi-layer networks.
\begin{rem}
When $\Theta = \sqrt{\rho}I_{n\times n}$ for $\rho \in (0,1]$, MLDCSBM simplifies to MLSBM. In this scenario, Assumption \ref{Assum1} transforms into $\rho \geq \frac{\log(n+L)}{nL}$, implying that each layer can be exceedingly sparse and our Assumption \ref{Assum1} is mild. For community detection in single-layer networks, spectral methods require $\rho \geq \frac{\log(n)}{n}$ for consistent community detection, as stated in Theorem 3.1 of \citep{lei2015consistency}. Thus, we observe that each layer in the multi-layer network can be significantly sparser for community detection than single-layer networks. Furthermore, consider the special case where there is only one community ($K=1$), reducing the classical SBM to the well-known Erdős–Rényi (ER) random graph $G(n,p)$ \citep{erdos1960evolution}. According to Section 2.5 of \citep{abbe2018community}, the ER random graph is connected with high probability if and only if $\rho \geq \frac{\log(n)}{n}$. Hence, the inequality $\rho \geq \frac{\log(n+L)}{nL}$ implies that each layer may be disconnected and there may exist numerous isolated nodes within each layer. This phenomenon can be clearly observed in Fig.~\ref{Nrho} of Section \ref{sec5}.
\end{rem}
Based on Assumption \ref{Assum1}, the following lemma provides a bound on $\|A_{\mathrm{sum}}-\Omega_{\mathrm{sum}}\|$ under MLDCSBM. This lemma demonstrates that when $A_{l}$ is generated from $\mathrm{MLDCSBM}(Z,\Theta,\mathcal{B})$ for $l\in[L]$, the aggregation matrix $A_{\mathrm{sum}}$ is close to its expectation $\Omega_{\mathrm{sum}}$ in terms of the spectral norm. Using the bound provided by this lemma, we can establish a bound on the difference between $U$ and $\hat{U}$ in terms of the Frobenius norm by Lemma 5.1 of \citep{lei2015consistency}. Finally, we use the bound of the difference between $U$ and $\hat{U}$ to derive a bound on NSoA's Clustering error. To enhance the understanding of the theoretical analysis of the NSoA algorithm, we present a detailed technical roadmap of NSoA in Fig.~\ref{RoadmapNSoA}. As illustrated, to establish the theoretical guarantees of NSoA, we endeavor to bound the outputs between NSoA and its ideal version at each step.
\begin{lem}\label{boundAsum}
Under $\mathrm{MLDCSBM}(Z,\Theta,\mathcal{B})$, when Assumption \ref{Assum1} holds, with probability at least $1-O(\frac{1}{n+L})$, we have
\begin{align*}
\|A_{\mathrm{sum}}-\Omega_{\mathrm{sum}}\|=O(\sqrt{\theta_{\mathrm{max}}\|\theta\|_{1}L\mathrm{log}(n+L)}).
\end{align*}
\end{lem}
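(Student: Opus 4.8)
The plan is to write $A_{\mathrm{sum}}-\Omega_{\mathrm{sum}}=\sum_{l\in[L]}(A_{l}-\Omega_{l})$ as a sum of independent, mean-zero, symmetric random matrices and then apply the Matrix Bernstein inequality of \citep{tropp2012user}. Since each entry $A_{l}(i,j)$ is generated independently across layers and, by symmetry, across pairs $i\leq j$, I would introduce for every layer $l$ and every pair $i\leq j$ the elementary matrix $E^{(l,i,j)}$ whose only nonzero entries are $A_{l}(i,j)-\Omega_{l}(i,j)$ placed symmetrically at positions $(i,j)$ and $(j,i)$ (the case $i=j$ covers the self-loops allowed by the model). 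Then $A_{\mathrm{sum}}-\Omega_{\mathrm{sum}}=\sum_{l\in[L]}\sum_{i\leq j}E^{(l,i,j)}$ is a sum of independent summands with $\mathbb{E}[E^{(l,i,j)}]=0$, which is exactly the input Bernstein needs.

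Next I would control the two parameters the inequality depends on. For the uniform bound, since $|A_{l}(i,j)-\Omega_{l}(i,j)|\leq 1$ and $E^{(l,i,j)}$ is (for $i<j$) a symmetric matrix with a single off-diagonal magnitude, $\|E^{(l,i,j)}\|\leq 1=:R$. For the variance proxy, a direct computation shows $(E^{(l,i,j)})^{2}$ is diagonal, so $\sum_{l,i\leq j}\mathbb{E}[(E^{(l,i,j)})^{2}]$ is diagonal with $(i,i)$-entry $\sum_{l\in[L]}\sum_{j\in[n]}\mathrm{Var}(A_{l}(i,j))$. Using $\mathrm{Var}(A_{l}(i,j))\leq\Omega_{l}(i,j)=\theta(i)\theta(j)B_{l}(\ell(i),\ell(j))$ and $B_{l}\leq 1$, each row sum obeys $\sum_{j}\Omega_{l}(i,j)\leq\theta(i)\|\theta\|_{1}\leq\theta_{\mathrm{max}}\|\theta\|_{1}$, so the variance parameter satisfies $\sigma^{2}:=\bigl\|\sum_{l,i\leq j}\mathbb{E}[(E^{(l,i,j)})^{2}]\bigr\|\leq\theta_{\mathrm{max}}\|\theta\|_{1}L$.

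Finally I would invoke Matrix Bernstein with $t=C\sqrt{\theta_{\mathrm{max}}\|\theta\|_{1}L\,\mathrm{log}(n+L)}$ for a large constant $C$, giving $\mathbb{P}(\|A_{\mathrm{sum}}-\Omega_{\mathrm{sum}}\|\geq t)\leq 2n\exp\!\bigl(-\tfrac{t^{2}/2}{\sigma^{2}+Rt/3}\bigr)$. Writing $v=\theta_{\mathrm{max}}\|\theta\|_{1}L$, Assumption \ref{Assum1} is precisely what is needed here: it gives $v\geq\mathrm{log}(n+L)$, hence $Rt/3\leq\tfrac{C}{3}\sqrt{v\,\mathrm{log}(n+L)}\leq\tfrac{C}{3}v$, so the variance term dominates the denominator and $\sigma^{2}+Rt/3=O(v)$. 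The exponent is then at least a constant multiple of $\mathrm{log}(n+L)$ with constant proportional to $C^{2}$, and choosing $C$ large enough yields $2n(n+L)^{-\alpha}$ with $\alpha\geq 2$, which is $O(\tfrac{1}{n+L})$ since $n\leq n+L$. This is the claimed bound.

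The main obstacle is the variance computation together with the regime check. Concretely, one must account correctly for how the symmetric pairs contribute to the diagonal of $\sum\mathbb{E}[(E^{(l,i,j)})^{2}]$ and verify the off-diagonal structure vanishes in expectation, and then recognize that Assumption \ref{Assum1} is exactly the condition forcing the variance-dominated regime of Bernstein. Without it, the $Rt/3$ term would dominate and produce the weaker rate $t\sim\mathrm{log}(n+L)$; with it, one obtains the sharp rate $\sqrt{\theta_{\mathrm{max}}\|\theta\|_{1}L\,\mathrm{log}(n+L)}$ stated in the lemma.
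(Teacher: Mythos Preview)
Your proposal is correct and follows essentially the same approach as the paper: both apply the Matrix Bernstein inequality from \citep{tropp2012user} with $R=1$, the variance bound $\sigma^{2}\leq\theta_{\mathrm{max}}\|\theta\|_{1}L$ obtained from $\mathrm{Var}(A_{l}(i,j))\leq\Omega_{l}(i,j)\leq\theta(i)\theta(j)$, and Assumption~\ref{Assum1} to force the variance-dominated regime. The only cosmetic difference is that the paper decomposes over all $(i,j,l)$ using the asymmetric elementary matrices $E^{(ij)}$ (which is slightly loose with respect to the self-adjoint and independence hypotheses of Bernstein), whereas your decomposition over $i\leq j$ with symmetrized summands handles those hypotheses more cleanly; the resulting bounds are identical.
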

\begin{rem}
By the proof of Lemma \ref{boundAsum}, we see that if Assumption \ref{Assum1} becomes $\theta_{\mathrm{max}}\|\theta\|_{1}L\geq\mathrm{log}(n)$, then the probability in Lemma \ref{boundAsum} becomes $1-O(\frac{1}{n})$ and the upper bound of $\|A_{\mathrm{sum}}-\Omega_{\mathrm{sum}}\|$ becomes $O(\sqrt{\theta_{\mathrm{max}}\|\theta\|_{1}L\mathrm{log}(n)})$. We consider the case $\theta_{\mathrm{max}}\|\theta\|_{1}L\geq\mathrm{log}(n+L)$ instead of $\theta_{\mathrm{max}}\|\theta\|_{1}L\geq\mathrm{log}(n)$ because we are mainly interested in the asymptotic region that both $n$ and $L$ grow in this paper.
\end{rem}
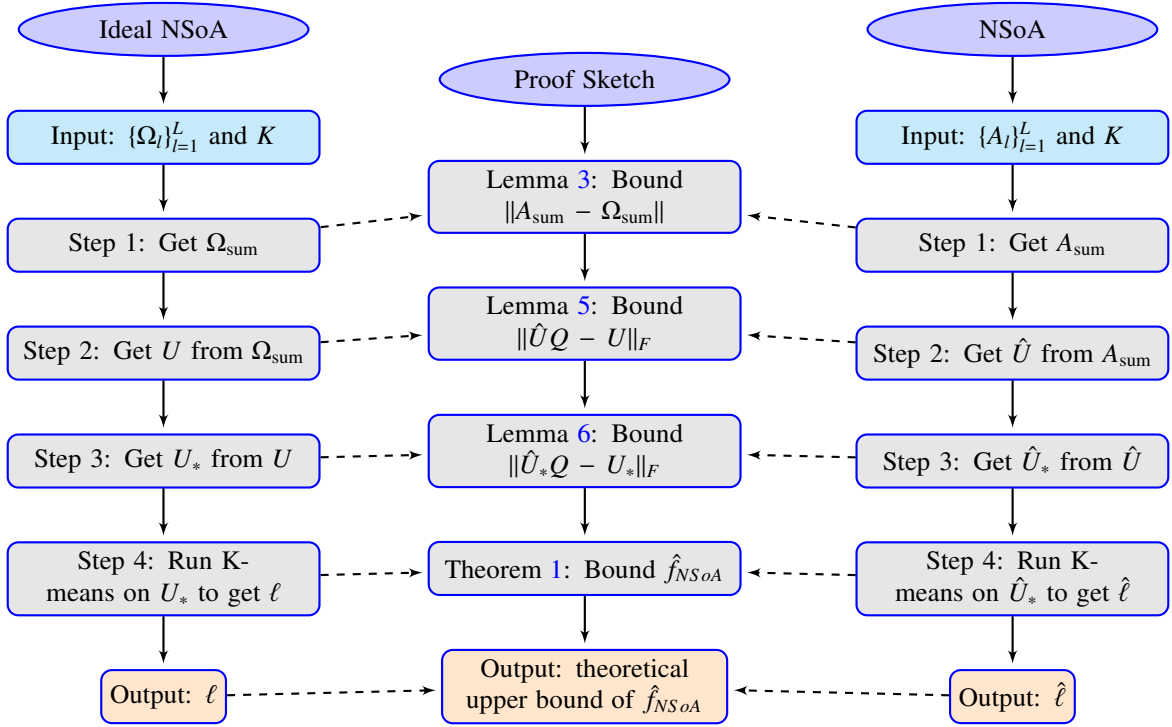
\begin{figure*}[htbp]
\centering
\begin{tikzpicture}[
    remember picture,
    auto,
    decision/.style={diamond, draw=blue, thick, fill=blue!20, text width=4.5em, align=flush center, inner sep=1pt},
    block0/.style={ellipse, draw=blue, thick, fill=blue!20, text width=7em, align=center, rounded corners, minimum height=2em},
    block1/.style={rectangle, draw=blue, thick, fill=cyan!20, text width=11em, align=center, rounded corners, minimum height=2em},
    block2/.style={rectangle, draw=blue, thick, fill=gray!20, text width=11em, align=center, rounded corners, minimum height=2em},
    block3/.style={rectangle, draw=blue, thick, fill=gray!20, text width=11em, align=center, rounded corners, minimum height=2em},
    block4/.style={rectangle, draw=blue, thick, fill=gray!20, text width=11em, align=center, rounded corners, minimum height=2em},
    block5/.style={rectangle, draw=blue, thick, fill=gray!20, text width=11em, align=center, rounded corners, minimum height=2em},
    block6/.style={rectangle, draw=blue, thick, fill=orange!20, text width=4em, align=center, rounded corners, minimum height=2em},
    line/.style={draw, thick, -latex', shorten >=2pt},
    cloud/.style={draw=red, thick, ellipse, fill=red!20, minimum height=2em}
]
\matrix [row sep=7mm, column sep=5mm] {
    \node [block0] (idealNSoA) {Ideal NSoA}; \\
    \node [block1] (ideal1) {Input: $\{\Omega_{l}\}^{L}_{l=1}$ and $K$}; \\
    \node [block2] (ideal2) {Step 1: Get $\Omega_{\mathrm{sum}}$}; \\
    \node [block3] (ideal3) {Step 2: Get $U$ from $\Omega_{\mathrm{sum}}$}; \\
    \node [block4] (ideal4) {Step 3: Get $U_{*}$ from $U$}; \\
    \node [block5] (ideal5) {Step 4: Run K-means on $U_{*}$ to get $\ell$}; \\
    \node [block6] (ideal6) {Output: $\ell$}; \\
};
\begin{scope}[every path/.style=line]
    \path (idealNSoA) -- (ideal1);
    \path (ideal1) -- (ideal2);
    \path (ideal2) -- (ideal3);
    \path (ideal3) -- (ideal4);
    \path (ideal4) -- (ideal5);
    \path (ideal5) -- (ideal6);
\end{scope}
\end{tikzpicture}
\hspace{1cm}
\begin{tikzpicture}[
    remember picture,
    auto,
    decision/.style={diamond, draw=blue, thick, fill=blue!20, text width=4.5em, align=flush center, inner sep=1pt},
    block1/.style={ellipse, draw=blue, thick, fill=blue!20, text width=7em, align=center, rounded corners, minimum height=2em},
    block2/.style={rectangle, draw=blue, thick, fill=gray!20, text width=11em, align=center, rounded corners, minimum height=2em},
    block3/.style={rectangle, draw=blue, thick, fill=gray!20, text width=11em, align=center, rounded corners, minimum height=2em},
    block4/.style={rectangle, draw=blue, thick, fill=gray!20, text width=11em, align=center, rounded corners, minimum height=2em},
    block5/.style={rectangle, draw=blue, thick, fill=gray!20, text width=11em, align=center, rounded corners, minimum height=2em},
    block6/.style={rectangle, draw=blue, thick, fill=orange!20, text width=10em, align=center, rounded corners, minimum height=2em},
    line/.style={draw, thick, -latex', shorten >=2pt},
    cloud/.style={draw=red, thick, ellipse, fill=red!20, minimum height=2em}
]
\matrix [row sep=7mm, column sep=5mm] {
    \node [block1] (proof1) {Proof Sketch}; \\
    \node [block2] (proof2) {Lemma \ref{boundAsum}: Bound $\|A_{\mathrm{sum}}-\Omega_{\mathrm{sum}}\|$}; \\
    \node [block3] (proof3) {Lemma \ref{BoundUhatU}: Bound $\|\hat{U}Q-U\|_{F}$}; \\
    \node [block4] (proof4) {Lemma \ref{BoundUstarhatUsatr}: Bound $\|\hat{U}_{*}Q-U_{*}\|_{F}$}; \\
    \node [block5] (proof5) {Theorem \ref{mainNSoA}: Bound $\hat{f}_{NSoA}$}; \\
    \node [block6] (proof6) {Output: theoretical upper bound of $\hat{f}_{NSoA}$}; \\
};
\begin{scope}[every path/.style=line]
    \path (proof1) -- (proof2);
    \path (proof2) -- (proof3);
    \path (proof3) -- (proof4);
    \path (proof4) -- (proof5);
    \path (proof5) -- (proof6);
\end{scope}
\end{tikzpicture}
\begin{tikzpicture}[remember picture, overlay]
    \draw [dashed, -latex', thick, shorten >=2pt] (ideal2) -- (proof2);
    \draw [dashed, -latex', thick, shorten >=2pt] (ideal3) -- (proof3);
    \draw [dashed, -latex', thick, shorten >=2pt] (ideal4) -- (proof4);
    \draw [dashed, -latex', thick, shorten >=2pt] (ideal5) -- (proof5);
    \draw [dashed, -latex', thick, shorten >=2pt] (ideal6) -- (proof6);
\end{tikzpicture}
\hspace{1cm}
\begin{tikzpicture}[
    remember picture,
    auto,
    decision/.style={diamond, draw=blue, thick, fill=blue!20, text width=4.5em, align=flush center, inner sep=1pt},
    block0/.style={ellipse, draw=blue, thick, fill=blue!20, text width=7em, align=center, rounded corners, minimum height=2em},
    block1/.style={rectangle, draw=blue, thick, fill=cyan!20, text width=11em, align=center, rounded corners, minimum height=2em},
    block2/.style={rectangle, draw=blue, thick, fill=gray!20, text width=11em, align=center, rounded corners, minimum height=2em},
    block3/.style={rectangle, draw=blue, thick, fill=gray!20, text width=11em, align=center, rounded corners, minimum height=2em},
    block4/.style={rectangle, draw=blue, thick, fill=gray!20, text width=11em, align=center, rounded corners, minimum height=2em},
    block5/.style={rectangle, draw=blue, thick, fill=gray!20, text width=11em, align=center, rounded corners, minimum height=2em},
    block6/.style={rectangle, draw=blue, thick, fill=orange!20, text width=4em, align=center, rounded corners, minimum height=2em},
    line/.style={draw, thick, -latex', shorten >=2pt},
    cloud/.style={draw=red, thick, ellipse, fill=red!20, minimum height=2em}
]
\matrix [row sep=7mm, column sep=5mm] {
    \node [block0] (realNSoA) {NSoA}; \\
    \node [block1] (real1) {Input: $\{A_{l}\}^{L}_{l=1}$ and $K$}; \\
    \node [block2] (real2) {Step 1: Get $A_{\mathrm{sum}}$}; \\
    \node [block3] (real3) {Step 2: Get $\hat{U}$ from $A_{\mathrm{sum}}$}; \\
    \node [block4] (real4) {Step 3: Get $\hat{U}_{*}$ from $\hat{U}$}; \\
    \node [block5] (real5) {Step 4: Run K-means on $\hat{U}_{*}$ to get $\hat{\ell}$}; \\
    \node [block6] (real6) {Output: $\hat{\ell}$}; \\
};
\begin{scope}[every path/.style=line]
    \path (realNSoA) -- (real1);
    \path (real1) -- (real2);
    \path (real2) -- (real3);
    \path (real3) -- (real4);
    \path (real4) -- (real5);
    \path (real5) -- (real6);
\end{scope}
\end{tikzpicture}
\begin{tikzpicture}[remember picture, overlay]
    \draw [dashed, -latex', thick, shorten >=2pt] (real2) -- (proof2);
    \draw [dashed, -latex', thick, shorten >=2pt] (real3) -- (proof3);
    \draw [dashed, -latex', thick, shorten >=2pt] (real4) -- (proof4);
    \draw [dashed, -latex', thick, shorten >=2pt] (real5) -- (proof5);
    \draw [dashed, -latex', thick, shorten >=2pt] (real6) -- (proof6);
\end{tikzpicture}
\caption{The overall technical roadmap of the NSoA algorithm.}
\label{RoadmapNSoA}
\end{figure*}
To build and simplify the theoretical guarantees on NSoA's consistency results, we need the following condition which requires a linear growth of the smallest singular value of $\sum_{l\in[L]}B_{l}$.
\begin{assum}\label{Assum11}
$|\lambda_{K}(\sum_{l\in[L]}B_{l})|\geq c_{1}L$ for some constant $c_{1}>0$.
\end{assum}
Here, we use one simple example to explain the rationality of Assumption \ref{Assum11}. Consider the special situation that we generate $L$ independent adjacency matrices from a single DCSBM model, i.e., the case $B_{1}=B_{2}=\ldots=B_{L}$. For this case, we have $|\lambda_{K}(\sum_{l\in[L]})B_{l}|=L|\lambda_{K}(B_{1})|$, which suggests the rationality of Assumption \ref{Assum11}. Assumption \ref{Assum11} is required when we bound the difference between $\hat{U}$ and $U$, where we need to provide a lower bound for $|\lambda_{K}(\Omega_{\mathrm{sum}})|$ to streamline our analysis. Furthermore, the eigenvalue growth condition is common in establishing estimation consistency for community detection methods in multi-layer networks, as exemplified by Assumption 1 in \citep{lei2023bias}, Assumption 2 in \citep{su2023spectral}, and Assumption A in \citep{xu2023covariate}.

When the estimated label vector $\hat{\ell}$ is returned by our NSoA algorithm, we let $\hat{f}_{NSoA}$ be the error rate of NSoA computed by Equation (\ref{ErrorRate}). The following theorem is the main result of NSoA and it provides the upper bound of NSoA's error rate.
\begin{thm}\label{mainNSoA}
Under $\mathrm{MLDCSBM}(Z,\Theta,\mathcal{B})$, when Assumptions \ref{Assum1} and \ref{Assum11} hold, with probability at least $1-O(\frac{1}{n+L})$, we have
\begin{align*}
\hat{f}_{NSoA}=O(\frac{K^{2}\theta^{3}_{\mathrm{max}}\|\theta\|_{1}n_{\mathrm{max}}\mathrm{log}(n+L)}{\theta^{6}_{\mathrm{min}}n^{3}_{\mathrm{min}}L}).
\end{align*}
\end{thm}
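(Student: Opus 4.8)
The plan is to follow the four-step roadmap of Fig.~\ref{RoadmapNSoA}, propagating a single spectral-norm concentration bound through the eigendecomposition, the row normalization, and finally the K-means step; all randomness is confined to the first step. Indeed, Lemma~\ref{boundAsum} already supplies
$$\|A_{\mathrm{sum}}-\Omega_{\mathrm{sum}}\|=O\bigl(\sqrt{\theta_{\mathrm{max}}\|\theta\|_{1}L\log(n+L)}\bigr)$$
on an event of probability $1-O(1/(n+L))$, so I would carry out the remaining three steps deterministically on this event, which is why the probability in the theorem matches that of the lemma.

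Second, I would convert this spectral-norm bound into a Frobenius bound on the leading eigenvectors through a Davis--Kahan / Lei--Rinaldo perturbation inequality, obtaining $\|\hat{U}Q-U\|_{F}=O\bigl(\sqrt{K}\,\|A_{\mathrm{sum}}-\Omega_{\mathrm{sum}}\|/|\lambda_{K}(\Omega_{\mathrm{sum}})|\bigr)$ for a suitable $K\times K$ orthogonal $Q$. This step needs a lower bound on $|\lambda_{K}(\Omega_{\mathrm{sum}})|$. Here I would exploit the factorization $\Omega_{\mathrm{sum}}=\Theta Z(\sum_{l\in[L]}B_{l})Z'\Theta$ from Equation~(\ref{AlMLDCSBM}): the nonzero eigenvalues of $\Omega_{\mathrm{sum}}$ coincide with those of the symmetric matrix $M^{1/2}(\sum_{l\in[L]}B_{l})M^{1/2}$, where $M=Z'\Theta^{2}Z$ is diagonal with $k$-th entry $\sum_{i\in\mathcal{C}_{k}}\theta^{2}(i)\geq n_{\mathrm{min}}\theta_{\mathrm{min}}^{2}$. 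A singular-value product inequality together with Assumption~\ref{Assum11} then gives $|\lambda_{K}(\Omega_{\mathrm{sum}})|\geq\lambda_{\min}(M)\,|\lambda_{K}(\sum_{l\in[L]}B_{l})|\geq c_{1}\theta_{\mathrm{min}}^{2}n_{\mathrm{min}}L$, and hence $\|\hat{U}Q-U\|_{F}=O\bigl(\sqrt{K}\sqrt{\theta_{\mathrm{max}}\|\theta\|_{1}L\log(n+L)}/(\theta_{\mathrm{min}}^{2}n_{\mathrm{min}}L)\bigr)$.

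Third, I would push the bound through the row normalization. Writing $U=\Theta ZW$ and using $U'U=I_{K\times K}$ forces $W=M^{-1/2}O$ for some orthogonal $O$, so that $\|U(i,:)\|_{F}=\theta(i)/\sqrt{\sum_{j\in\mathcal{C}_{\ell(i)}}\theta^{2}(j)}\geq\theta_{\mathrm{min}}/(\theta_{\mathrm{max}}\sqrt{n_{\mathrm{max}}})$. Inserting this lower bound on the minimum row norm into the standard row-normalization perturbation lemma yields $\|\hat{U}_{*}Q-U_{*}\|_{F}\leq (2/\min_{i}\|U(i,:)\|_{F})\,\|\hat{U}Q-U\|_{F}$, and squaring gives $\|\hat{U}_{*}Q-U_{*}\|_{F}^{2}=O\bigl(K\theta_{\mathrm{max}}^{3}n_{\mathrm{max}}\|\theta\|_{1}\log(n+L)/(\theta_{\mathrm{min}}^{6}n_{\mathrm{min}}^{2}L)\bigr)$. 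Finally, since Lemma~\ref{EigOsum} guarantees that the $K$ population rows of $U_{*}$ are pairwise separated by $\sqrt{2}$, the standard K-means misclustering argument bounds the number of misclustered nodes by a multiple of $\|\hat{U}_{*}Q-U_{*}\|_{F}^{2}$; dividing by $n_{\mathrm{min}}$ and absorbing the additional factor of $K$ that the K-means step contributes converts this into the stated bound on $\hat{f}_{NSoA}$.

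The hard part is the second step: obtaining a lower bound on $|\lambda_{K}(\Omega_{\mathrm{sum}})|$ that is sharp in $\theta_{\mathrm{min}}$, $n_{\mathrm{min}}$ and $L$ while allowing $\sum_{l\in[L]}B_{l}$ to be indefinite, since it is precisely this bound (together with the minimum row-norm estimate of step three) that fixes the exponents $\theta_{\mathrm{min}}^{-6}$, $n_{\mathrm{min}}^{-3}$, $n_{\mathrm{max}}$ and $L^{-1}$ in the final rate. The reduction to $M^{1/2}(\sum_{l\in[L]}B_{l})M^{1/2}$ and the singular-value inequality are the key devices that make this quantitative and keep the degree-heterogeneity dependence explicit.
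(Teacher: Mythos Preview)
Your proposal is correct and follows essentially the same route as the paper's proof: concentration via Lemma~\ref{boundAsum}, Davis--Kahan/Lei--Rinaldo for $\|\hat{U}Q-U\|_{F}$ with the eigenvalue lower bound $|\lambda_{K}(\Omega_{\mathrm{sum}})|\geq c_{1}\theta_{\mathrm{min}}^{2}n_{\mathrm{min}}L$, the row-norm estimate $\min_{i}\|U(i,:)\|_{F}\geq\theta_{\mathrm{min}}/(\theta_{\mathrm{max}}\sqrt{n_{\mathrm{max}}})$, and the K-means bound $\hat{f}_{NSoA}=O\bigl((K/n_{\mathrm{min}})\|\hat{U}_{*}Q-U_{*}\|_{F}^{2}\bigr)$. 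The only cosmetic difference is that where you spell out the $M^{1/2}(\sum_{l}B_{l})M^{1/2}$ reduction and the explicit form $U=\Theta ZM^{-1/2}O$, the paper invokes external lemmas (Lemma~7 and Theorem~2 of \citep{qing2023community}) for the same inequalities.
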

At first glance, Theorem \ref{mainNSoA} is complex since it keeps track of too many model parameters. In fact, we benefit from such a complex form since we can analyze the influence of each model parameter on NSoA's performance directly. For example, when we fix $Z,\Theta$, and $\mathbb{B}$, increasing $L$ decreases NSoA's error rate from Theorem \ref{mainNSoA}, and this also suggests the benefit of a multi-layer network since more layers of networks help us better estimate communities. In particular, when $L\rightarrow+\infty$, NSoA's error rate goes to zero, which shows that NSoA consistently detect nodes' communities respective to the number of layers $L$ and suggests that if possible, we can always increase the number of layers for a better community detection. We also see that decreasing $n_{\mathrm{min}}$ increases the error rate because it is well-known that community recovery is hard when the size of one community is too small.

By considering some conditions on model parameters, we can further simplify the bound in Theorem \ref{mainNSoA}.
\begin{cor}\label{CorAsum}
Suppose that the conditions in Theorem \ref{mainNSoA} hold, if we further assume that $K=O(1), \frac{n_{\mathrm{min}}}{n_{\mathrm{max}}}=O(1)$, and $\theta_{\mathrm{min}}=O(\sqrt{\rho}), \theta_{\mathrm{max}}=O(\sqrt{\rho})$ for $\rho\in(0,1]$, we have
\begin{align*}
\hat{f}_{NSoA}=O(\frac{\mathrm{log}(n+L)}{\rho nL}).
\end{align*}
\end{cor}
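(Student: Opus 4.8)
The plan is to obtain Corollary \ref{CorAsum} directly from Theorem \ref{mainNSoA} by feeding the three simplifying hypotheses into the general error bound; no fresh probabilistic work is required, since the high-probability event and its $1-O(\frac{1}{n+L})$ guarantee carry over verbatim from the theorem. The task is therefore purely to track orders of magnitude. First, from $K=O(1)$ I get $K^{2}=O(1)$, so the leading factor $K^{2}$ in the numerator is absorbed into the hidden constant.

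Next I would show that every community size is of order $n$. Since the $K$ communities partition the $n$ nodes, $n_{\mathrm{max}}\geq n/K$, and with $K=O(1)$ this forces $n_{\mathrm{max}}\asymp n$; reading the balance hypothesis $\frac{n_{\mathrm{min}}}{n_{\mathrm{max}}}=O(1)$ as $n_{\mathrm{min}}\asymp n_{\mathrm{max}}$ then yields $n_{\mathrm{min}}\asymp n_{\mathrm{max}}\asymp n$. Consequently $n_{\mathrm{max}}$ in the numerator and $n^{3}_{\mathrm{min}}$ in the denominator may be replaced by $n$ and $n^{3}$, respectively, up to constants.

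For the degree-heterogeneity parameters I interpret the hypotheses $\theta_{\mathrm{min}}=O(\sqrt{\rho})$ and $\theta_{\mathrm{max}}=O(\sqrt{\rho})$ as two-sided statements, namely $\theta_{\mathrm{min}}\asymp\theta_{\mathrm{max}}\asymp\sqrt{\rho}$; this is exactly what is needed to bound the factor $\theta^{6}_{\mathrm{min}}$ in the denominator from below. From $n\theta_{\mathrm{min}}\leq\|\theta\|_{1}=\sum_{i\in[n]}\theta(i)\leq n\theta_{\mathrm{max}}$ it then follows that $\|\theta\|_{1}\asymp n\sqrt{\rho}$.

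The final step is the substitution. Plugging $\theta^{3}_{\mathrm{max}}\asymp\rho^{3/2}$, $\|\theta\|_{1}\asymp n\sqrt{\rho}$, and $n_{\mathrm{max}}\asymp n$ into the numerator of the bound in Theorem \ref{mainNSoA} gives an expression of order $\rho^{2}n^{2}\mathrm{log}(n+L)$, while $\theta^{6}_{\mathrm{min}}\asymp\rho^{3}$ and $n^{3}_{\mathrm{min}}\asymp n^{3}$ give a denominator of order $\rho^{3}n^{3}L$; taking the ratio produces $\frac{\mathrm{log}(n+L)}{\rho nL}$, which is the claimed rate. The only point that demands care is the reading of the $\theta$-conditions: the single-sided $O(\cdot)$ statements must be understood as \emph{exact} orders, because a mere upper bound on $\theta_{\mathrm{min}}$ would leave $\theta^{-6}_{\mathrm{min}}$ uncontrolled and the corollary would fail; the rest is bookkeeping of constants.
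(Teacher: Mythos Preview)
Your proposal is correct and proceeds exactly as the paper intends: Corollary \ref{CorAsum} is obtained by direct substitution of the simplifying order assumptions into the bound of Theorem \ref{mainNSoA}, and the paper does not even spell out a separate proof. Your remark that the hypotheses on $\theta_{\mathrm{min}}$ and $\theta_{\mathrm{max}}$ must be read as two-sided ($\asymp\sqrt{\rho}$) rather than literal one-sided $O(\cdot)$ statements is accurate and worth noting, since without a lower bound on $\theta_{\mathrm{min}}$ the factor $\theta_{\mathrm{min}}^{-6}$ in the denominator would be uncontrolled.
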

In Corollary \ref{CorAsum}, $K=O(1)$ means that the number of communities $K$ is fixed, $n_{\mathrm{min}}/n_{\mathrm{max}}=O(1)$ means that community sizes are balanced, and $\theta_{\mathrm{min}}/\theta_{\mathrm{max}}=O(1)$ means that the degree heterogeneity parameters are balanced. Both the condition of a fixed K and the condition of balanced community sizes are frequently encountered in theoretical analyses of community detection algorithms. Examples include the conditions outlined in Corollary 3.1 of \citep{mao2021estimating}, Assumption 1 of \citep{lei2023bias}, Condition 1 in \citep{qing2024bipartite}, and Assumption 1 of \citep{su2023spectral}. From Corollary \ref{CorAsum}, it is clear that NSoA enjoys consistent community recovery because NSoA's error rate decreases to zero as $n$ (or $L$) goes to infinity and NSoA performs better in estimating nodes' community labels with the increase of $n$ (or $L$ or $\rho$). For consistency, we need $\frac{\mathrm{log}(n+L)}{\rho nL}\ll1$, which matches the sparsity requirement in Assumption \ref{Assum1}. $\frac{\mathrm{log}(n+L)}{\rho nL}\ll1$ also gives that $\rho\gg\frac{\mathrm{log}(n+L)}{nL}$, suggesting that $\rho$ should shrink slower than $\frac{\mathrm{log}(n+L)}{nL}$. Meanwhile, if we set $\theta(i)=\sqrt{\rho}>0$ for all $i\in[n]$ such that $\Omega_{l}=\Theta ZB_{l}Z'\Theta=\rho ZB_{l}Z'$ for $l\in[L]$,  MLDCSBM degenerates to MLSBM and Corollary \ref{CorAsum} provides an upper bound on NSoA's error rate under MLSBM. It is evident that as $\rho$ increases, the number of edges in each layer also rises (see Fig.~\ref{Nrho} of Section \ref{sec5}). Hence, we refer to $\rho$ as the sparsity parameter, as it governs the overall sparsity of the multi-layer network.

The provided upper bound for the error rate of the NSoA algorithm, $\hat{f}_{NSoA} = O\left(\frac{\mathrm{log}(n+L)}{\rho nL}\right)$, offers valuable insights into how the performance of the algorithm is influenced by various parameters of the multi-layer network. Here are the details:
\begin{itemize}
  \item The bound suggests that as the number of nodes $n$ increases, the error rate decreases. Intuitively, more nodes provide more information and potential connections, which can lead to better community detection.
  \item Similarly, an increase in the number of layers $L$ reduces the error rate. Multiple layers allow the algorithm to capture more complex relationships and interactions between nodes.
  \item The error rate is inversely proportional to the sparsity parameter $\rho$. As $\rho$ increases (indicating fewer zero edges), the network becomes denser, and the algorithm performs better because there is more connectivity information available.
\end{itemize}
\begin{rem}\label{NoAssumptionsNSoA}
By referring to the proof of Theorem \ref{mainNSoA}, we know that $\hat{f}_{NSoA}=O(\frac{K^{2}\theta^{2}_{\mathrm{max}}n_{\mathrm{max}}\|A_{\mathrm{sum}}-\Omega_{\mathrm{sum}}\|^{2}}{\theta^{6}_{\mathrm{min}}n^{3}_{\mathrm{min}}\lambda^{2}_{K}(\sum_{l\in[L]}B_{l})})$ if we do not consider Assumptions \ref{Assum1} and \ref{Assum11}, as these two assumptions are essential for bounding $\|A_{\mathrm{sum}}-\Omega_{\mathrm{sum}}\|$ and $\lambda_{K}(\sum_{l\in[L]}B_{l})$, respectively. This bound underscores the versatility of our NSoA algorithm, as it does not impose any constraints on the structure of multi-layer networks. Furthermore, under the conditions stated in Corollary \ref{CorAsum}, this bound simplifies to $O(\frac{\|A_{\mathrm{sum}}-\Omega_{\mathrm{sum}}\|^{2}}{\rho^{2}n^{2}\lambda^{2}_{K}(\sum_{l\in[L]}B_{l})})$. However, it is evident that neither of these bounds explicitly reveal the individual impacts of the number of nodes $n$, the number of layers $L$, or the sparsity parameter $\rho$ on NSoA's performance, as these factors may be inherently embedded within $\|A_{\mathrm{sum}}-\Omega_{\mathrm{sum}}\|$ and $\lambda_{K}(\sum_{l\in[L]}B_{l})$. Consequently, to further study the influences of $n$, $L$, and $\rho$ (and, more crucially, to assess the estimation consistency of NSoA), we must invoke Assumption \ref{Assum1} to establish an upper bound for $\|A_{\mathrm{sum}}-\Omega_{\mathrm{sum}}\|$ and Assumption \ref{Assum11} to establish a lower bound for $|\lambda_{K}(\sum_{l\in[L]}B_{l})|$. Indeed, under Assumption \ref{Assum1}, $n$, $L$, and $\rho$ are explicitly incorporated into $\|A_{\mathrm{sum}}-\Omega_{\mathrm{sum}}\|$ by Lemma \ref{boundAsum}; similarly, under Assumption \ref{Assum11}, $L$ is also incorporated into $|\lambda_{K}(\sum_{l\in[L]}B_{l})|$.
\end{rem}
\subsection{Consistency results for NDSoSA}
In this subsection, we provide an upper bound on the error rate of NDSoSA in terms of MLDCSBM model parameters and this ensures the estimation consistency of NDSoSA in the scenario where either the number of nodes $n$ or layers $L$ approaches infinity. For the theoretical bound of NDSoSA, we need the following assumption which controls the overall sparsity.
\begin{assum}\label{Assum2}
 $\theta_{\mathrm{max}}\|\theta\|_{1}\|\theta\|^{2}_{F}L\geq\mathrm{log}(n+L)$.
\end{assum}
Lemma \ref{boundSsum} functions similar to Lemma \ref{boundAsum} and it bounds $\|S_{\mathrm{sum}}-\tilde{S}_{\mathrm{sum}}\|$ under MLDCSBM, i.e., $S_{\mathrm{sum}}$ is close to $\tilde{S}_{\mathrm{sum}}$ in terms of spectral norm.
\begin{lem}\label{boundSsum}
Under $\mathrm{MLDCSBM}(Z,\Theta,\mathcal{B})$, when Assumption \ref{Assum2} holds, with probability at least $1-O(\frac{1}{n+L})$, we have
\begin{align*}
\|S_{\mathrm{sum}}-\tilde{S}_{\mathrm{sum}}\|=O(\sqrt{\theta_{\mathrm{max}}\|\theta\|_{1}\|\theta\|^{2}_{F}L\mathrm{log}(n+L)})+O(\theta^{2}_{\mathrm{max}}\|\theta\|^{2}_{F}L).
\end{align*}
\end{lem}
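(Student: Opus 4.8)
The plan is to follow the template of Lemma \ref{boundAsum} but to confront the new difficulty that $S_{\mathrm{sum}}$ is \emph{quadratic} in the adjacency entries. Writing $E_{l}=A_{l}-\Omega_{l}$ for the mean-zero noise in layer $l$, the first step is to split the target into a centered random part and a purely deterministic bias. Since each $A_{l}$ is binary and symmetric, $A_{l}^{2}(i,i)=\sum_{k}A_{l}(i,k)=D_{l}(i,i)$, so $A_{l}^{2}-D_{l}$ has identically zero diagonal; using that distinct entries of $A_{l}$ are independent, a short computation gives $\mathbb{E}[A_{l}^{2}-D_{l}]=\Omega_{l}^{2}-\mathrm{diag}(\Omega_{l}^{2})$, i.e. the off-diagonal part of $\Omega_{l}^{2}$. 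Summing over $l$ yields
\begin{align*}
S_{\mathrm{sum}}-\tilde{S}_{\mathrm{sum}}=\underbrace{\sum_{l\in[L]}\big(A_{l}^{2}-D_{l}-\mathbb{E}[A_{l}^{2}-D_{l}]\big)}_{=:R}-\underbrace{\sum_{l\in[L]}\mathrm{diag}(\Omega_{l}^{2})}_{\text{deterministic}}.
\end{align*}
The deterministic term is diagonal with $i$-th entry $\sum_{l}\sum_{k}\Omega_{l}(i,k)^{2}\le L\theta_{\mathrm{max}}^{2}\|\theta\|_{F}^{2}$, which immediately produces the second summand $O(\theta_{\mathrm{max}}^{2}\|\theta\|_{F}^{2}L)$ of the claimed bound. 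It remains to control $\|R\|$.

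For the random part I would expand $A_{l}^{2}-\Omega_{l}^{2}=\Omega_{l}E_{l}+E_{l}\Omega_{l}+E_{l}^{2}$ and retain only off-diagonal entries (the diagonal having been killed by $D_{l}$), so that $R=\sum_{l}\mathrm{offdiag}(\Omega_{l}E_{l}+E_{l}\Omega_{l})+\sum_{l}\mathrm{offdiag}(E_{l}^{2})$. The \emph{linear} piece $\sum_{l}\mathrm{offdiag}(\Omega_{l}E_{l}+E_{l}\Omega_{l})$ is a linear function of the independent Bernoulli variables $\{A_{l}(i,j)\}_{i\le j,\,l}$, hence a sum of independent, mean-zero, low-rank random matrices (one per coordinate), and can be handled by the Matrix Bernstein inequality of \cite{tropp2012user} exactly as in Lemma \ref{boundAsum}. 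The two inputs are a uniform spectral-norm bound on each summand, governed by $\|\Omega_{l}e_{s}\|_{2}\le\theta_{\mathrm{max}}\|\theta\|_{F}$, and a variance proxy of the form $\big\|\sum_{l}\Omega_{l}\,\mathbb{E}[E_{l}^{2}]\,\Omega_{l}\big\|$, where $\mathbb{E}[E_{l}^{2}]$ is diagonal with entries bounded by the expected degrees; balancing these against $\log(n+L)$ under Assumption \ref{Assum2} is what yields the $\sqrt{\theta_{\mathrm{max}}\|\theta\|_{1}\|\theta\|_{F}^{2}L\log(n+L)}$ term.

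The genuinely hard part, and the step I expect to be the main obstacle, is the \emph{quadratic} term $\sum_{l}\mathrm{offdiag}(E_{l}^{2})$: because $E_{l}^{2}$ is a degree-two polynomial in the independent entries, it is \emph{not} a sum of independent matrices within a layer, so Matrix Bernstein does not apply directly. My plan is to exploit independence \emph{across} layers rather than paying a crude factor $L$, re-centering the per-layer pieces as $\mathrm{offdiag}(E_{l}^{2})-\mathbb{E}[\mathrm{offdiag}(E_{l}^{2})]$ and either (i) invoking a decoupling argument together with a Bernstein bound whose per-layer variance is computed from the fourth moments of $E_{l}$, or (ii) bounding $\big\|\sum_{l}(E_{l}^{2}-\mathbb{E}[E_{l}^{2}])\big\|$ directly via the layerwise independence, using the sharp spectral-norm control $\|E_{l}\|=O(\sqrt{\theta_{\mathrm{max}}\|\theta\|_{1}})$ for a sparse mean-zero Bernoulli matrix. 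The delicate point is to show that this quadratic contribution, together with the linear cross terms, does not exceed the first claimed term; assembling the diagonal bias, the linear Bernstein bound, and the quadratic-term bound through the triangle inequality and collecting the failure probabilities into $O(\tfrac{1}{n+L})$ then completes the argument.
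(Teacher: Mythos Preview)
Your initial split into a centered random off-diagonal part and a deterministic diagonal bias $\sum_{l}\mathrm{diag}(\Omega_{l}^{2})$ is exactly the decomposition the paper makes (it calls the two pieces $S1$ and $S2$), and your bound $\|\sum_{l}\mathrm{diag}(\Omega_{l}^{2})\|\le\theta_{\mathrm{max}}^{2}\|\theta\|_{F}^{2}L$ is identical to theirs. The divergence is in how the random piece is handled. You further expand $A_{l}^{2}-\Omega_{l}^{2}=\Omega_{l}E_{l}+E_{l}\Omega_{l}+E_{l}^{2}$, deal with the linear part by Matrix Bernstein, and flag the quadratic term $\sum_{l}\mathrm{offdiag}(E_{l}^{2})$ as the genuine obstacle because it is not a sum of independent matrices within a layer.

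The paper skips this linear/quadratic split entirely. It writes $S1=\sum_{l}\sum_{i\neq j}\sum_{m}X^{(ijml)}$ with $X^{(ijml)}=(A_{l}(i,m)A_{l}(m,j)-\Omega_{l}(i,m)\Omega_{l}(m,j))E^{(ij)}$, notes that each summand has mean zero (for $i\neq j$ the two Bernoulli factors are distinct entries, hence independent), computes the uniform bound $R_{s}=1$ and a variance proxy $\sigma_{s}^{2}\le\theta_{\mathrm{max}}\|\theta\|_{1}\|\theta\|_{F}^{2}L$, and then invokes the Matrix Bernstein inequality (Theorem~\ref{Bern}) on this sum to obtain the first term in one stroke. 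This is simpler than your route, but you should be aware that Theorem~\ref{Bern} requires the summands $\{X^{(ijml)}\}$ to be mutually independent, and they are not: distinct summands can share an edge variable $A_{l}(i,m)$. The paper's ``simple trick'' does not address this, so the dependence obstacle you isolate is real; a fully rigorous argument needs either a decoupling step of the kind you sketch, or the matrix-quadratic-form concentration tools of \cite{lei2023bias} that the paper explicitly tries to avoid.
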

\begin{rem}
By the proof of Lemma \ref{boundSsum}, if Assumption \ref{Assum2} becomes $\theta_{\mathrm{max}}\|\theta\|_{1}\|\theta\|^{2}_{F}L\geq\mathrm{log}(n)$, Lemma \ref{boundSsum} becomes: with probability at least $1-O(\frac{1}{n})$, we have  $\|S_{\mathrm{sum}}-\tilde{S}_{\mathrm{sum}}\|=O(\sqrt{\theta_{\mathrm{max}}\|\theta\|_{1}\|\theta\|^{2}_{F}L\mathrm{log}(n})+O(\theta^{2}_{\mathrm{max}}\|\theta\|^{2}_{F}L)$.
\end{rem}
To establish NDSoSA's theoretical guarantees on consistency, we need the following assumption which requires a linear growth of the smallest singular value of $\sum_{l\in[L]}B^{2}_{l}$ respective to the number of layers $L$, where this assumption aligns with Assumption 1 (b) of \citep{lei2023bias}, as both our NDSoSA and the SoS-Debias method presented in \citep{lei2023bias} are constructed utilizing $S_{\mathrm{sum}}$.
\begin{assum}\label{Assum22}
$|\lambda_{K}(\sum_{l\in[L]}B^{2}_{l})|\geq c_{2}L$ for some constant $c_{2}>0$.
\end{assum}
Similar to Assumption \ref{Assum11}, the rationality of Assumption \ref{Assum22} can be understood for the case $B_{1}=B_{2}=\ldots=B_{L}$. Let $\hat{f}_{NDSoSA}$ be the error rate of NDSoSA calculated via Equation (\ref{ErrorRate}). The following theorem presents the main result of NDSoSA.
\begin{thm}\label{mainNDSoSA}
Under $\mathrm{MLDCSBM}(Z,\Theta,\mathcal{B})$, when Assumptions \ref{Assum2} and \ref{Assum22} hold, with probability at least $1-O(\frac{1}{n+L})$, we have
\begin{align*}
\hat{f}_{NDSoSA}=O(\frac{K^{2}\theta^{3}_{\mathrm{max}}\|\theta\|_{1}\|\theta\|^{2}_{F}n_{\mathrm{max}}\mathrm{log}(n+L)}{\theta^{10}_{\mathrm{min}}n^{5}_{\mathrm{min}}L})+O(\frac{K^{2}\theta^{6}_{\mathrm{max}}\|\theta\|^{4}_{F}n_{\mathrm{max}}}{\theta^{10}_{\mathrm{min}}n^{5}_{\mathrm{min}}}).
\end{align*}
\end{thm}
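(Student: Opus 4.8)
The plan is to follow the same four-stage roadmap established for NSoA in Fig.~\ref{RoadmapNSoA}, systematically replacing $A_{\mathrm{sum}},\Omega_{\mathrm{sum}},U$ by $S_{\mathrm{sum}},\tilde{S}_{\mathrm{sum}},V$. First I would invoke Lemma \ref{boundSsum} to control the perturbation $\|S_{\mathrm{sum}}-\tilde{S}_{\mathrm{sum}}\|$; this bound splits into a stochastic piece of order $\sqrt{\theta_{\mathrm{max}}\|\theta\|_{1}\|\theta\|^{2}_{F}L\,\mathrm{log}(n+L)}$ and a bias piece of order $\theta^{2}_{\mathrm{max}}\|\theta\|^{2}_{F}L$, and I expect the two summands in the final rate to trace back directly to these two pieces. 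In particular the bias piece grows linearly in $L$, which is exactly what will prevent the second term of the theorem from decaying in $L$.

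The key preliminary step, and the one I expect to be the main obstacle, is a sharp lower bound on $|\lambda_{K}(\tilde{S}_{\mathrm{sum}})|$, since this eigengap drives every subsequent perturbation estimate. Using $\Omega_{l}=\Theta ZB_{l}Z'\Theta$ I would rewrite $\tilde{S}_{\mathrm{sum}}=\sum_{l\in[L]}\Omega^{2}_{l}=\Theta Z\big(\sum_{l\in[L]}B_{l}GB_{l}\big)Z'\Theta$, where $G=Z'\Theta^{2}Z$ is the $K\times K$ positive diagonal matrix with entries $g_{k}=\sum_{i\in\mathcal{C}_{k}}\theta^{2}(i)\geq\theta^{2}_{\mathrm{min}}n_{\mathrm{min}}$. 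Because the nonzero eigenvalues of $\tilde{S}_{\mathrm{sum}}$ coincide with those of $\big(\sum_{l}B_{l}GB_{l}\big)G$, and because $B_{l}GB_{l}\succeq g_{\mathrm{min}}B^{2}_{l}$ in the positive-semidefinite order, a Rayleigh-quotient argument gives $|\lambda_{K}(\tilde{S}_{\mathrm{sum}})|\geq g^{2}_{\mathrm{min}}|\lambda_{K}(\sum_{l}B^{2}_{l})|\geq c_{2}\theta^{4}_{\mathrm{min}}n^{2}_{\mathrm{min}}L$ after applying Assumption \ref{Assum22}. The subtlety is that the ``sandwiched'' sum $\sum_{l}B_{l}GB_{l}$ is genuinely harder to control than the plain $\sum_{l}B_{l}$ used for NSoA, and it is precisely this sandwiching that injects the extra powers $\theta^{4}_{\mathrm{min}}n^{2}_{\mathrm{min}}$ into the denominator and thereby accounts for the heavier parameter dependence of the NDSoSA rate.

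With the eigengap secured, I would apply a Davis--Kahan bound of the Lemma 5.1 type from \citep{lei2015consistency} to produce an orthogonal $Q_{s}$ with $\|\hat{V}Q_{s}-V\|_{F}=O(\sqrt{K}\,\|S_{\mathrm{sum}}-\tilde{S}_{\mathrm{sum}}\|/|\lambda_{K}(\tilde{S}_{\mathrm{sum}})|)$. Passing to the row-normalized vectors requires a lower bound on the smallest row norm of $V$: writing $V=\Theta ZW$ with $W'GW=I_{K}$ forces $\|V(i,:)\|_{F}=\theta(i)/\sqrt{g_{\ell(i)}}\geq\theta_{\mathrm{min}}/(\theta_{\mathrm{max}}\sqrt{n_{\mathrm{max}}})$, and a standard row-normalization perturbation lemma then converts the previous bound into $\|\hat{V}_{*}Q_{s}-V_{*}\|_{F}=O(\theta_{\mathrm{max}}\sqrt{n_{\mathrm{max}}}\,\|\hat{V}Q_{s}-V\|_{F}/\theta_{\mathrm{min}})$.

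Finally I would close with the K-means misclustering argument. Since Lemma \ref{EigOsum2} guarantees that the $K$ rows of $V_{*}=ZY$ are pairwise separated by $\sqrt{2}$, any misclustered node incurs a row deviation bounded below by a constant, so the number of errors per community is $O(\|\hat{V}_{*}Q_{s}-V_{*}\|^{2}_{F})$ and hence $\hat{f}_{NDSoSA}=O(K\|\hat{V}_{*}Q_{s}-V_{*}\|^{2}_{F}/n_{\mathrm{min}})$; combining the $\sqrt{K}$ from Davis--Kahan (squared to $K$) with this last factor produces the overall $K^{2}$. Substituting the chain of estimates—the reciprocal squared minimal row norm $\theta^{2}_{\mathrm{max}}n_{\mathrm{max}}/\theta^{2}_{\mathrm{min}}$, the squared eigengap of order $\theta^{8}_{\mathrm{min}}n^{4}_{\mathrm{min}}L^{2}$, and the two-term expansion $\|S_{\mathrm{sum}}-\tilde{S}_{\mathrm{sum}}\|^{2}=O(\theta_{\mathrm{max}}\|\theta\|_{1}\|\theta\|^{2}_{F}L\,\mathrm{log}(n+L))+O(\theta^{4}_{\mathrm{max}}\|\theta\|^{4}_{F}L^{2})$—and collecting the powers yields exactly the two stated summands, the first decaying like $1/L$ from the stochastic term and the second being $L$-independent from the bias term. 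All bounds hold simultaneously on the event underlying Lemma \ref{boundSsum}, which carries probability at least $1-O(1/(n+L))$.
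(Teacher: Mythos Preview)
Your proposal is correct and follows the paper's proof essentially step for step: the perturbation bound from Lemma \ref{boundSsum}, Davis--Kahan via Lemma 5.1 of \citep{lei2015consistency} together with the eigengap $|\lambda_{K}(\tilde{S}_{\mathrm{sum}})|\gtrsim\theta^{4}_{\mathrm{min}}n^{2}_{\mathrm{min}}L$, the row-norm lower bound $\|V(i,:)\|_{F}\geq\theta_{\mathrm{min}}/(\theta_{\mathrm{max}}\sqrt{n_{\mathrm{max}}})$ for the normalization step, and the K-means conclusion $\hat{f}_{NDSoSA}=O(K\|\hat{V}_{*}Q_{s}-V_{*}\|^{2}_{F}/n_{\mathrm{min}})$. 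The only cosmetic difference is in deriving the eigengap, where you use the direct PSD comparison $B_{l}GB_{l}\succeq g_{\mathrm{min}}B^{2}_{l}$ whereas the paper chains several $AB\leftrightarrow BA$ spectrum identities before invoking the diagonality of $G=Z'\Theta^{2}Z$; both arguments arrive at the same lower bound, and yours is arguably the cleaner of the two.
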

The upper bound of NDSoSA's theoretical error rate in Theorem \ref{mainNDSoSA} is written in terms of several MLDCSBM model parameters. Similar to Theorem \ref{mainNSoA}, we can analyze the influence of these model parameters on NDSoSA's performance and we omit it here. Fig.~\ref{RoadmapNDSoSA} functions similar to Fig.~\ref{RoadmapNSoA} and it provides NDSoSA's overall technical roadmap.
\begin{figure*}[htbp]
\centering
\begin{tikzpicture}[
    remember picture,
    auto,
    decision/.style={diamond, draw=blue, thick, fill=blue!20, text width=4.5em, align=flush center, inner sep=1pt},
    block0/.style={ellipse, draw=blue, thick, fill=blue!20, text width=7em, align=center, rounded corners, minimum height=2em},
    block1/.style={rectangle, draw=blue, thick, fill=cyan!20, text width=11em, align=center, rounded corners, minimum height=2em},
    block2/.style={rectangle, draw=blue, thick, fill=gray!20, text width=11em, align=center, rounded corners, minimum height=2em},
    block3/.style={rectangle, draw=blue, thick, fill=gray!20, text width=11em, align=center, rounded corners, minimum height=2em},
    block4/.style={rectangle, draw=blue, thick, fill=gray!20, text width=11em, align=center, rounded corners, minimum height=2em},
    block5/.style={rectangle, draw=blue, thick, fill=gray!20, text width=11em, align=center, rounded corners, minimum height=2em},
    block6/.style={rectangle, draw=blue, thick, fill=orange!20, text width=4em, align=center, rounded corners, minimum height=2em},
    line/.style={draw, thick, -latex', shorten >=2pt},
    cloud/.style={draw=red, thick, ellipse, fill=red!20, minimum height=2em}
]
\matrix [row sep=7mm, column sep=5mm] {
    \node [block0] (idealNDSoSA) {Ideal NDSoSA}; \\
    \node [block1] (ideal1) {Input: $\{\Omega_{l}\}^{L}_{l=1}$ and $K$}; \\
    \node [block2] (ideal2) {Step 1: Get $\tilde{S}_{\mathrm{sum}}$}; \\
    \node [block3] (ideal3) {Step 2: Get $V$ from $\tilde{S}_{\mathrm{sum}}$}; \\
    \node [block4] (ideal4) {Step 3: Get $V_{*}$ from $V$}; \\
    \node [block5] (ideal5) {Step 4: Run K-means on $V_{*}$ to get $\ell$}; \\
    \node [block6] (ideal6) {Output: $\ell$}; \\
};
\begin{scope}[every path/.style=line]
    \path (idealNDSoSA) -- (ideal1);
    \path (ideal1) -- (ideal2);
    \path (ideal2) -- (ideal3);
    \path (ideal3) -- (ideal4);
    \path (ideal4) -- (ideal5);
    \path (ideal5) -- (ideal6);
\end{scope}
\end{tikzpicture}
\hspace{1cm}
\begin{tikzpicture}[
    remember picture,
    auto,
    decision/.style={diamond, draw=blue, thick, fill=blue!20, text width=4.5em, align=flush center, inner sep=1pt},
    block1/.style={ellipse, draw=blue, thick, fill=blue!20, text width=7em, align=center, rounded corners, minimum height=2em},
    block2/.style={rectangle, draw=blue, thick, fill=gray!20, text width=11em, align=center, rounded corners, minimum height=2em},
    block3/.style={rectangle, draw=blue, thick, fill=gray!20, text width=11em, align=center, rounded corners, minimum height=2em},
    block4/.style={rectangle, draw=blue, thick, fill=gray!20, text width=11em, align=center, rounded corners, minimum height=2em},
    block5/.style={rectangle, draw=blue, thick, fill=gray!20, text width=11em, align=center, rounded corners, minimum height=2em},
    block6/.style={rectangle, draw=blue, thick, fill=orange!20, text width=10em, align=center, rounded corners, minimum height=2em},
    line/.style={draw, thick, -latex', shorten >=2pt},
    cloud/.style={draw=red, thick, ellipse, fill=red!20, minimum height=2em}
]
\matrix [row sep=7mm, column sep=5mm] {
    \node [block1] (proof1) {Proof Sketch}; \\
    \node [block2] (proof2) {Lemma \ref{boundSsum}: Bound $\|S_{\mathrm{sum}}-\tilde{S}_{\mathrm{sum}}\|$}; \\
    \node [block3] (proof3) {Lemma \ref{BoundVhatV}: Bound $\|\hat{V}Q_{s}-V\|_{F}$}; \\
    \node [block4] (proof4) {Lemma \ref{BoundVstarhatVsatr}: Bound $\|\hat{V}_{*}Q_{s}-V_{*}\|_{F}$}; \\
    \node [block5] (proof5) {Theorem \ref{mainNDSoSA}: Bound $\hat{f}_{NDSoSA}$}; \\
    \node [block6] (proof6) {Output: theoretical upper bound of $\hat{f}_{NDSoSA}$}; \\
};
\begin{scope}[every path/.style=line]
    \path (proof1) -- (proof2);
    \path (proof2) -- (proof3);
    \path (proof3) -- (proof4);
    \path (proof4) -- (proof5);
    \path (proof5) -- (proof6);
\end{scope}
\end{tikzpicture}
\begin{tikzpicture}[remember picture, overlay]
    \draw [dashed, -latex', thick, shorten >=2pt] (ideal2) -- (proof2);
    \draw [dashed, -latex', thick, shorten >=2pt] (ideal3) -- (proof3);
    \draw [dashed, -latex', thick, shorten >=2pt] (ideal4) -- (proof4);
    \draw [dashed, -latex', thick, shorten >=2pt] (ideal5) -- (proof5);
    \draw [dashed, -latex', thick, shorten >=2pt] (ideal6) -- (proof6);
\end{tikzpicture}
\hspace{1cm}
\begin{tikzpicture}[
    remember picture,
    auto,
    decision/.style={diamond, draw=blue, thick, fill=blue!20, text width=4.5em, align=flush center, inner sep=1pt},
    block0/.style={ellipse, draw=blue, thick, fill=blue!20, text width=7em, align=center, rounded corners, minimum height=2em},
    block1/.style={rectangle, draw=blue, thick, fill=cyan!20, text width=11em, align=center, rounded corners, minimum height=2em},
    block2/.style={rectangle, draw=blue, thick, fill=gray!20, text width=11em, align=center, rounded corners, minimum height=2em},
    block3/.style={rectangle, draw=blue, thick, fill=gray!20, text width=11em, align=center, rounded corners, minimum height=2em},
    block4/.style={rectangle, draw=blue, thick, fill=gray!20, text width=11em, align=center, rounded corners, minimum height=2em},
    block5/.style={rectangle, draw=blue, thick, fill=gray!20, text width=11em, align=center, rounded corners, minimum height=2em},
    block6/.style={rectangle, draw=blue, thick, fill=orange!20, text width=4em, align=center, rounded corners, minimum height=2em},
    line/.style={draw, thick, -latex', shorten >=2pt},
    cloud/.style={draw=red, thick, ellipse, fill=red!20, minimum height=2em}
]
\matrix [row sep=7mm, column sep=5mm] {
    \node [block0] (realNDSoSA) {NDSoSA}; \\
    \node [block1] (real1) {Input: $\{A_{l}\}^{L}_{l=1}$ and $K$}; \\
    \node [block2] (real2) {Step 1: Get $S_{\mathrm{sum}}$}; \\
    \node [block3] (real3) {Step 2: Get $\hat{V}$ from $S_{\mathrm{sum}}$}; \\
    \node [block4] (real4) {Step 3: Get $\hat{V}_{*}$ from $\hat{V}$}; \\
    \node [block5] (real5) {Step 4: Run K-means on $\hat{V}_{*}$ to get $\hat{\ell}$}; \\
    \node [block6] (real6) {Output: $\hat{\ell}$}; \\
};
\begin{scope}[every path/.style=line]
    \path (realNDSoSA) -- (real1);
    \path (real1) -- (real2);
    \path (real2) -- (real3);
    \path (real3) -- (real4);
    \path (real4) -- (real5);
    \path (real5) -- (real6);
\end{scope}
\end{tikzpicture}
\begin{tikzpicture}[remember picture, overlay]
    \draw [dashed, -latex', thick, shorten >=2pt] (real2) -- (proof2);
    \draw [dashed, -latex', thick, shorten >=2pt] (real3) -- (proof3);
    \draw [dashed, -latex', thick, shorten >=2pt] (real4) -- (proof4);
    \draw [dashed, -latex', thick, shorten >=2pt] (real5) -- (proof5);
    \draw [dashed, -latex', thick, shorten >=2pt] (real6) -- (proof6);
\end{tikzpicture}
\caption{The overall technical roadmap of the NDSoSA algorithm.}
\label{RoadmapNDSoSA}
\end{figure*}
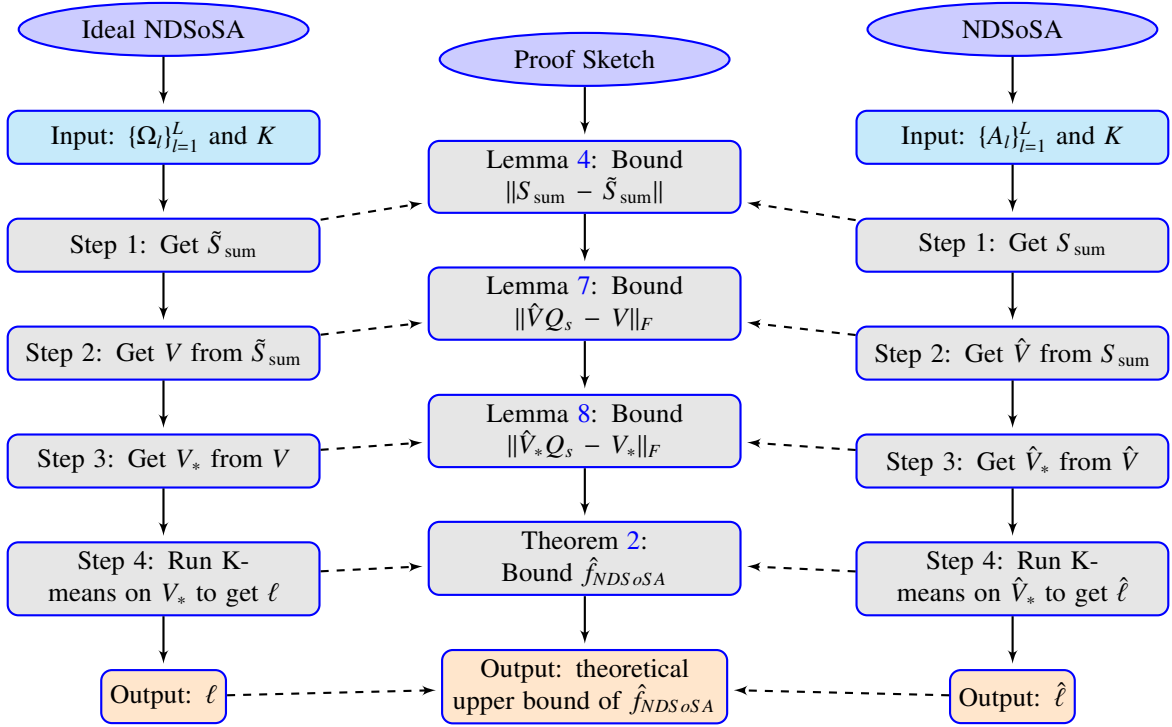
The following corollary further simplifies Theorem \ref{mainNDSoSA} by adding more conditions to these model parameters.
\begin{cor}\label{CorSsum}
Under the conditions of Theorem \ref{mainNDSoSA}, suppose $K=O(1), \frac{n_{\mathrm{min}}}{n_{\mathrm{max}}}=O(1)$, and $\theta_{\mathrm{min}}=O(\sqrt{\rho}), \theta_{\mathrm{max}}=O(\sqrt{\rho})$ for $\rho\in(0,1]$, we have
\begin{align*}
\hat{f}_{NDSoSA}=O(\frac{\mathrm{log}(n+L)}{\rho^{2}n^{2}L})+O(\frac{1}{n^{2}}).
\end{align*}
\end{cor}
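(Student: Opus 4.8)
The plan is to obtain Corollary~\ref{CorSsum} directly from the master bound of Theorem~\ref{mainNDSoSA} by inserting the simplified parameter regime into each of its two summands and cancelling; no fresh probabilistic or spectral input is needed, since Theorem~\ref{mainNDSoSA} already supplies the full error bound. First I would restate the three hypotheses as two-sided ($\asymp$) order relations. From $K=O(1)$ and the balanced-size condition $n_{\mathrm{min}}/n_{\mathrm{max}}=O(1)$, combined with $n=\sum_{k\in[K]}n_{k}$ and $Kn_{\mathrm{min}}\leq n\leq Kn_{\mathrm{max}}$, one gets $n_{\mathrm{min}}\asymp n_{\mathrm{max}}\asymp n$. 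Reading $\theta_{\mathrm{min}}=O(\sqrt{\rho})$ and $\theta_{\mathrm{max}}=O(\sqrt{\rho})$ as the balanced-degree statement $\theta_{\mathrm{min}}\asymp\theta_{\mathrm{max}}\asymp\sqrt{\rho}$, every entry of $\theta$ is of order $\sqrt{\rho}$, so all $\theta$-dependent factors collapse to powers of $\rho$ and $n$.

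Next I would record the elementary order estimates for the quantities entering the bound. Since each $\theta(i)\asymp\sqrt{\rho}$ over $n$ entries, $\|\theta\|_{1}\asymp n\sqrt{\rho}$, $\|\theta\|^{2}_{F}\asymp n\rho$, and therefore $\|\theta\|^{4}_{F}\asymp n^{2}\rho^{2}$; for the pointwise powers, $\theta^{3}_{\mathrm{max}}\asymp\rho^{3/2}$, $\theta^{6}_{\mathrm{max}}\asymp\rho^{3}$, and $\theta^{10}_{\mathrm{min}}\asymp\rho^{5}$. With $K^{2}=O(1)$, $n_{\mathrm{max}}\asymp n$, and $n^{5}_{\mathrm{min}}\asymp n^{5}$, the whole expression is now a function of $\rho$, $n$, $L$, and $\log(n+L)$ only.

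Substituting into the first summand of Theorem~\ref{mainNDSoSA}, the numerator becomes
\begin{align*}
&\theta^{3}_{\mathrm{max}}\|\theta\|_{1}\|\theta\|^{2}_{F}n_{\mathrm{max}}\log(n+L)\\
&\qquad\asymp\rho^{3/2}\cdot n\sqrt{\rho}\cdot n\rho\cdot n\cdot\log(n+L)=\rho^{3}n^{3}\log(n+L),
\end{align*}
against the denominator $\theta^{10}_{\mathrm{min}}n^{5}_{\mathrm{min}}L\asymp\rho^{5}n^{5}L$, yielding a first term of order $\frac{\log(n+L)}{\rho^{2}n^{2}L}$. For the second summand, the numerator is $\theta^{6}_{\mathrm{max}}\|\theta\|^{4}_{F}n_{\mathrm{max}}\asymp\rho^{3}\cdot n^{2}\rho^{2}\cdot n=\rho^{5}n^{3}$, against the same denominator $\rho^{5}n^{5}$, giving a second term of order $\frac{1}{n^{2}}$. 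Adding the two contributions produces exactly $\hat{f}_{NDSoSA}=O(\frac{\log(n+L)}{\rho^{2}n^{2}L})+O(\frac{1}{n^{2}})$.

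The one step requiring genuine care---and hence what I would regard as the main, if modest, obstacle---is the reading of the hypotheses as two-sided relations. The factors $\theta^{10}_{\mathrm{min}}$ and $n^{5}_{\mathrm{min}}$ occupy the denominator, so a bare upper bound such as $\theta_{\mathrm{min}}=O(\sqrt{\rho})$ does not on its own control the ratio; one must invoke the matching lower bounds $\theta_{\mathrm{min}}\gtrsim\sqrt{\rho}$ and $n_{\mathrm{min}}\gtrsim n$ furnished by the balanced-degree and balanced-size assumptions. Once this $\asymp$ interpretation is pinned down, the rest is pure bookkeeping of the exponents of $\rho$ and $n$, and both summands simplify as stated.
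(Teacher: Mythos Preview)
Your proposal is correct and matches the paper's approach: the corollary is presented in the paper as an immediate simplification of Theorem~\ref{mainNDSoSA} obtained by substituting the balanced-parameter conditions, with no separate proof given. Your explicit bookkeeping of the $\rho$- and $n$-exponents, together with your remark that the hypotheses must be read as two-sided $\asymp$ relations so that the denominator factors $\theta_{\mathrm{min}}^{10}$ and $n_{\mathrm{min}}^{5}$ are controlled from below, is exactly the calculation the paper leaves implicit.
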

Corollary \ref{CorSsum} guarantees that NDSoSA enjoys community estimation consistency since NDSoSA's error rate goes to zero as $n\rightarrow+\infty$ (or $L\rightarrow+\infty$). From this corollary, we also observe that NDSoSA yields more precise community detection results as $n$ (or $L$ or $\rho$) increases.
\begin{rem}
(Comparison to \citep{lei2023bias}) When $\theta(i)=\sqrt{\rho}$ for all $i\in[n]$ such that MLDCSBM reduces to MLSBM, the bound in Corollary \ref{CorSsum} is consistent with that of Theorem 1 in \citep{lei2023bias} and Assumption \ref{Assum2} becomes $\rho^{2}n^{2}L\geq\mathrm{log}(n+L)$ which also matches the sparsity requirement in Theorem 1 in \citep{lei2023bias}. The above analysis suggests the optimality of our theoretical results. Meanwhile, unlike \citep{lei2023bias} whose theoretical results rely on some complex tail probabilities bounds for matrix linear combinations with matrix-valued coefficients and matrix-valued quadratic form, we obtain the same theoretical results by a simple trick: we simply decompose $S_{\mathrm{sum}}-\tilde{S}_{\mathrm{sum}}$ into two parts such that the expectation of one part is a zero matrix while the other part is deterministic. For the part with zero mean, we can apply the matrix Bernstein inequality to bound its spectral norm directly. For the deterministic part, we can directly calculate its upper bound of spectral norm. In this way, we can bound the spectral norm of $S_{\mathrm{sum}}-\tilde{S}_{\mathrm{sum}}$ and this greatly reduces the complexity of our theoretical analysis. For details, please refer to the proof of Lemma \ref{boundSsum}.
\end{rem}
\begin{rem}
Similar to Remark \ref{NoAssumptionsNSoA}, when disregarding Assumptions \ref{Assum2} and \ref{Assum22}, we obtain $\hat{f}_{NDSoSA}=O(\frac{\theta^{2}_{\mathrm{max}}K^{2}n_{\mathrm{max}}\|S_{\mathrm{sum}}-\tilde{S}_{\mathrm{sum}}\|^{2}}{\theta^{10}_{\mathrm{min}}n^{5}_{\mathrm{min}}\lambda^{2}_{K}(\sum_{l\in[L]}B^{2}_{l})})$ from the proof of Theorem \ref{mainNDSoSA}. This bound further simplifies to $O(\frac{\|S_{\mathrm{sum}}-\tilde{S}_{\mathrm{sum}}\|^{2}}{\rho^{4}n^{4}\lambda^{2}_{K}(\sum_{l\in[L]}B^{2}_{l})})$ under conditions stated in Corollary \ref{CorSsum}. Both forms underscore NDSoSA's generality as they do not limit the structure of multi-layer networks. Again, we cannot study the influences of the model parameters $n, L$, and $\rho$ on NDSoSA's performance and estimation consistency without considering Assumptions \ref{Assum2} and \ref{Assum22}. These assumptions are necessary to establish an upper bound for $\|S_{\mathrm{sum}}-\tilde{S}_{\mathrm{sum}}\|$ and a lower bound $|\lambda_{K}(\sum_{l\in[L]}B^{2}_{l})|$, respectively.
\end{rem}
\subsection{Comparison between the consistency results of NSoA and NDSoSA}\label{CompareDA}
In this subsection, we provide a detailed comparison between the consistency results of NSoA and NDSoSA. Since Theorems \ref{mainNSoA} and \ref{mainNDSoSA} are written in terms of MLDCSBM model parameters, it is not easy to directly compare their upper bounds. Instead, to simplify the comparison, we primarily focus on the settings in Corollaries \ref{CorAsum} and \ref{CorSsum} when MLDCSBM reduces to MLSBM, i.e., the case   $\theta(i)=\sqrt{\rho}$ for all $i\in[n]$. For this case, Assumption \ref{Assum1} turns to be $\rho n\geq\frac{\mathrm{log}(n+L)}{L}$ and Assumption \ref{Assum2} turns to be $\rho n\geq\sqrt{\frac{\mathrm{log}(n+L)}{L}}$. Table \ref{CompareTwoMethods} summarizes the assumptions and error rates of NSoA and NDSoSA for this case. By analyzing Table \ref{CompareTwoMethods}, we have the following conclusions:
\begin{table*}[h!]
	\centering
	\caption{Consistency results of NSoA and NDSoSA when $K=O(1), \frac{n_{\mathrm{min}}}{n_{\mathrm{max}}}=O(1)$, and $\theta(i)=\sqrt{\rho}$ for all $i\in[n]$.}
	\label{CompareTwoMethods}
\begin{tabular}{cccccccccc}
\hline
Methods&Requirement on sparsity&Requirement on $\{B_{l}\}^{L}_{l=1}$&Error rate\\
\hline
NSoA&$\rho n\geq\frac{\mathrm{log}(n+L)}{L}$&$|\lambda_{K}(\sum_{l\in[L]}B_{l})|\geq c_{1}L$&$\frac{\mathrm{log}(n+L)}{\rho nL}$\\
NDSoSA&$\rho n\geq\sqrt{\frac{\mathrm{log}(n+L)}{L}}$&$|\lambda_{K}(\sum_{l\in[L]}B^{2}_{l})|\geq c_{2}L$&$\frac{\mathrm{log}(n+L)}{\rho^{2} n^{2}L}+\frac{1}{n^{2}}$\\
\hline
\end{tabular}
\end{table*}

\begin{itemize}
  \item NSoA and NDSoSA have no significant difference for the requirement on the $L$ block connectivity matrices $B_{1}, B_{2},\ldots,B_{L}$.
  \item When $L\leq\mathrm{log}(n+L)$, NSoA's requirement on the sparsity (i.e., $\rho$) of a multi-layer network is stronger than that of NDSoSA since $\frac{\mathrm{log}(n+L)}{L}\geq\sqrt{\frac{\mathrm{log}(n+L)}{L}}\Leftrightarrow\mathrm{log}(n+L)\geq L$. Similarly, when $L\geq\mathrm{log}(n+L)$, NSoA's requirement on the sparsity of a multi-layer network is weaker than that of NDSoSA. For a better understanding of the sparsity requirements of NSoA and NDSoSA, Fig.~\ref{z12d} displays the sparsity regions of the two methods as well as the region of $\frac{\mathrm{log}(n+L)}{nL}-\sqrt{\frac{\mathrm{log}(n+L)}{n^{2}L}}$. Panels (a) and (b) of Fig.~\ref{z12d} suggest that the sparsity requirements for both NSoA and NDSoSA concerning the sparsity parameter $\rho$ are mild, as $\rho$ can be extremely small when $n$ and $L$ are large. From Panel (c) of this figure, it is evident that NSoA's sparsity requirement is stronger than NDSoSA's only within a narrow region where $L\leq\mathrm{log}(n+L)$ (i.e., $\rho_{d}\geq0$), as this inequality holds primarily for small values of $L$. Conversely, in the wide area where $L\geq\mathrm{log}(n+L)$ (i.e., $\rho_{d}\leq0$), NSoA's sparsity requirement is weaker than that of NDSoSA.

\begin{figure*}
\centering
\subfigure[]{\includegraphics[width=0.33\textwidth]{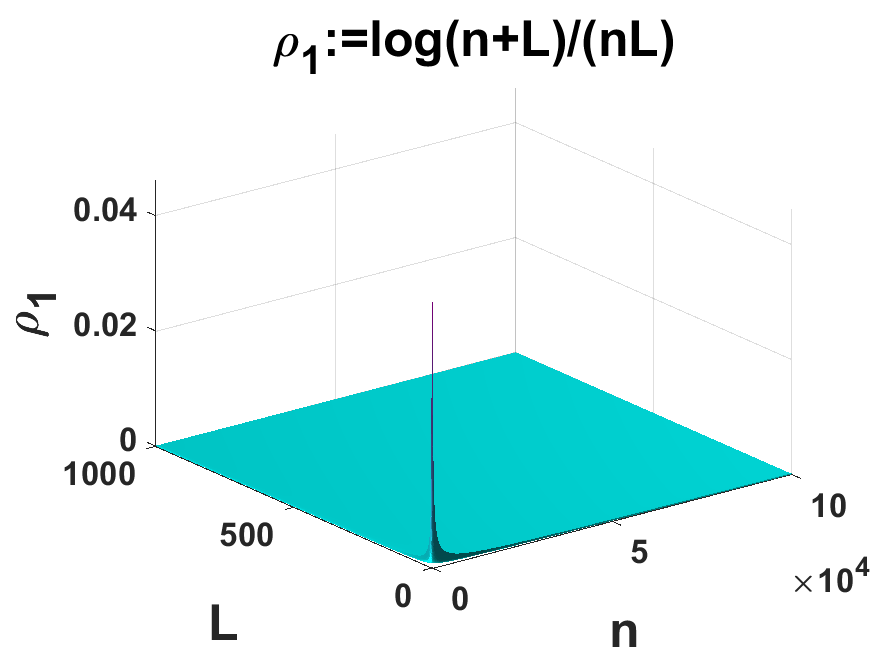}}
\subfigure[]{\includegraphics[width=0.33\textwidth]{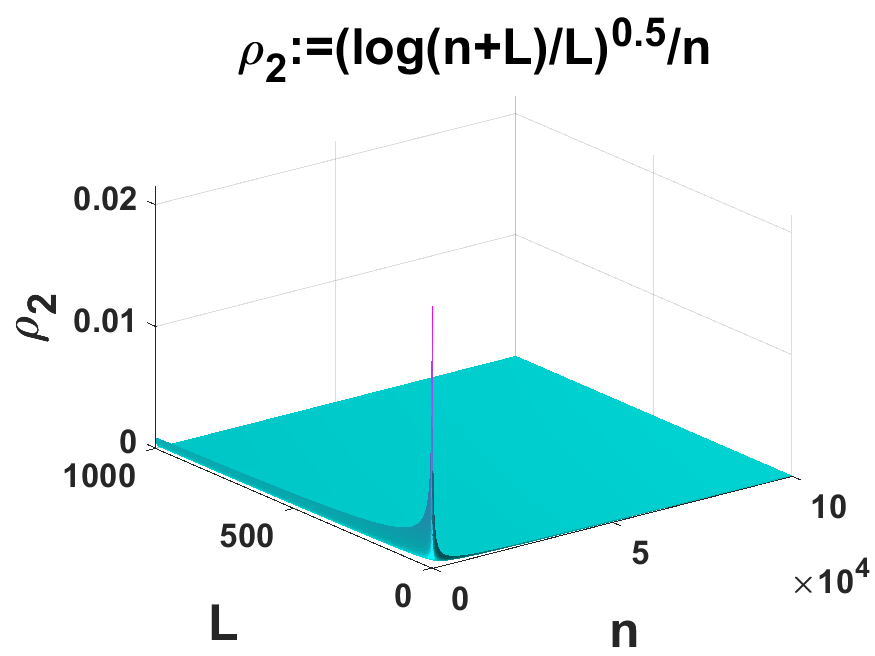}}
\subfigure[]{\includegraphics[width=0.33\textwidth]{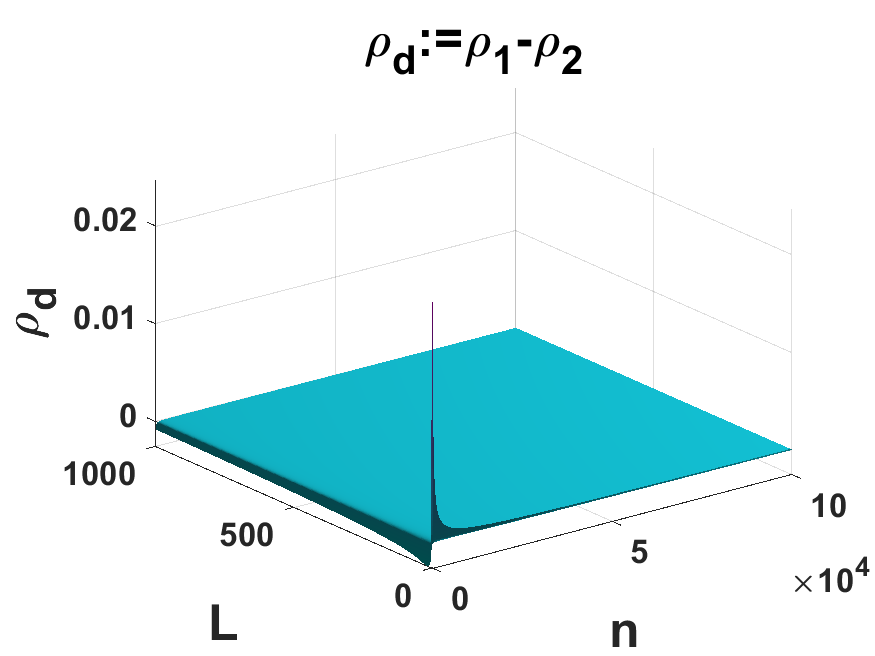}}
\caption{Panel (a): the region (located above the surface of $\rho_{1}$) that satisfies NSoA's requirement on the sparsity parameter $\rho$. Panel (b): the region (located above the surface of $\rho_{2}$) that satisfies NDSoSA's requirement on the sparsity parameter $\rho$. Panel (c): the region of $\rho_{d}:=\rho_{1}-\rho_{2}$. For all three panels, $n$ ranges in $\{100,200,300,\ldots,100000\}$ and $L$ ranges in $\{1,2,3,\ldots,1000\}$.}
\label{z12d} 
\end{figure*}
   \item Since  $\frac{\mathrm{log}(n+L)}{\rho^{2}n^{2}L}\geq\frac{1}{n^{2}}\Leftrightarrow\rho\leq\sqrt{\frac{\mathrm{log}(n+L)}{L}}$, when $\rho\leq\sqrt{\frac{\mathrm{log}(n+L)}{L}}$, NDSoSA's error rate is at the order $\frac{\mathrm{log}(n+L)}{\rho^{2}n^{2}L}$. Next, we compare NDSoSA's error rate $\frac{\mathrm{log}(n+L)}{\rho^{2}n^{2}L}$ with NSoA's error rate $\frac{\mathrm{log}(n+L)}{\rho nL}$:
       \begin{itemize}
         \item Case 1: NDSoSA's error rate is much smaller than that of NSoA. For this case ,we have $\frac{\mathrm{log}(n+L)}{\rho^{2}n^{2}L}\ll\frac{\mathrm{log}(n+L)}{\rho nL}\Leftrightarrow \rho \gg\frac{1}{n}$. Since NSoA's error rate is no larger than 1, we have $\frac{\mathrm{log}(n+L)}{\rho^{2}n^{2}L}\ll1\Leftrightarrow\rho\gg\frac{1}{n}\sqrt{\frac{\mathrm{log}(n+L)}{L}}$. So, when NDSoSA's error rate is much smaller than that of NSoA, $\rho$ should satisfy $\frac{1}{n}\mathrm{max}(1,\sqrt{\frac{\mathrm{log}(n+L)}{L}})\ll\rho\leq\sqrt{\frac{\mathrm{log}(n+L)}{L}}$, which gives that $\frac{1}{n}\ll\rho\leq\sqrt{\frac{\mathrm{log}(n+L)}{L}}$ when $L\geq\mathrm{log}(n+L)$ and $\frac{1}{n}\sqrt{\frac{\mathrm{log}(n+L)}{L}}\ll\rho\leq1$ when $L\leq\mathrm{log}(n+L)$.
         \item Case 2: NSoA's error rate is much smaller than that of NDSoSA. For this case, we have $\frac{\mathrm{log}(n+L)}{\rho^{2}n^{2}L}\gg\frac{\mathrm{log}(n+L)}{\rho nL}\Leftrightarrow \rho \ll\frac{1}{n}$. Because NDSoSA's error rate is less than 1, we have $\frac{\mathrm{log}(n+L)}{\rho nL}\ll1\Leftrightarrow\rho\gg\frac{1}{n}\frac{\mathrm{log}(n+L)}{L}$. Combine $\rho\ll\frac{1}{n}$ with $\rho\gg\frac{1}{n}\frac{\mathrm{log}(n+L)}{L}$, we get $\mathrm{log}(n+L)\ll L\Leftrightarrow\mathrm{log}(n)\ll L$. Therefore, when NSoA significantly outperforms NDoSA, $L$ and $\rho$ should satisfy $L\gg\mathrm{log}(n)$ and $\frac{1}{n}\frac{\mathrm{log}(n+L)}{L}\ll \rho\ll\frac{1}{n}$, respectively. In fact, the requirement $\rho$ should satisfy in Case 2 is much stricter than that of Case 1 since the number of layers $L$ should be set much larger than $\mathrm{log}(n)$ when $\frac{1}{n}\frac{\mathrm{log}(n+L)}{L}\ll \rho\ll\frac{1}{n}$ holds. For example, if we set $\rho=\frac{1}{\beta n}$ for $\beta\gg1$ such that $\rho\ll\frac{1}{n}$ holds. Since $\rho\gg\frac{1}{n}\frac{\mathrm{log}(n+L)}{L}$, we have $L\gg\beta\mathrm{log}(n+L)$. Meanwhile, when $\rho=\frac{1}{\beta n}$, NDSoSA's error rate is at the order $\frac{\beta^{2}\mathrm{log}(n+L)}{L}$ which should be less than 1, and it gives that $L\geq\beta^{2}\mathrm{log}(n+L)$. Therefore, when $\rho=\frac{1}{\beta n}$ for $\beta\gg1$, to make NSoA significantly outperform NDSoSA, the number of layers $L$ should satisfy $L\gg\beta\mathrm{log}(n+L)$ and $L\geq\beta^{2}\mathrm{log}(n+L)$. If we set $\beta=100$, $L$ should be larger than $10000\mathrm{log}(n+L)$; if we set $\beta=1000$, $L$ should be larger than $1000000\mathrm{log}(n+L)$. However, such a requirement on the number of layers $L$ is impractical because $L$ is always a small or median number in real-world multi-layer networks. Furthermore, even when $L$ is set sufficiently large to guarantee that NSoA significantly outperforms NDSoSA, Corollary \ref{CorSsum} suggests that for large $L$, NDSoSA's error rate becomes negligible. This indicates that increasing $L$ to a large value results in only minimal and even negligible advantages for NSoA over NDSoSA.
       \end{itemize}
   \item Since  $\frac{\mathrm{log}(n+L)}{\rho^{2}n^{2}L}\leq\frac{1}{n^{2}}\Leftrightarrow\rho\geq\sqrt{\frac{\mathrm{log}(n+L)}{L}}$, when $\rho\geq\sqrt{\frac{\mathrm{log}(n+L)}{L}}$ (note that since $\rho\leq1$, $L$ should satisfy $L\geq\mathrm{log}(n+L)$), NDSoSA's error rate is at the order $\frac{1}{n^{2}}$. Next, we compare NDSoSA's error rate $\frac{1}{n^{2}}$ with NSoA's error rate $\frac{\mathrm{log}(n+L)}{\rho nL}$:
       \begin{itemize}
         \item Case 3: NDSoSA's error rate is much smaller than that of NSoA. For this case ,we have $\frac{1}{n^{2}}\ll\frac{\mathrm{log}(n+L)}{\rho nL}\Leftrightarrow \rho \ll\frac{n\mathrm{log}(n+L)}{L}$. Since $\rho\geq\sqrt{\frac{\mathrm{log}(n+L)}{L}}$, we have $\sqrt{\frac{\mathrm{log}(n+L)}{L}}\ll\frac{n\mathrm{log}(n+L)}{L}\Leftrightarrow L\ll n^{2}\mathrm{log}(n+L)$. So, when NDSoSA performs betters than NSoA, $\rho$ and $L$ should satisfy $\sqrt{\frac{\mathrm{log}(n+L)}{L}}\leq\rho\ll\frac{n\mathrm{log}(n+L)}{L}$ and $\mathrm{log}(n+L)\leq L\ll n^{2}\mathrm{log}(n+L)$, respectively.
         \item Case 4: NSoA's error rate is much smaller than that of NDSoSA. This gives $\frac{1}{n^{2}}\gg\frac{\mathrm{log}(n+L)}{\rho n L}\Leftrightarrow\rho\gg\frac{n\mathrm{log}(n+L)}{L}$. Since $\rho\leq1$, we have $L\gg n\mathrm{log}(n+L)$. For this case, though NSoA performs much better than NDSoSA, NDSoSA still performs satisfactorily for large $n$ because its error rate is at the order $\frac{1}{n^{2}}$.
       \end{itemize}
\end{itemize}

In conclusion, upon a thorough analysis of the preceding discussion, it becomes apparent that although the bias-adjusted spectral clustering algorithm NDSoSA may not consistently demonstrate superior performance compared to the spectral clustering algorithm NSoA in terms of network sparsity requirements and error rates, NDSoSA remains a preferable choice. This preference is primarily attributed to the fact that, in the context of real multi-layer data, the circumstances where NSoA's error rate significantly exceeds that of NDSoSA often necessitate an unreasonably high value for the number of layers $L$, which is impractical in most scenarios. The aforementioned analysis leads us to confidently assert that NDSoSA nearly always outperforms NSoA in performance.
\section{Estimation of the number of communities $K$}\label{secK}
Recall that the inputs for our NSoA and NDSoSA are the $L$ adjacency matrices $\{A_{l}\}^{L}_{l=1}$ and the number of communities $K$. In real-world multi-layer networks, $K$ is typically unknown and must be predetermined. In this section, we present a method for estimating $K$ in multi-layer networks. Our approach hinges on the averaged modularity metric introduced in \cite{paul2021null}, which assesses the quality of community detection in multi-layer networks when ground-truth community labels are unavailable. This metric is defined as follows:
\begin{align*}
Q_{MNavrg}=\frac{1}{L}\sum_{l\in[L]}\sum_{i\in[n]}\sum_{j\in[n]}\frac{1}{2m_{l}}(A_{l}(i,j)-\frac{D_{l}(i,i)D_{l}(j,j)}{2m_{l}})\delta(\hat{\ell}(i),\hat{\ell}(j)),
\end{align*}
where $m_{l}=\frac{1}{2}\sum_{i\in[n]}D_{l}(i,i)$ for $l\in[L]$, and $\delta(\hat{\ell}(i),\hat{\ell}(j))$ is an indicator function that takes a value of 1 when $\hat{\ell}(i)$ equals $\hat{\ell}(j)$ and 0 otherwise. Notably, when $L$ is equal to 1, $Q_{MNavrg}$ simplifies to the well-known Newman-Girvan modularity \citep{newman2004finding}. A higher value of the averaged modularity signifies a better quality of community partition, and therefore, we always favor larger $Q_{MNavrg}$ scores. Similar to the Newman-Girvan modularity, the averaged modularity is only applicable to assortative multi-layer networks \citep{newman2002assortative,newman2003mixing,paul2021null}, where nodes within the same community exhibit a higher number of connections compared to nodes across different communities. Conversely, in disassortative multi-layer networks, nodes within the same community have fewer connections than those across communities. When dealing with real data where the number of communities, $K$, is unknown, we adopt the strategy introduced in \cite{nepusz2008fuzzy} to determine an appropriate value for $K$ in assortative multi-layer networks. This approach involves iteratively incrementing the number of communities and selecting the one that maximizes $Q_{MNavrg}$. Given that most real-world social networks are assortative \citep{newman2002assortative,newman2003mixing,radicchi2004defining}, our methodology for estimating $K$ through maximizing $Q_{MNavrg}$ proves to be a practical and easy-to-implement tool.
\begin{rem}
We should emphasize that our MLDCSBM model is capable of generating a wide range of multi-layer networks, including both assortative and dis-assortative types, due to its lack of restrictions on the elements of the connectivity matrices $\{B_{l}\}^{L}_{l=1}$. Specifically, when the minimum diagonal entry of $B_{l}$ exceeds its maximum off-diagonal entry for all $l\in[L]$, the MLDCSBM model generates assortative multi-layer networks. Conversely, when the maximum diagonal entry of $B_{l}$ is less than its minimum off-diagonal entry for all $l\in[L]$, the model generates disassortative multi-layer networks. Furthermore, our two proposed algorithms, NSoA and NDSoSA, are applicable to any kind of multi-layer network, as they do not impose any limitations on the latent community structure of real-world multi-layer networks.
\end{rem}
\section{Simulations}\label{sec5}
In this section, we evaluate the performance of NSoA and NDSoSA by comparing them with some benchmark methods using several metrics for simulated multi-layer networks generated from MLDCSBM. We compare our NSoA and NDSoSA with the following methods:
\begin{itemize}
  \item NSoSA: \underline{n}ormalized spectral clustering based on \underline{s}um \underline{o}f \underline{s}quared \underline{a}djacency matrices. NSoSA takes $\sum_{l\in[L]}A^{2}_{l}$ in Algorithm \ref{alg:NDSoSA} as input and it does not adjust the bias in $\sum_{l\in[L]}A^{2}_{l}$. Meanwhile, in \ref{MainNSoSA}, we present the consistency results for NSoSA and demonstrate, through theoretical analysis, that NDSoSA consistently outperforms NSoSA.
  \item Sum: spectral clustering based on \underline{sum} of adjacency matrices without normalization. This algorithm estimates communities by applying the K-means algorithm on the eigenvectors of $A_{\mathrm{sum}}$. Its consistency in community detection is studied in \citep{paul2020spectral} under MLSBM.
  \item SoS-Debias: this algorithm is introduced in \citep{lei2023bias} and it detects communities by running the K-means algorithm on the eigenvectors of $S_{\mathrm{sum}}$ under MLSBM.
  \item MASE: the method called \underline{m}ultiple \underline{a}djacency \underline{s}pectral \underline{e}mbedding \citep{arroyo2021inference} detects communities by applying K-means algorithm to the left singular vectors of the matrix obtained from the adjacency spectral embedding of each adjacency matrix.
  \item KMAM: aggregate spectral \underline{k}ernel on \underline{m}odule \underline{a}llegiance \underline{m}atrix \citep{paul2020spectral}. This method obtains the community assignments by applying the K-means algorithm to $\sum_{l\in[L]}\hat{U}_{l}\hat{U}'_{l}$ where $\hat{U}_{l}$ is the leading $K$ eigenvectors of $A_{l}$.
  \item CAMSBM: \underline{c}ovariate-\underline{a}ssisted \underline{m}ulti-layer \underline{s}tochastic \underline{b}lock \underline{m}odel \citep{xu2023covariate}. CAMSBM is a tensor-based community detection approach in multi-layer networks.
  \item In addition to the aforementioned methods originally designed for community detection in multi-layer networks, we also consider \underline{a}daptively \underline{w}eighted \underline{p}rocrustes (AWP) \citep{nie2018multiview} originally developed for multi-view clustering for comparative analysis.
\end{itemize}

To measure the performance of community detection for different methods, we consider four metrics: Clustering error $\hat{f}$ computed by Equation (\ref{ErrorRate}), Hamming error \citep{SCORE}, normalized mutual
information (NMI) \citep{strehl2002cluster,danon2005comparing,bagrow2008evaluating}, and adjusted rand index (ARI) \citep{hubert1985comparing,vinh2009information}. Hamming error is defined as $\mathrm{min}_{\pi}\sum_{i\in[n]}\mathbbm{1}(\ell(i)\neq\pi(\hat{\ell}(i)))$, where $\mathbbm{1}$ is the indicator function and the minimum is taken over all label permutations $\pi$. Hamming error ranges in $[0,1]$, and it is the smaller the better. NMI ranges in [0,1], ARI ranges in [-1,1], and both measures are the larger the better. To measure the accuracy of different methods in estimating $K$ through maximizing $Q_{MNavrg}$, we use the Accuracy rate defined in \citep{qing2024finding}. This metric ranges in $[0,1]$, with a higher value indicating higher accuracy in determining $K$.

For all experiments below, unless specified, we consider $K=3$ communities and generate $Z$ by letting each node belong to one of the $K$ communities with equal probability. We let $\theta(i)=\sqrt{\rho}\cdot\mathrm{rand}(1)$ for all $i\in[n]$, where $\rho\in(0,1]$ is the sparsity parameter which is set independently for each experiment, and $\mathrm{rand}(1)$ is a random value generated from a Uniform distribution on $[0,1]$. For all $l\in[L]$, we generate the $l$-th block connectivity matrix $B_{l}$ in the following way: set $\tilde{B}_{l}(k,j)=\mathrm{rand}(1)$ for $k\in[K], j\in[K]$ and let $B_{l}=\frac{\tilde{B}+\tilde{B}'}{2}$ so that $B_{l}$ is a symmetric matrix with elements ranging in $[0,1]$. Specifically, for the purpose of estimating the number of communities, we set the diagonal elements of $B_{l}$ to be 1 for all $l \in [L]$, thereby generating assortative multi-layer networks. $n, L$, and $\rho$ are set independently for each experiment. For each parameter setting, we report every metric of each method averaged over 100 independent trials. In this paper, we carry out all experimental studies with MATLAB R2021b on a standard personal computer, specifically the Thinkpad X1 Carbon Gen 8.
\begin{figure*}
\centering
\subfigure[]{\includegraphics[width=0.33\textwidth]{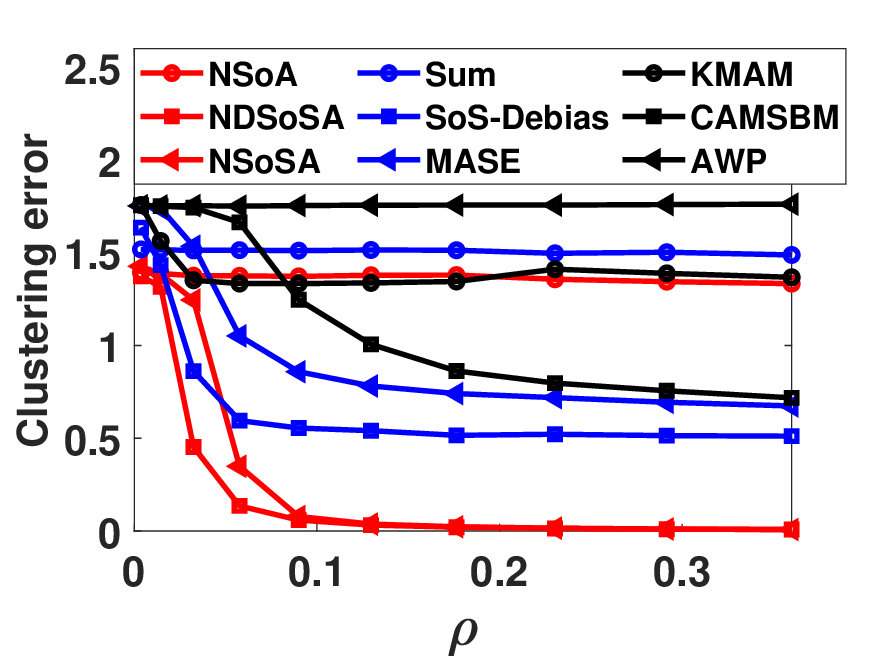}}
\subfigure[]{\includegraphics[width=0.33\textwidth]{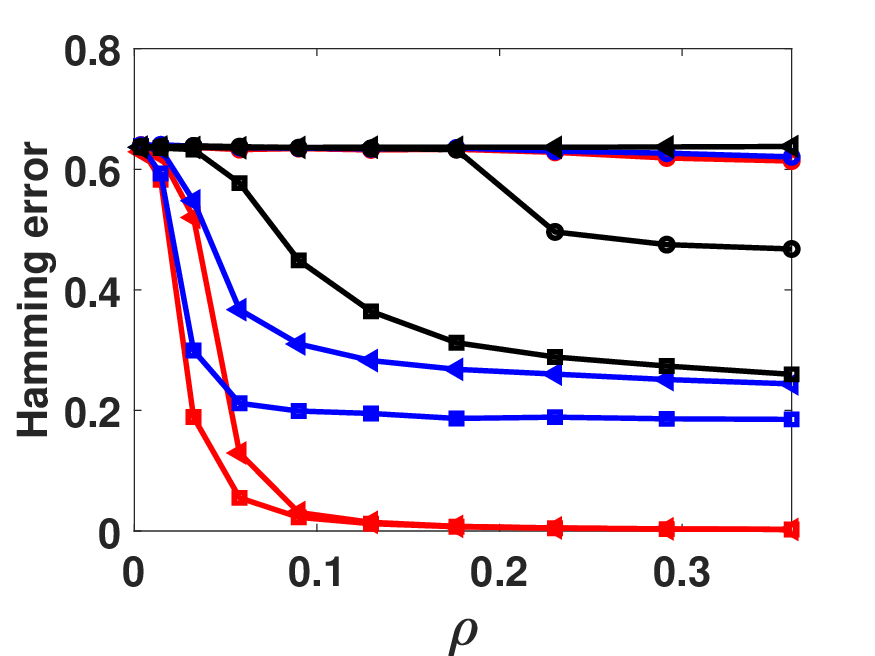}}
\subfigure[]{\includegraphics[width=0.33\textwidth]{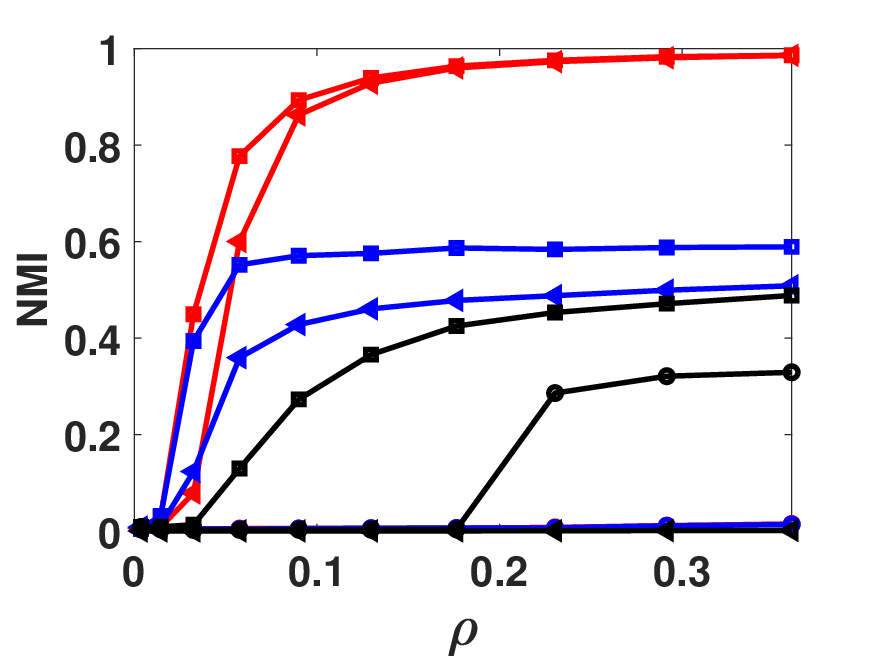}}
\subfigure[]{\includegraphics[width=0.33\textwidth]{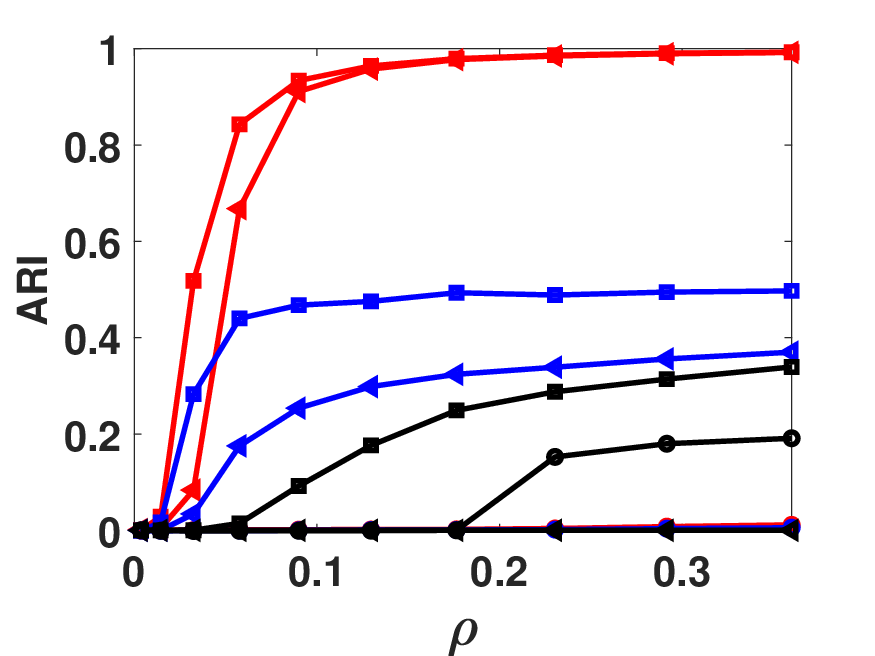}}
\subfigure[]{\includegraphics[width=0.33\textwidth]{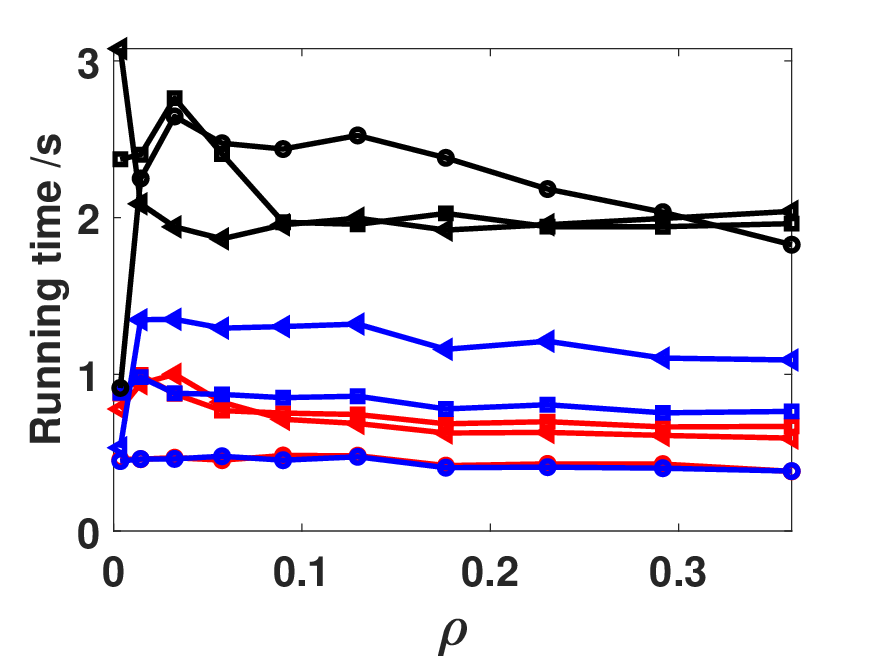}}
\subfigure[]{\includegraphics[width=0.33\textwidth]{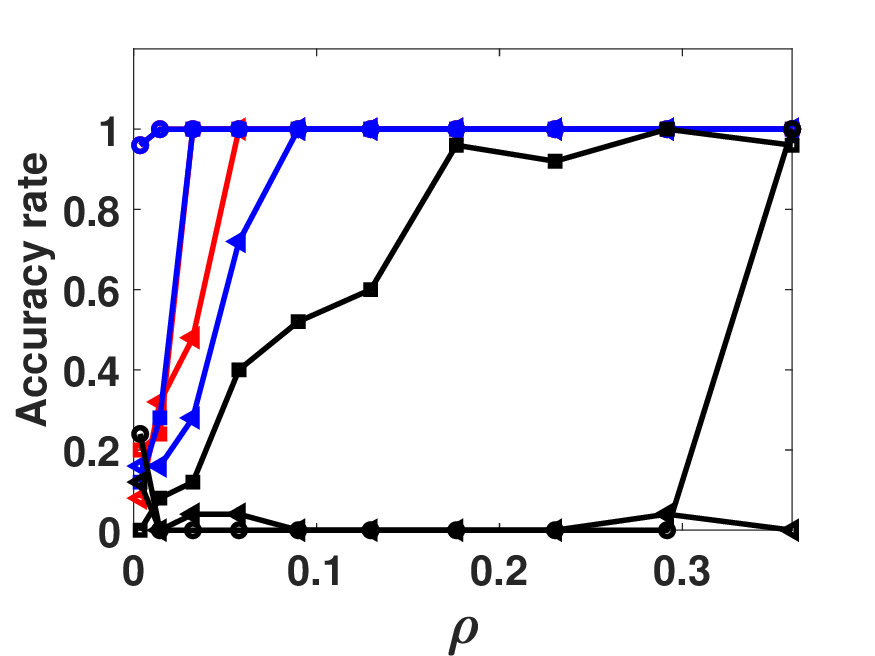}}
\caption{Numerical results of Experiment 1.}\label{Ex1}
\end{figure*}

\begin{figure*}
\centering
\subfigure[$\rho=0.0036$]{\includegraphics[width=0.245\textwidth]{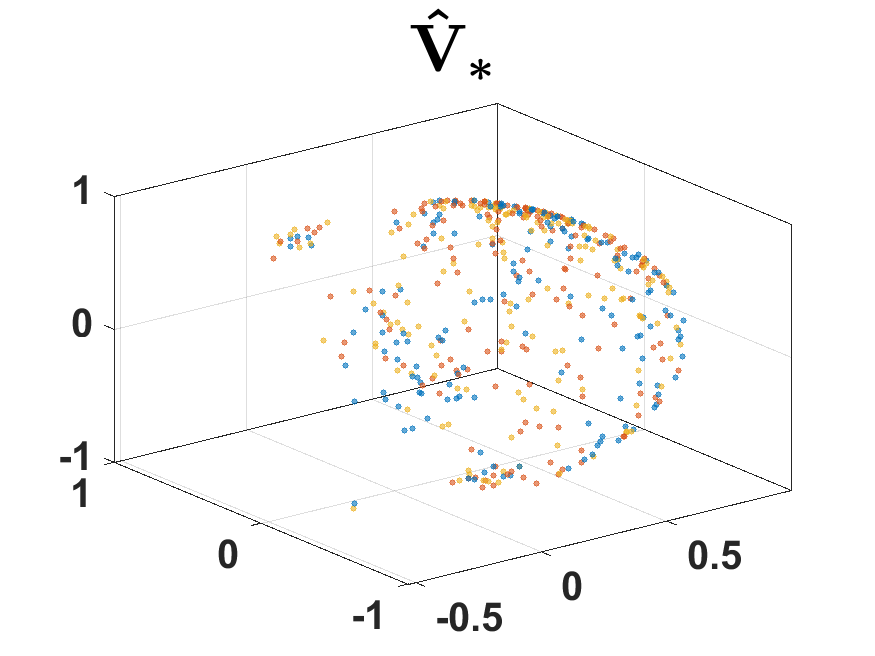}}
\subfigure[$\rho=0.0576$]{\includegraphics[width=0.245\textwidth]{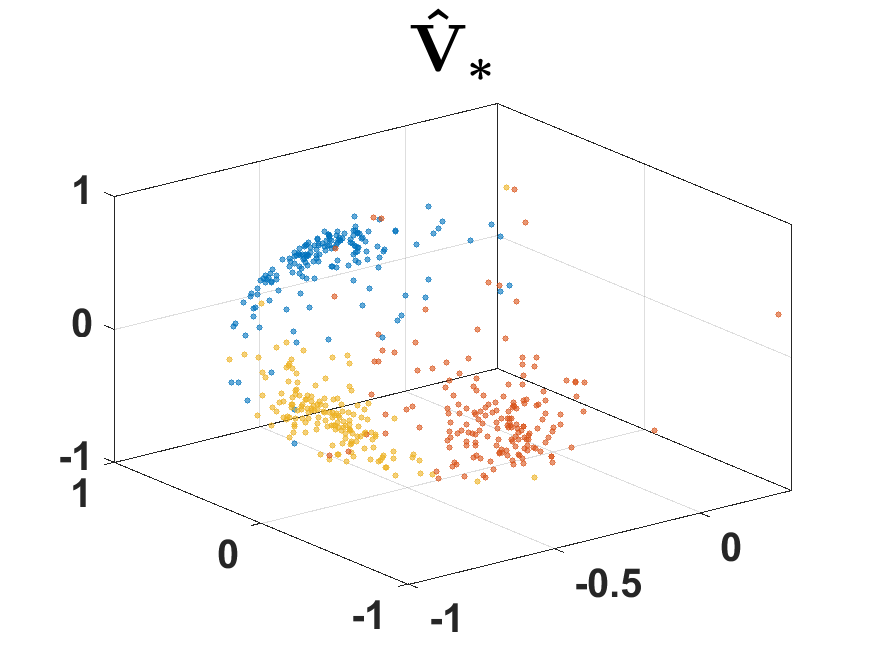}}
\subfigure[$\rho=0.1764$]{\includegraphics[width=0.245\textwidth]{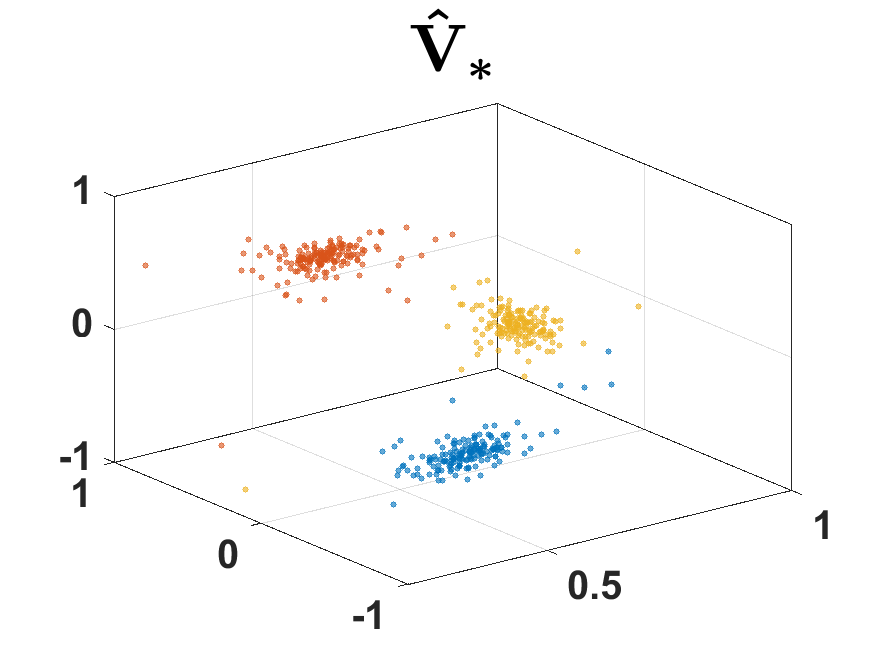}}
\subfigure[$\rho=0.36$]{\includegraphics[width=0.245\textwidth]{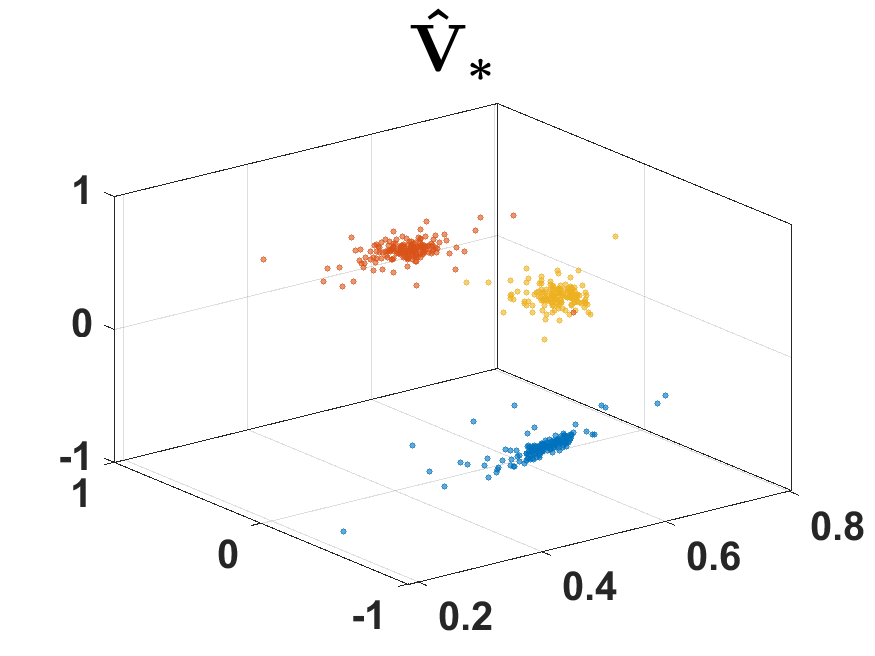}}
\caption{Panels (a)-(d) illustrate points, where each point corresponds to a row of $\hat{V}_{*}$ at varying levels of the sparsity parameter $\rho$. Across all panels, colors denote different communities.}
\label{Vhatstar} 
\end{figure*}

\emph{Experiment 1: Varying the Sparsity Parameter $\rho$}. In this experiment, we assess the performance of all methods by gradually increasing the sparsity parameter $\rho$ for a fixed $n=500$ and $L=100$. We vary $\rho$ across the range $\{0.06^{2}, 0.012^{2}, \ldots, 0.6^{2}\}$ to observe its impact. The findings are presented in Fig.~\ref{Ex1}. For the task of detecting communities, the results reveal that as the multi-layer network becomes denser, all methods except AMP exhibit improved performance. Notably, our NDSoSA algorithm surpasses its competitors in all scenarios. In particular, NDSoSA outperforms NSoA, and NSoSA, in turn, demonstrates better performance than NSoA. These observations align with our theoretical analysis presented in Sections \ref{CompareDA} and \ref{MainNSoSA}. In terms of running time, KMAM, CAMSBM, and AWP run slower than the other six methods. For the estimation of $K$, with the exception of KMAM, CAMSBM, and AWP, the remaining six methods exhibit satisfactory performance and demonstrate higher accuracy in identifying $K$ as the sparsity parameter $\rho$ increases. Fig.~\ref{Vhatstar} provides a visual illustration of how the geometric structure of $\hat{V}_{*}$ changes with varying the sparsity parameter $\rho$ in this experiment. As $\rho$ increases from $0.06^{2}$ to $0.6^{2}$, a clear trend emerges: clusters representing nodes within the same community become increasingly distinct from clusters of nodes in other communities. This observation underscores the impact of network sparsity on the performance of the NDSoSA algorithm for community detection. Specifically, as the network becomes denser with higher $\rho$ values, the separation between clusters improves, making it easier for NDSoSA to distinguish nodes belonging to different communities. This visualization aligns well with the quantitative results presented in Fig.~\ref{Ex1} and supports the theoretical understanding that denser networks with more connectivity information facilitate more accurate community detection. Fig.~\ref{Nrho} visualizes the network of one layer at varying levels of the sparsity parameter $\rho$. At lower values of $\rho$ (e.g., $\rho=0.0036$), the network is sparse, with only a few edges connecting the nodes. As $\rho$ increases (e.g., $\rho=0.36$), the network becomes denser, with more edges appearing between the nodes. Thus, Fig.~\ref{Nrho} clearly shows the impact of the sparsity parameter $\rho$ on the network structure, illustrating the transition from a sparse to a dense network as $\rho$ increases.
\begin{figure*}
\centering
\subfigure[$\rho=0.0036$]{\includegraphics[width=0.245\textwidth]{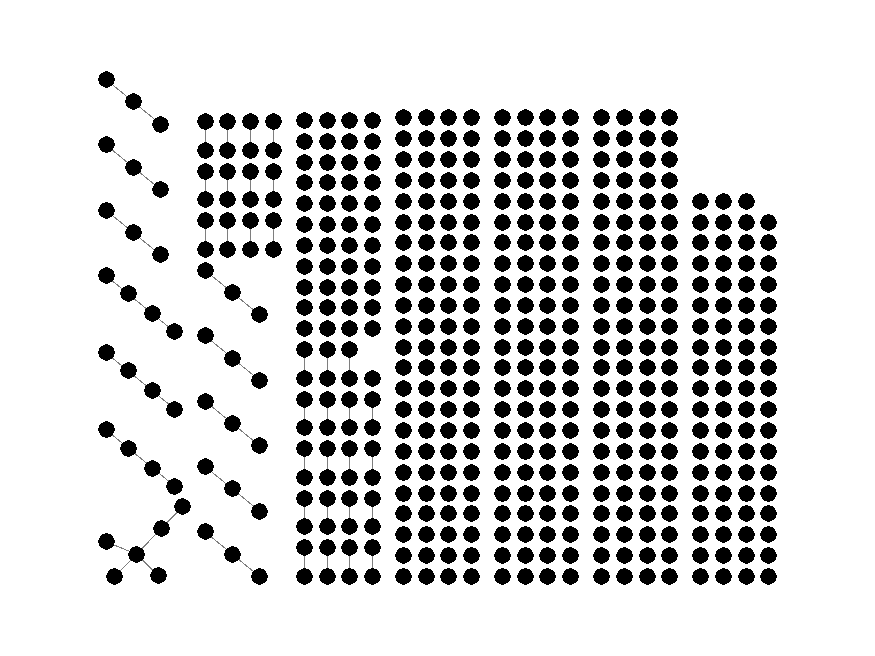}}
\subfigure[$\rho=0.0576$]{\includegraphics[width=0.245\textwidth]{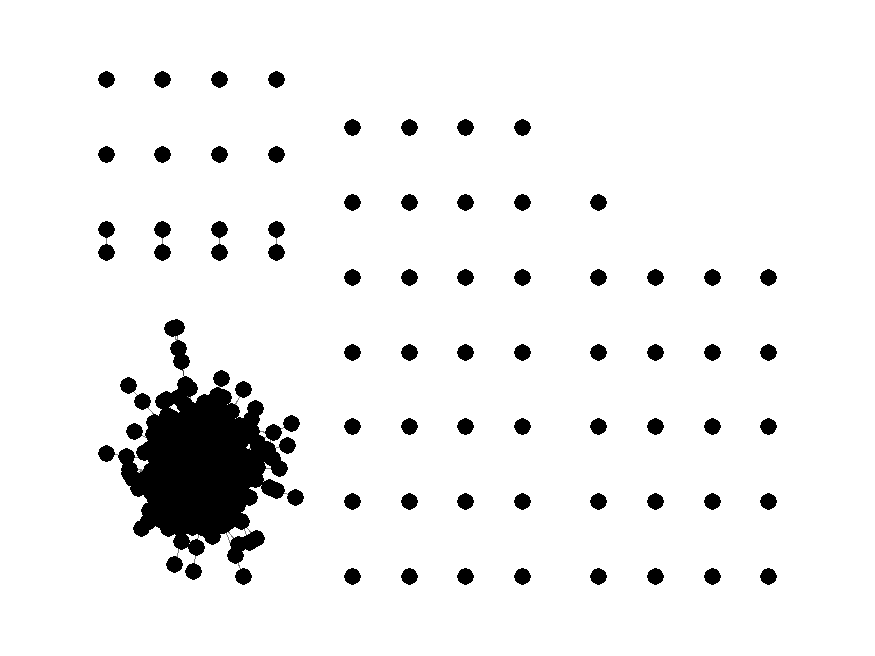}}
\subfigure[$\rho=0.1764$]{\includegraphics[width=0.245\textwidth]{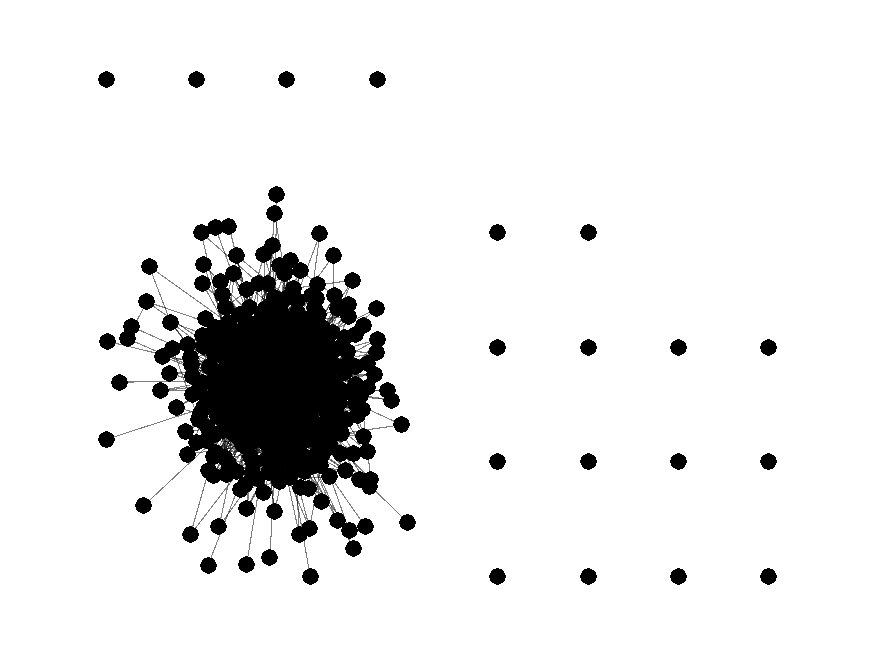}}
\subfigure[$\rho=0.36$]{\includegraphics[width=0.245\textwidth]{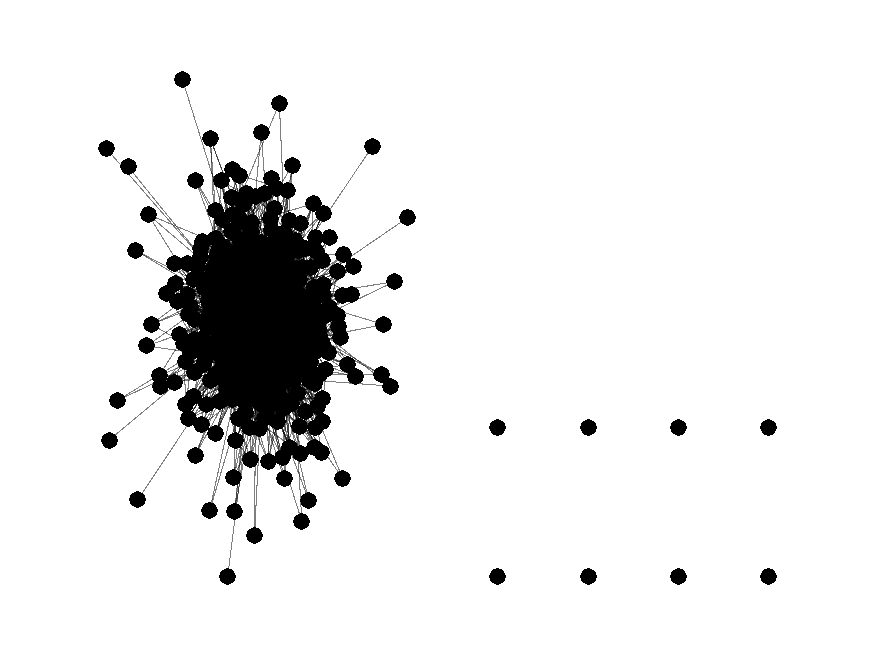}}
\caption{Panels (a)-(d) display the network of one layer at different levels of the sparsity parameter $\rho$.}
\label{Nrho} 
\end{figure*}

\emph{Experiment 2: Varying the Number of Nodes $n$}. In this experiment, we explore the impact of varying the number of nodes $n$ on the performance of all methods. We fix the number of layers at $L=20$ and the sparsity parameter at $\rho=0.16$. We then vary $n$ across the range $\{100, 200, \ldots, 1000\}$ to observe its effect. The results are presented in Fig.~\ref{Ex2}. For the detection of communities, our NDSoSA algorithm emerges as the top performer among all methods, slightly surpassing NSoSA. Additionally, all methods except for AWP exhibit improved performance as the number of nodes $n$ increases. Meanwhile, regarding running time, all methods require more time for community detection as the number of nodes increases. Specifically, the six methods, NSoA, NDSoSA, NSoSA, Sum, SoS-Debias, and MASE, exhibit faster performance compared to KMAM, CAMSBM, and AWP. The subpar performance of NSoA in this experiment can be attributed to the relatively small number of layers, $L=20$. In scenarios where $L$ is set to be a higher value, such as 100 or more, NSoA performs satisfactorily. However, for larger values of $L$, NDSoSA achieves error rates close to zero. Therefore, to effectively investigate the trend of NDSoSA's superior performance as $n$ increases, we chose a median value for $L$ in this experiment. For the determination of $K$, our methods NSoA, NDSoSA, and NSoSA provide highly accurate estimates, and their accuracy improves as $n$ increases.

\emph{Experiment 3: Varying the Number of Layers $L$}. This experiment focuses on assessing the impact of varying the number of layers $L$ on the performance of all methods. We fix the number of nodes at $n=500$ and the sparsity parameter at $\rho=0.16$. Subsequently, we vary $L$ within the range $\{2, 4, \ldots, 40\}$ to observe its influence. The results are graphically represented in Fig.~\ref{Ex3}. Notably, for the task of detecting communities, NDSoSA significantly outperforms all other methods across the entire parameter range. Furthermore, as the number of layers $L$ increases, the error rate of NDSoSA decreases, indicating its superior performance in handling complex multi-layer networks. Regarding running time, KMAM, CAMSBM, and AWP run slower than their competitors. For the task of identifying $K$, all methods except KMAM and AWP almost correctly determine $K$ when $L$ is no smaller than 6. Fig.~\ref{VhatstarL} illustrates the geometric structure of $\hat{V}^*$ at different $L$ in this experiment. We observe that initially, at $L=10$, there is some overlap between clusters. When $L$ increases to 20, 30, and 40, the clusters become more distinct and well-separated. This observation suggests that more layers provide more connections and interactions between nodes, enhancing cluster separability and refining the geometric structure of $\hat{V}_{*}$, thus yielding more accurate community detection.
\begin{figure*}
\centering
\subfigure[]{\includegraphics[width=0.33\textwidth]{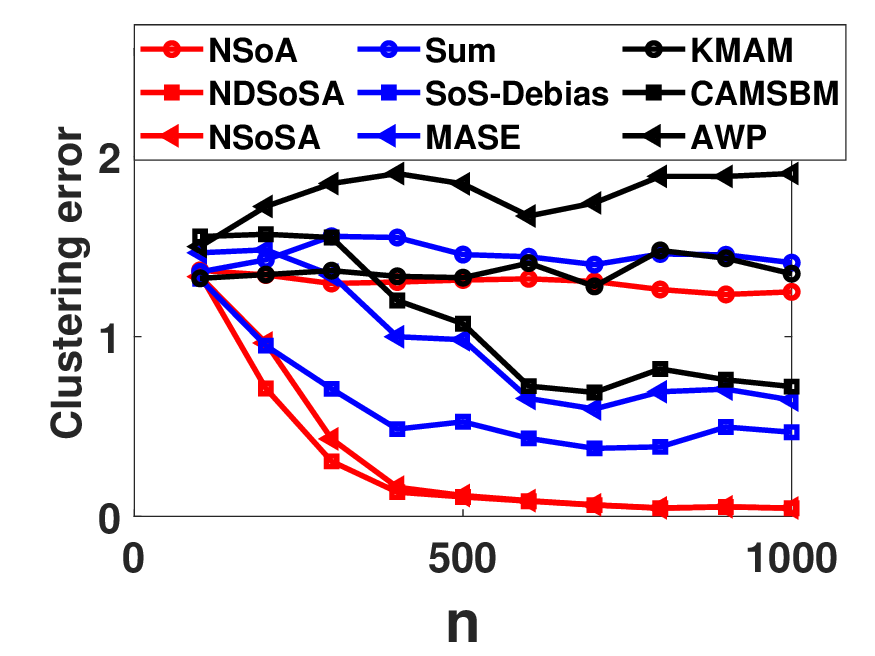}}
\subfigure[]{\includegraphics[width=0.33\textwidth]{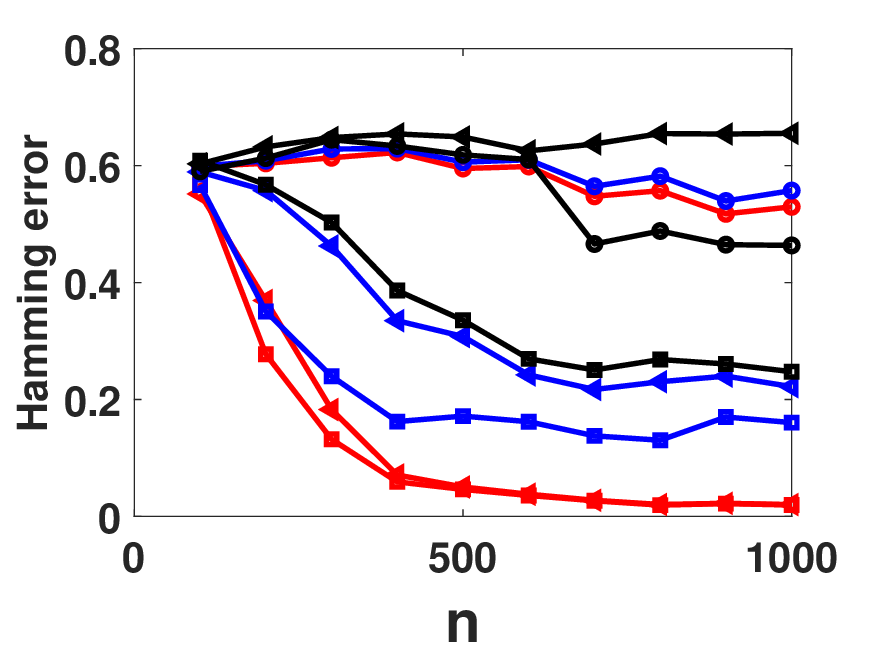}}
\subfigure[]{\includegraphics[width=0.33\textwidth]{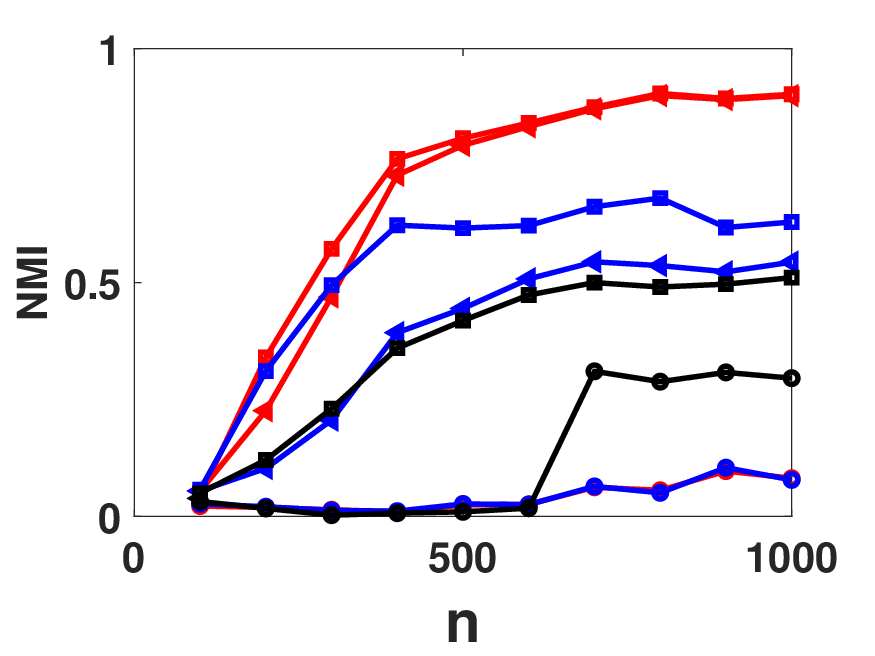}}
\subfigure[]{\includegraphics[width=0.33\textwidth]{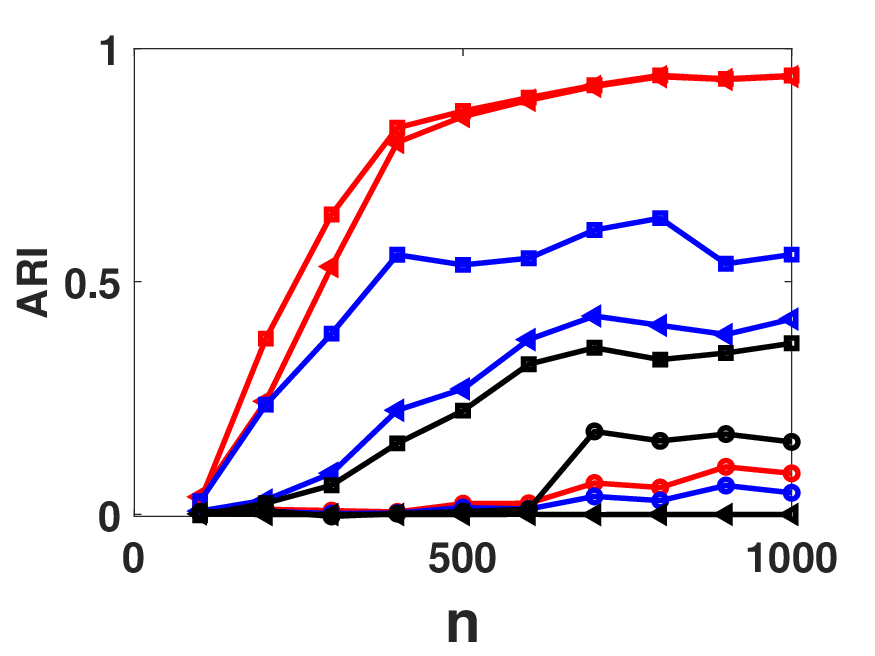}}
\subfigure[]{\includegraphics[width=0.33\textwidth]{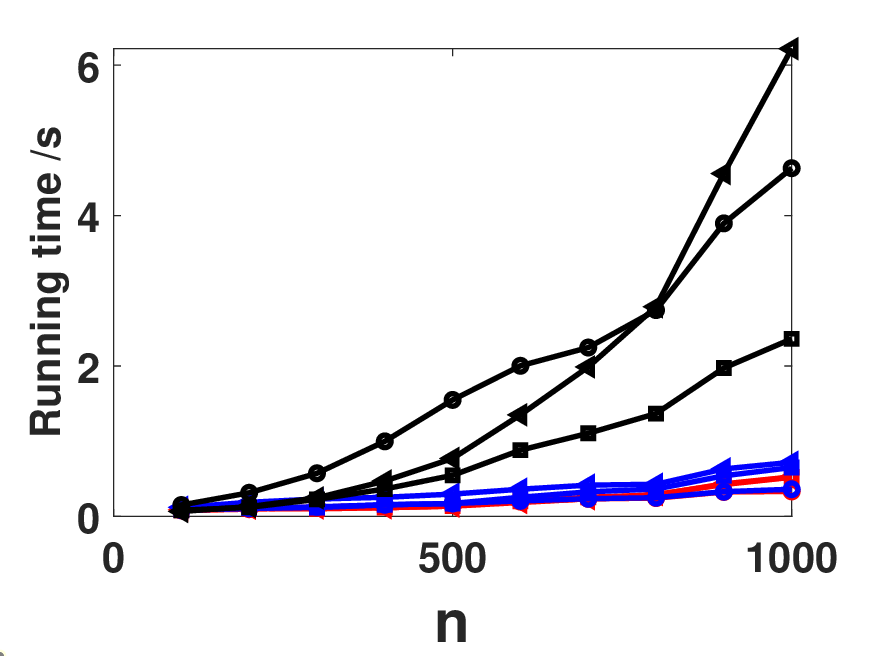}}
\subfigure[]{\includegraphics[width=0.33\textwidth]{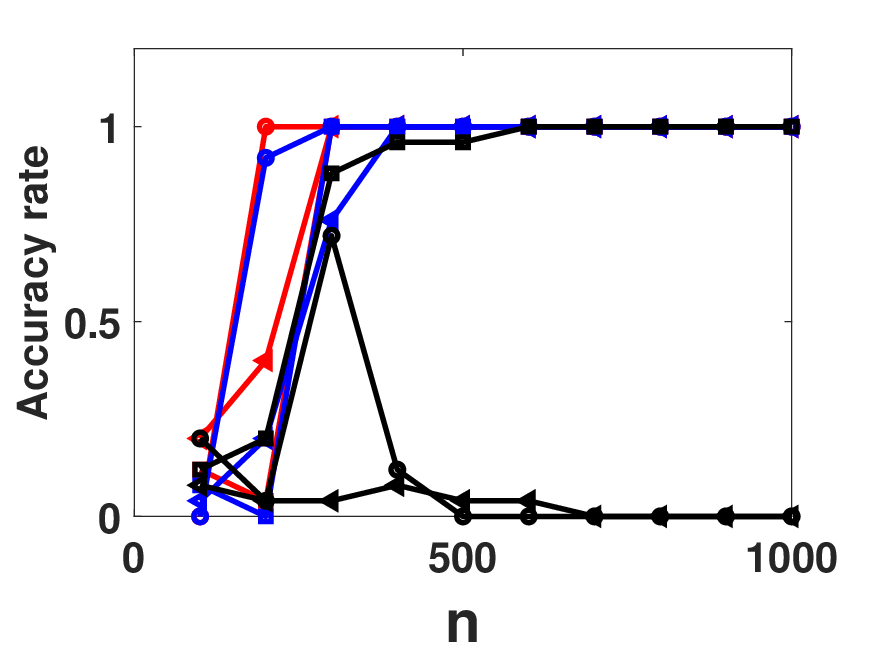}}
\caption{Numerical results of Experiment 2.}
\label{Ex2} 
\end{figure*}
\begin{figure*}
\centering
\subfigure[]{\includegraphics[width=0.33\textwidth]{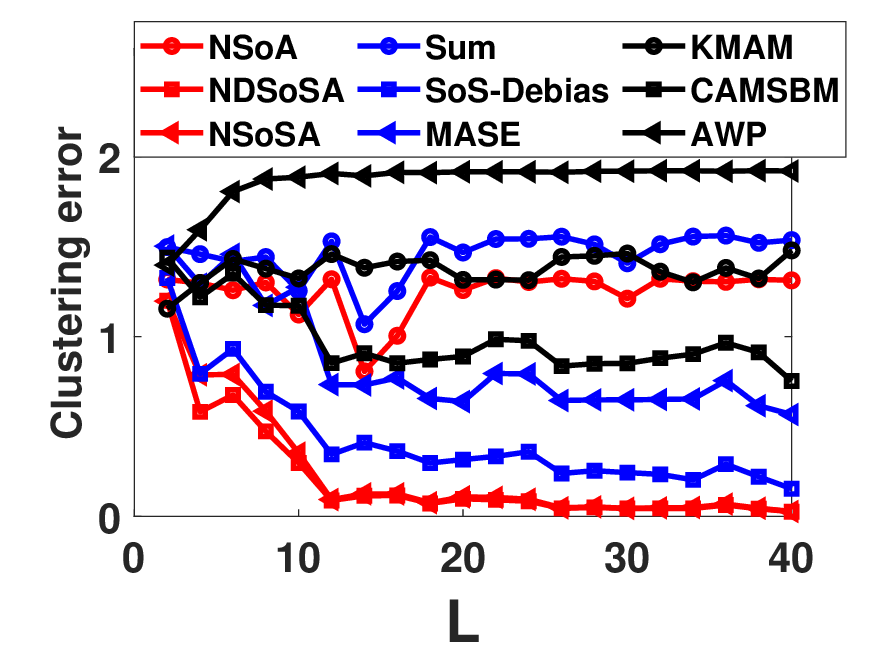}}
\subfigure[]{\includegraphics[width=0.33\textwidth]{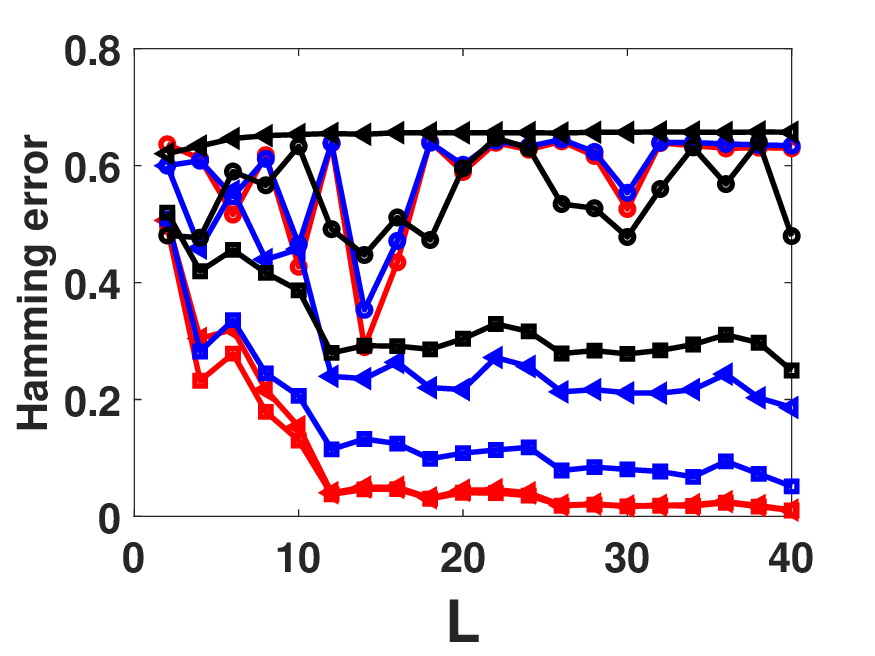}}
\subfigure[]{\includegraphics[width=0.33\textwidth]{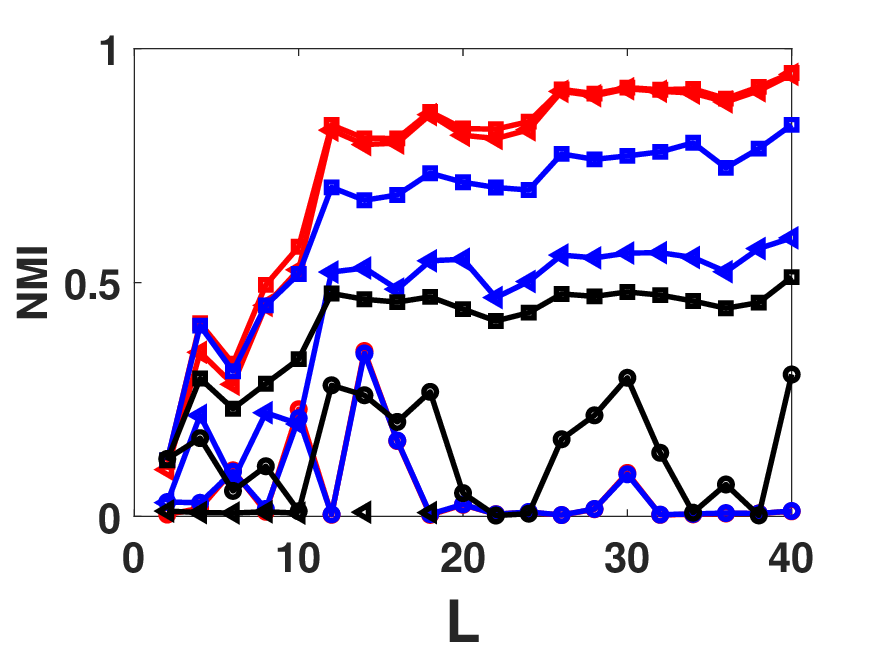}}
\subfigure[]{\includegraphics[width=0.33\textwidth]{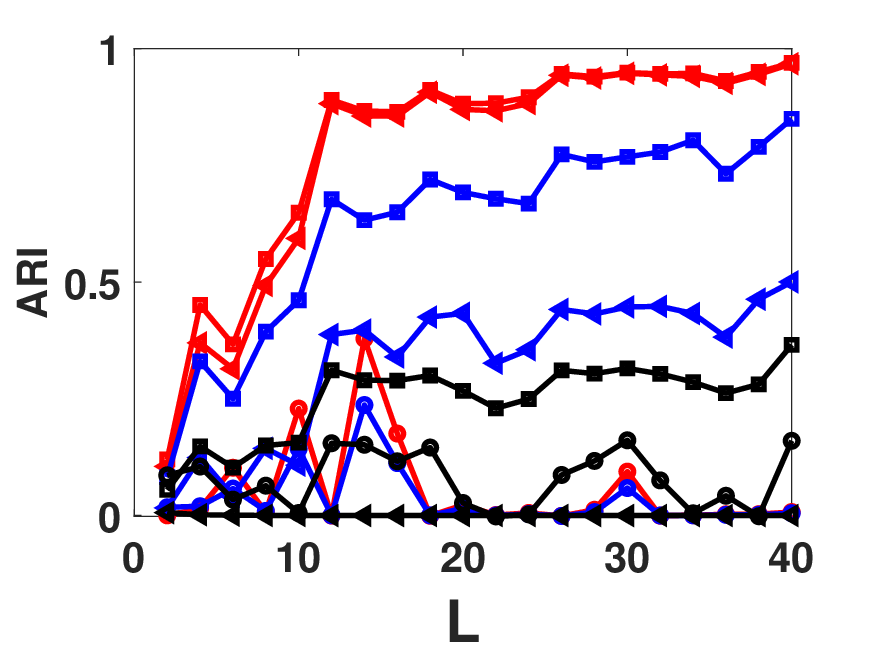}}
\subfigure[]{\includegraphics[width=0.33\textwidth]{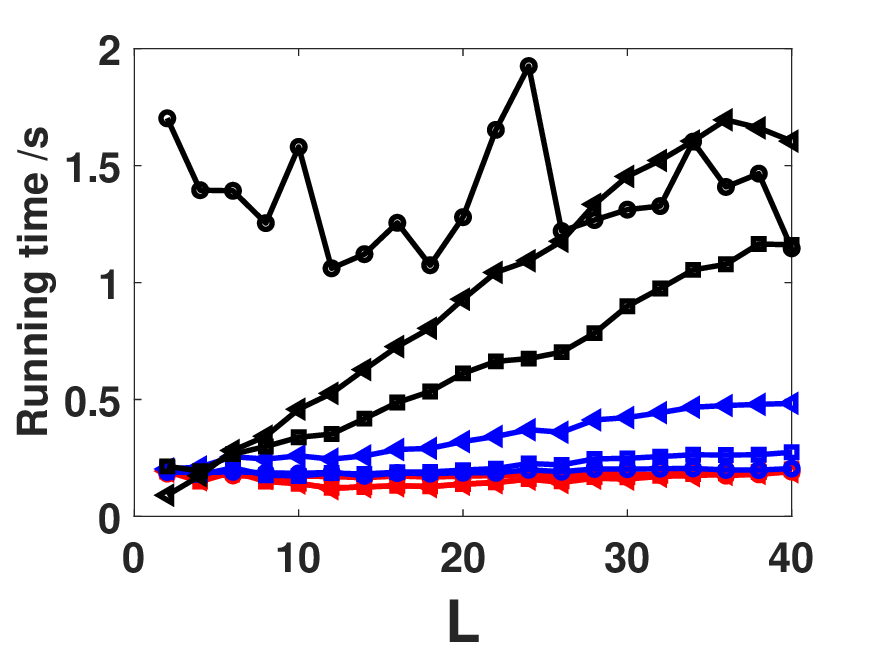}}
\subfigure[]{\includegraphics[width=0.33\textwidth]{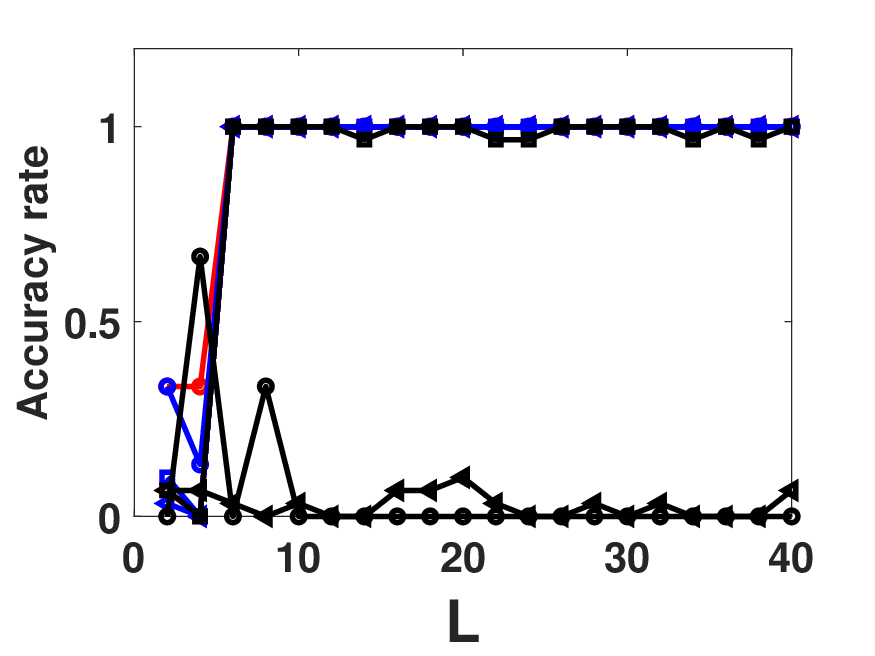}}
\caption{Numerical results of Experiment 3.}
\label{Ex3} 
\end{figure*}

\begin{figure*}
\centering
\subfigure[$L=10$]{\includegraphics[width=0.245\textwidth]{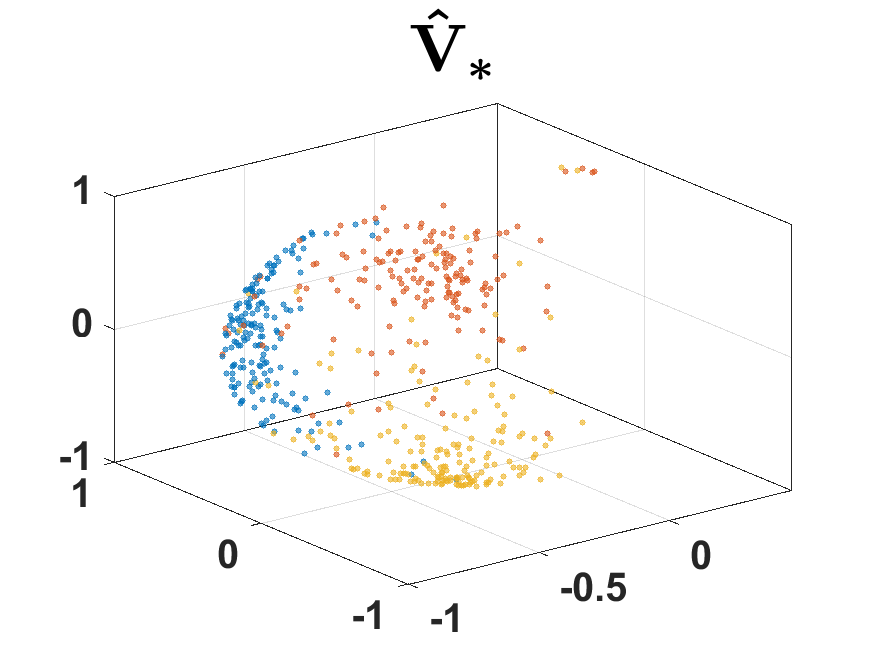}}
\subfigure[$L=20$]{\includegraphics[width=0.245\textwidth]{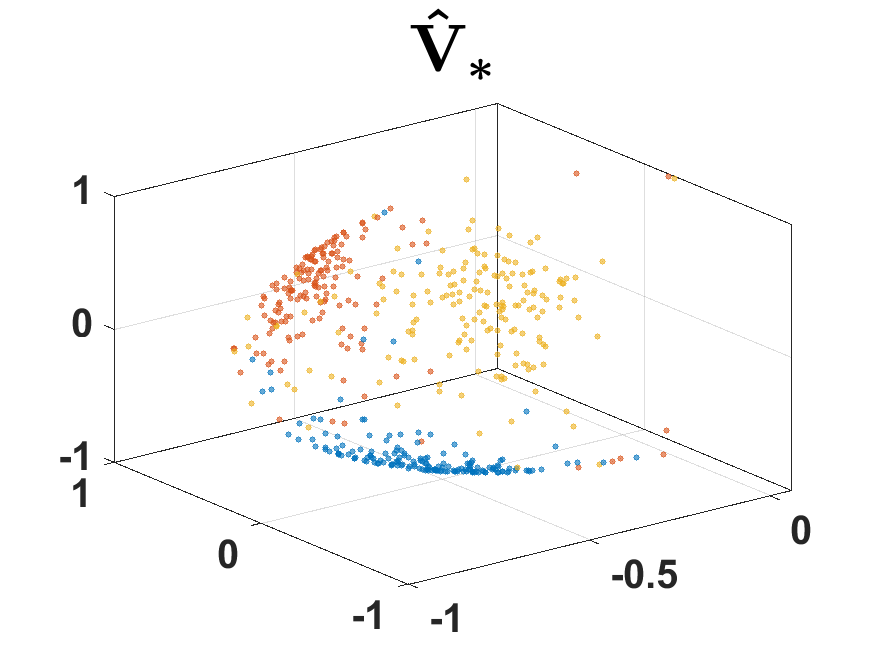}}
\subfigure[$L=30$]{\includegraphics[width=0.245\textwidth]{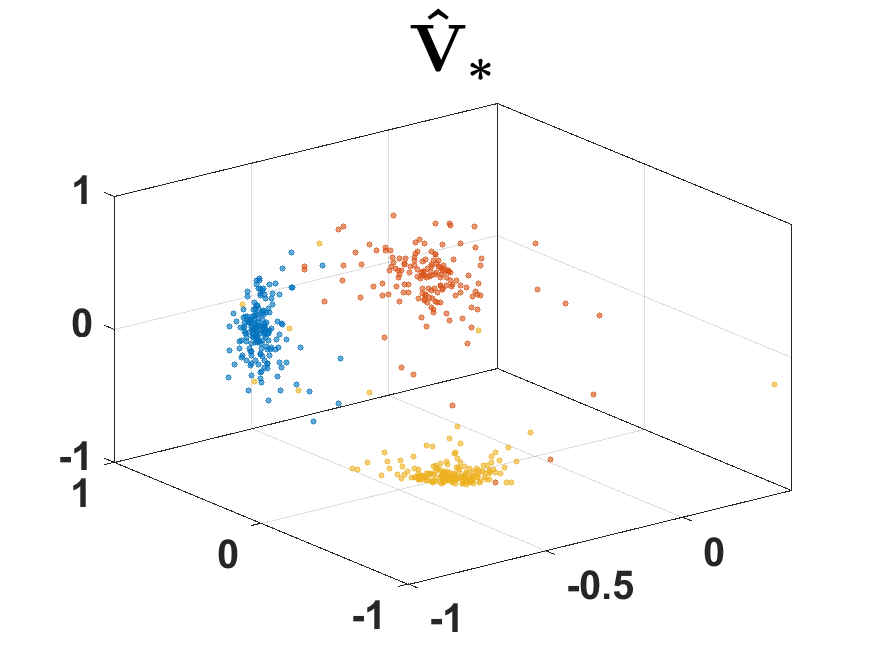}}
\subfigure[$L=40$]{\includegraphics[width=0.245\textwidth]{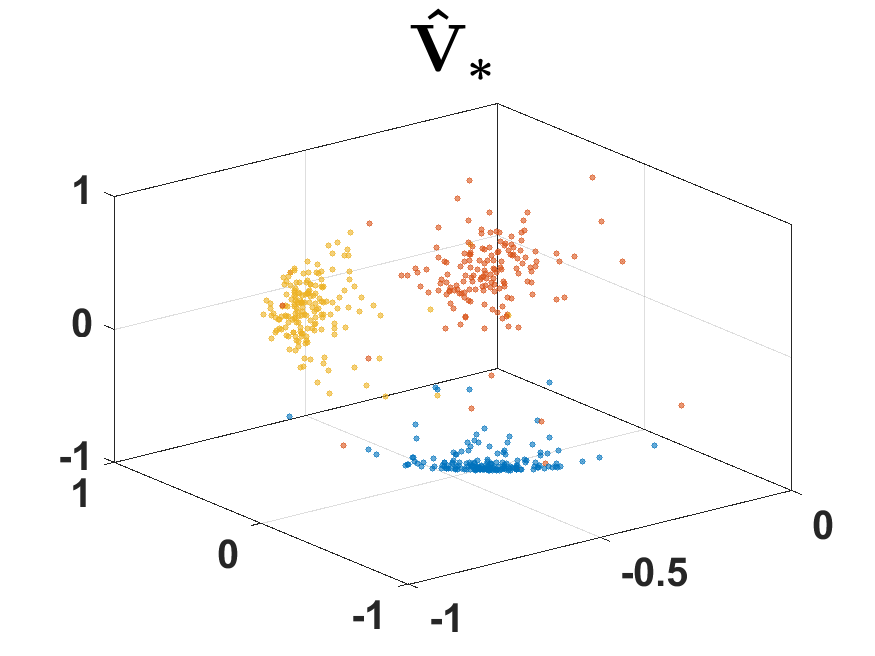}}
\caption{Panels (a)-(d) illustrate $\hat{V}_{*}$ at different number of layers $L$. In all panels, colors indicate communities.}
\label{VhatstarL} 
\end{figure*}

\emph{Experiment 4: Varying the Number of Communities $K$}. This experiment assesses the efficacy of all methods as the number of communities $K$ grows. We fix $L=20$ and $\rho=0.16$, while allowing $K$ to vary within the set $\{1,2,\ldots,6\}$. The value of $n$ is set to $200K$, ensuring that each community comprises roughly 200 nodes. The results of this experiment are shown in Fig.~\ref{Ex4}. Our methods, NSoA, NDSoSA, and NSoSA, exhibit satisfactory performance with low Clustering errors, low Hamming errors, high NMI, high ARI, minimal running time, and high Accuracy rates. In contrast, their competing methods in this experiment perform poorly. We also observe that our NSoA, NDSoSA, and NSoSA even accurately estimate $K$ when the true $K$ is 1.
\begin{figure*}
\centering
\subfigure[]{\includegraphics[width=0.33\textwidth]{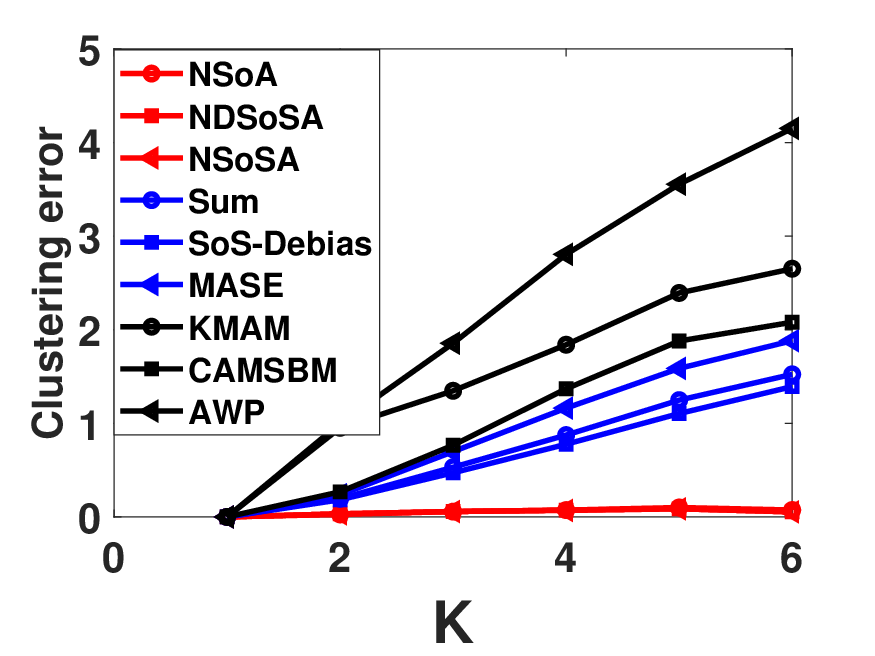}}
\subfigure[]{\includegraphics[width=0.33\textwidth]{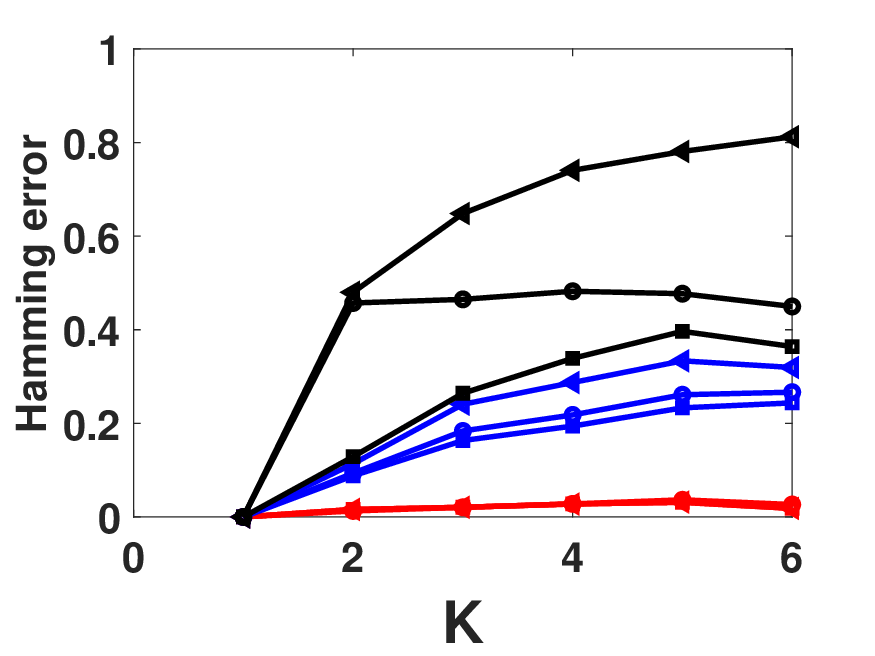}}
\subfigure[]{\includegraphics[width=0.33\textwidth]{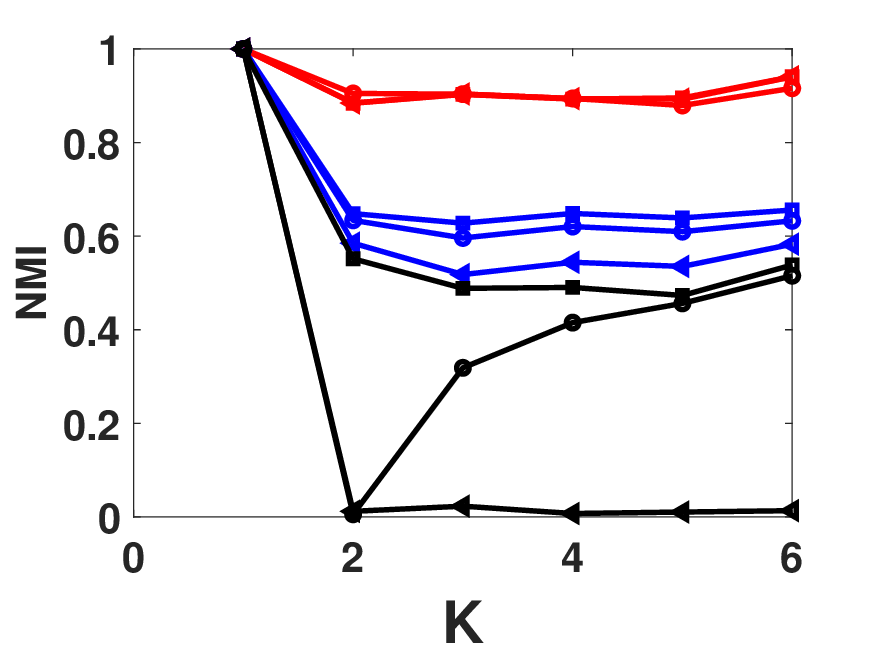}}
\subfigure[]{\includegraphics[width=0.33\textwidth]{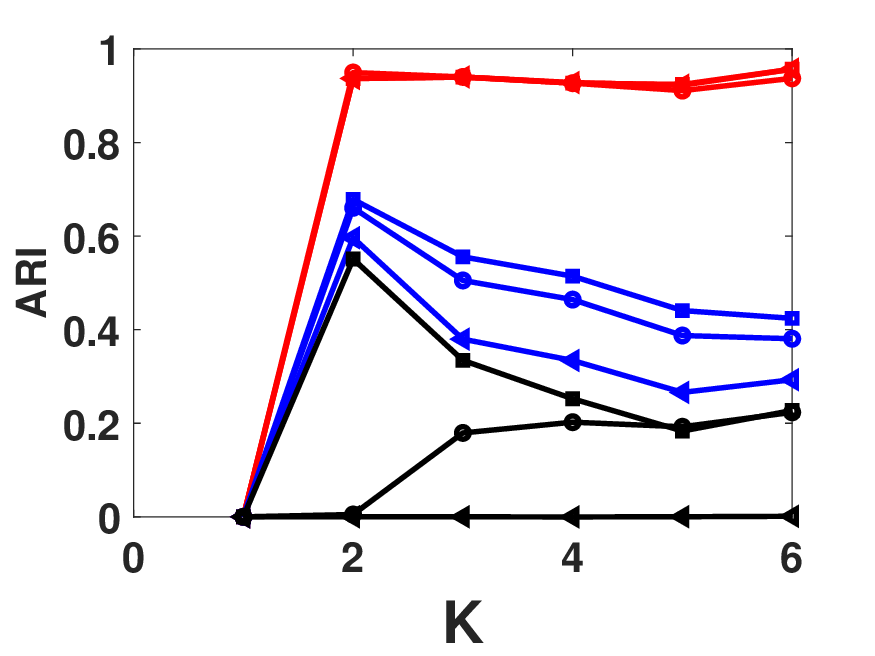}}
\subfigure[]{\includegraphics[width=0.33\textwidth]{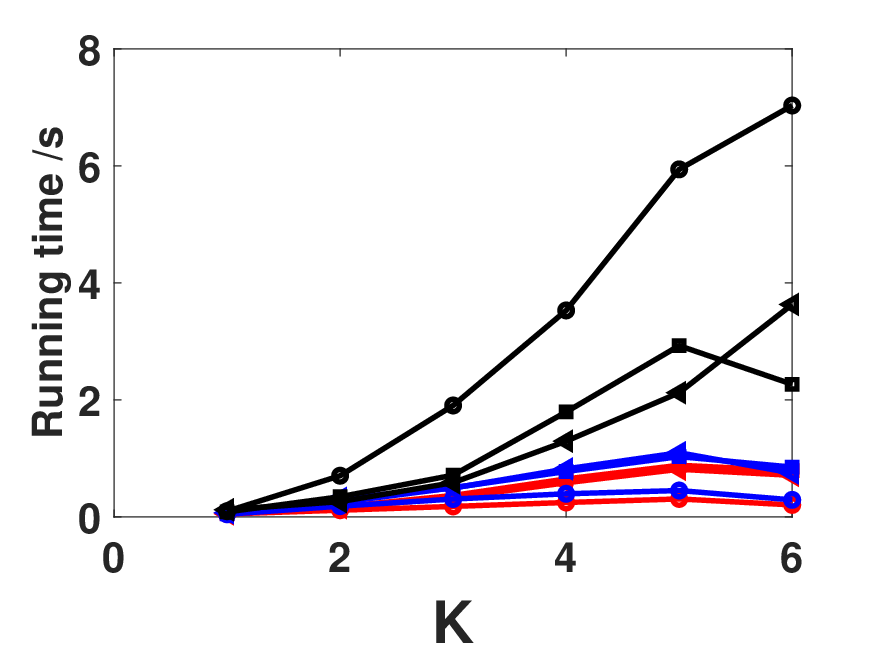}}
\subfigure[]{\includegraphics[width=0.33\textwidth]{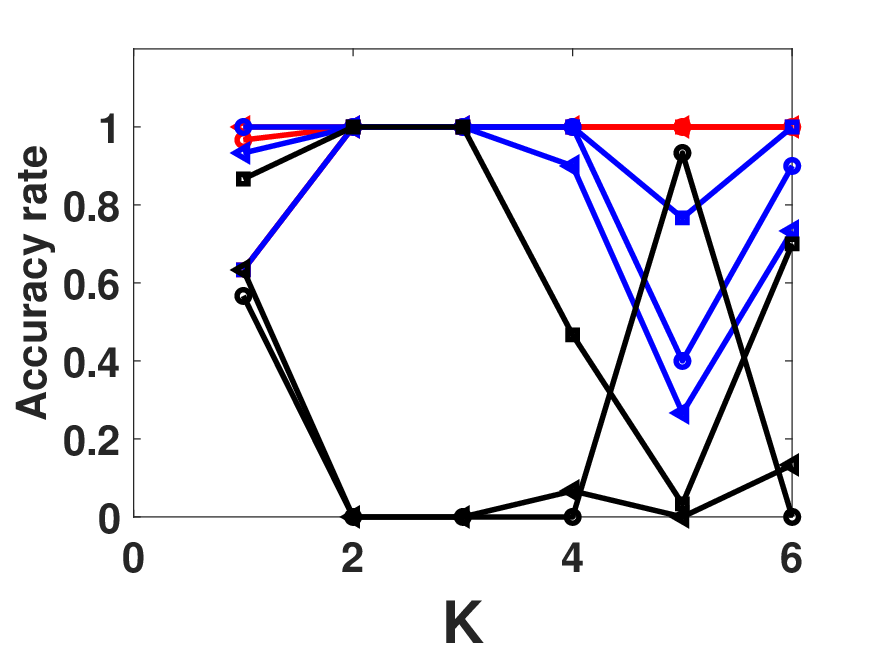}}
\caption{Numerical results of Experiment 4.}
\label{Ex4} 
\end{figure*}

We then conduct two experiments to assess the efficacy of the two accelerated algorithms, SNSoA and SNDSoSA, by comparing them with NSoA and NDSoSA in large-scale multi-layer networks. The sample sizes $n_{\mathrm{sample}}$ and $L_{\mathrm{sample}}$ remain the same as those in Section \ref{SecSubsampling}. We omit the inclusion of other methods studied in previous experiments for comparative analysis, as prior numerical results have shown that our NSoA and NDSoSA algorithms typically outperform their competitors in terms of speed.

\emph{Experiment 5: Effect of subsampling for large $n$.} In this experiment, we set $\rho=0.25, L=3$, and let $n$ range in $\{2000,4000,\ldots,24000\}$. The numerical results are presented in Fig.~\ref{Ex5}. Our observations are as follows: (a) As the number of nodes increases, the performance of all methods improves, with NDSoSA and SNDSoSA consistently showing the best results; (b) all methods provide precise estimates of $K$; (c) NSoA and NDSoSA typically outperform their accelerated versions SNSoA and SNDSoSA in the task of detecting communities, respectively; (d) SNDSoSA typically performs better than SNSoA in detecting communities, although it does so at the expense of slower execution time; (e) SNDSoSA runs significantly faster than NDSoSA, particularly for large values of $n$. Additionally, SNSoA demonstrates a running time comparable to NSoA for small $n$ and runs faster than NSoA for large $n$.

\emph{Experiment 6: Effect of subsampling for large $L$.} In this experiment, we set $\rho=0.25, n=500$, and let $L$ range in $\{20,40, \ldots,200\}$. The results presented in Fig.~\ref{Ex6} illustrate that: (a) increasing the number of layers $L$ enhances the precision of community detection for all methodologies; (b) NSoA and NDSoSA consistently surpass their accelerated counterparts in the task of community detection; (c) all methods accurately determine $K$; (d) SNDSoSA runs faster than NDSoSA, particularly for larger values of $L$. Additionally, SNSoA exhibits comparable running time to NSoA when $L\leq60$ and outperforms NSoA in speed when $L>60$.
\begin{figure*}
\centering
\subfigure[]{\includegraphics[width=0.33\textwidth]{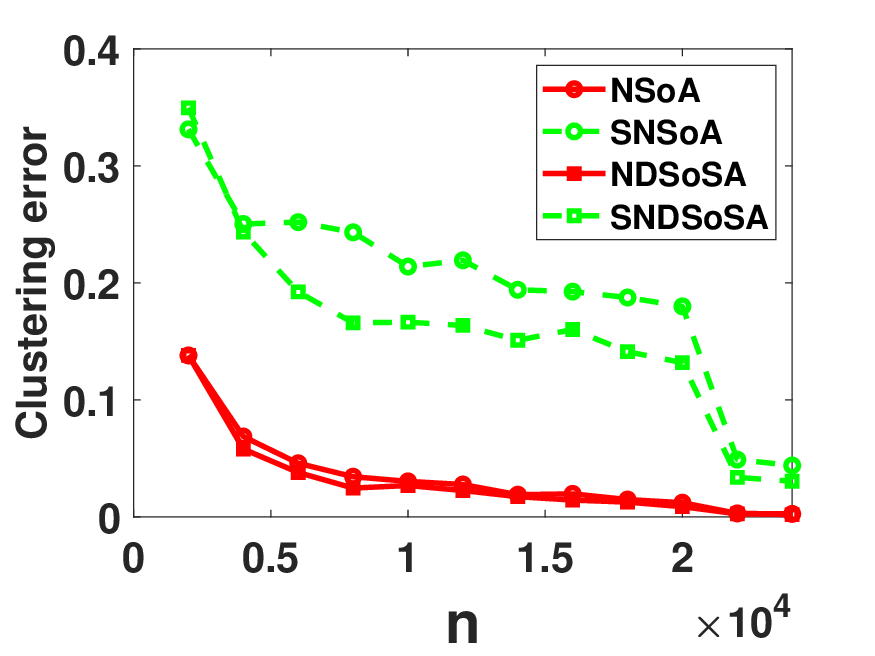}}
\subfigure[]{\includegraphics[width=0.33\textwidth]{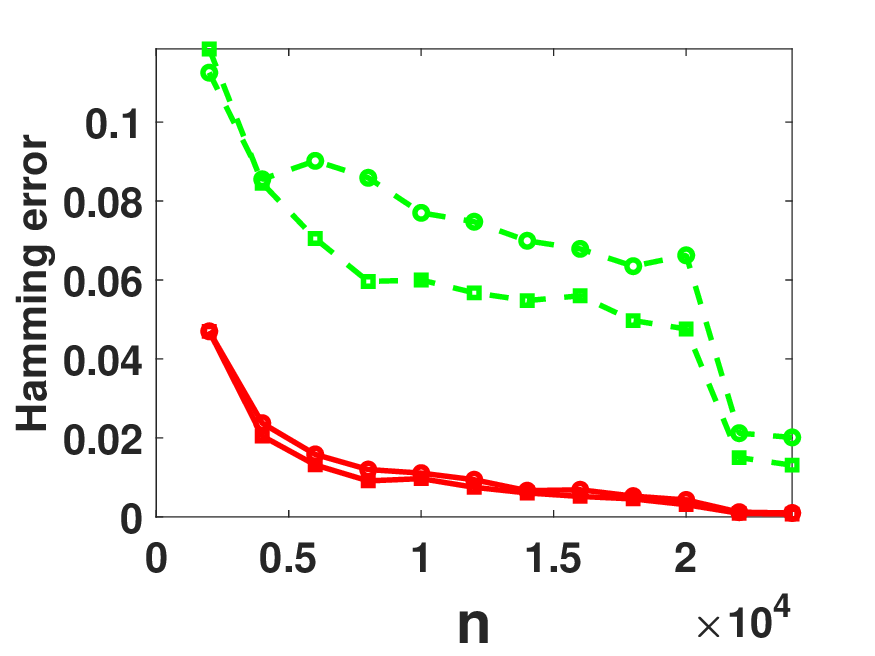}}
\subfigure[]{\includegraphics[width=0.33\textwidth]{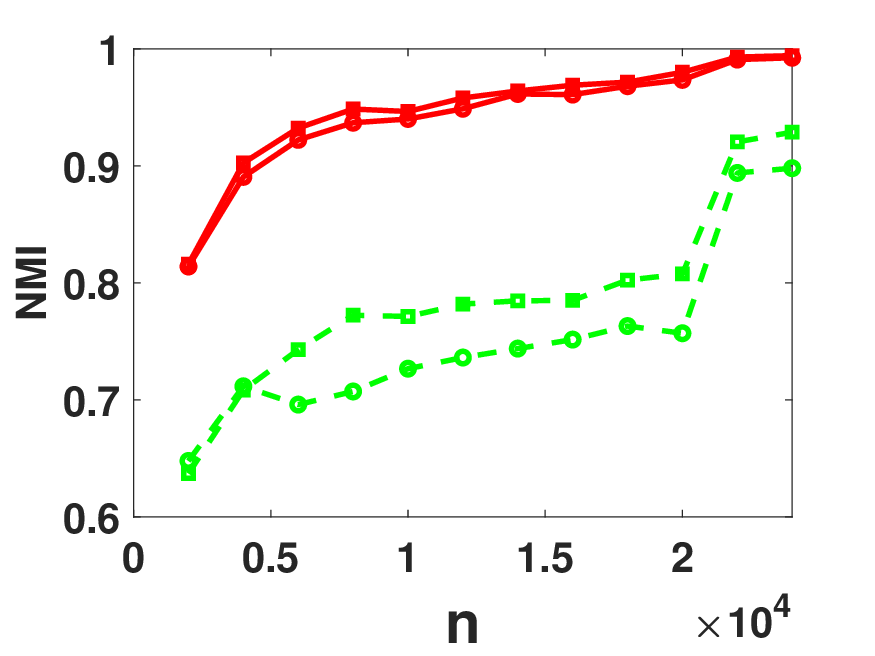}}
\subfigure[]{\includegraphics[width=0.33\textwidth]{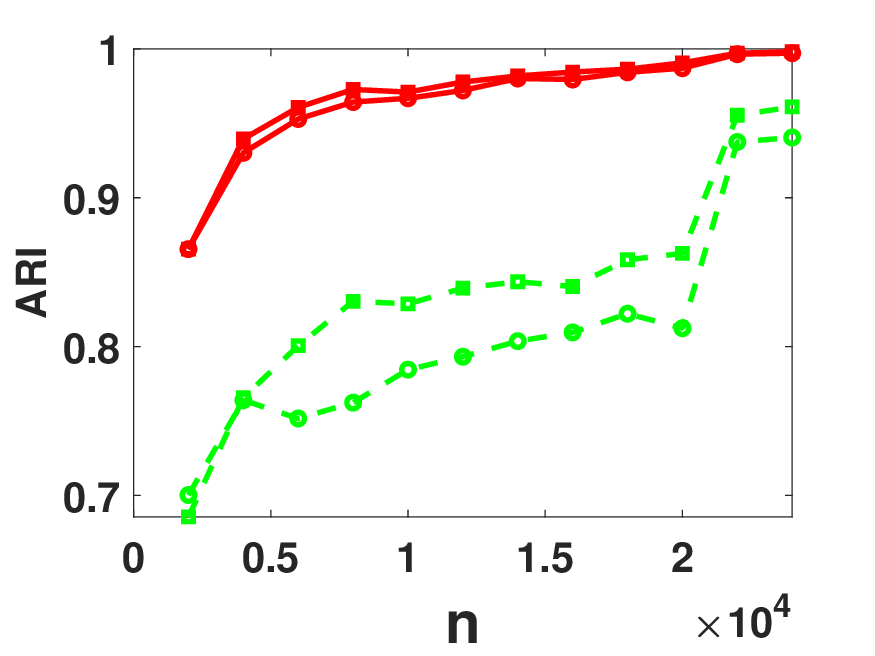}}
\subfigure[]{\includegraphics[width=0.33\textwidth]{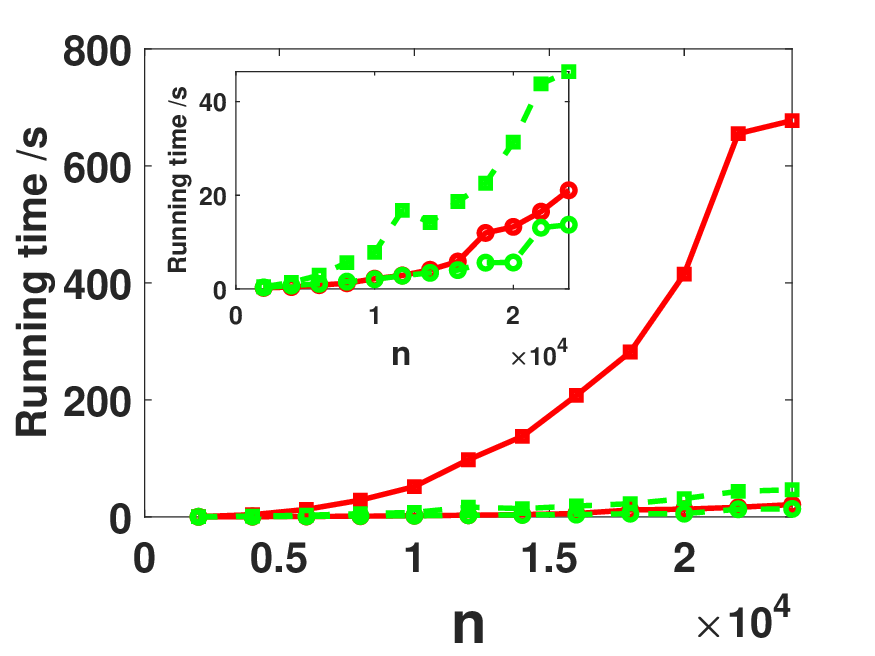}}
\subfigure[]{\includegraphics[width=0.33\textwidth]{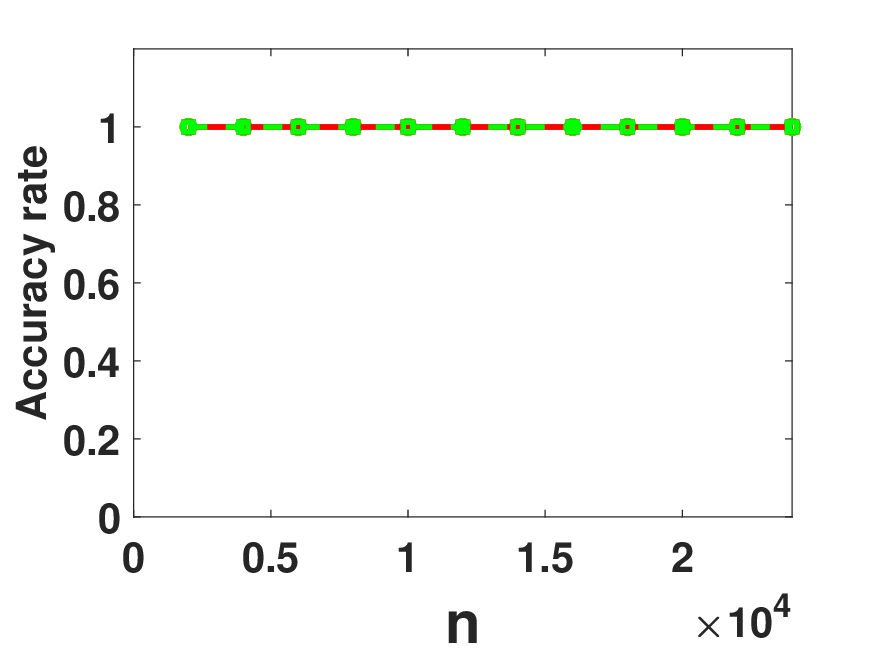}}
\caption{Numerical results of Experiment 5.}
\label{Ex5} 
\end{figure*}
\begin{figure*}
\centering
\subfigure[]{\includegraphics[width=0.33\textwidth]{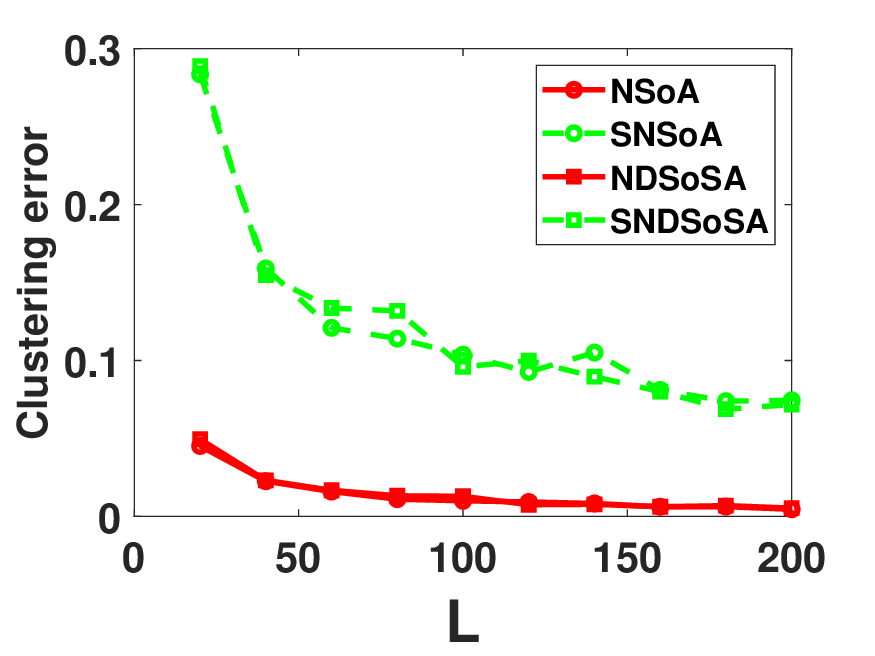}}
\subfigure[]{\includegraphics[width=0.33\textwidth]{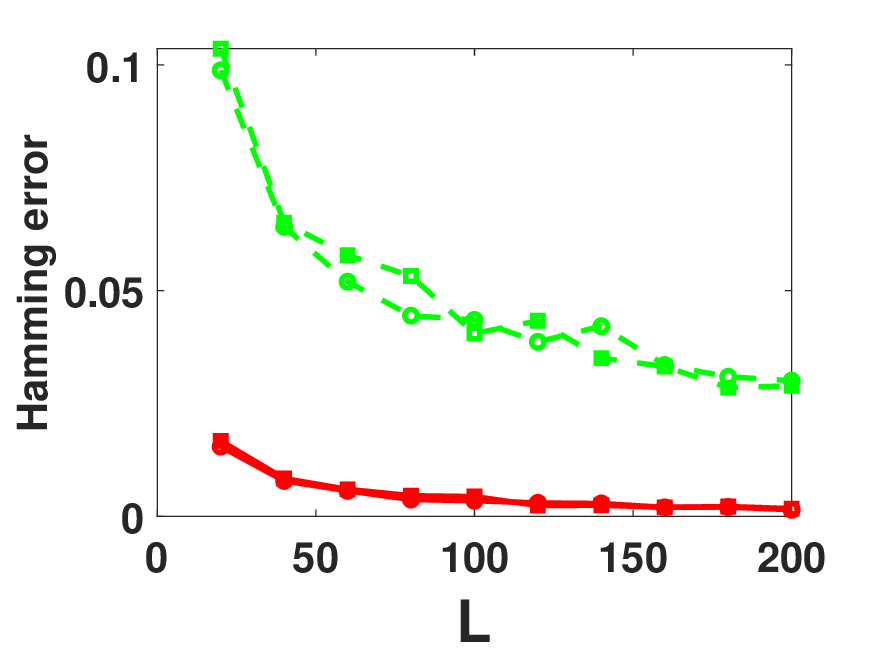}}
\subfigure[]{\includegraphics[width=0.33\textwidth]{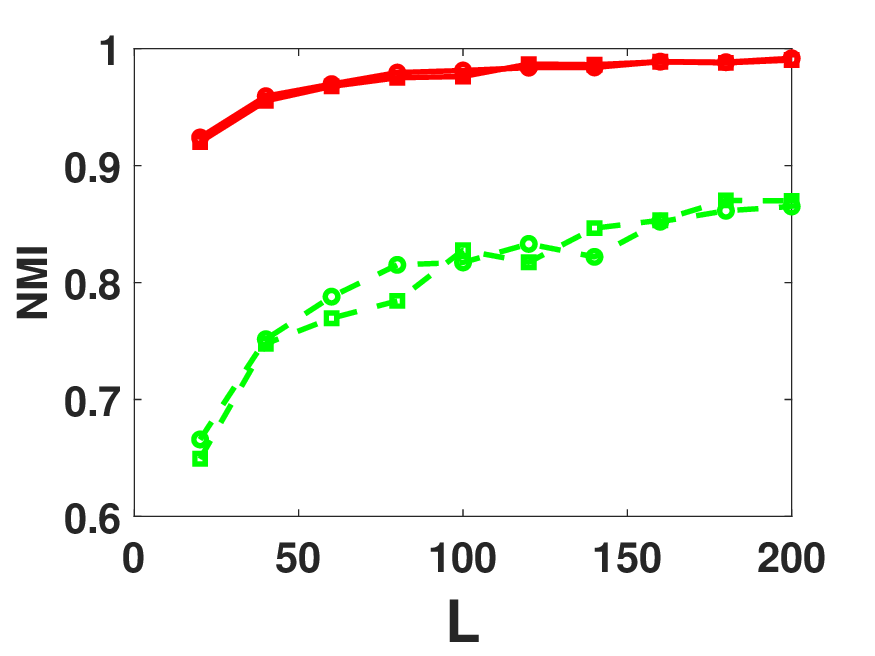}}
\subfigure[]{\includegraphics[width=0.33\textwidth]{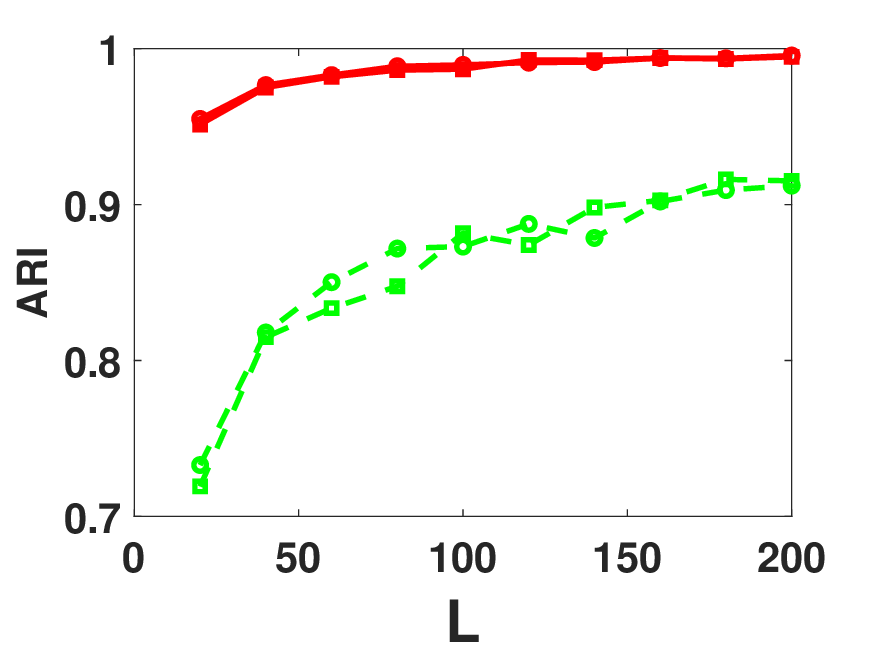}}
\subfigure[]{\includegraphics[width=0.33\textwidth]{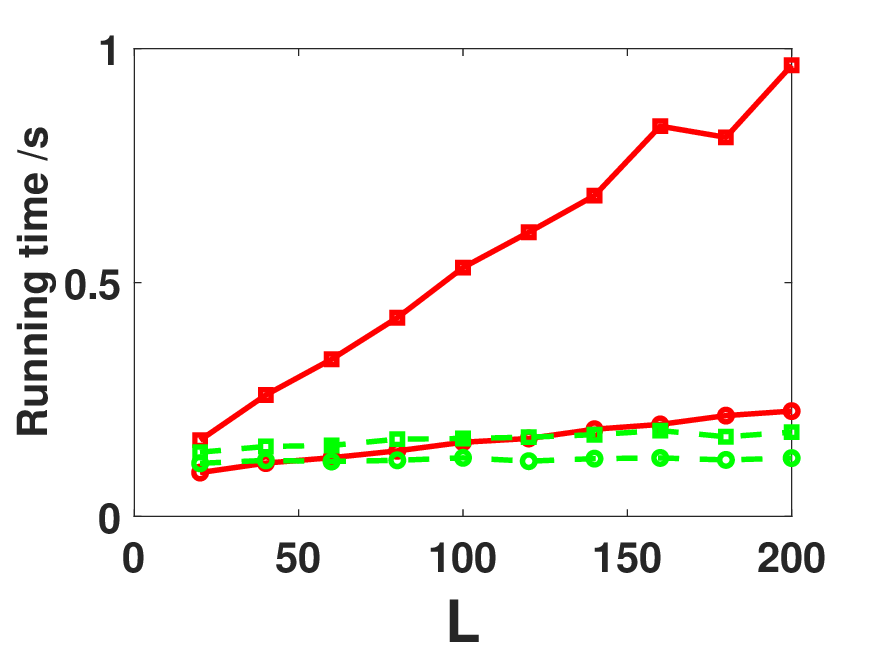}}
\subfigure[]{\includegraphics[width=0.33\textwidth]{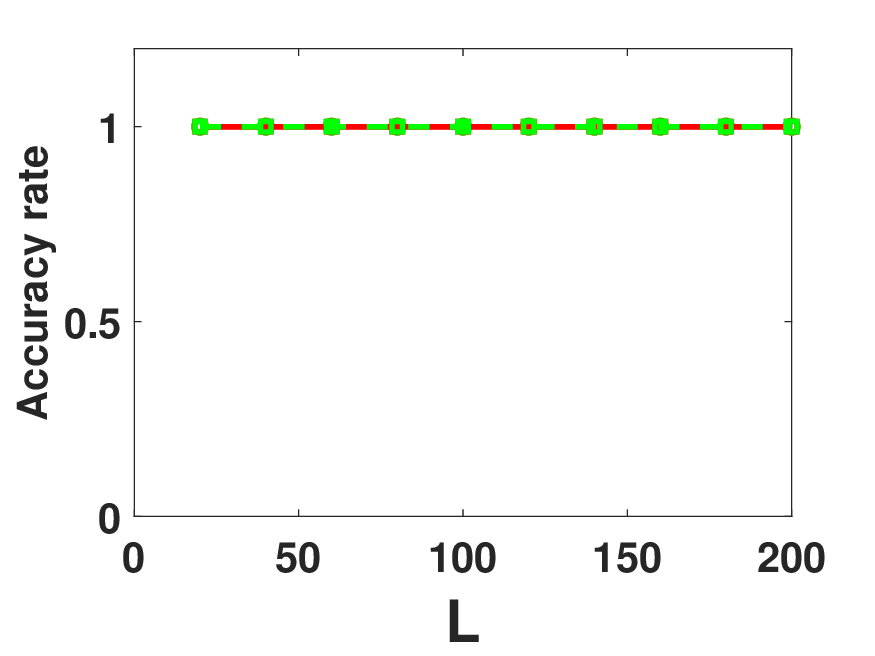}}
\caption{Numerical results of Experiment 6.}
\label{Ex6} 
\end{figure*}
\begin{rem}
In the simulation study, the choice of different parameters is designed to provide a comprehensive comparison of the performance of different community detection algorithms within the MLDCSBM framework. While the specific values chosen may appear arbitrary at first glance, they are selected with careful consideration to align with common real-world multi-layer network characteristics and satisfy the assumptions required for theoretical analysis.

Firstly, the number of nodes $ n $ and layers $ L $ are selected to represent varying sizes of multi-layer networks. In real-world scenarios, the number of nodes and layers can differ significantly across networks, ranging from small-scale networks with tens of nodes and layers to large-scale networks with thousands or millions of nodes and layers. The selected values, such as $n=500$ (or $n\in\{100, 200, \ldots, 1000\}$) and $L=100$ (or $L\in\{2, 4, \ldots,40\}$), are arbitrary yet reasonable to test the algorithms under different scales.

Secondly, the block connectivity matrices $\{B_l\}^{L}_{l=1} $ are generated with elements drawn from a Uniform distribution on [0,1], reflecting the uncertainty in edge probabilities between communities in real-world multi-layer networks. This setup ensures that the generated multi-layer networks exhibit diverse connectivity patterns across layers, consistent with real-world scenarios where relationships within and between communities may vary significantly.

Thirdly, the number of communities $ K $ is also chosen to represent scenarios commonly observed in real-world networks. In many real-world contexts, nodes can be naturally grouped into a moderate number of communities based on shared characteristics or interactions.

Fourthly, the sparsity parameter $ \rho $ is set to values that comply with the sparsity requirements outlined in Assumptions \ref{Assum1} and \ref{Assum2}. This ensures that the generated networks have a realistic level of sparsity, reflecting the fact that real-world networks are often sparse with relatively few edges compared to the total number of potential edges.

Finally, the heterogeneity parameter $ \theta $ is drawn from a Uniform distribution on [0,1] to capture the variability in node degrees observed in real-world multi-layer networks. This setup reflects the heterogeneous nature of many real-world systems because nodes' degrees are often heterogeneous in real data.

In summary, the choice of parameters in the simulation study aligns with common real-world multi-layer network characteristics and satisfies the theoretical assumptions required for our analysis. The flexibility of the MLDCSBM framework allows for these parameters to be varied widely, providing a robust comparison for community detection algorithms under a range of settings that mimic real-world scenarios.
\end{rem}
\section{Real data applications}\label{sec6realdata}
In this section, we utilize several real-world multi-layer networks with unknown ground-truth community labels to assess the performance of our methods. Before comparing various methods, we conducted a series of pre-processing steps on these datasets. Initially, for networks exhibiting directed or weighted characteristics, we transformed them into undirected and unweighted versions by disregarding their directions and weights. Subsequently, we narrowed our focus to nodes belonging to the largest connected component within the aggregate matrix $A_{\mathrm{sum}}$. Table \ref{realdata} presents the basic statistics of these multi-layer networks after pre-processing, where $\nu := \frac{\text{Total number of edges}}{L \times \frac{n(n-1)}{2}}$ is employed to quantify the sparsity of real-world multi-layer networks, where $L \times \frac{n(n-1)}{2}$ represents the maximum potential connections in a multi-layer network with $n$ nodes and $L$ layers. We observe that these networks vary widely in their sparsity, with some networks (like Lazega Law Firm) being relatively dense and others (like EU-Air and MUS-GPI) being extremely sparse. The paper's focus on spectral clustering methods and their performance under different sparsity levels provides insight into how these algorithms can be adapted for real-world multi-layer networks with varying degrees of sparsity. For visualization, we plot the adjacency matrices of all layers for Lazega Law Firm, C.Elegans, CS-Aarhus, and CSMMLN in Fig.~\ref{lazegaA}, Fig.~\ref{CelegansA}, Fig.~\ref{CSA}, and Fig.~\ref{CSMMLNA}, respectively. A concise overview of these multi-layer networks is provided below.
\begin{table}[h!]
\small
	\centering
	\caption{Basic information of real-world multi-layer networks considered in this paper.}
	\label{realdata}
	\begin{tabular}{cccccccccccc}
\hline\hline
Dataset&$n$&$L$&$\#$ Edges&$\nu$\\
\hline
Lazega Law Firm&71&3&2223&0.2982\\
C.Elegans&279&3&5863&0.0504\\
CS-Aarhus&61&5&620&0.0678\\
CSMMLN&260&8&43078&0.1599\\
EU-Air&417&37&3588&0.0011\\
FAO-trade&214&364&318346&0.0384\\
MUS-GPI&7747&7&19842&0.000094473\\
\hline\hline
\end{tabular}
\end{table}
\begin{figure*}
\centering
\subfigure[]{\includegraphics[width=0.33\textwidth]{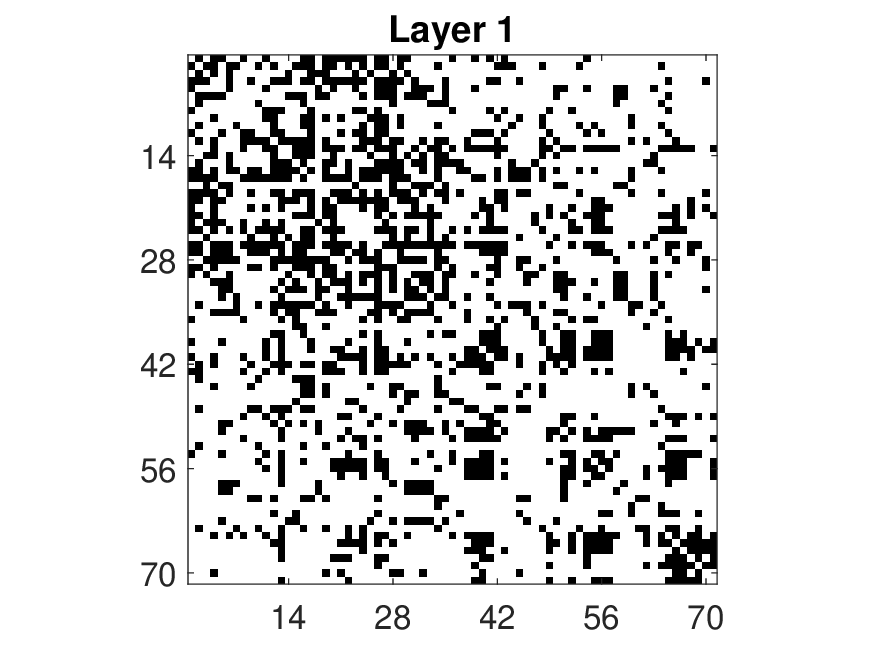}}
\subfigure[]{\includegraphics[width=0.33\textwidth]{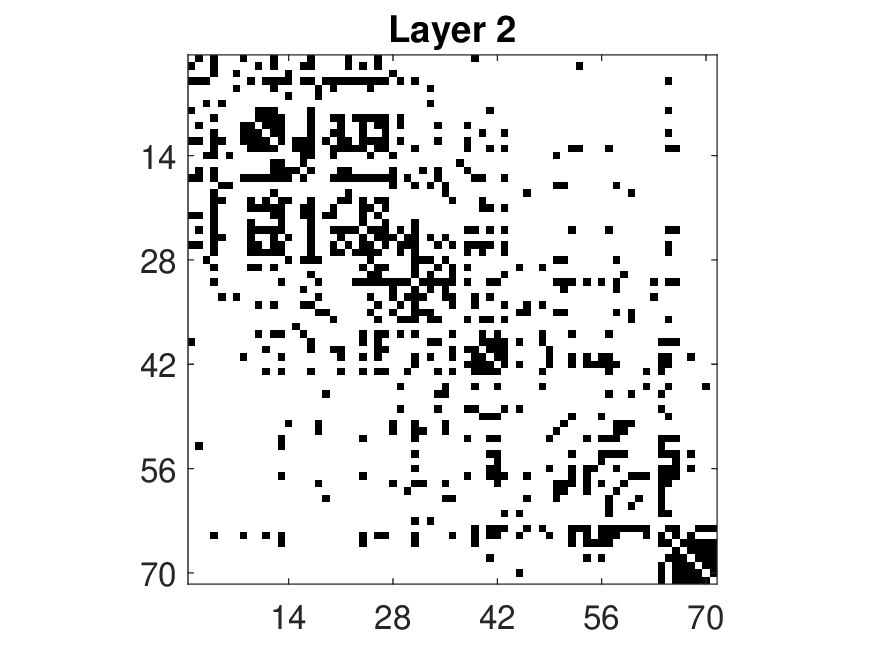}}
\subfigure[]{\includegraphics[width=0.33\textwidth]{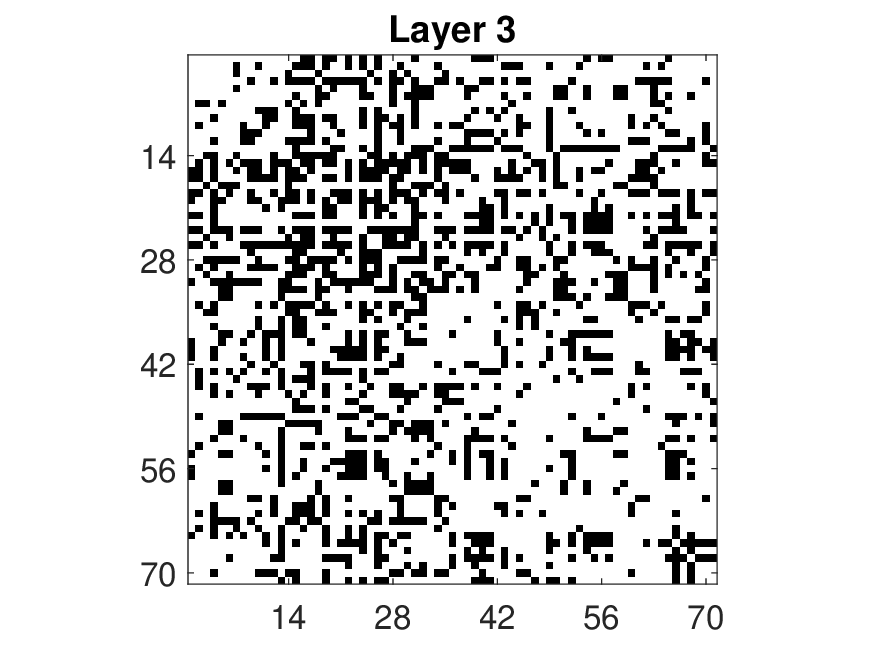}}
\caption{The adjacency matrices of the 3 layers of Lazega Law Firm. Matrix element values are displayed in gray scale with black corresponding to 1 and white to 0.}
\label{lazegaA} 
\end{figure*}
\begin{itemize}
\item \textbf{Lazega Law Firm}: It is a multi-layer social network that consists of 3 kinds of (Co-work, Friendship, and Advice) between partners and associates of a corporate law partnership \citep{snijders2006new}.
  \item \textbf{C.Elegans}:It is a multi-layer neuronal network that contains information about the connection of Caenorhabditis elegans, where the multiplex consists of layers corresponding to different synaptic junctions: electric (``ElectrJ"), chemical monadic (``MonoSyn"), and polyadic (``PolySyn") \citep{chen2006wiring}.
  \item \textbf{CS-Aarhus}: It is a multi-layer social network that consists of five kinds of online and offline relationships (Facebook, Leisure, Work, Co-authorship, Lunch) between the employees of the Computer Science department at Aarhus \citep{magnani2013combinatorial}.
  \item \textbf{Chinese Stock Market Multi-Layer Network (CSMMLN)}: It is a multi-layer stock network which we have constructed utilizing averaged monthly stock log return data from the Shenzhen Stock Exchange and the Shanghai Stock Exchange, covering the period from January 2022 to March 2024. A total of 260 valid stocks are identified over the 27 monthly observations. The data is sourced from RESSET (available at \url{https://www.resset.com/enindex}). For the $i$-th stock, let $Y_{i}$ represent a $1\times27$ vector, where $Y_{i}(t)$ records the averaged monthly log return for the $t$-th month for $t\in[27]$. Additionally, let $C$ denote a $260\times 260$ correlation matrix, where $C(i,j)$ represents the Pearson correlation coefficient between $Y_{i}$ and $Y_{j}$ for $i, j \in [260]$. Unlike works in \citep{huang2009network,chi2010network,li2022undirected}, which analyze the network structure of single-layer stock networks using a single threshold, we construct our multi-layer stock network using varying thresholds based on the Pearson correlation matrix $C$. Specifically, we consider eight threshold values: $\{0.4, 0.45, 0.5, 0.55, 0.6, 0.65, 0.7, 0.75\}$. Let $\tau$ be a $1\times8$ vector containing these thresholds. Let $A_{l}$ represent the adjacency matrix of the $l$-th stock network, where $A_{l}(i,j)=1$ if $|C(i,j)|>\tau(l)$, and $A_{l}(i,j)=0$ if $|C(i,j)|\leq\tau(l)$, for $l \in [8]$. In this way, we create a stock market multi-layer network with 260 nodes and 8 layers. As observed from Fig.~\ref{CSMMLNA}, the stock network becomes sparser with an increase in the threshold value.
  \item \textbf{EU-Air}: It is a multi-layer air-transportation network that is composed of 37 different layers each one corresponding to a different airline operating in Europe \citep{cardillo2013emergence}.
  \item \textbf{FAO-trade}: It is a multi-layer economic network that records different types of trade relationships among countries, obtained from FAO (Food and Agriculture Organization of the United Nations) \citep{de2015structural}. For this data, layers represent products, nodes are countries and edges represent import/export relationships of a specific food product among countries.
  \item \textbf{MUS multiplex GPI network (MUS-GPI)}: It is a multi-layer genetic network that collects seven types (physical association, association, direct interaction, colocalization, additive genetic interaction defined by inequality, synthetic genetic interaction defined by inequality, and suppressive genetic interaction defined by inequality) of genetic interactions for organisms \citep{stark2006biogrid}. The original data is directed and we transform it to undirected by ignoring its direction in this paper.
\end{itemize}
These real-world multi-layer networks considered in this paper can be downloaded from \url{https://manliodedomenico.com/data.php}. Given that edges in these real-world multi-layer networks typically represent relationships such as co-work, friendships, synaptic junctions, trade connections, air transportation, and genetic interactions, these networks can be regarded as assortative. This implies that for these multi-layer networks, nodes sharing common characteristics tend to form dense connections, thereby allowing the application of the averaged modularity $Q_{MNavrg}$ to assess the quality of community partitions generated by various algorithms for these real-world multi-layer networks.
\begin{figure*}
\centering
\subfigure[]{\includegraphics[width=0.33\textwidth]{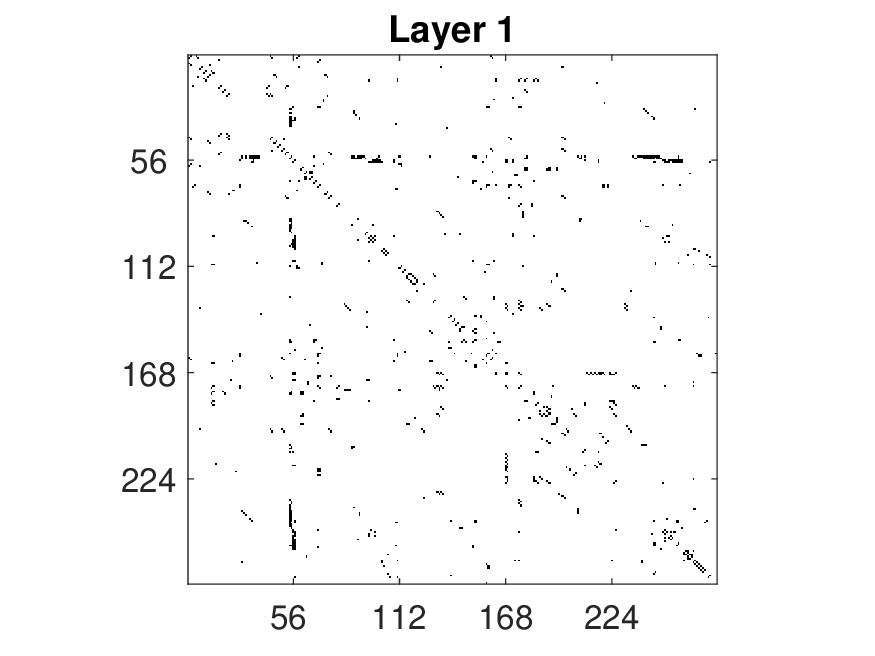}}
\subfigure[]{\includegraphics[width=0.33\textwidth]{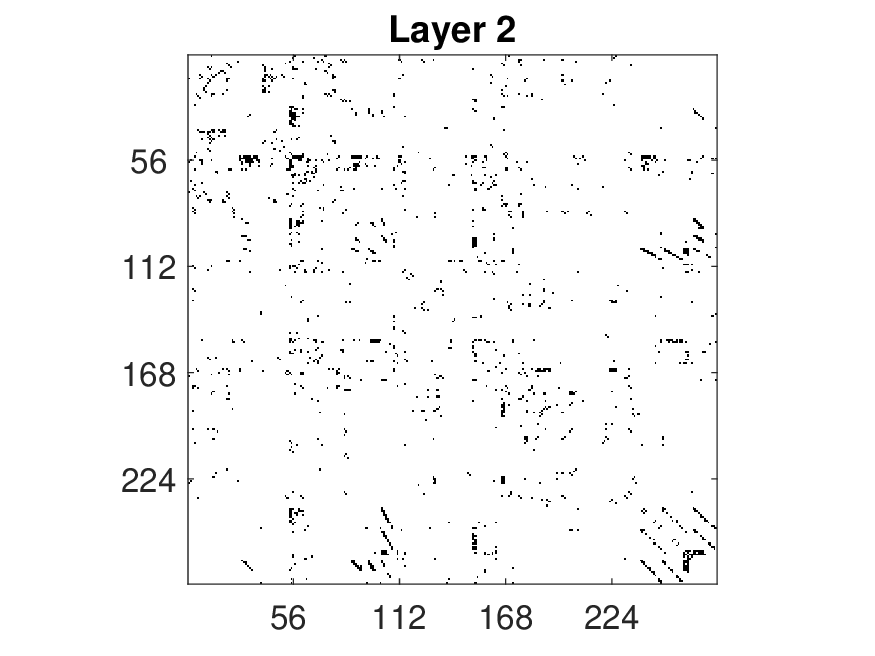}}
\subfigure[]{\includegraphics[width=0.33\textwidth]{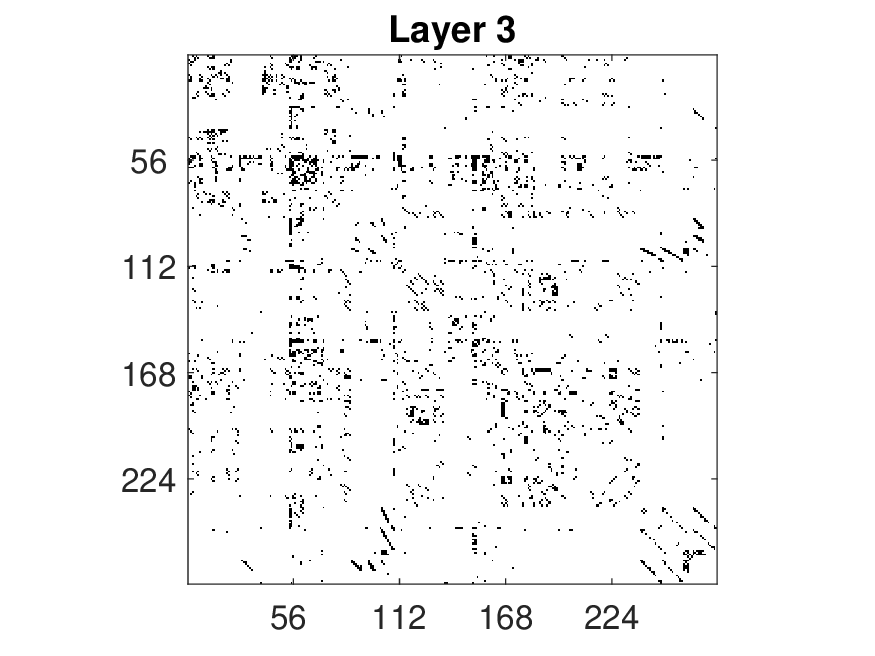}}
\caption{The adjacency matrices of the 3 layers of C.Elegans. Matrix element values are displayed in gray scale with black corresponding to 1 and white to 0.}
\label{CelegansA} 
\end{figure*}
\begin{table*}[h!]
\scriptsize
	\centering
	\caption{$(\hat{K},Q_{MNavrg})$ returned by each method for real data in Table \ref{realdata}, where $\hat{K}$ means estimated number of communities.}
	\label{realdataQ}
	\begin{tabular}{cccccccccccccc}
\hline\hline
Dataset&NSoA&NDSoSA&NSoSA&Sum&SoS-Debias&MASE&KMAM&CAMSBM&AWP&SNSoA&SNDSoSA\\
\hline
Lazega Law Firm&(3,0.2553)&(3,0.2553)&(3,0.2553)&(4,0.2495)&(3,0.2406)&(3,0.2377)&(3,0.2469)&(4,0.2416)&(3,0.2640)&(3,0.2553)&(3,0.2553)\\
C.Elegans&(2,0.3532)&(2,0.3570)&(2,0.3524)&(2,0.3524)&(2,0.3570)&(2,0.3426)&(6,0.2165)&(2,0.3502)&(8,0.2088)&(2,0.3532)&(2,0.3570)\\
CS-Aarhus&(5,0.4920)&(4,0.4913)&(6,0.4811)&(5,0.4862)&(5,0.4252)&(5,0.4392)&(4,0.4729)&(3,0.4381)&(5,0.4650)&(5,0.4920)&(4,0.4913)\\
CSMMLN&(3,0.3629)&(3,0.3582)&(3,0.3629)&(3,0.3405)&(3,0.3453)&(3,0.3405)&(3,0.3251)&(3,0.3432)&(4,0.3305)&(3,0.3629)&(3,0.3582)\\
EU-Air&(1,0)&(1,0)&(1,0)&(1,0)&(1,0)&(1,0)&(1,0)&(1,0)&(1,0)&(1,0)&(1,0)\\
FAO-trade&(2,0.1731)&(2,0.1658)&(2,0.1656)&(2,0.1315)&(2,0.1456)&(2,0.1131)&(3,0.0839)&(2,0.1154)&(5,0.0930)&(2,0.1699)&(2,0.1651)\\
MUS-GPI&(4,0.2813)&(7,0.2096)&(4,0.2044)&(6,0.0579)&(4,0.0576)&(5,0.0347)&(7,0.1019)&(6,0.0380)&(3,0.0601)&(4,0.1209)&(8,0.0911)\\
\hline\hline
\end{tabular}
\end{table*}

\begin{figure*}
\centering
\subfigure[]{\includegraphics[width=0.18\textwidth]{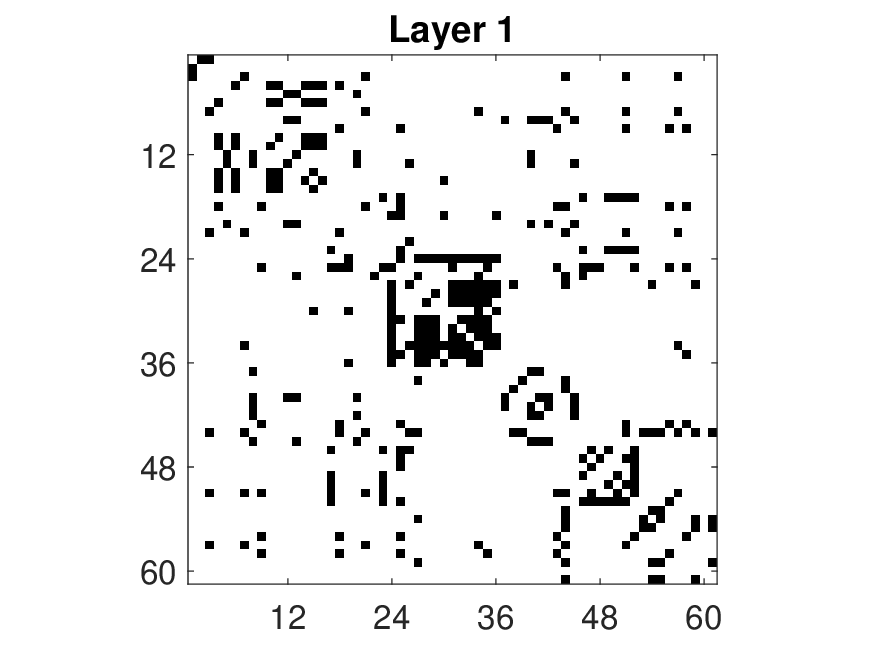}}
\subfigure[]{\includegraphics[width=0.18\textwidth]{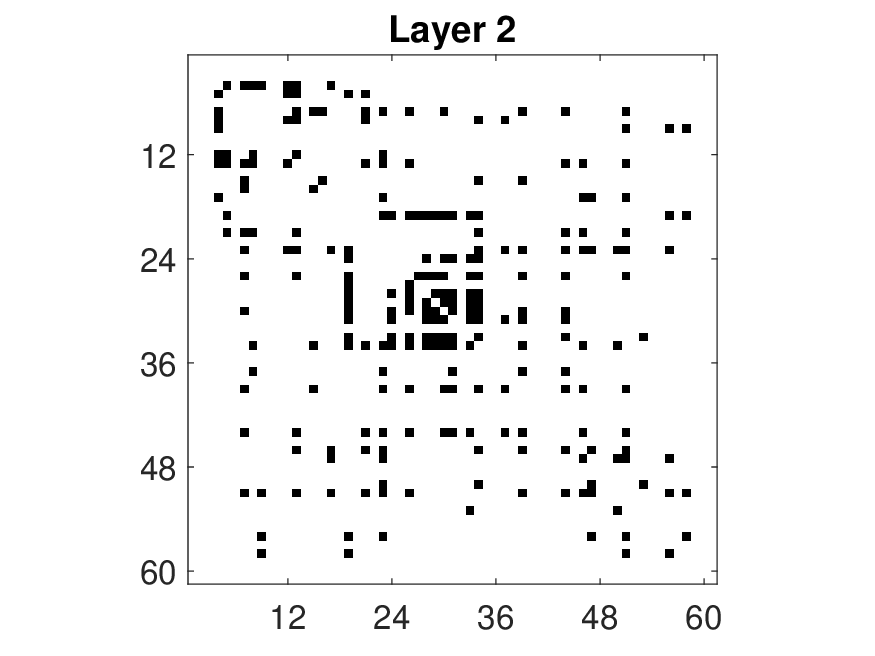}}
\subfigure[]{\includegraphics[width=0.18\textwidth]{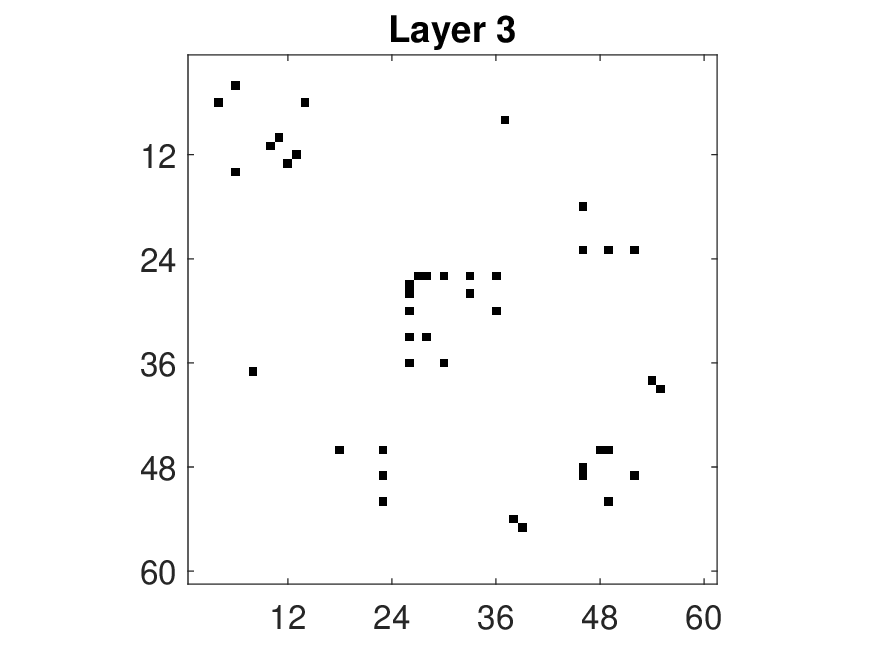}}
\subfigure[]{\includegraphics[width=0.18\textwidth]{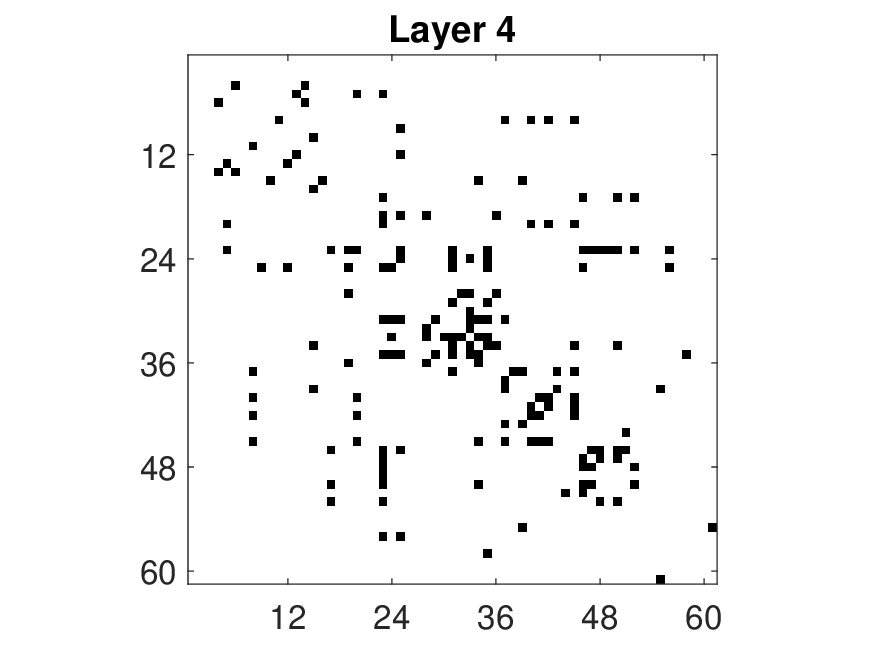}}
\subfigure[]{\includegraphics[width=0.18\textwidth]{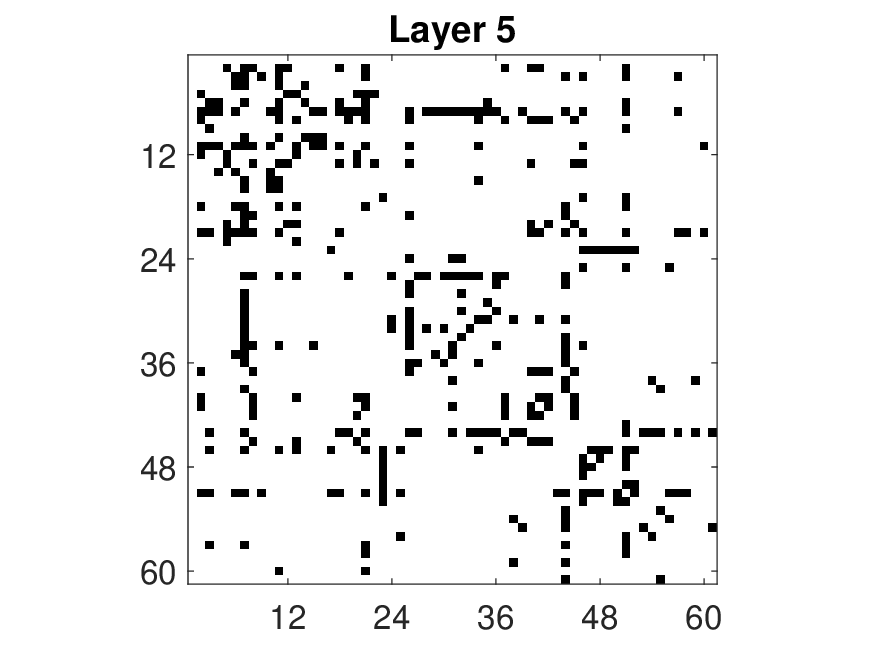}}
\caption{The adjacency matrices of the 5 layers of CS-Aarhus. Matrix element values are displayed in gray scale with black corresponding to 1 and white to 0.}
\label{CSA} 
\end{figure*}

\begin{figure*}
\centering
\subfigure[]{\includegraphics[width=0.245\textwidth]{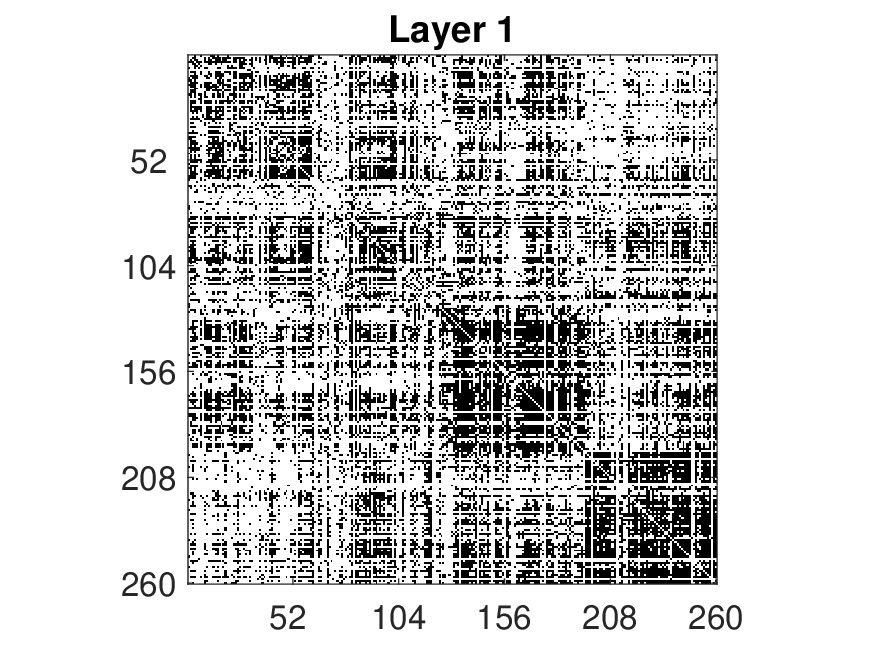}}
\subfigure[]{\includegraphics[width=0.245\textwidth]{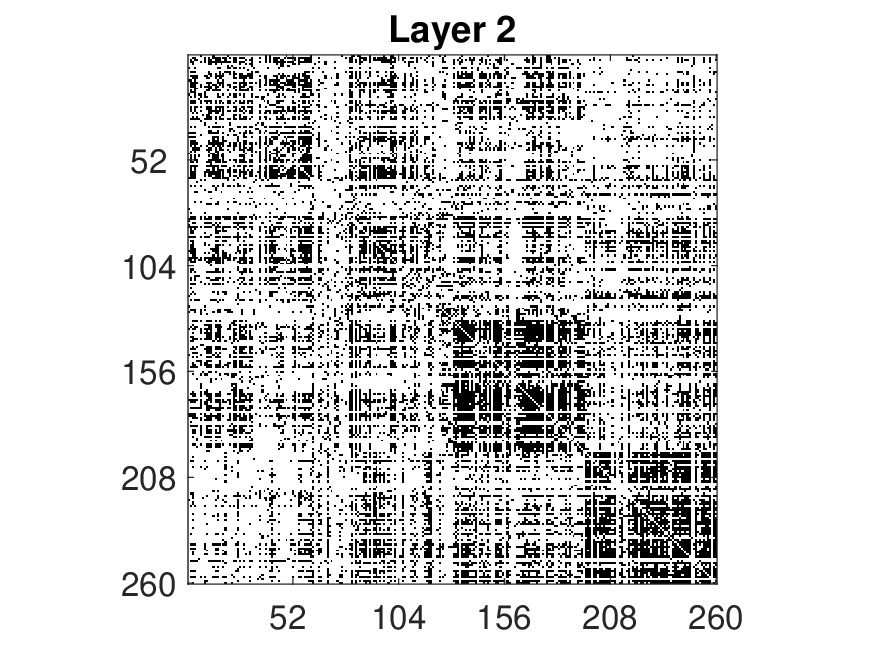}}
\subfigure[]{\includegraphics[width=0.245\textwidth]{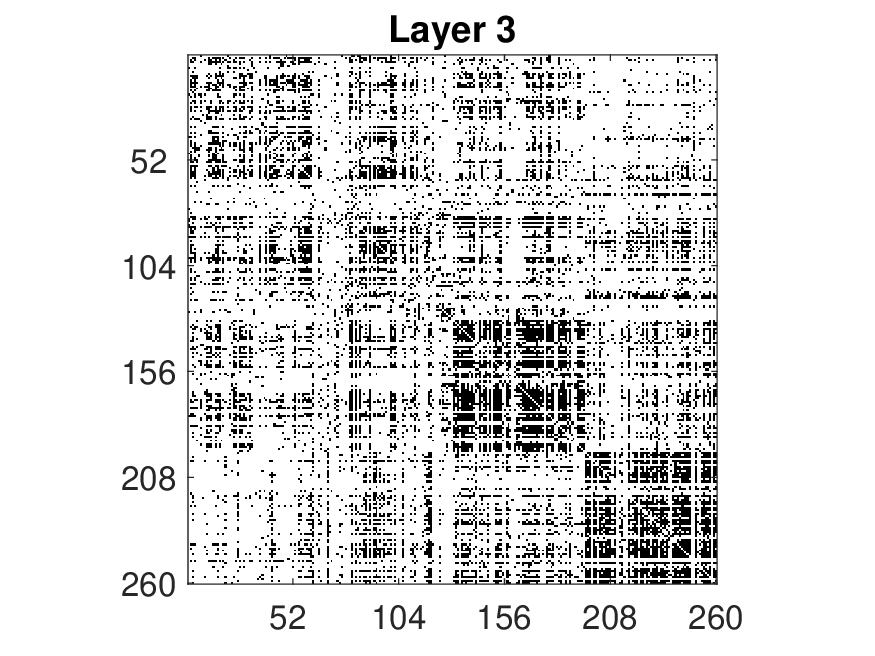}}
\subfigure[]{\includegraphics[width=0.245\textwidth]{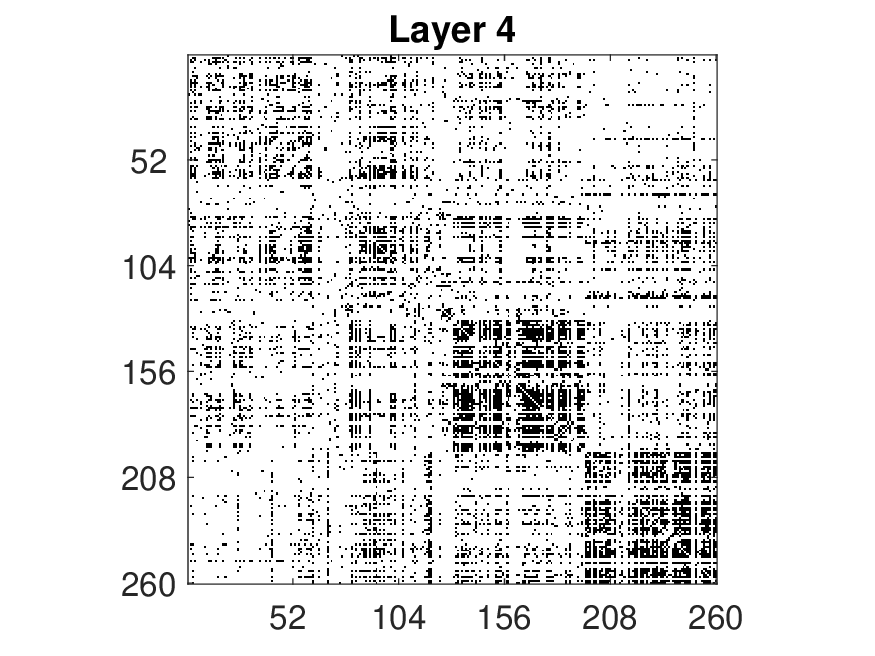}}
\subfigure[]{\includegraphics[width=0.245\textwidth]{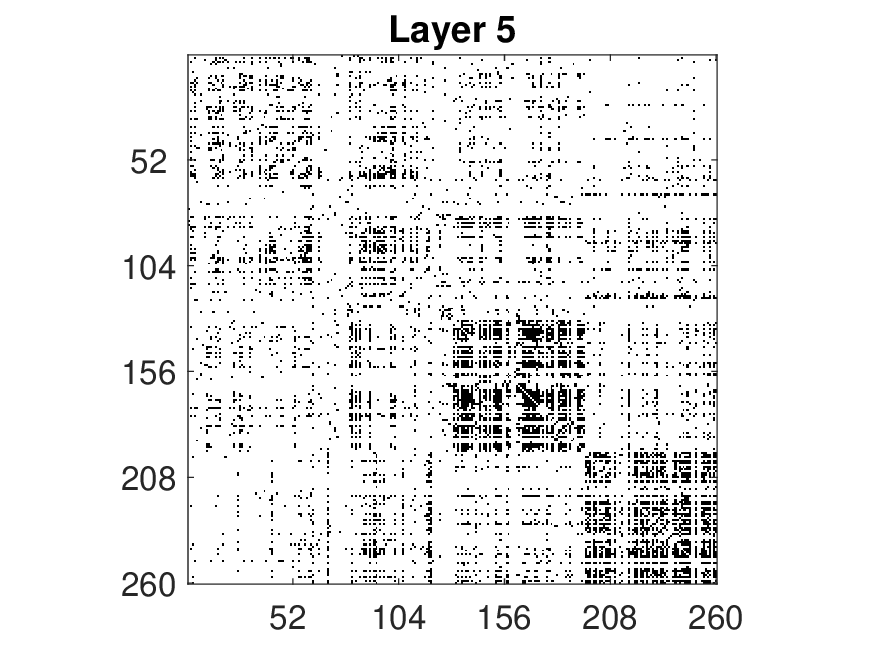}}
\subfigure[]{\includegraphics[width=0.245\textwidth]{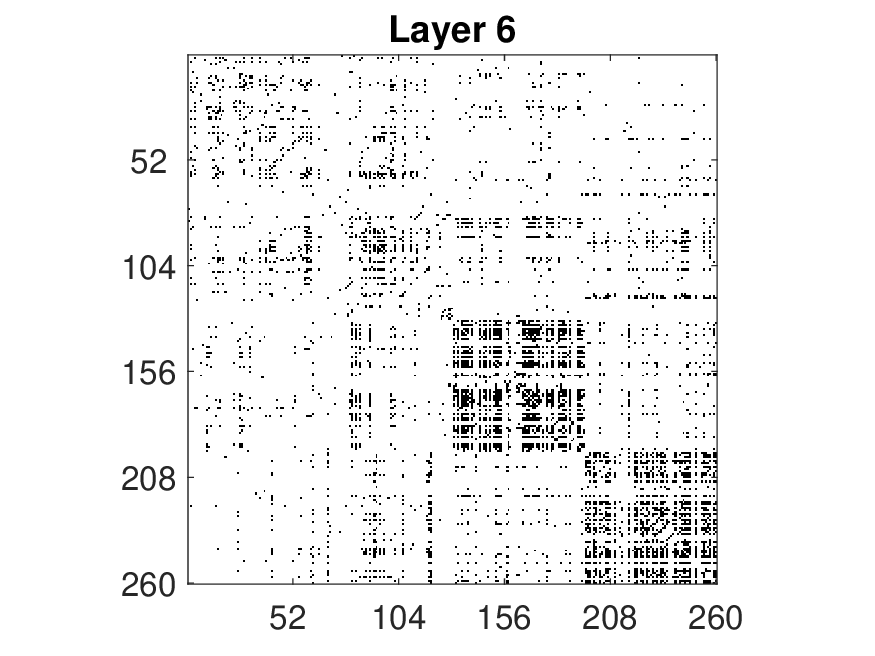}}
\subfigure[]{\includegraphics[width=0.245\textwidth]{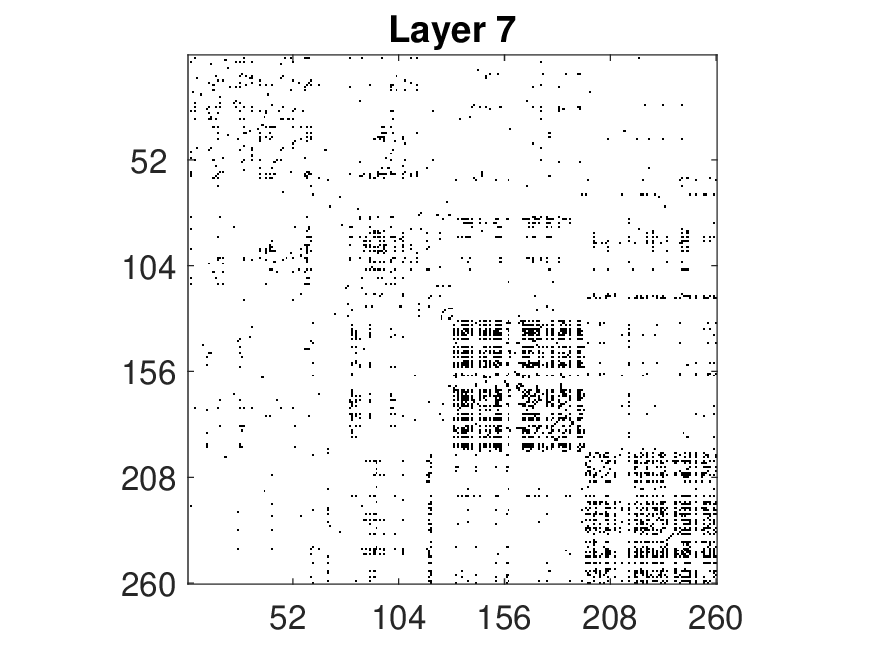}}
\subfigure[]{\includegraphics[width=0.245\textwidth]{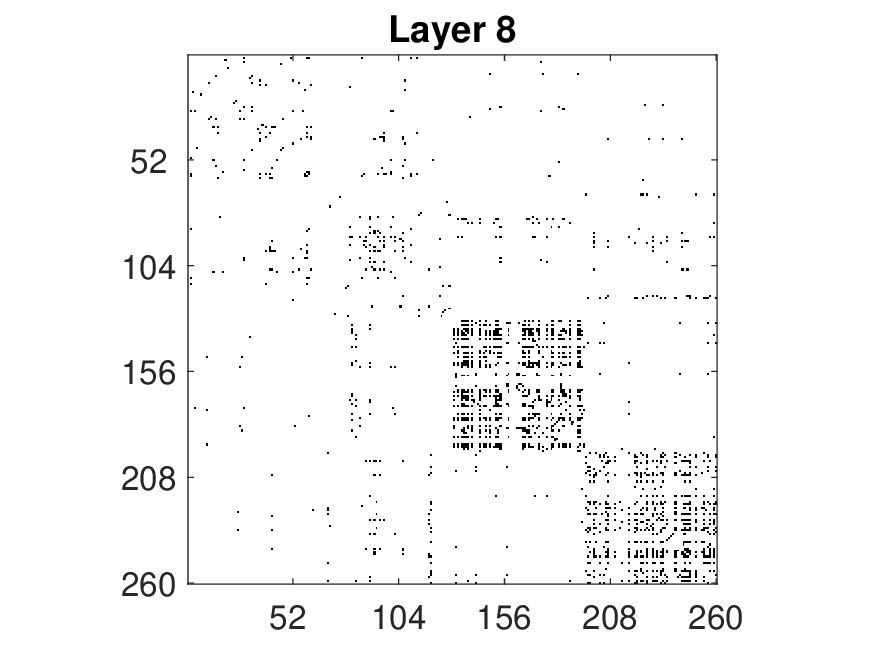}}
\caption{The adjacency matrices of the 8 layers of CSMMLN. Matrix element values are displayed in gray scale with black corresponding to 1 and white to 0.}
\label{CSMMLNA} 
\end{figure*}

Table \ref{realdataQ} reports the estimated number of communities and the respective averaged modularity $Q_{MNavrg}$ of each method for every real data used in this paper. Note that for networks with $n<500$ and $L<10$, SNSoA and SNDSoSA equal NSoA and NDSoSA, respectively. By analyzing Table \ref{realdataQ}, we have the following conclusions:
\begin{itemize}
  \item The averaged modularity of our NSoA and NDSoSA is no smaller than their competitors for all datasets except for C.Elegans.
  \item For Lazega Law Firm, NSoA, NDSoSA, and NSoSA estimate its $K$ as $3$ since their averaged modularity is larger than that of the other four methods.
  \item For C.Elegans, all methods except KMAM and AWP determine its $K$ as 2. Since the averaged modularity of KMAM and AWP is smaller than that of the other methods, as a result, we choose $K=2$ for this data.
  \item For CS-Aarhus, since NDSoSA's averaged modularity is slightly smaller than that of NSoA and NDSoSA always outperforms NSoA in simulations, we choose $K$ as 4 for this data.
  \item For CSMMLN, all approaches except AWP determines its $K$ as 3.
  \item For EU-Air, all methods suggest that these 450 airports in Europe belong to a giant community.
  \item For FAO-trade, all methods except KMAM and AWP determine its $K$ as 2.
  \item For MUS-GPI, NSoA and NDSoSA exhibit a higher average modularity than their competitors. Given that NSoA yields the highest average modularity, we tend to believe that the number of communities for this dataset is $4$.
\end{itemize}
\begin{table*}[h!]
\scriptsize
	\centering
	\caption{Running time (averaged of 10 repetitions) for each method in detecting communities of real data in Table \ref{realdata} using the optimal $K$ value identified in the previous analysis.}
	\label{realdataTime}
	\begin{tabular}{cccccccccccccc}
\hline\hline
Dataset&NSoA&NDSoSA&NSoSA&Sum&SoS-Debias&MASE&KMAM&CAMSBM&AWP&SNSoA&SNDSoSA\\
\hline
Lazega Law Firm&0.0629s&0.0585s&0.0534s&0.0570s&0.0568s&0.0633s&0.0746s&0.0642s&0.0069s&0.0629s&0.0585s\\
C.Elegans&0.1040s&0.1067s&0.1041s&0.1630s&0.1756s&0.1580s&0.2839s&0.1610s&0.0741s&0.1040s&0.1067s\\
CS-Aarhus&0.0594s&0.0565s&0.0546s&0.0593s&0.0616s&0.0632s&0.0725s&0.0690s&0.0090s&0.0594s&0.0565s\\
CSMMLN&0.1360s&0.1313s&0.1225s&0.0981s&0.1060s&0.1195s&0.2607s&0.1052s&0.0777s&0.1360s&0.1313s\\
EU-Air&0.0485s&0.1065s&0.0932s&0.0406s&0.1053s&0.0965s&0.1744s&0.2002s&0.8727s&0.0292s&0.0524s\\
FAO-trade&0.0896s&0.1939s&0.1790s&0.1011s&0.2129s&0.6204s&0.7738s&0.6796s&2.9203s&0.0677s&0.0808s\\
MUS-GPI&5.6786s&68.5446s&66.7791s&4.4409s&67.9146s&14.1555s&163.4885s&597.3381s&1508.9847s&1.9927s&5.2131s\\
\hline\hline
\end{tabular}
\end{table*}

Table \ref{realdataTime} records the running times of each method for the real-world multi-layer networks used in this paper. The following observations can be made:
\begin{itemize}
  \item For networks with $n>500$ or $L>10$, SNSoA outperforms SNDSoSA, and both methods surpass all other methods in terms of speed, suggesting their suitability for handling larger-scale multi-layer networks.
  \item The proposed methods, NSoA and NDSoSA, generally perform competitively or superiorly compared to NSoSA, Sum, SoS-DEbias, MASE, KMAM, CAMSBM, and AWP in terms of running time across all datasets, highlighting their effectiveness.
  \item For small-scale multi-layer networks (Lazega Law Firm, C.Elegans, CS-Aarhus, and CSMMLN), all methods exhibit rapid running times with insignificant differences, indicating their manageable computational complexity for small-scale datasets.
  \item For larger-scale multi-layer networks (EU-Air, FAO-trade, and MUS-GPI), running times diverge significantly. NSoA consistently outperforms NDSoSA, and both proposed methods are significantly faster than methods such as KMAM, CAMSBM, and AWP.
  \item The results show a clear trend of increasing running times with the size and complexity of the networks. This is expected, as larger networks typically require more computation to detect communities accurately.
\end{itemize}

Fig.~\ref{lazegaAC} plots the adjacency matrices of the 3 layers for Lazega Law Firm, sorted according to the common community assignment estimated by NDSoSA with 3 communities. The red grid lines mean the community partitions for the 3 estimated communities. We also plot the adjacency matrices sorted by community assignment from NDSoSA for C.Elegans, CS-Aarhus, and CSMMLN in Fig.~\ref{CelegansAC}, Fig.~\ref{CSAC}, and Fig.~\ref{CSMMLNAC}, respectively. From these three figures, it is easy to observe that nodes within the same communities (diagonal part in each adjacency matrix ) connect more than across communities (off-diagonal part in each adjacency matrix) for all layers. For visualization, Fig.~\ref{lazegaN}, Fig.~\ref{CelegansN}, Fig.~\ref{CSN}, and Fig.~\ref{CSMMLNN} show the clustering results of NDSoSA for the three networks Lazega Law Firm, C.Elegans, CS-Aarhus, and CSMMLN respectively, where nodes are colored by their community assignment.

Next, we conduct a more in-depth analysis of the CSMMLN network. Table \ref{CSMMLNIndustryDistribution} presents information on its communities, revealing two primary characteristics: (1) The majority of stocks in Communities 1, 2, and 3 originate from the Manufacturing, Finance, and Real Estate industries, respectively. (2) Stocks within the same industry can belong to different clusters. Tables \ref{CSMMLNCommunity1}, \ref{CSMMLNCommunity2}, and \ref{CSMMLNCommunity3} list the top 10 stocks in each of these communities, ranked by their total degree in the CSMMLN network. Our observations indicate that: (a) While most stocks in Community 1 are from the Manufacturing Industry, the top 10 stocks in this community predominantly come from the Construction and Finance industries. This suggests that stocks in these two industries have a significant impact on Manufacturing stocks within Community 1. (b) Nine out of the top 10 stocks in Community 2 belong to the Finance Industry, indicating its strong influence on various stocks in this community. (c) All top 10 stocks in Community 3 are from the Real Estate Industry, highlighting its dominant influence in this community. Fig.~\ref{stockcommunity123} illustrates the average monthly log returns of the top 10 stocks in the three communities. This figure reveals that stock returns within each community exhibit similar trends over time, indicating a high correlation in their performance. This characteristic is advantageous for portfolio management, as it allows investors to diversify their holdings within a community while maintaining consistent exposure to market movements. For example, an investor can create a portfolio with multiple stocks from Community 1, confident that these stocks will likely exhibit similar performance. Furthermore, investors can utilize the community structure to hedge their portfolios by holding stocks from different communities, thereby offsetting potential losses in one community with gains in another and reducing overall portfolio risk. By identifying stocks with highly correlated returns, investors can construct more efficient and diversified portfolios. The NDSoSA method, which leverages the multi-layer structure of financial networks, offers a powerful tool for uncovering these hidden community structures and aiding investors in making informed decisions.

\begin{figure*}
\centering
\subfigure[]{\includegraphics[width=0.33\textwidth]{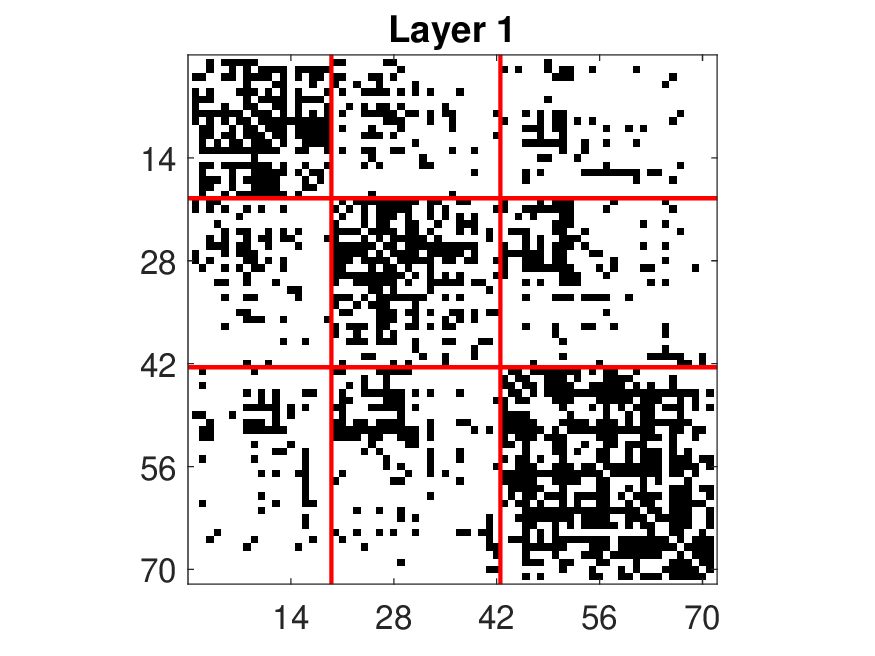}}
\subfigure[]{\includegraphics[width=0.33\textwidth]{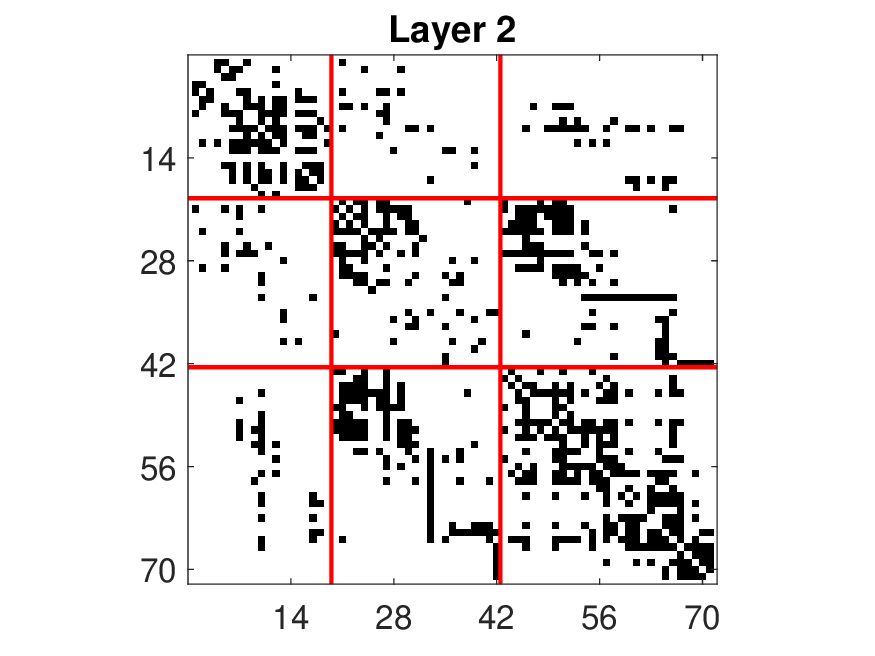}}
\subfigure[]{\includegraphics[width=0.33\textwidth]{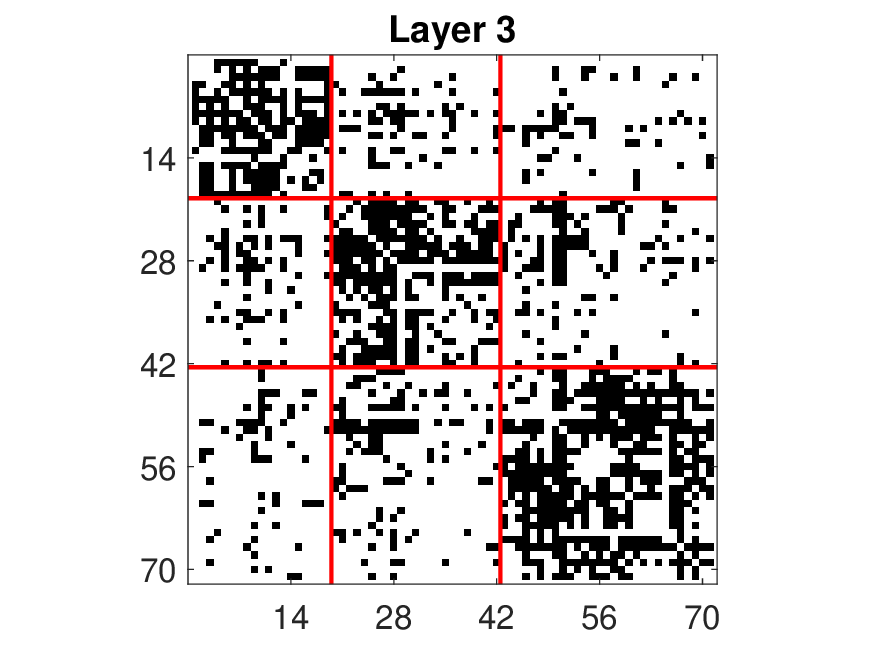}}
\caption{The adjacency matrices of the 3 layers with the rows and columns being sorted according to the common community labels obtained from NDSoSA with 3 communities for Lazega Law Firm. The colored red lines indicate community partitions. Matrix element values are displayed in gray scale with black corresponding to 1 and white to 0.}
\label{lazegaAC} 
\end{figure*}
\begin{figure*}
\centering
\subfigure[]{\includegraphics[width=0.33\textwidth]{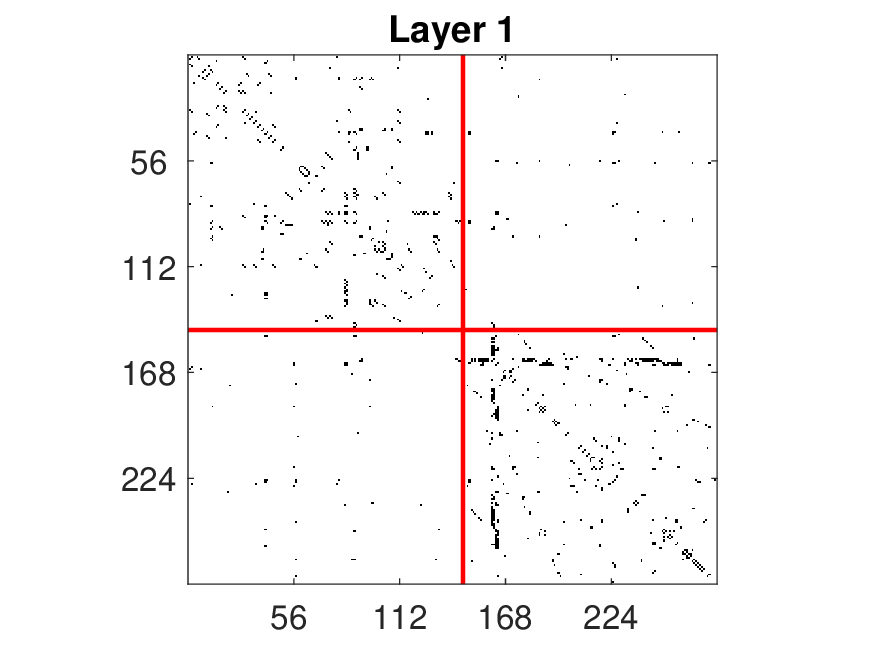}}
\subfigure[]{\includegraphics[width=0.33\textwidth]{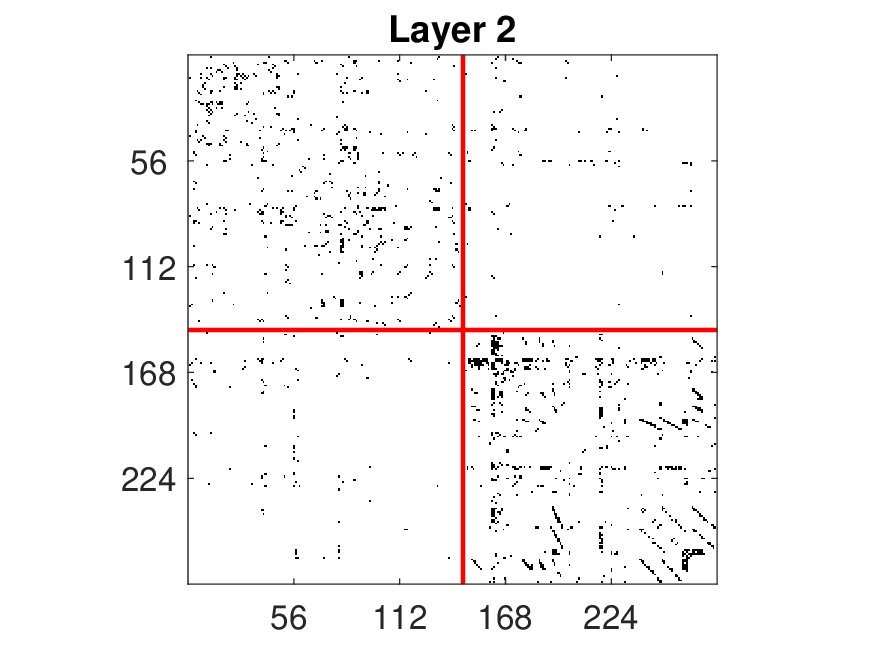}}
\subfigure[]{\includegraphics[width=0.33\textwidth]{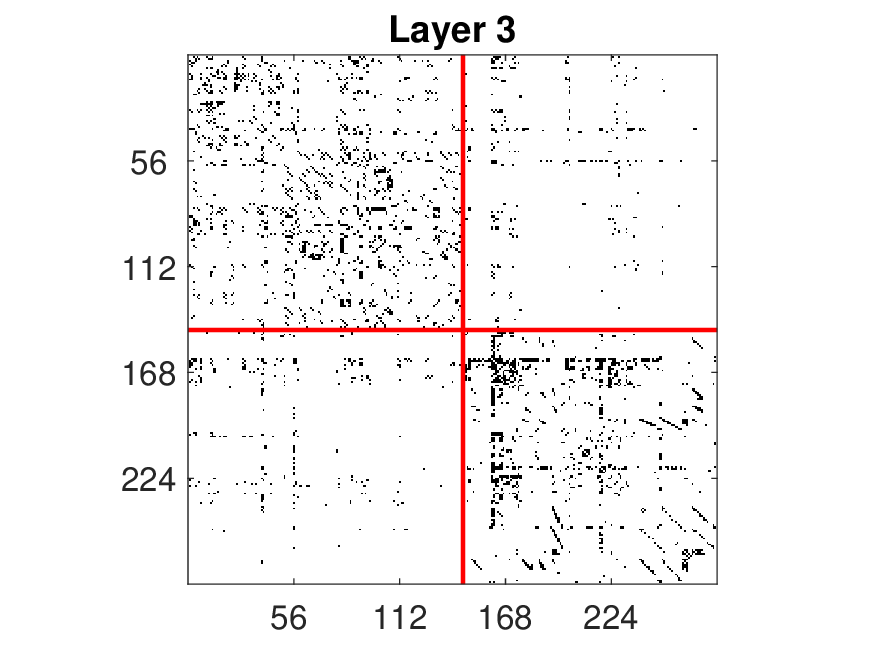}}
\caption{The adjacency matrices of the 3 layers sorted according to the common community labels obtained from NDSoSA with 2 communities for C.Elegans. The colored red lines indicate community partitions. Matrix element values are displayed in gray scale with black corresponding to 1 and white to 0.}
\label{CelegansAC} 
\end{figure*}
\begin{figure*}
\centering
\subfigure[]{\includegraphics[width=0.18\textwidth]{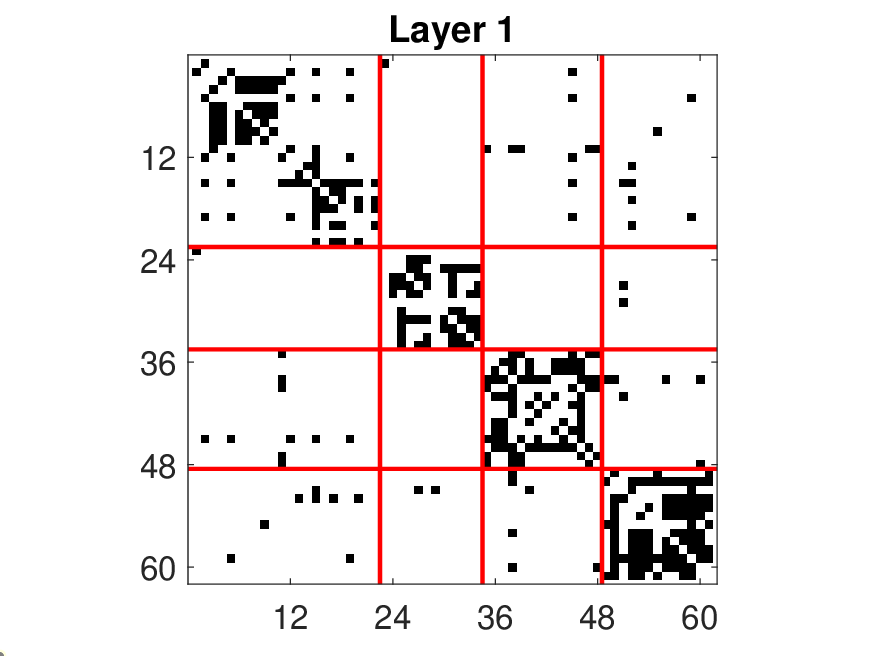}}
\subfigure[]{\includegraphics[width=0.18\textwidth]{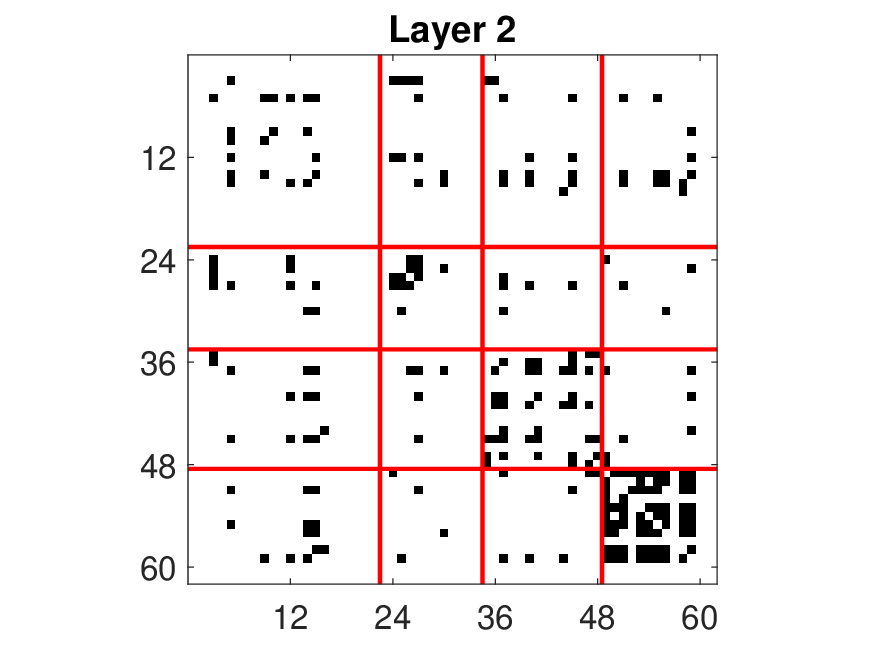}}
\subfigure[]{\includegraphics[width=0.18\textwidth]{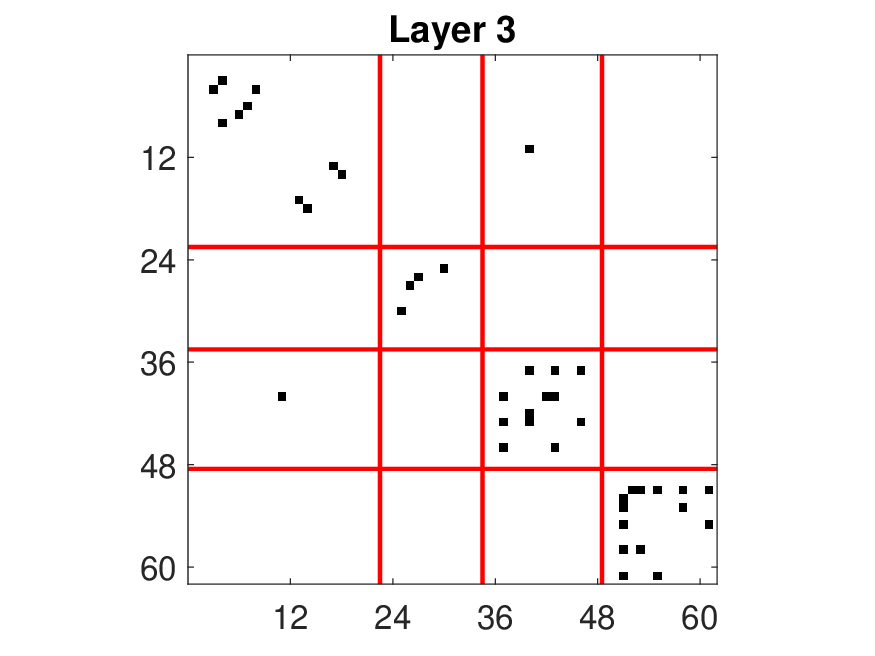}}
\subfigure[]{\includegraphics[width=0.18\textwidth]{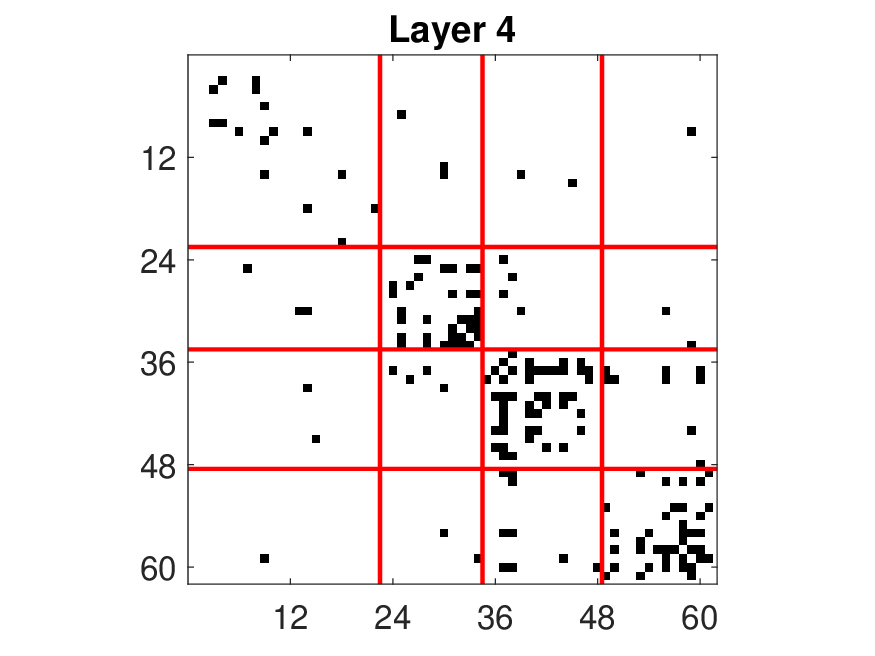}}
\subfigure[]{\includegraphics[width=0.18\textwidth]{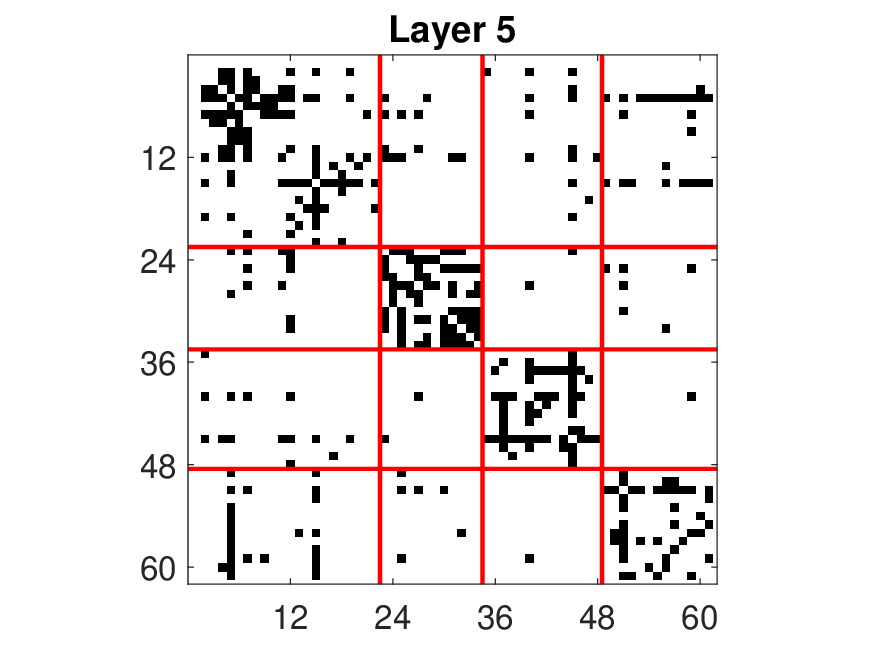}}
\caption{The adjacency matrices of the 3 layers sorted according to the common community labels obtained from NDSoSA with 4 communities for CS-Aarhus. The colored red lines indicate community partitions. Matrix element values are displayed in gray scale with black corresponding to 1 and white to 0.}
\label{CSAC} 
\end{figure*}

\begin{figure*}
\centering
\subfigure[]{\includegraphics[width=0.245\textwidth]{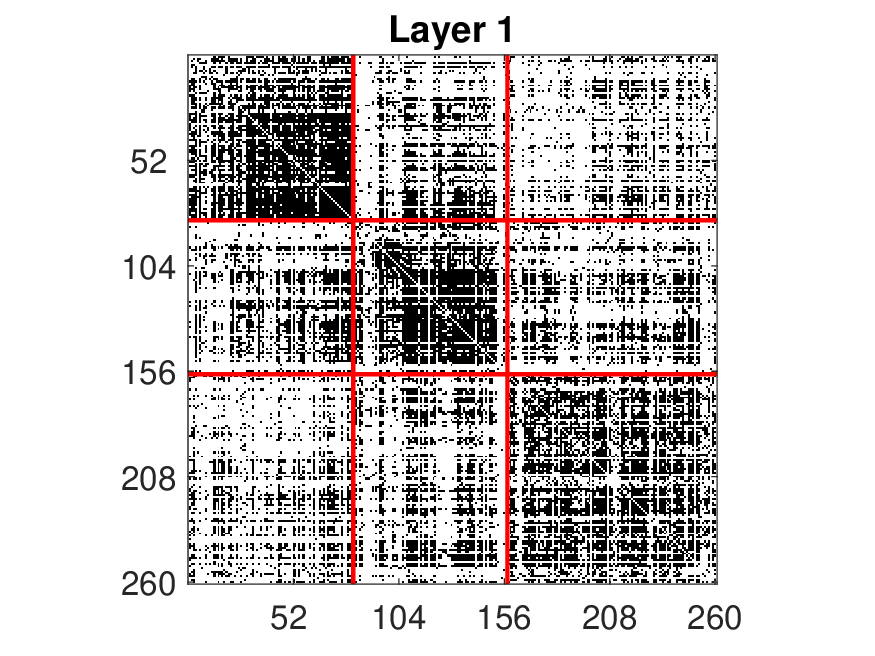}}
\subfigure[]{\includegraphics[width=0.245\textwidth]{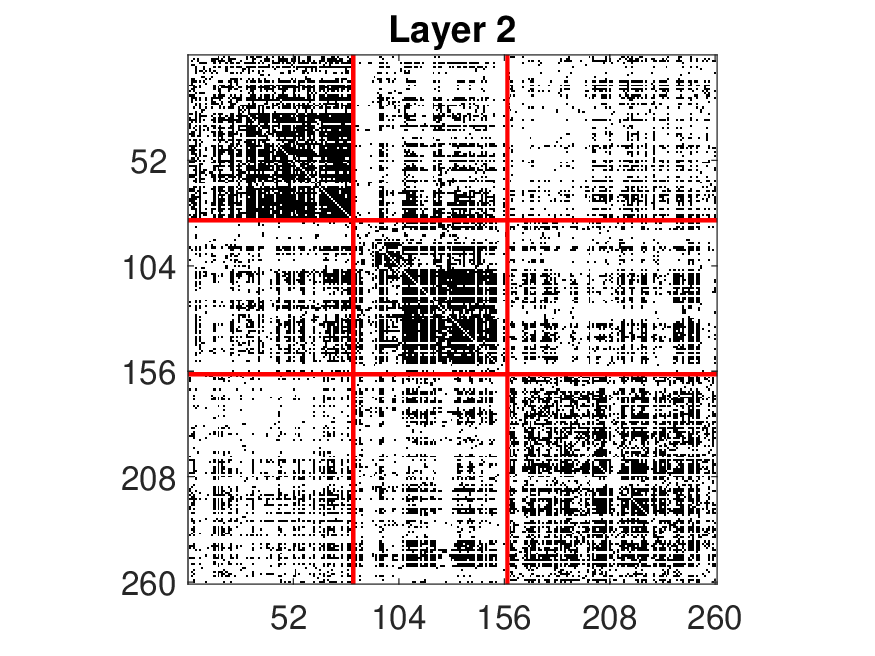}}
\subfigure[]{\includegraphics[width=0.245\textwidth]{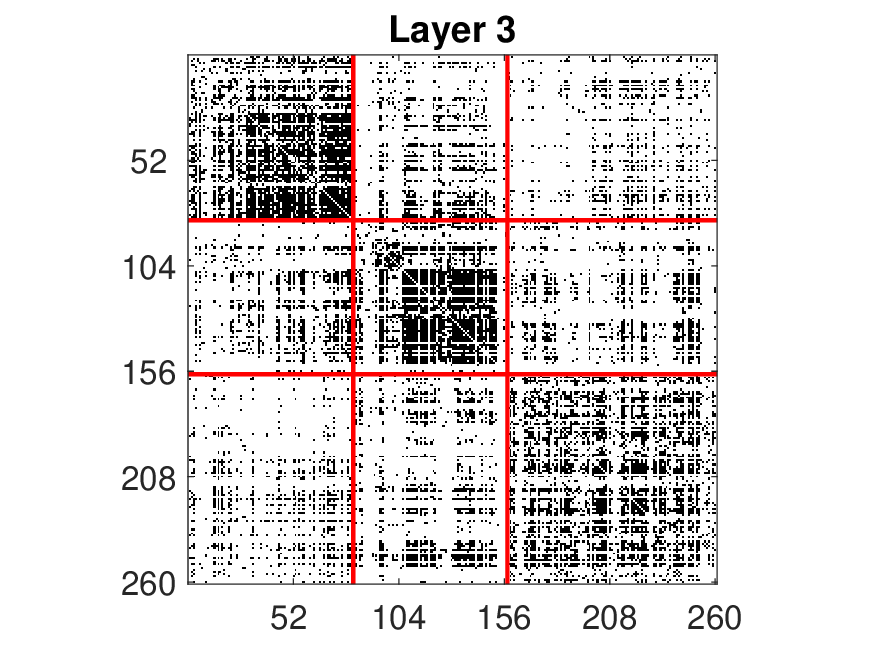}}
\subfigure[]{\includegraphics[width=0.245\textwidth]{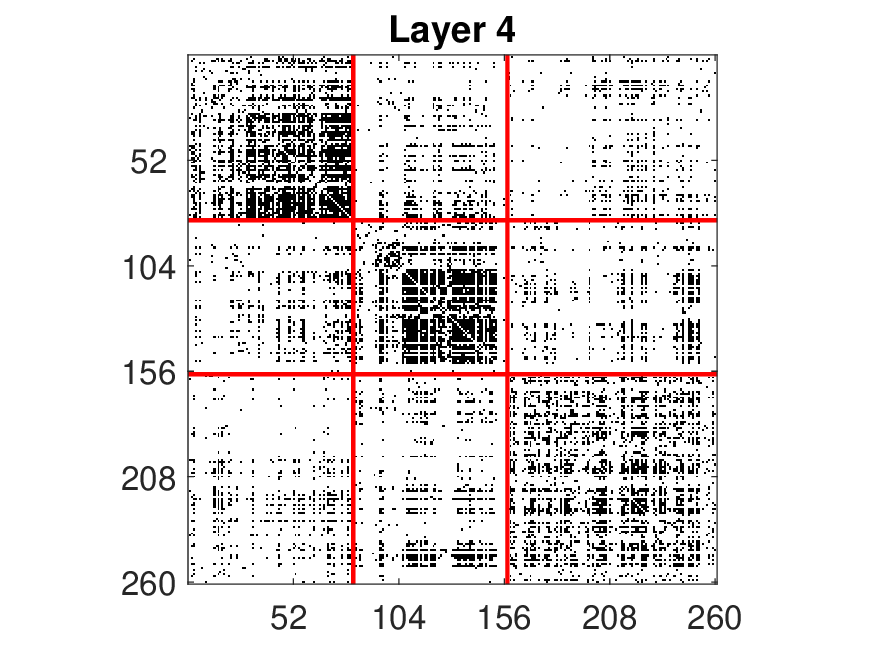}}
\subfigure[]{\includegraphics[width=0.245\textwidth]{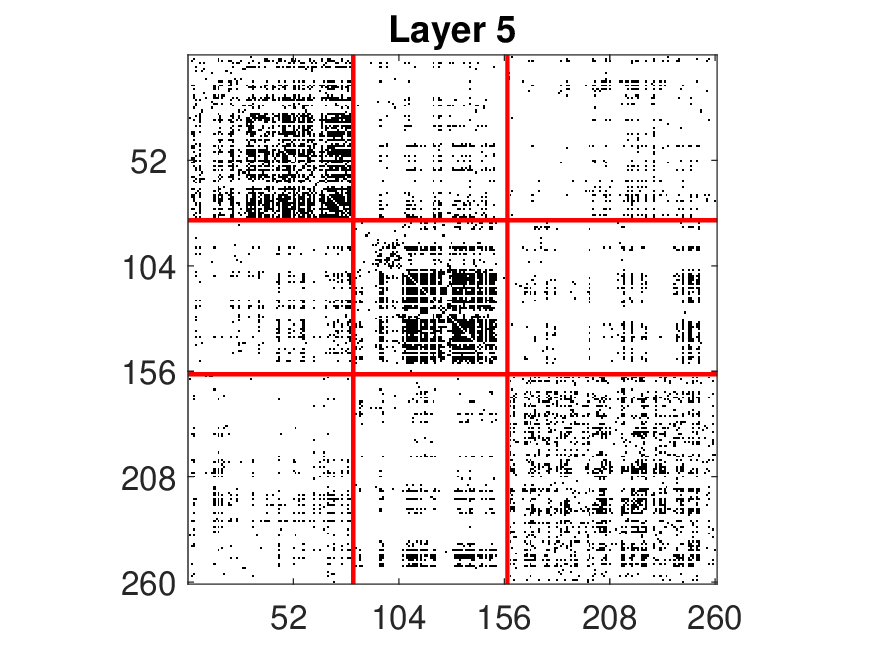}}
\subfigure[]{\includegraphics[width=0.245\textwidth]{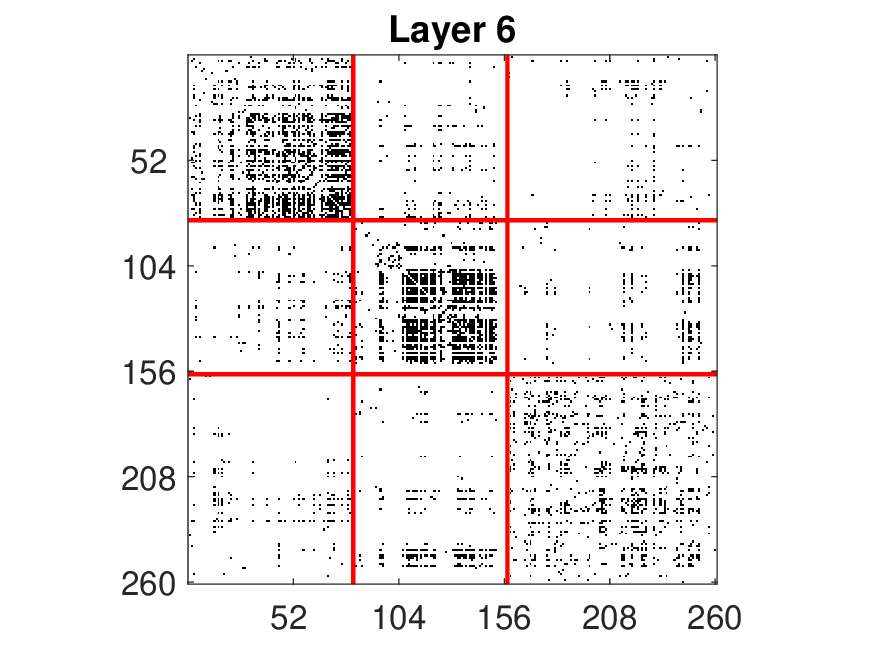}}
\subfigure[]{\includegraphics[width=0.245\textwidth]{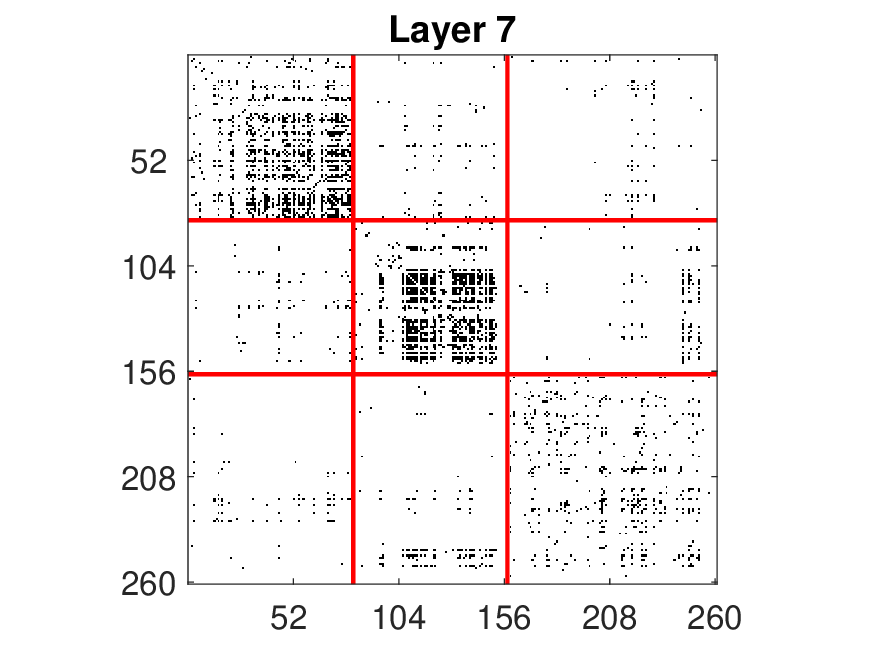}}
\subfigure[]{\includegraphics[width=0.245\textwidth]{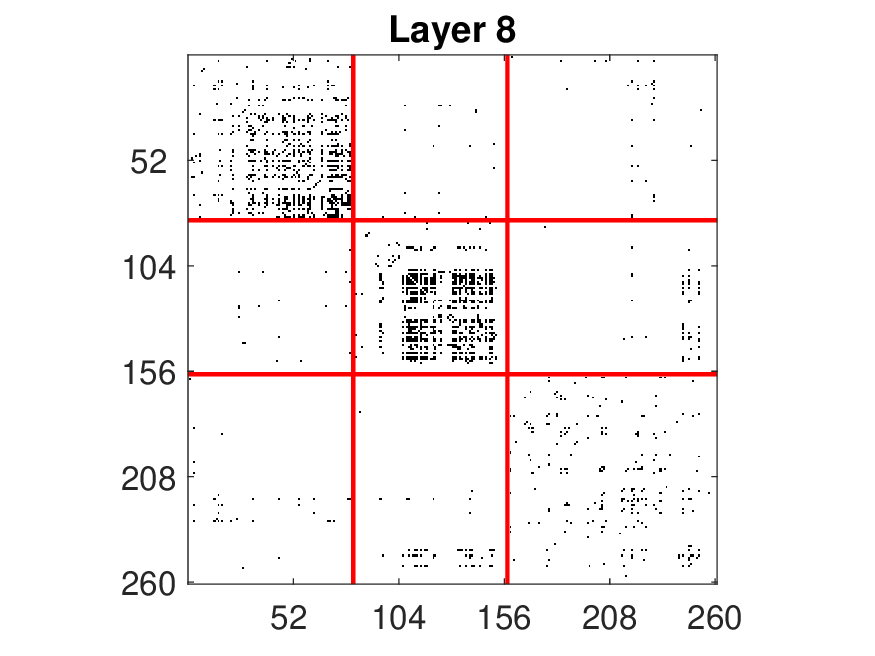}}
\caption{The adjacency matrices of the 8 layers sorted according to the common community labels obtained from NDSoSA with 3 communities for CSMMLN. The colored red lines indicate community partitions. Matrix element values are displayed in gray scale with black corresponding to 1 and white to 0.}
\label{CSMMLNAC} 
\end{figure*}

\begin{figure*}
\centering
\subfigure[]{\includegraphics[width=0.33\textwidth]{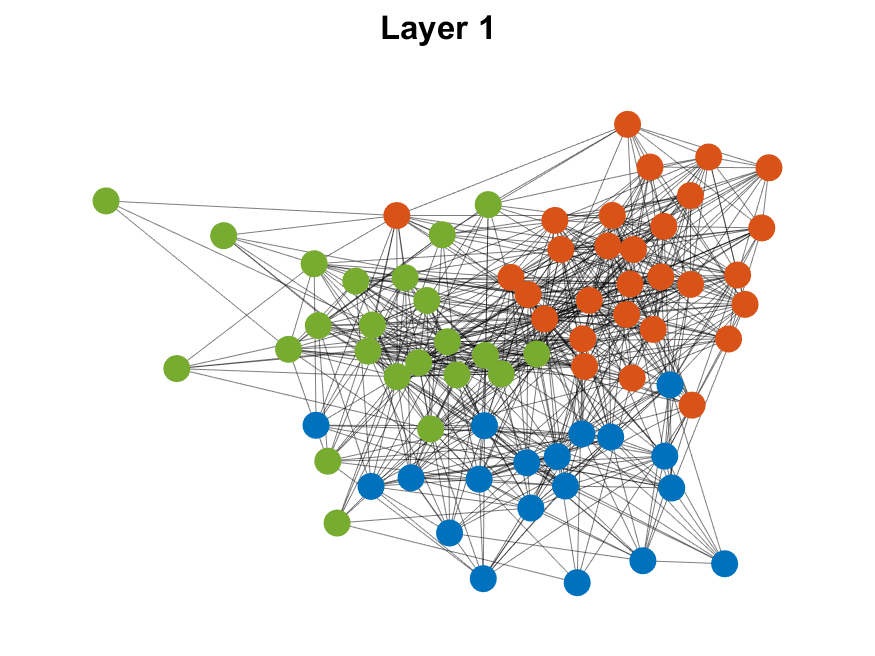}}
\subfigure[]{\includegraphics[width=0.33\textwidth]{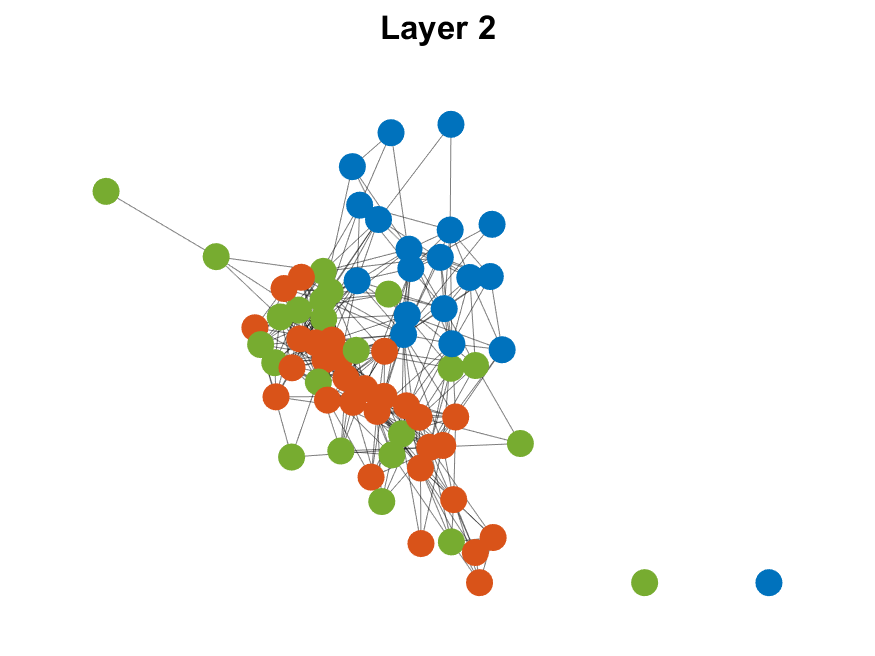}}
\subfigure[]{\includegraphics[width=0.33\textwidth]{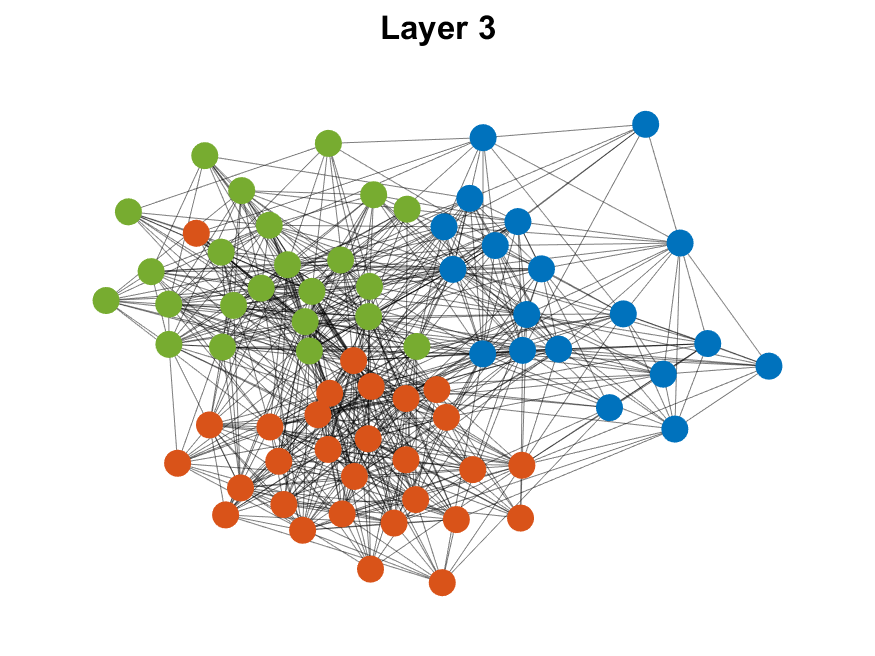}}
\caption{Visualization of the 3 layers of Lazega Law Firm, where nodes are colored by the common community labels obtained from NDSoSA with 3 communities.}
\label{lazegaN} 
\end{figure*}

\begin{figure*}
\centering
\subfigure[]{\includegraphics[width=0.33\textwidth]{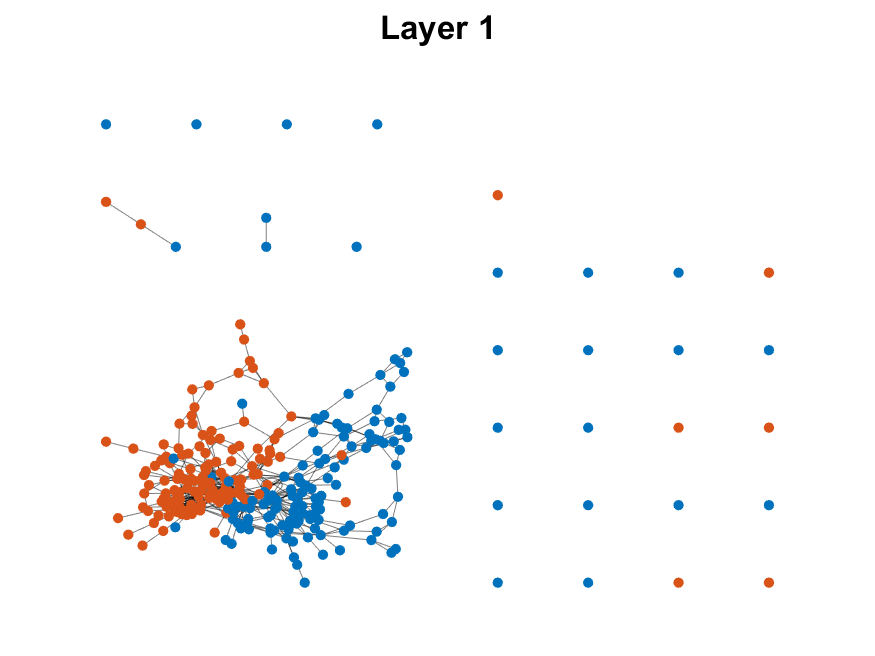}}
\subfigure[]{\includegraphics[width=0.33\textwidth]{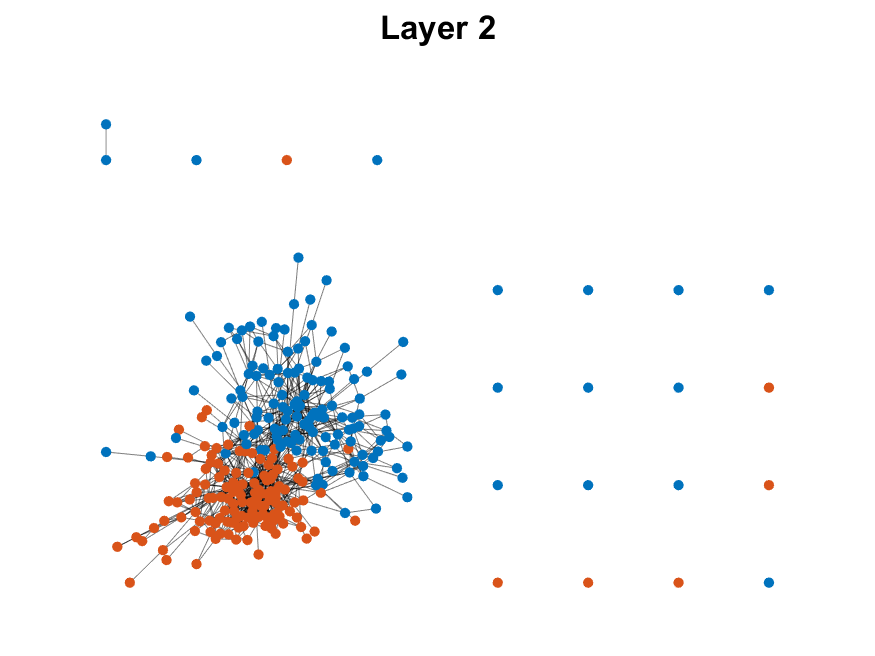}}
\subfigure[]{\includegraphics[width=0.33\textwidth]{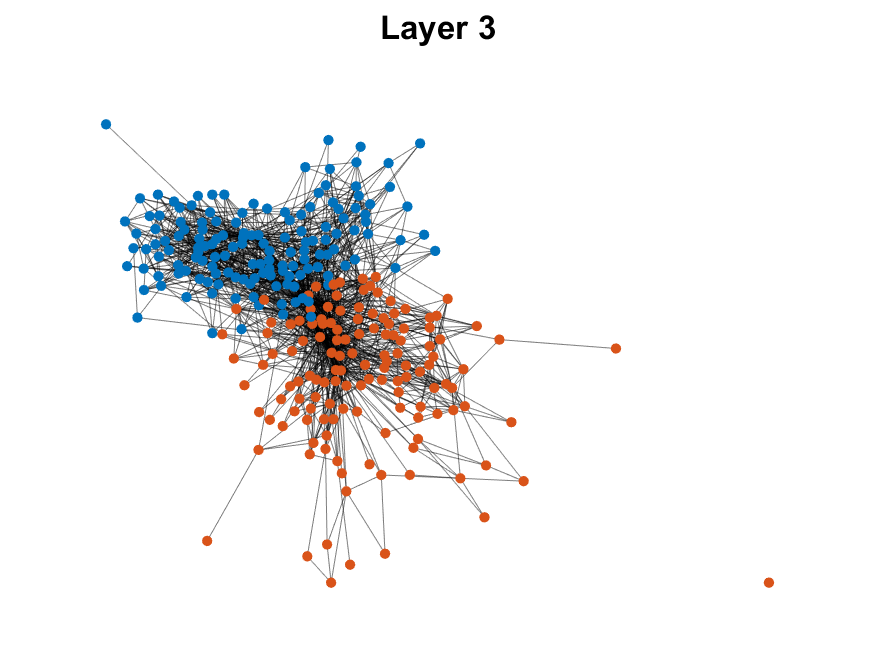}}
\caption{Visualization of the 3 layers of C.Elegans, where nodes are colored by the common community labels obtained from NDSoSA with 2 communities.}
\label{CelegansN} 
\end{figure*}

\begin{figure*}
\centering
\subfigure[]{\includegraphics[width=0.18\textwidth]{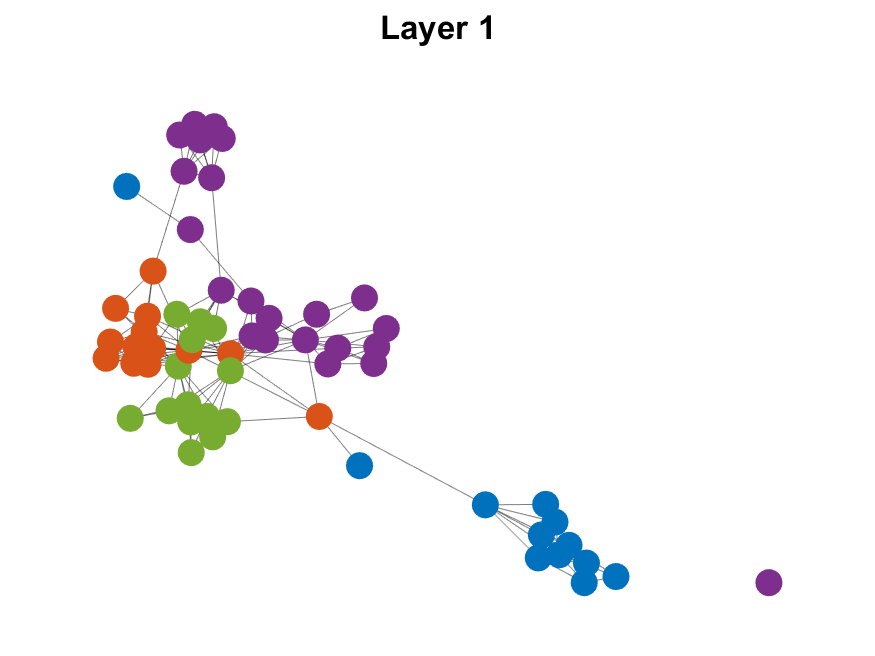}}
\subfigure[]{\includegraphics[width=0.18\textwidth]{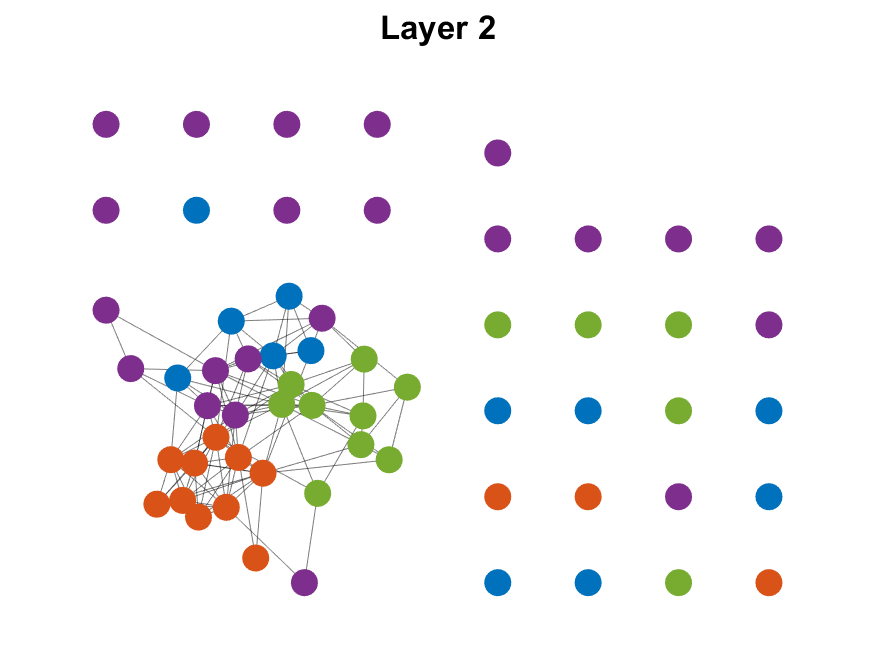}}
\subfigure[]{\includegraphics[width=0.18\textwidth]{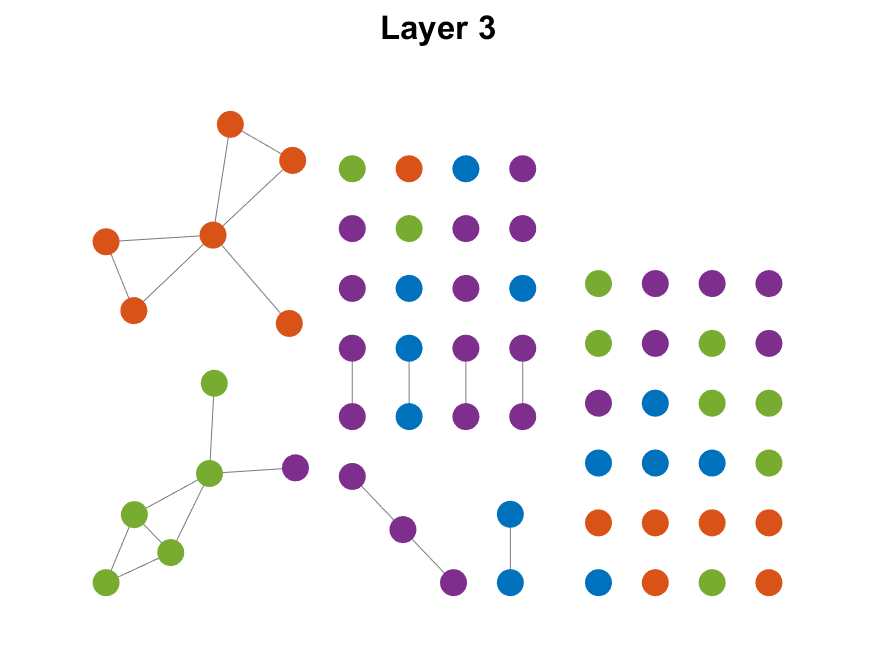}}
\subfigure[]{\includegraphics[width=0.18\textwidth]{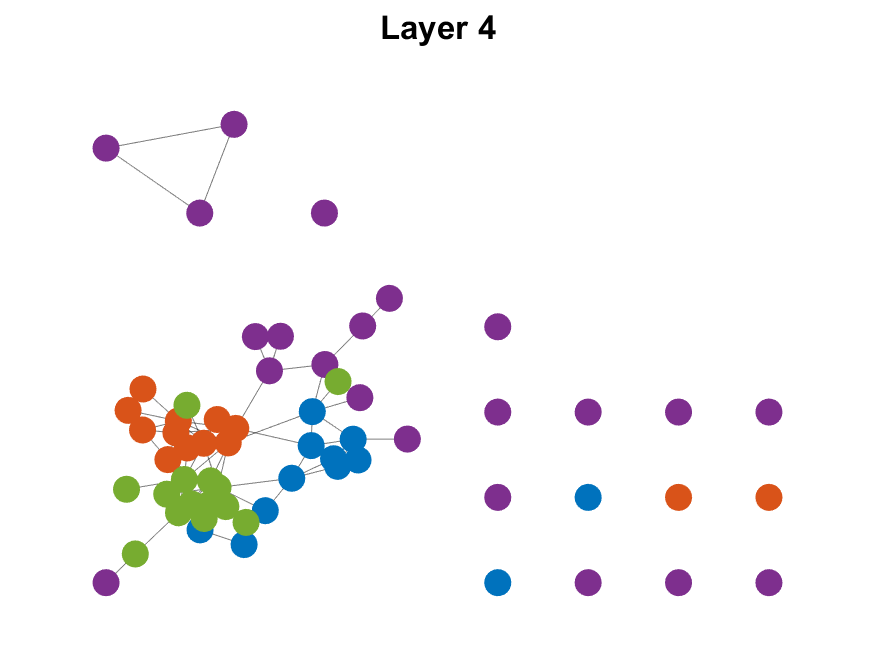}}
\subfigure[]{\includegraphics[width=0.18\textwidth]{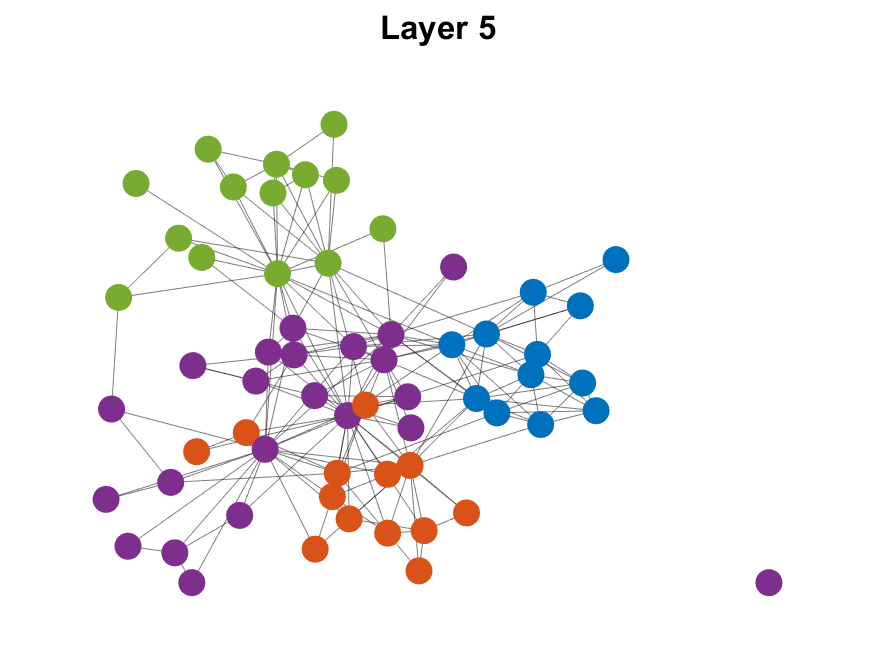}}
\caption{Visualization of the 5 layers of CS-Aarhus, where nodes are colored by the common community labels obtained from NDSoSA with 4 communities.}
\label{CSN} 
\end{figure*}

\begin{figure*}
\centering
\subfigure[]{\includegraphics[width=0.245\textwidth]{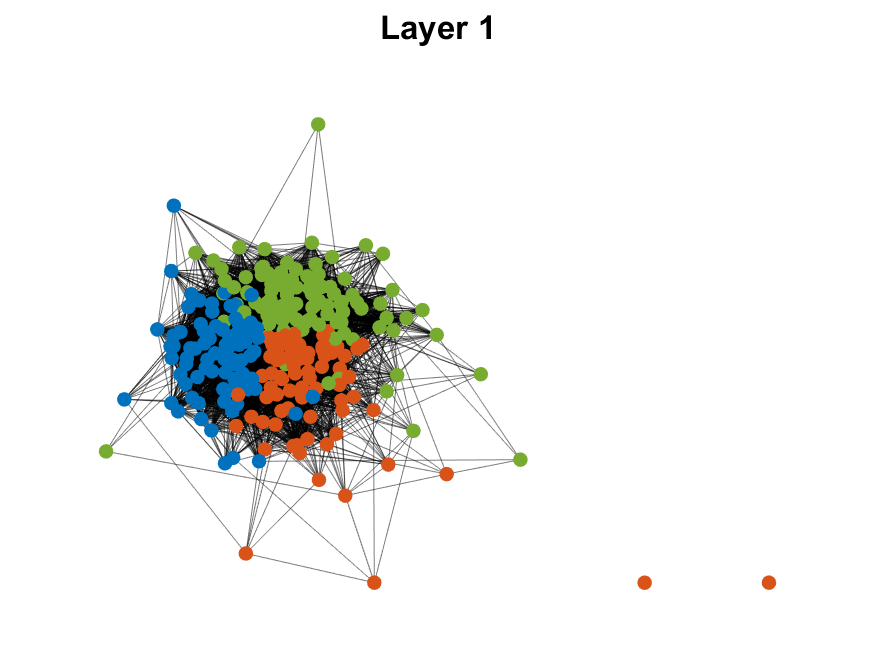}}
\subfigure[]{\includegraphics[width=0.245\textwidth]{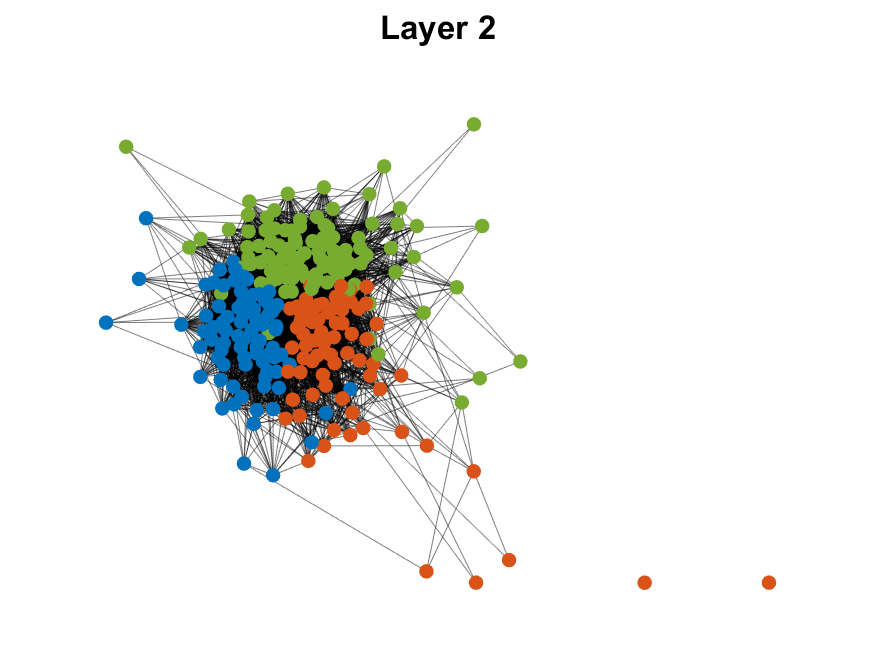}}
\subfigure[]{\includegraphics[width=0.245\textwidth]{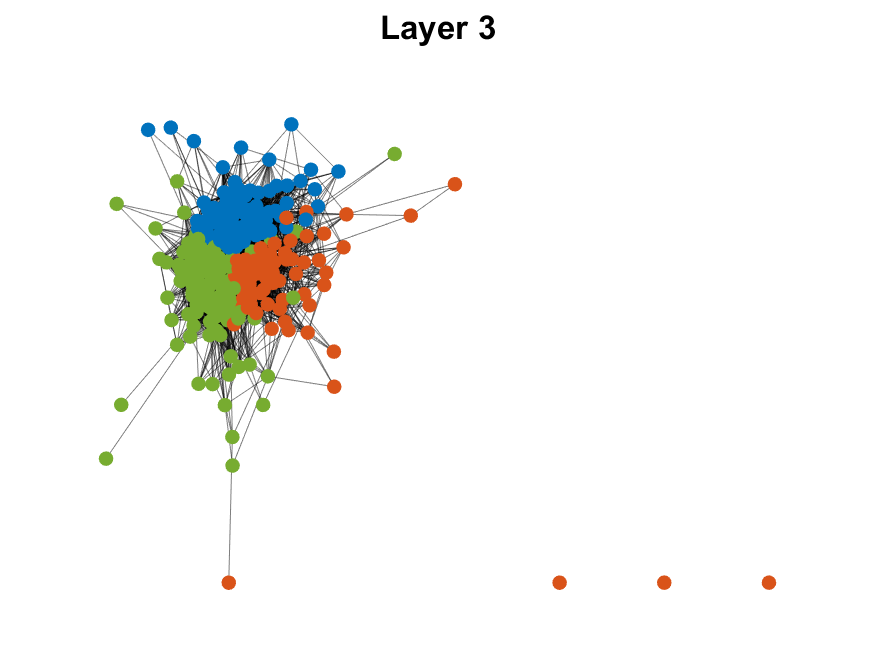}}
\subfigure[]{\includegraphics[width=0.245\textwidth]{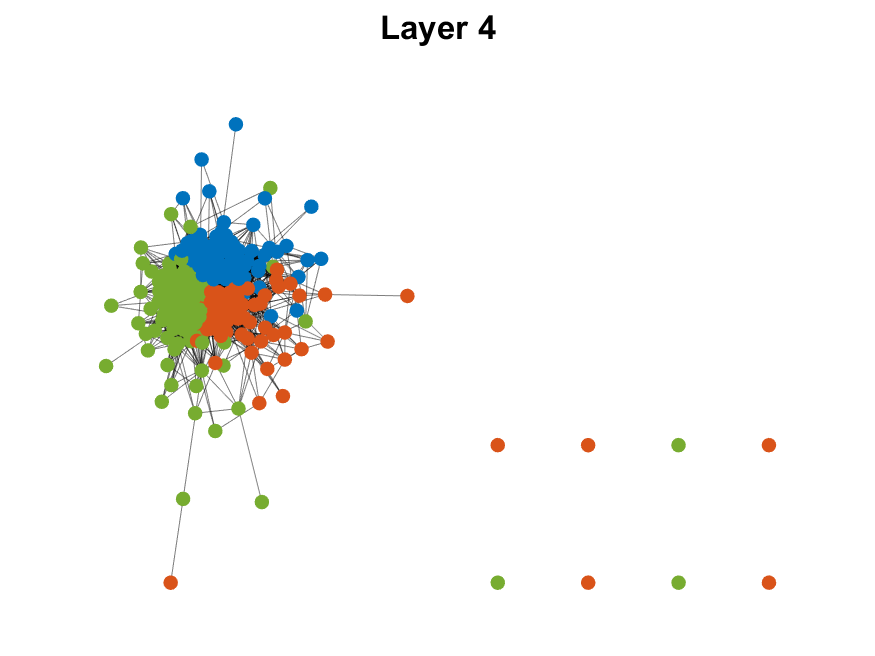}}
\subfigure[]{\includegraphics[width=0.245\textwidth]{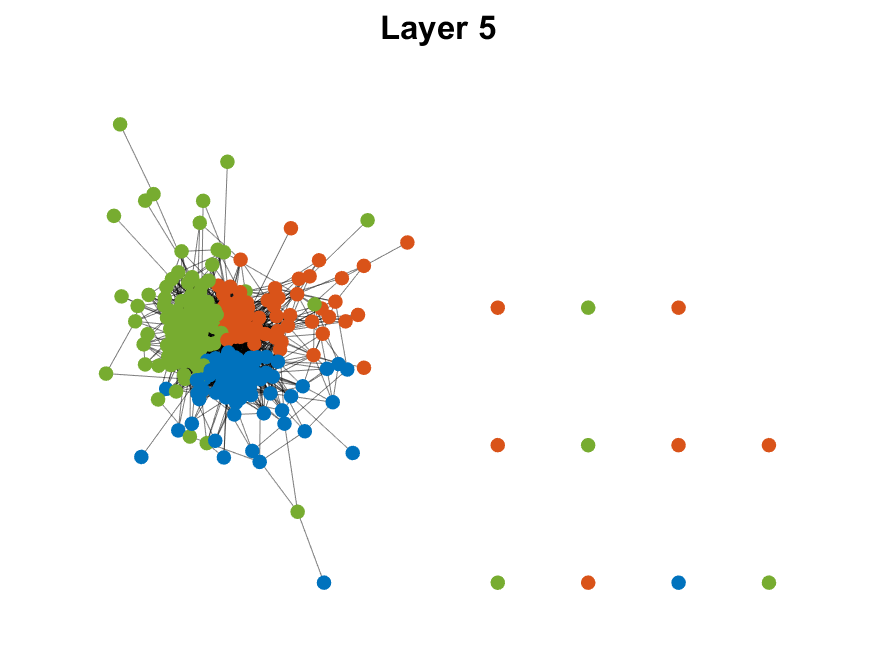}}
\subfigure[]{\includegraphics[width=0.245\textwidth]{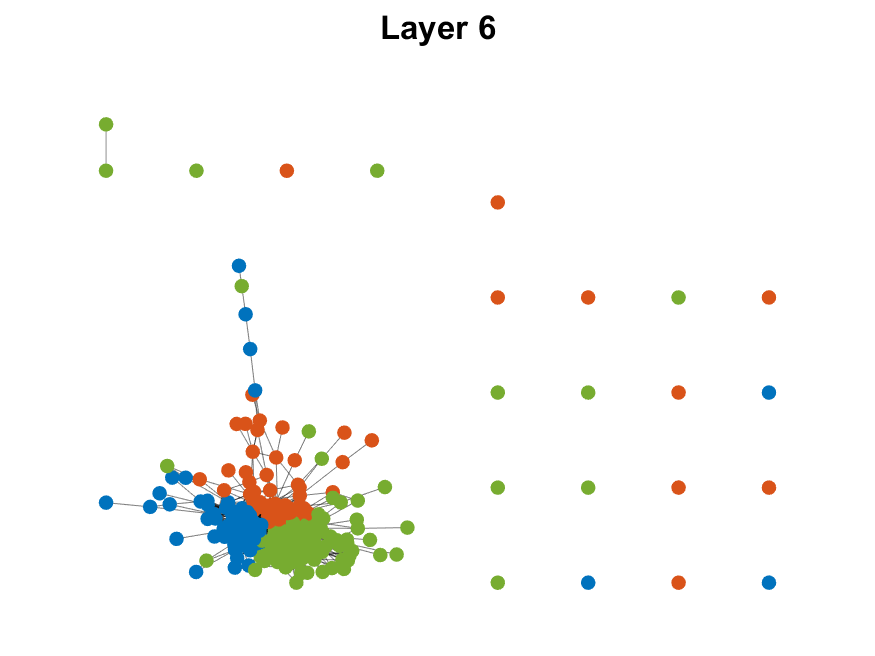}}
\subfigure[]{\includegraphics[width=0.245\textwidth]{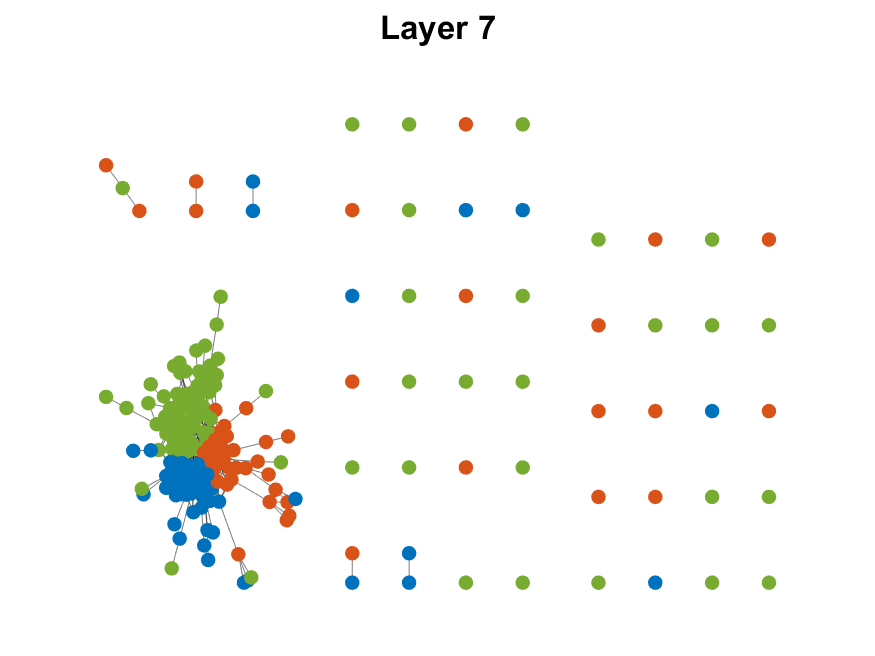}}
\subfigure[]{\includegraphics[width=0.245\textwidth]{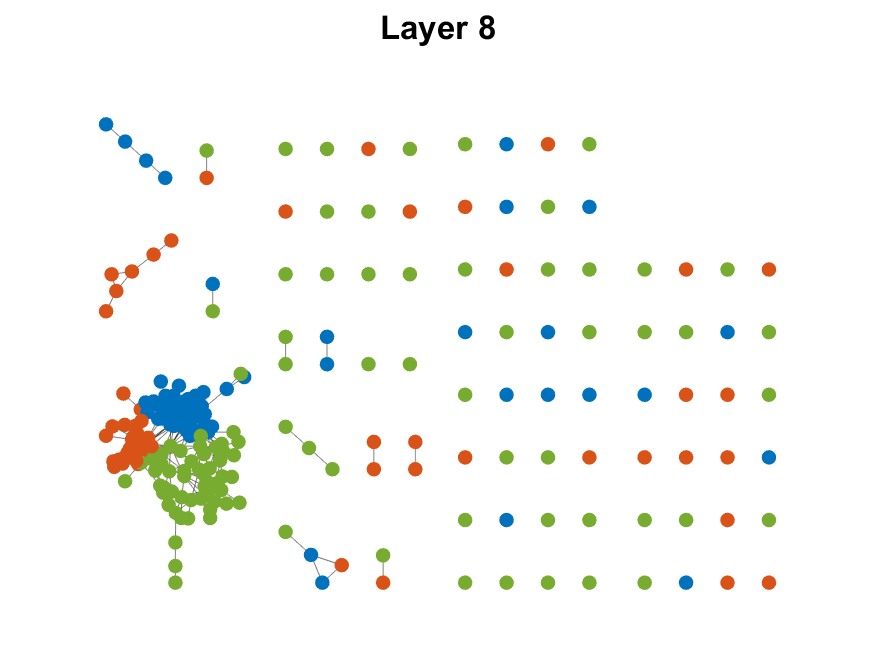}}
\caption{Visualization of the 8 layers of CSMMLN, where nodes are colored by the common community labels obtained from NDSoSA with 3 communities.}
\label{CSMMLNN} 
\end{figure*}

\begin{table*}[h!]
	\centering
	\caption{Communities information in the CSMMLN network}
	\label{CSMMLNIndustryDistribution}
	\begin{tabular}{cccccccccccccc}
\hline\hline
Cluster&Size&Manufacturing Industry&Construction Industry&Finance Industry&Real estate Industry\\
\hline
Community 1&103&51&31&14&7\\
Community 2&76&9&15&47&5\\
Community 3&81&5&19&4&53\\
\hline\hline
\end{tabular}
\end{table*}

\begin{table*}[h!]
	\centering
	\caption{The top 10 stocks of Community 1 ranked by the total degree for the CSMMLN network, where the total degree of node $i$ is defined as $\sum_{l\in[L]}D_{l}(i,i)$ for $i\in[n]$. Note that stock symbols with “sz” represent
Shenzhen Stock Exchange and “sh” represent Shanghai Stock Exchange.}
	\label{CSMMLNCommunity1}
	\begin{tabular}{cccccccccccccc}
\hline\hline
Ranking&Stock&Stock symbol&Total degree&Industry\\
\hline
1&Pubang Landscape Architecture&002663.sz&830&Construction \\
2&Ruida Futures&002961.sz&691&Finance \\
3&Beijing Water Business Doctor&300055.sz&669&Construction \\
4&Zhejiang Orient Financial Holdings Group&600120.sz&669&Finance\\
5&Zhejiang Yasha Decoration&002375.sz&645&Construction \\
6&Shandong Meichen Science and Technology&300237.sz&633&Construction \\
7&Anhui Xinli Finance&600318.sh&618&Finance \\
8&Yongan Futures&600927.sh&601&Finance \\
9&Weiye Construction Group&300621.sz&570&Construction \\
10&Luxin Venture Capital Group&600783.sh&556&Finance \\
\hline\hline
\end{tabular}
\end{table*}

\begin{table*}[h!]
	\centering
	\caption{The top 10 stocks of Community 2 ranked by the total degree for the CSMMLN network.}
	\label{CSMMLNCommunity2}
	\begin{tabular}{cccccccccccccc}
\hline\hline
Ranking&Stock&Stock symbol&Total degree&Industry\\
\hline
1&Shaanxi International Trust&000563.sz&729&Finance\\
2&Shanghai AJ Group&600643.sh&715&Finance\\
3&Suzhou Gold Mantis Construction and Decoration&002081.sz&713&Construction\\
4&Bank of Zhengzhou&002936.sz&680&Finance\\
5&Huaxi Securities&002926.sz&666&Finance\\
6&Sinolink Securities&600109.sh&660&Finance\\
7&Shanxi Securities&002500.sz&633&Finance\\
8&Hubei Biocause Pharmaceutical&000627.sz&617&Finance\\
9&Sealand Securities&000750.sz&611&Finance\\
10&Huaan Securities&600909.sh&606&Finance\\
\hline\hline
\end{tabular}
\end{table*}

\begin{table*}[h!]
	\centering
	\caption{The top 10 stocks of Community 3 ranked by the total degree for the CSMMLN network.}
	\label{CSMMLNCommunity3}
	\begin{tabular}{cccccccccccccc}
\hline\hline
Ranking&Stock&Stock symbol&Total degree&Industry\\
\hline
1&Suning Universal&000718.sz&762&Real estate\\
2&Deluxe Family&600503.sh&741&Real estate\\
3&China Fortune Land Development&600340.sh&726&Real estate\\
4&Hubei Fuxing Science and Technology&000926.sz&704&Real estate\\
5&Wolong Resources Group&600173.sh&699&Real estate\\
6&Chongqing DIMA Industry&600565.sh&668&Real estate\\
7&Hefei Urban Construction Development&002208.sz&655&Real estate\\
8&Citychamp Dartong Advanced Materials&600067.sh&635&Real estate\\
9&Shenzhen Centralcon Investment Holding&000042.sz&632&Real estate\\
10&Beijing Urban Construction Investment and Development&600266.sh&600&Real estate\\
\hline\hline
\end{tabular}
\end{table*}

\begin{figure*}
\centering
\subfigure[]{\includegraphics[width=0.33\textwidth]{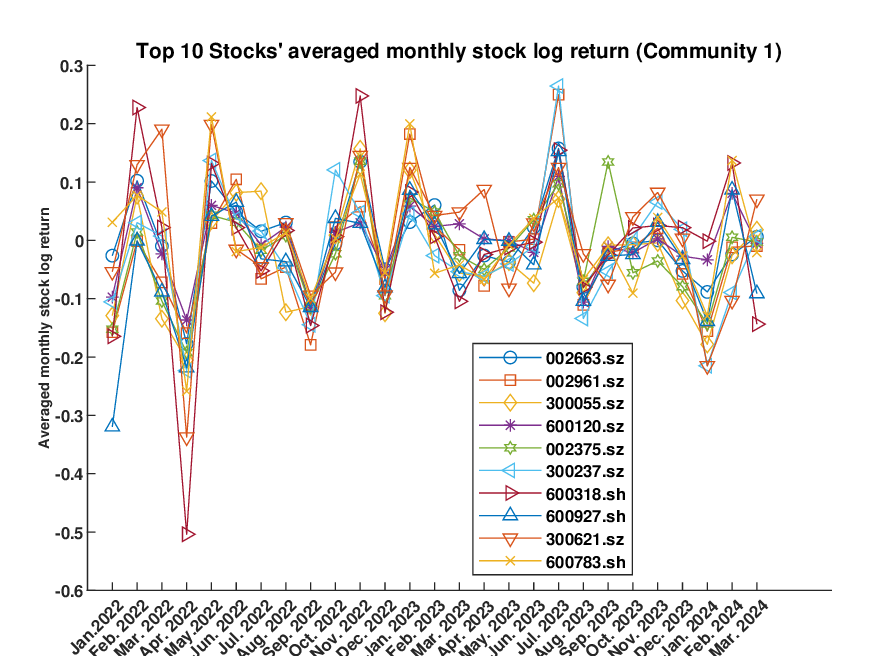}}
\subfigure[]{\includegraphics[width=0.33\textwidth]{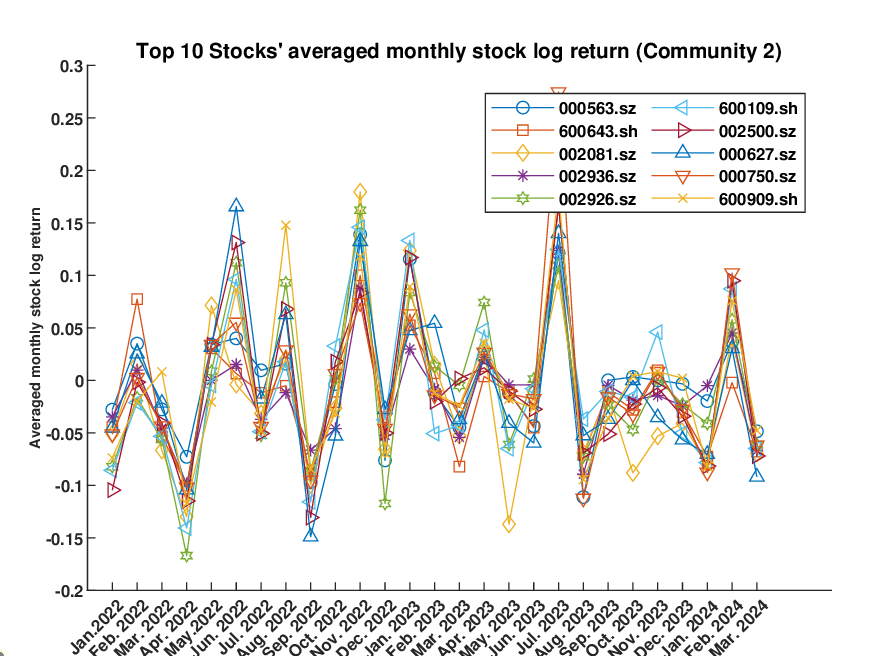}}
\subfigure[]{\includegraphics[width=0.33\textwidth]{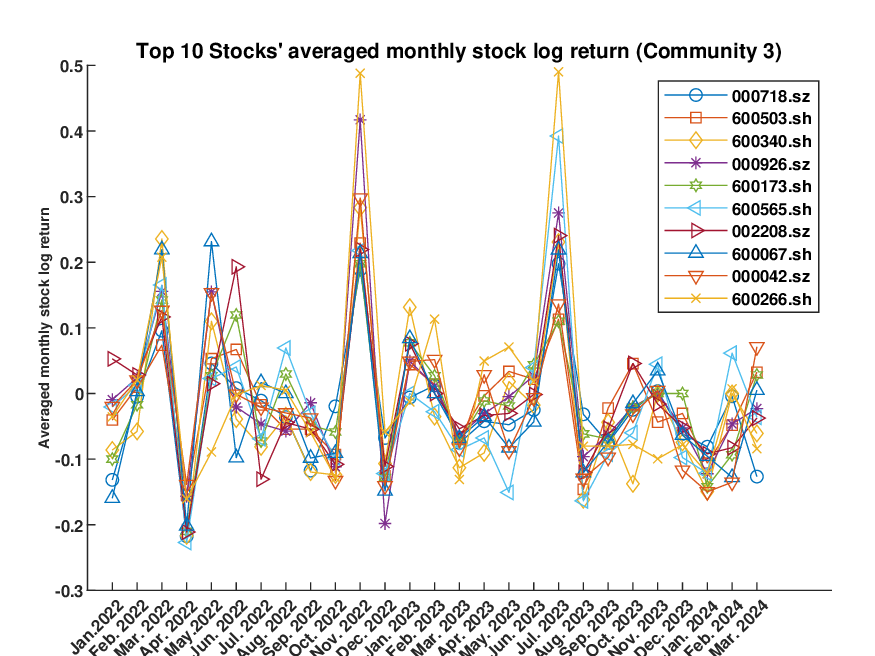}}
\caption{Top 10 Stocks' averaged monthly stock log return for the three communities of the CSMMLN network.}
\label{stockcommunity123} 
\end{figure*}
\section{Conclusion}\label{sec7}
In this paper, we have analyzed the performance of spectral clustering algorithms for community detection in multi-layer networks under the multi-layer degree-corrected stochastic block model. We established consistency results for two algorithms, namely NSoA and NDSoSA, with NSoA relying on the sum of adjacency matrices and NDSoSA utilizing the debiased sum of squared adjacency matrices. Our results highlight the benefits of leveraging multiple layers to enhance the accuracy of community detection, emphasizing the richness of information contained within the multi-layer structure. We further show that NDSoSA generally outperforms NSoA. To detect communities in large-scale multi-layer networks, we present the accelerated versions of NSoA and NDSoSA by randomly selecting nodes and layers. To identity the number of communities in multi-layer networks, we provide a strategy through maximizing the averaged modularity score of a community detection algorithm. Numerical simulations show that NDSoSA outperforms existing methods, underscoring the significance of incorporating the debiased sum of squared adjacency matrices for community detection in multi-layer networks. The numerical results also underscore the high computational efficiency of the accelerated algorithms when applied to large-scale multi-layer networks, as well as the high accuracy of our strategy in determining the number of communities. Moreover, the analysis of several real-world multi-layer networks provides meaningful insights into the applicability of our methods.

There are several potential directions for future work. Firstly,  it would be worthwhile to explore the extension of our methods to other types of multi-layer networks, such as those with directed or weighted edges. \citep{qing2023community,qing2024bipartite} introduced distribution-free models to characterize directed weighted networks, while \citep{su2023spectral} expanded the MLSBM model to directed multi-layer networks and proposed an approach as a directed adaptation of the debiased spectral clustering method proposed by \citep{lei2023bias}. By integrating the distribution-free concept from \citep{qing2023community,qing2024bipartite} with the spectral co-clustering concept from \citep{su2023spectral}, it becomes feasible to extend works presented in this paper to encompass multi-layer directed or weighted networks. Secondly, our method for estimating $K$ in this paper maximizes averaged modularity but lacks theoretical guarantees. Developing methods with the estimation consistency for the true number of communities for multi-layer networks under MLSBM and MLDCSBM is challenging and appealing. For single-layer networks under SBM and DCSBM, methods with theoretical guarantees do exist, such as the pseudo-likelihood ratio statistic combined with spectral clustering \citep{ma2021determining} and the stepwise goodness-of-fit (StGoF) combined with spectral clustering \citep{jin2023optimal}. Supposing that the true number of communities is $K_{0}$ and denoting $\hat{Z}_{K}$ as the $n\times K$ estimated community membership matrix obtained by applying NDSoSA on $\{A_{l}\}^{L}_{l=1}$ with $K$ communities, we outline potential directions to extend the works of \citep{ma2021determining,jin2023optimal} to multi-layer networks:
\begin{itemize}
  \item Building on \citep{ma2021determining}, we can obtain the $n\times (K+1)$ estimated community membership matrix $\hat{Z}^{b}_{K+1}$ by applying the binary segmentation technique introduced in \citep{ma2021determining} to the row-normalized version of the leading $(K+1)$ eigenvectors of $S_{\mathrm{sum}}$. Subsequently, we compute the pseudo-likelihood ratio for each layer using Equation (2) in \citep{ma2021determining}. By summing these ratios across all layers, we can derive the $R(K)$ metric defined in Equation (2) of \citep{ma2021determining}. The $K$ that minimizes $R(K)$ is expected to be a good estimate of the true number of communities.
  \item Alternatively, based on $\hat{Z}_{K}$, we can calculate the StGoF statistic defined by Equation (2.8) in \citep{jin2023optimal} for each layer. Denote the StGoF statistic for the $l$-th layer as $\psi^{(K)}_{l}$ for $l\in[L]$. As shown in \citep{jin2023optimal}, $\psi^{(K)}_{l}$ converges to the standard normal distribution $N(0,1)$ for $K=K_{0}$ and diverges to $\infty$ for $K<K_{0}$ under DCSBM for single-layer networks. Naturally, if we define a new statistic as $\psi^{(K)}\equiv\sum_{l\in[L]}(\psi^{(K)}_{l})^{2}$, $\psi^{(K)}$ should follow $\chi^{2}_{L}$ (Chi-Square distribution with $L$ degrees of freedom) when $K=K_{0}$ and diverges to $\infty$ for $K<K_{0}$. Thus, following step (c) of the StGoF algorithm in \citep{jin2023optimal}, the $K$ that satisfies $\psi^{(K)}<\chi^{2}_{L}(\alpha)$ should be a good estimate of the true number of communities, where $\chi^{2}_{L}(\alpha)$ denotes the $\alpha$ upper-quantile of $\chi^{2}_{L}$ and $\alpha$ is a predetermined value, such as $1\%$ or $5\%$.
\end{itemize}
Although we have provided possible directions for extending the works in \citep{ma2021determining,jin2023optimal} to estimate the number of communities from single-layer networks to multi-layer networks, the details of these extensions require careful consideration, and establishing their theoretical guarantees is challenging. Thirdly, developing theoretical lower bounds for the subsample sizes $n_{\mathrm{sample}}$ and $L_{\mathrm{sample}}$, as well as building the estimation consistencies of SNSoA and SNDSoSA,is an appealing research direction. Fourthly, in this paper, we remove the effect of the degree heterogeneity parameter $\theta$ by normalizing each rows of $\hat{U}$ (and $\hat{V}$). An alternative approach could involve designing algorithms based on the entry-wise ratios using $\hat{U}$ (and $\hat{V}$) to detect communities under MLDCSBM, where the concept of entry-wise ratios was first introduced in \citep{SCORE} to design an efficient algorithm for community estimation under DCSBM for single-layer networks. Lastly, conducting empirical studies on a wider range of real-world multi-layer networks to validate the effectiveness and scalability of the proposed algorithms would be crucial for practical applications. For instance, our methods can be utilized to detect areas of activity within a multi-layer network that describes the movement of various entities in a city \citep{yildirimoglu2018identification}. For further exploration of application areas, please refer to \citep{fortunato2010community,papadopoulos2012community,boccaletti2014structure,gasparetti2021community,huang2021survey}.
\section*{CRediT authorship contribution statement}
\textbf{Huan Qing:} Conceptualization; Data curation; Formal analysis; Funding acquisition; Methodology; Project administration; Resources; Software; Validation; Visualization; Writing-original draft; Writing-review $\&$ editing.
\section*{Declaration of competing interest}
The author declares no competing interests.
\section*{Data availability}
Data and code will be made available on request.
\section*{Acknowledgements}
H.Q. was supported by the Scientific Research Foundation of Chongqing University of Technology (Grant No. 2024ZDR003), and the Science and Technology Research Program of Chongqing Municipal Education Commission (Grant No. KJQN202401168).
\appendix
\section{Proofs}\label{SecProofs}
\subsection{Proof of Lemma \ref{EigOsum}}
\begin{proof}
By Lemma 3 of \citep{qing2023community}, this lemma holds.
\end{proof}
\subsection{Proof of Lemma \ref{EigOsum2}}
\begin{proof}
Set $P=\sum_{l\in[L]}B_{l}Z'\Theta^2ZB_{l}$. We have $\tilde{S}_{\mathrm{sum}}=\Theta ZPZ'\Theta$. Let $\Delta$ be a $K\times K$ diagonal matrix such that $\Delta(k,k)=\frac{\|\Theta Z(:,k)\|_{F}}{\|\theta\|_{F}}$ for $k\in[K]$. Let $\Gamma$ be an $n\times K$ matrix such that $\Gamma(:,k)=\frac{\Theta Z(:,k)}{\|\Theta Z(:,k)\|_{F}}$ for $k\in[K]$. We have $\Gamma'\Gamma=I_{K\times K}$ and $\tilde{S}_{\mathrm{sum}}=\|\theta\|^{2}_{F}\Gamma \Delta P\Delta\Gamma'$. Since $(\Theta Z)'(\Theta Z)$ is a positive definite diagonal matrix, the rank of $\sum_{l\in[L]}B^{2}_{l}$ is equal to that of $\sum_{l\in[L]}B_{l}Z'\Theta^2ZB_{l}$. By Lemma 3 of \citep{qing2023community}, we know that Lemma \ref{EigOsum2} holds as long as $\mathrm{rank}(P)=K$. Hence, when Lemma \ref{EigOsum2} holds, the rank of $\sum_{l\in[L]}B^{2}_{l}$ should be $K$.
\end{proof}
\subsection{Proof of Lemma \ref{boundAsum}}
\begin{proof}
We use the Matrix Bernstein theorem below to bound $\|A_{\mathrm{sum}}-\Omega_{\mathrm{sum}}\|$.
\begin{thm}\label{Bern}
(Theorem 1.4 of \citep{tropp2012user}) Consider a finite sequence $\{X_{i}\}$ of independent, random, self-adjoint matrices with dimension $n$. Supposes that
\begin{align*}
\mathbb{E}[X_{i}]=0, \mathrm{and~}\|X_{i}\|\leq R~\mathrm{almost~surely}.
\end{align*}
Then, for all $t\geq 0$,
\begin{align*}
\mathbb{P}(\|\sum_{i}X_{i}\|\geq t)\leq n\cdot \mathrm{exp}(\frac{-t^{2}/2}{\sigma^{2}+Rt/3}),
\end{align*}
where $\sigma^{2}:=\|\sum_{i}\mathbb{E}[X^{2}_{i}]\|$.
\end{thm}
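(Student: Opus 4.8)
The plan is to prove Theorem \ref{Bern} via the matrix Laplace transform method, whose only deep ingredient is Lieb's concavity theorem; all remaining steps are scalar estimates transferred to matrices through the spectral calculus. Since the $X_{i}$ are self-adjoint and $\|\sum_{i}X_{i}\|=\max\{\lambda_{\max}(\sum_{i}X_{i}),\lambda_{\max}(-\sum_{i}X_{i})\}$, it suffices to bound $\mathbb{P}(\lambda_{\max}(\sum_{i}X_{i})\geq t)$ and then apply the same estimate to the family $\{-X_{i}\}$, which satisfies identical hypotheses since $\|-X_{i}\|=\|X_{i}\|\leq R$ and $\mathbb{E}[(-X_{i})^{2}]=\mathbb{E}[X_{i}^{2}]$ leaves $\sigma^{2}$ unchanged; combining the two via a union bound recovers the stated two-sided tail.

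First I would set up the master tail bound. Writing $Y=\sum_{i}X_{i}$, for any $\theta>0$ the identity $e^{\theta\lambda_{\max}(Y)}=\lambda_{\max}(e^{\theta Y})\leq\mathrm{tr}\,e^{\theta Y}$ together with Markov's inequality gives $\mathbb{P}(\lambda_{\max}(Y)\geq t)\leq e^{-\theta t}\,\mathbb{E}\,\mathrm{tr}\,e^{\theta Y}$. The crux is to control the matrix moment generating function $\mathbb{E}\,\mathrm{tr}\,\exp(\theta\sum_{i}X_{i})$. Here I invoke Lieb's concavity theorem: the map $A\mapsto\mathrm{tr}\,\exp(H+\log A)$ is concave on positive-definite matrices for fixed self-adjoint $H$. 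Conditioning on all but one summand, applying Jensen's inequality through independence, and peeling off the summands one at a time yields the subadditivity bound $\mathbb{E}\,\mathrm{tr}\,\exp(\theta\sum_{i}X_{i})\leq\mathrm{tr}\,\exp(\sum_{i}\log\mathbb{E}\,e^{\theta X_{i}})$. This is where the genuine difficulty lies, and I expect it to be the main obstacle: naive attempts to split the exponential fail because the matrices do not commute, so $\mathrm{tr}\,e^{A+B}\neq\mathrm{tr}\,e^{A}e^{B}$ in general, and Lieb's theorem is precisely what rescues the argument.

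Next I would bound each factor $\log\mathbb{E}\,e^{\theta X_{i}}$ in the Loewner order. The scalar function $x\mapsto(e^{\theta x}-\theta x-1)/x^{2}$ is increasing, so for $x\leq R$ one has $e^{\theta x}\leq 1+\theta x+g(\theta)x^{2}$ with $g(\theta)=(e^{\theta R}-\theta R-1)/R^{2}$; transferring this to the eigenvalues of $X_{i}$, which is valid since $\|X_{i}\|\leq R$, gives $e^{\theta X_{i}}\preceq I+\theta X_{i}+g(\theta)X_{i}^{2}$. Taking expectations and using $\mathbb{E}[X_{i}]=0$ leaves $\mathbb{E}\,e^{\theta X_{i}}\preceq I+g(\theta)\,\mathbb{E}[X_{i}^{2}]$, and the operator inequality $\log(I+A)\preceq A$ then yields $\log\mathbb{E}\,e^{\theta X_{i}}\preceq g(\theta)\,\mathbb{E}[X_{i}^{2}]$.

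Finally I would assemble and optimize. Summing the last bound, using that $\mathrm{tr}\,\exp$ is monotone with respect to $\preceq$ and that $\mathrm{tr}\,\exp(\sum_{i}g(\theta)\mathbb{E}[X_{i}^{2}])\leq n\,e^{g(\theta)\|\sum_{i}\mathbb{E}[X_{i}^{2}]\|}=n\,e^{g(\theta)\sigma^{2}}$, I obtain $\mathbb{P}(\lambda_{\max}(Y)\geq t)\leq n\exp(-\theta t+g(\theta)\sigma^{2})$ for every $\theta>0$. Applying the elementary estimate $g(\theta)\leq\frac{\theta^{2}/2}{1-R\theta/3}$ valid on $0<\theta<3/R$, and choosing $\theta=t/(\sigma^{2}+Rt/3)$, collapses the exponent to $-\frac{t^{2}/2}{\sigma^{2}+Rt/3}$, giving the claimed bound for $\lambda_{\max}$; the two-sided spectral-norm statement then follows from the symmetrization remarked above. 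The remaining work, namely verifying the scalar monotonicity, the transfer rule, and the final one-dimensional optimization, is routine calculus requiring no further structural input.
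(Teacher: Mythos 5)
This statement is quoted in the paper as Theorem 1.4 of the cited reference (Tropp, 2012) and is not proved there, so there is no internal proof to compare against; your proposal essentially reconstructs Tropp's own argument (matrix Laplace transform, subadditivity of the cumulant generating function via Lieb's concavity theorem, the scalar bound $e^{\theta x}\leq 1+\theta x+g(\theta)x^{2}$ transferred through the spectral calculus, and the final optimization in $\theta$), and that reconstruction is correct. One small caveat: the union bound over $\{X_{i}\}$ and $\{-X_{i}\}$ that you use to pass from $\lambda_{\max}$ to the spectral norm produces a prefactor of $2n$, whereas the statement as written in the paper keeps the prefactor $n$ (which is the correct constant only for the one-sided $\lambda_{\max}$ bound in Tropp's original formulation); this discrepancy is immaterial for every use of the theorem in the paper, since all downstream bounds are stated up to constants, but you should flag it rather than claim the union bound "recovers the stated" inequality exactly.
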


Let $E^{(ij)}$ be a (deterministic) matrix with 1 in the $(i,j)$th position and 0 everywhere else for $i\in[n],j\in[n]$. We have $A_{\mathrm{sum}}-\Omega_{\mathrm{sum}}=\sum_{l\in[L]}(A_{l}-\Omega_{l})=\sum_{l\in[L]}\sum_{i\in[n]}\sum_{j\in[n]}(A_{l}(i,j)-\Omega_{l}(i,j))E^{(ij)}$. Let $X^{(ijl)}=(A_{l}(i,j)-\Omega_{l}(i,j))E^{(ij)}$ for all $l\in[L], i\in[n], j\in[n]$.We have the following three conclusions:
\begin{itemize}
  \item $\mathbb{E}[X^{(ijl)}]=0$.
  \item $\|X^{(ijl)}\|=|A_{l}(i,j)-\Omega_{l}(i,j)|\leq1$, i.e., $R=1$.
  \item For $\sigma^{2}$, under $\mathrm{MLDCSBM}(Z,\Theta,\mathbb{B})$, we have
      \begin{align*}
      \sigma^{2}&=\|\sum_{l\in[L]}\sum_{i\in[n]}\sum_{j\in[n]}\mathbb{E}[(X^{(ijl)})^{2}]\|\\
      &=\|\sum_{l\in[L]}\sum_{i\in[n]}\sum_{j\in[n]}E^{(ii)}\mathbb{E}[(A_{l}(i,j)-\Omega_{l}(i,j))^{2}]\|\\
      &=\|\sum_{l\in[L]}\sum_{i\in[n]}\sum_{j\in[n]}E^{(ii)}\Omega_{l}(i,j)(1-\Omega_{l}(i,j))\|\\
      &=\mathrm{max}_{i\in[n]}\sum_{l\in[L]}\sum_{j\in[n]}\Omega_{l}(i,j)(1-\Omega_{l}(i,j))\\
      &\leq\mathrm{max}_{i\in[n]}\sum_{l\in[L]}\sum_{j\in[n]}\Omega_{l}(i,j)\\
      &=\mathrm{max}_{i\in[n]}\sum_{l\in[L]}\sum_{j\in[n]}\theta(i)\theta(j)B_{l}(\ell(i),\ell(j))\\
      &\leq\theta_{\mathrm{max}}\sum_{l\in[L]}\sum_{j\in[n]}\theta(j)=\theta_{\mathrm{max}}\|\theta\|_{1}L.
      \end{align*}
\end{itemize}
By Theorem \ref{Bern}, for any $t\geq0$,we have
\begin{align*}
\mathbb{P}(\|A_{\mathrm{sum}}-\Omega_{\mathrm{sum}}\|\geq t)\leq n\cdot \mathrm{exp}(\frac{-t^{2}/2}{\theta_{\mathrm{max}}\|\theta\|_{1}L+t/3})
\end{align*}
Set $t=\frac{\alpha+1+\sqrt{(\alpha+1)(\alpha+19)}}{3}\sqrt{\theta_{\mathrm{max}}\|\theta\|_{1}L\mathrm{log}(n+L)}$ for any $\alpha\geq0$, we have
\begin{align*}
&\mathbb{P}(\|A_{\mathrm{sum}}-\Omega_{\mathrm{sum}}\|\geq t)\leq n\cdot \mathrm{exp}(\frac{-t^{2}/2}{\theta_{\mathrm{max}}\|\theta\|_{1}L+t/3})\\
&=n\cdot\mathrm{exp}(\frac{-(\alpha+1)\mathrm{log}(n+L)}{\frac{18}{(\sqrt{\alpha+1}+\sqrt{\alpha+19})^{2}}+\frac{2\sqrt{\alpha+1}}{\sqrt{\alpha+1}+\sqrt{\alpha+19}}\sqrt{\frac{\mathrm{log}(n+L)}{\theta_{\mathrm{max}}\|\theta\|_{1}L}}})\\
&\leq n\cdot\mathrm{exp}(-(\alpha+1)\mathrm{log}(n+L))=\frac{n}{(n+L)^{\alpha+1}}\leq\frac{1}{(n+L)^{\alpha}},
\end{align*}
where we have used Assumption \ref{Assum1} and the fact that $\frac{18}{(\sqrt{\alpha+1}+\sqrt{\alpha+19})^{2}}+\frac{2\sqrt{\alpha+1}}{\sqrt{\alpha+1}+\sqrt{\alpha+19}}\sqrt{\frac{\mathrm{log}(n+L)}{\theta_{\mathrm{max}}\|\theta\|_{1}L}}\leq\frac{18}{(\sqrt{\alpha+1}+\sqrt{\alpha+19})^{2}}+\frac{2\sqrt{\alpha+1}}{\sqrt{\alpha+1}+\sqrt{\alpha+19}}=1$. Set $\alpha=1$, this lemma holds.
\end{proof}
\subsection{Proof of Lemma \ref{boundSsum}}
\begin{proof}
Recall that $S_{\mathrm{sum}}=\sum_{l\in[L]}(A^{2}_{l}-D_{l})$ and $\tilde{S}_{\mathrm{sum}}=\sum_{l\in[L]}\Omega^{2}_{l}$, we see that $\tilde{S}_{\mathrm{sum}}$ is not the expectation of $S_{\mathrm{sum}}$ under MLDCSBM. Therefore, we cannot apply the Matrix Bernstein inequality in Theorem \ref{Bern} directly to bound $\|S_{\mathrm{sum}}-\tilde{S}_{\mathrm{sum}}\|$. Instead, we consider the following strategy to decompose $S_{\mathrm{sum}}-\tilde{S}_{\mathrm{sum}}$ into two parts:
\begin{align*}
&S_{\mathrm{sum}}-\tilde{S}_{\mathrm{sum}}=\sum_{l\in[L]}(A^{2}_{l}-D_{l}-\Omega^{2}_{l})=\sum_{l\in[L]}\sum_{i\in[n]}\sum_{j\in[n]}\sum_{m\in[n]}(A_{l}(i,m)A_{l}(m,j)\\
&~~~-\Omega_{l}(i,m)\Omega_{l}(m,j))E^{(ij)}-\sum_{l\in[L]}D_{l}\\
&=\sum_{l\in[L]}\sum_{i\in[n],j\in[n],i\neq j}\sum_{m\in[n]}(A_{l}(i,m)A_{l}(m,j)-\Omega_{l}(i,m)\Omega_{l}(m,j))E^{(ij)}\\
&~~~+\sum_{l\in[L]}\sum_{m\in[n]}\sum_{i\in[n]}(A^{2}_{l}(i,m)-\Omega^{2}_{l}(i,m))E^{(ii)}-\sum_{l\in[L]}D_{l}\\
&=\sum_{l\in[L]}\sum_{i\in[n],j\in[n],i\neq j}\sum_{m\in[n]}(A_{l}(i,m)A_{l}(m,j)-\Omega_{l}(i,m)\Omega_{l}(m,j))E^{(ij)}\\
&~~~+\sum_{l\in[L]}\sum_{m\in[n]}\sum_{i\in[n]}(A_{l}(i,m)-\Omega^{2}_{l}(i,m))E^{(ii)}-\sum_{l\in[L]}D_{l}\\
&=\sum_{l\in[L]}\sum_{i\in[n],j\in[n],i\neq j}\sum_{m\in[n]}(A_{l}(i,m)A_{l}(m,j)-\Omega_{l}(i,m)\Omega_{l}(m,j))E^{(ij)}\\
&~~~-\sum_{l\in[L]}\sum_{m\in[n]}\sum_{i\in[n]}\Omega^{2}_{l}(i,m)E^{(ii)}+\sum_{l\in[L]}\sum_{i\in[n]}\sum_{m\in[n]}A_{l}(i,m)E^{(ii)}-\sum_{l\in[L]}D_{l}\\
&=\sum_{l\in[L]}\sum_{i\in[n],j\in[n],i\neq j}\sum_{m\in[n]}(A_{l}(i,m)A_{l}(m,j)-\Omega_{l}(i,m)\Omega_{l}(m,j))E^{(ij)}\\
&~~~-\sum_{l\in[L]}\sum_{m\in[n]}\sum_{i\in[n]}\Omega^{2}_{l}(i,m)E^{(ii)}+\sum_{l\in[L]}D_{l}-\sum_{l\in[L]}D_{l}\\
&=\underbrace{\sum_{l\in[L]}\sum_{i\in[n],j\in[n],i\neq j}\sum_{m\in[n]}(A_{l}(i,m)A_{l}(m,j)-\Omega_{l}(i,m)\Omega_{l}(m,j))E^{(ij)}}_{S1}\\
&~~~-\underbrace{\sum_{l\in[L]}\sum_{m\in[n]}\sum_{i\in[n]}\Omega^{2}_{l}(i,m)E^{(ii)}}_{S2}.
\end{align*}
Therefore, we have $S_{\mathrm{sum}}-\tilde{S}_{\mathrm{sum}}=S1-S2$, where $S1=\sum_{l\in[L]}\sum_{i\in[n],j\in[n],i\neq j}\sum_{m\in[n]}(A_{l}(i,m)A_{l}(m,j)-\Omega_{l}(i,m)\Omega_{l}(m,j))E^{(ij)}$ and $S2=\sum_{l\in[L]}\sum_{m\in[n]}\sum_{i\in[n]}\Omega^{2}_{l}(i,m)E^{(ii)}$. In $S1$, since $i\neq j$, $A_{l}(i,m)$ and $A_{l}(m,j)$ are independent. Hence $\mathbb{E}(S1)=0$, and we can apply the Matrix Bernstein inequality to $S{1}$ to obtain the upper bound of $\|S1\|$. Since $\Omega_{l}$ is non-random, $S2$ is deterministic, and we can bound $\|S2\|$ directly. The above analysis suggests that
\begin{align*}
&\|S_{\mathrm{sum}}-\tilde{S}_{\mathrm{sum}}\|=\|S1-S2\|\leq\|S1\|+\|S2\|\\
&=\|S1\|+\|\sum_{l\in[L]}\sum_{m\in[n]}\sum_{i\in[n]}\Omega^{2}_{l}(i,m)E^{(ii)}\|\\
&=\|S1\|+\mathrm{max}_{i\in[n]}\sum_{l\in[L]}\sum_{m\in[n]}\Omega^{2}_{l}(i,m)\\
&=\|S1\|+\mathrm{max}_{i\in[n]}\sum_{l\in[L]}\sum_{m\in[n]}\theta^{2}(i)\theta^{2}(m)B^{2}_{l}(\ell(i),\ell(m))\\
&\leq\|S1\|+\mathrm{max}_{i\in[n]}\sum_{l\in[L]}\sum_{m\in[n]}\theta^{2}(i)\theta^{2}(m)\\
&\leq\|S1\|+\theta^{2}_{\mathrm{max}}\sum_{l\in[L]}\sum_{m\in[n]}\theta^{2}(m)\\
&=\|S1\|+\theta^{2}_{\mathrm{max}}\|\theta\|^{2}_{F}L.
\end{align*}
Next, we apply Theorem \ref{Bern} to bound $\|S1\|$. Set $X^{(ijml)}=(A_{l}(i,m)A_{l}(m,j)-\Omega_{l}(i,m)\Omega_{l}(m,j))E^{(ij)}$ for all $i\in[n],j\in[n],m\in[n],l\in[L]$. We have the following three conclusions:
\begin{itemize}
  \item $\mathbb{E}[X^{(ijml)}]=0$.
  \item $\|X^{(ijml)}\|=|A_{l}(i,m)A_{l}(m,j)-\Omega_{l}(i,m)\Omega_{l}(m,j)|\leq1$. Set $R_{s}=1$.
  \item Set $\sigma^{2}_{s}=\|\sum_{l\in[L]}\sum_{i\in[n],j\in[n],i\neq j}\sum_{m\in[n]}\mathbb{E}[(X^{(ijml)})^{2}]\|$. Under MLDCSBM, we have
      \begin{align*}
      &\sigma^{2}_{s}=\|\sum_{l\in[L]}\sum_{i\in[n],j\in[n],i\neq j}\sum_{m\in[n]}\mathbb{E}[(A_{l}(i,m)A_{l}(m,j)\\
      &~~~-\Omega_{l}(i,m)\Omega_{l}(m,j))^{2}E^{(ii)}]\|\\
      &=\|\sum_{l\in[L]}\sum_{i\in[n],j\in[n],i\neq j}\sum_{m\in[n]}E^{(ii)}\mathbb{E}[A^{2}_{l}(i,m)A^{2}_{l}(m,j)\\
      &~~~-2A_{l}(i,m)A_{l}(m,j)\Omega_{l}(i,m)\Omega_{l}(m,j)\\
      &~~~+\Omega^{2}_{l}(i,m)\Omega^{2}_{l}(m,j)]\|\\
      &=\|\sum_{l\in[L]}\sum_{i\in[n],j\in[n],i\neq j}\sum_{m\in[n]}E^{(ii)}(\mathbb{E}[A^{2}_{l}(i,m)A^{2}_{l}(m,j)]\\
      &~~~-\Omega^{2}_{l}(i,m)\Omega^{2}_{l}(m,j))\|\\
      &=\|\sum_{l\in[L]}\sum_{i\in[n],j\in[n],i\neq j}\sum_{m\in[n]}E^{(ii)}(\mathbb{E}[A^{2}_{l}(i,m)]\mathbb{E}[A^{2}_{l}(m,j)]\\
      &~~~-\Omega^{2}_{l}(i,m)\Omega^{2}_{l}(m,j))\|\\
      &=\|\sum_{l\in[L]}\sum_{i\in[n],j\in[n],i\neq j}\sum_{m\in[n]}E^{(ii)}((\Omega_{l}(i,m)(1-\Omega_{l}(i,m))\\
      &~~~+\Omega^{2}_{l}(i,m))(\Omega_{l}(m,j)(1-\Omega_{l}(m,j))\\
      &~~~+\Omega^{2}_{l}(m,j))-\Omega^{2}_{l}(i,m)\Omega^{2}_{l}(m,j))\|\\
      &=\|\sum_{l\in[L]}\sum_{i\in[n],j\in[n],i\neq j}\sum_{m\in[n]}E^{(ii)}\Omega_{l}(i,m)\Omega_{l}(m,j)\\
      &~~~(1-\Omega_{l}(i,m)\Omega_{l}(m,j))\|\\
      &=\mathrm{max}_{i\in[n]}\sum_{l\in[L]}\sum_{j\in[n],j\neq i}\sum_{m\in[n]}\Omega_{l}(i,m)\Omega_{l}(m,j)\\
      &~~~(1-\Omega_{l}(i,m)\Omega_{l}(m,j))\\
      &\leq\mathrm{max}_{i\in[n]}\sum_{l\in[L]}\sum_{j\in[n],j\neq i}\sum_{m\in[n]}\Omega_{l}(i,m)\Omega_{l}(m,j)\\
      &=\mathrm{max}_{i\in[n]}\sum_{l\in[L]}\sum_{j\in[n],j\neq i}\sum_{m\in[n]}\theta(i)\theta^{2}(m)\theta(j)B_{l}(\ell(i),\ell(m))B_{l}(\ell(m),\ell(j))\\
      &\leq\mathrm{max}_{i\in[n]}\sum_{l\in[L]}\sum_{j\in[n],j\neq i}\sum_{m\in[n]}\theta(i)\theta^{2}(m)\theta(j)\\
      &\leq\theta_{\mathrm{max}}\sum_{l\in[L]}\sum_{j\in[n]}\sum_{m\in[n]}\theta^{2}(m)\theta(j)\\
      &=\theta_{\mathrm{max}}\|\theta\|_{1}\|\theta\|^{2}_{F}L.
      \end{align*}
\end{itemize}
For any $t_{s}\geq0$, by Theorem \ref{Bern}, we have
\begin{align*}
\mathbb{P}(\|S1\|\geq t_{s})\leq n\cdot\mathrm{exp}(\frac{-t^{2}_{s}/2}{\sigma^{2}_{s}+R_{s}t_{s}/3})\leq n\cdot\mathrm{exp}(\frac{-t^{2}_{s}/2}{\theta_{\mathrm{max}}\|\theta\|_{1}\|\theta\|^{2}_{F}L+t_{s}/3})
\end{align*}
Set $t_{s}=\frac{\alpha+1+\sqrt{(\alpha+1)(\alpha+19)}}{3}\sqrt{\theta_{\mathrm{max}}\|\theta\|_{1}\|\theta\|^{2}_{F}L\mathrm{log}(n+L)}$ for any $\alpha\geq0$, by Assumption \ref{Assum2}, we have
\begin{align*}
\mathbb{P}(\|S1\|\geq t_{s})&\leq n\cdot\mathrm{exp}(\frac{-(\alpha+1)\mathrm{log}(n+L)}{\frac{18}{(\sqrt{\alpha+1}+\sqrt{\alpha+19})^{2}}+\frac{2\sqrt{\alpha+1}}{\sqrt{\alpha+1}+\sqrt{\alpha+19}}\sqrt{\frac{\mathrm{log}(n+L)}{\theta_{\mathrm{max}}\|\theta\|_{1}\|\theta\|^{2}_{F}L}}})\\
&\leq\frac{n}{(n+L)^{\alpha+1}}\leq\frac{1}{(n+L)^{\alpha}}.
\end{align*}
Set $\alpha=1$, this lemma holds.
\end{proof}
\subsection{Proof of Theorem \ref{mainNSoA}}
\begin{proof}
First, we present the following two lemmas.
\begin{lem}\label{BoundUhatU}
Under $\mathrm{MLDCSBM}(Z,\Theta,\mathcal{B})$, we have
\begin{align*}
\|\hat{U}Q-U\|_{F}=O(\frac{\sqrt{K}\|A_{\mathrm{sum}}-\Omega_{\mathrm{sum}}\|}{\theta^{2}_{\mathrm{min}}n_{\mathrm{min}}L}), \end{align*}
where $Q$ is a $K\times K$ orthogonal matrix.
\end{lem}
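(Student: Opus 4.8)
The plan is to combine a Davis--Kahan type perturbation bound with an explicit lower bound on the $K$-th leading eigenvalue of the population matrix $\Omega_{\mathrm{sum}}$. Since $\hat{U}$ and $U$ collect the leading $K$ eigenvectors of $A_{\mathrm{sum}}$ and $\Omega_{\mathrm{sum}}$ respectively, and $\Omega_{\mathrm{sum}}$ has exactly rank $K$ (by Lemma \ref{EigOsum}, under $\mathrm{rank}(\sum_{l\in[L]}B_{l})=K$), the natural tool is the version of the Davis--Kahan $\sin\Theta$ theorem stated as Lemma 5.1 of \citep{lei2015consistency}. First I would invoke that lemma: because $\lambda_{K+1}(\Omega_{\mathrm{sum}})=0$, the relevant eigen-gap is simply $|\lambda_{K}(\Omega_{\mathrm{sum}})|$, so there exists a $K\times K$ orthogonal matrix $Q$ with
\[
\|\hat{U}Q-U\|_{F}\le \frac{c_{0}\sqrt{K}\,\|A_{\mathrm{sum}}-\Omega_{\mathrm{sum}}\|}{|\lambda_{K}(\Omega_{\mathrm{sum}})|}
\]
for an absolute constant $c_{0}$; the $\sqrt{K}$ factor arises from passing from the spectral to the Frobenius norm across the $K$-dimensional invariant subspace.

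The main work --- and the step I expect to be the crux --- is establishing that $|\lambda_{K}(\Omega_{\mathrm{sum}})|$ is of order at least $\theta_{\mathrm{min}}^{2}n_{\mathrm{min}}L$. Writing $M=\Theta Z$ (an $n\times K$ matrix of full column rank) and $\bar{B}=\sum_{l\in[L]}B_{l}$, we have $\Omega_{\mathrm{sum}}=M\bar{B}M'$. Since $\Omega_{\mathrm{sum}}$ is symmetric, $|\lambda_{K}(\Omega_{\mathrm{sum}})|=\sigma_{K}(M\bar{B}M')$, and I would apply the multiplicative singular-value inequality $\sigma_{K}(M\bar{B}M')\ge \sigma_{K}(M)\,\sigma_{K}(\bar{B})\,\sigma_{K}(M')=\sigma_{K}(M)^{2}\,\sigma_{K}(\bar{B})$, which is valid because all three factors have rank $K$ (the key fact being $\|My\|\ge\sigma_{K}(M)\|y\|$ for every $y\in\mathbb{R}^{K}$ since $M$ is injective, so $\ker(M\bar{B}M')=\ker(\bar{B}M')$ and the smallest nonzero singular values multiply). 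The delicate point is that $\bar{B}$ need not be positive semidefinite, so one must work with $\sigma_{K}(\bar{B})=|\lambda_{K}(\bar{B})|$ rather than assuming definiteness; Assumption \ref{Assum11} then supplies $\sigma_{K}(\bar{B})\ge c_{1}L$. For the remaining factor, $M'M=Z'\Theta^{2}Z$ is diagonal with $k$-th entry $\sum_{i:\ell(i)=k}\theta^{2}(i)\ge \theta_{\mathrm{min}}^{2}n_{\mathrm{min}}$, hence $\sigma_{K}(M)^{2}=\lambda_{K}(M'M)\ge \theta_{\mathrm{min}}^{2}n_{\mathrm{min}}$. Combining these gives $|\lambda_{K}(\Omega_{\mathrm{sum}})|\ge c_{1}\,\theta_{\mathrm{min}}^{2}n_{\mathrm{min}}L$.

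Finally, substituting this lower bound into the Davis--Kahan inequality yields
\[
\|\hat{U}Q-U\|_{F}\le \frac{c_{0}\sqrt{K}\,\|A_{\mathrm{sum}}-\Omega_{\mathrm{sum}}\|}{c_{1}\,\theta_{\mathrm{min}}^{2}n_{\mathrm{min}}L}=O\!\left(\frac{\sqrt{K}\,\|A_{\mathrm{sum}}-\Omega_{\mathrm{sum}}\|}{\theta_{\mathrm{min}}^{2}n_{\mathrm{min}}L}\right),
\]
which is the claim. The perturbation step is essentially a black-box application of a known theorem, so I would present the eigenvalue lower bound as the technical heart of the argument, taking particular care to justify the singular-value product inequality for the non-square and possibly indefinite factors $M$ and $\bar{B}$, and to flag that the factor $L$ in the denominator is exactly where Assumption \ref{Assum11} enters.
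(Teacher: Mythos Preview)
Your proposal is correct and follows essentially the same route as the paper: apply the Davis--Kahan bound (Lemma 5.1 of \citep{lei2015consistency}) to get $\|\hat{U}Q-U\|_{F}\le \tfrac{2\sqrt{2K}\,\|A_{\mathrm{sum}}-\Omega_{\mathrm{sum}}\|}{|\lambda_{K}(\Omega_{\mathrm{sum}})|}$, then lower-bound $|\lambda_{K}(\Omega_{\mathrm{sum}})|$ by factoring $\Omega_{\mathrm{sum}}=\Theta Z\bar{B}Z'\Theta$ and using multiplicative singular-value inequalities together with Assumption~\ref{Assum11}. The only cosmetic difference is that the paper passes through $\sqrt{\lambda_{K}(\Omega_{\mathrm{sum}}^{2})}$ before peeling off the factors, whereas you work directly with $\sigma_{K}(M\bar{B}M')$; your justification of the product inequality is in fact more careful than the paper's.
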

\begin{proof}
By Lemma 5.1 in \citep{lei2015consistency}, we have
\begin{align*}
\|\hat{U}Q-U\|_{F}\leq\frac{2\sqrt{2K}\|A_{\mathrm{sum}}-\Omega_{\mathrm{sum}}\|}{|\lambda_{K}(\Omega_{\mathrm{sum}})|}.
\end{align*}
For $|\lambda_{K}(\Omega_{\mathrm{sum}})|$, we have
\begin{align*}
|\lambda_{K}(\Omega_{\mathrm{sum}})|&=\sqrt{\lambda_{K}(\Omega^{2}_{\mathrm{sum}})}=\sqrt{\lambda_{K}((\Theta Z(\sum_{l\in[L]}B_{l})Z'\Theta)^{2})}\\
&\geq\sqrt{\lambda^{4}_{K}(\Theta)\lambda^{2}_{K}(Z'Z)\lambda^{2}_{K}(\sum_{l\in[L]}B_{l})}\\
&=\lambda^{2}_{K}(\Theta)\lambda_{K}(Z'Z)|\lambda_{K}(\sum_{l\in[L]}B_{l})|\\
&=\theta^{2}_{\mathrm{min}}n_{\mathrm{min}}|\lambda_{K}(\sum_{l\in[L]}B_{l})|.
\end{align*}
By Assumption \ref{Assum11}, we have $|\lambda_{K}(\Omega_{\mathrm{sum}})|\geq c_{1}\theta^{2}_{\mathrm{min}}n_{\mathrm{min}}L$. This proves the claim.
\end{proof}
\begin{lem}\label{BoundUstarhatUsatr}
Under $\mathrm{MLDCSBM}(Z,\Theta,\mathcal{B})$, we have
\begin{align*}
\|\hat{U}_{*}Q-U_{*}\|_{F}=O(\frac{\theta_{\mathrm{max}}\sqrt{Kn_{\mathrm{max}}}\|A_{\mathrm{sum}}-\Omega_{\mathrm{sum}}\|}{\theta^{3}_{\mathrm{min}}n_{\mathrm{min}}L}).
\end{align*}
\end{lem}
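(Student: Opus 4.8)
The plan is to transfer the row-wise eigenvector bound of Lemma \ref{BoundUhatU} through the row-normalization map; the only genuinely new ingredient needed is a uniform lower bound on the row norms of $U$. First I would note that, because $Q$ is an orthogonal matrix, it preserves the length of every row, so $\hat{U}_{*}Q$ is exactly the row-normalized version of $\hat{U}Q$. This reduces the task to comparing the row-normalizations of two matrices, $\hat{U}Q$ and $U$, row by row. The elementary fact that for nonzero vectors $a,b$ one has $\left\|\frac{a}{\|a\|}-\frac{b}{\|b\|}\right\|\le\frac{2\|a-b\|}{\|b\|}$, proved in two lines by writing $\frac{a}{\|a\|}-\frac{b}{\|b\|}=\frac{a-b}{\|b\|}+a\left(\frac{1}{\|a\|}-\frac{1}{\|b\|}\right)$ and invoking the reverse triangle inequality, applies with $a=(\hat{U}Q)(i,:)$ and $b=U(i,:)$ to give
\begin{align*}
\|\hat{U}_{*}(i,:)Q-U_{*}(i,:)\|_{F}\le\frac{2\|(\hat{U}Q)(i,:)-U(i,:)\|_{F}}{\|U(i,:)\|_{F}}.
\end{align*}
Squaring, summing over $i\in[n]$, and pulling out the worst row norm then yields $\|\hat{U}_{*}Q-U_{*}\|_{F}\le\frac{2}{\min_{i}\|U(i,:)\|_{F}}\|\hat{U}Q-U\|_{F}$.

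The main obstacle is controlling the denominator, i.e.\ establishing a uniform lower bound on $\min_{i}\|U(i,:)\|_{F}$. Here I would exploit the structure underlying Lemma \ref{EigOsum}. Since $\Omega_{\mathrm{sum}}=\Theta Z(\sum_{l\in[L]}B_{l})Z'\Theta$ has rank $K$ and its column space coincides with that of $\Theta Z$, the leading $K$ eigenvectors admit the representation $U=\Theta ZH$ for an invertible $K\times K$ matrix $H$. The orthonormality $U'U=I_{K}$ then reads $H'(Z'\Theta^{2}Z)H=I_{K}$, and because $Z'\Theta^{2}Z=\mathrm{diag}(\sum_{i\in\mathcal{C}_{1}}\theta^{2}(i),\ldots,\sum_{i\in\mathcal{C}_{K}}\theta^{2}(i))$ is a positive diagonal matrix, it follows that $H=(Z'\Theta^{2}Z)^{-1/2}O$ for some orthogonal $O$. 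Consequently $\|U(i,:)\|_{F}=\theta(i)\|H(\ell(i),:)\|_{F}=\theta(i)/\sqrt{\sum_{j\in\mathcal{C}_{\ell(i)}}\theta^{2}(j)}$, since the rows of $O$ are unit vectors. I would bound this below by $\theta_{\mathrm{min}}/(\theta_{\mathrm{max}}\sqrt{n_{\mathrm{max}}})$ using $\theta(i)\ge\theta_{\mathrm{min}}$ and $\sum_{j\in\mathcal{C}_{\ell(i)}}\theta^{2}(j)\le n_{\ell(i)}\theta_{\mathrm{max}}^{2}\le n_{\mathrm{max}}\theta_{\mathrm{max}}^{2}$.

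Finally I would combine the two pieces. Substituting the uniform lower bound gives $\|\hat{U}_{*}Q-U_{*}\|_{F}\le\frac{2\theta_{\mathrm{max}}\sqrt{n_{\mathrm{max}}}}{\theta_{\mathrm{min}}}\|\hat{U}Q-U\|_{F}$, and then plugging in the estimate of Lemma \ref{BoundUhatU}, namely $\|\hat{U}Q-U\|_{F}=O\!\left(\frac{\sqrt{K}\|A_{\mathrm{sum}}-\Omega_{\mathrm{sum}}\|}{\theta_{\mathrm{min}}^{2}n_{\mathrm{min}}L}\right)$, produces the claimed rate $O\!\left(\frac{\theta_{\mathrm{max}}\sqrt{Kn_{\mathrm{max}}}\|A_{\mathrm{sum}}-\Omega_{\mathrm{sum}}\|}{\theta_{\mathrm{min}}^{3}n_{\mathrm{min}}L}\right)$. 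The normalization inequality and the length-preservation under $Q$ are routine; the crux of the argument is the explicit row-norm identity $\|U(i,:)\|_{F}=\theta(i)/\sqrt{\sum_{j\in\mathcal{C}_{\ell(i)}}\theta^{2}(j)}$ together with its uniform lower bound, which is precisely where the degree-heterogeneity parameters $\theta_{\mathrm{min}},\theta_{\mathrm{max}}$ and the community-size parameter $n_{\mathrm{max}}$ enter the final error rate.
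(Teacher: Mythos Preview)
Your proof is correct and follows essentially the same approach as the paper: bound $\|\hat{U}_{*}Q-U_{*}\|_{F}\le\frac{2}{\min_{i}\|U(i,:)\|_{F}}\|\hat{U}Q-U\|_{F}$, establish $\min_{i}\|U(i,:)\|_{F}\ge\theta_{\mathrm{min}}/(\theta_{\mathrm{max}}\sqrt{n_{\mathrm{max}}})$, and invoke Lemma~\ref{BoundUhatU}. The only difference is that you derive the row-norm lower bound explicitly via the representation $U=\Theta Z(Z'\Theta^{2}Z)^{-1/2}O$, whereas the paper simply cites this bound from Lemma~7 of \cite{qing2023community}; your version is thus more self-contained but otherwise identical.
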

\begin{proof}
Set $\mu=\mathrm{min}_{i\in[n]}\|U(i,:)\|_{F}$. We have $\|\hat{U}_{*}Q-U_{*}\|_{F}\leq\frac{2\|\hat{U}Q-U\|_{F}}{\mu}$ by basic algebra. Since $\frac{1}{\mu}\leq\frac{\theta_{\mathrm{max}}\sqrt{n_{\mathrm{max}}}}{\theta_{\mathrm{min}}}$ by the proof of Lemma 7 \citep{qing2023community}, we have $\|\hat{U}_{*}Q-U_{*}\|_{F}\leq\frac{2\theta_{\mathrm{max}}\sqrt{n_{\mathrm{max}}}\|\hat{U}Q-U\|_{F}}{\theta_{\mathrm{min}}}$. By Lemma \ref{BoundUhatU}, this lemma holds.
\end{proof}

By the proof of Theorem 2 \citep{qing2023community}, we know that $\hat{f}_{NSoA}=O(\frac{K}{n_{\mathrm{min}}}\|\hat{U}_{*}Q-U_{*}\|^{2}_{F})$. Since $\|\hat{U}_{*}Q-U_{*}\|_{F}=O(\frac{\theta_{\mathrm{max}}\sqrt{n_{\mathrm{max}}}\|\hat{U}Q-U\|_{F}}{\theta_{\mathrm{min}}})=O(\frac{\theta_{\mathrm{max}}\sqrt{Kn_{\mathrm{max}}}\|A_{\mathrm{sum}}-\Omega_{\mathrm{sum}}\|}{\theta^{3}_{\mathrm{min}}n_{\mathrm{min}}|\lambda_{K}(\sum_{l\in[L]}B_{l})|})$ by the proofs of Lemmas \ref{BoundUhatU} and \ref{BoundUstarhatUsatr}, we have $\hat{f}_{NSoA}=O(\frac{K^{2}\theta^{2}_{\mathrm{max}}n_{\mathrm{max}}\|A_{\mathrm{sum}}-\Omega_{\mathrm{sum}}\|^{2}}{\theta^{6}_{\mathrm{min}}n^{3}_{\mathrm{min}}\lambda^{2}_{K}(\sum_{l\in[L]}B_{l})})$. By Assumption \ref{Assum11}, we get $\hat{f}_{NSoA}=O(\frac{\theta^{2}_{\mathrm{max}}K^{2}n_{\mathrm{max}}\|A_{\mathrm{sum}}-\Omega_{\mathrm{sum}}\|^{2}}{\theta^{6}_{\mathrm{min}}n^{3}_{\mathrm{min}}L^{2}})$. Finally, this theorem holds by Lemma \ref{boundAsum}.
\end{proof}
\subsection{Proof of Theorem \ref{mainNDSoSA}}
\begin{proof}
Similar to the proof of Theorem \ref{mainNSoA}, first we present the following two lemmas.
\begin{lem}\label{BoundVhatV}
Under $\mathrm{MLDCSBM}(Z,\Theta,\mathcal{B})$, we have
\begin{align*}
\|\hat{V}Q_{s}-V\|_{F}=O(\frac{\sqrt{K}\|S_{\mathrm{sum}}-\tilde{S}_{\mathrm{sum}}\|}{\theta^{4}_{\mathrm{min}}n^{2}_{\mathrm{min}}L}),
\end{align*}
where $Q_{s}$ is a $K$-by-$K$ orthogonal matrix.
\end{lem}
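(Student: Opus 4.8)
The plan is to mirror the proof of Lemma \ref{BoundUhatU}, replacing the pair $(A_{\mathrm{sum}},\Omega_{\mathrm{sum}})$ by $(S_{\mathrm{sum}},\tilde{S}_{\mathrm{sum}})$. First I would invoke the Davis--Kahan type bound of Lemma 5.1 in \citep{lei2015consistency}, applied to the rank-$K$ symmetric matrices $S_{\mathrm{sum}}$ and $\tilde{S}_{\mathrm{sum}}$ whose leading $K$ eigenvectors are $\hat{V}$ and $V$, to obtain
\begin{align*}
\|\hat{V}Q_{s}-V\|_{F}\leq\frac{2\sqrt{2K}\|S_{\mathrm{sum}}-\tilde{S}_{\mathrm{sum}}\|}{|\lambda_{K}(\tilde{S}_{\mathrm{sum}})|}
\end{align*}
for some $K\times K$ orthogonal matrix $Q_{s}$. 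Given this inequality, the entire task reduces to producing a sharp lower bound on $|\lambda_{K}(\tilde{S}_{\mathrm{sum}})|$ of order $\theta_{\mathrm{min}}^{4}n_{\mathrm{min}}^{2}L$.

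The key step is this spectral lower bound, which I would build on the factorization already established in the proof of Lemma \ref{EigOsum2}: writing $M=Z'\Theta^{2}Z$ and $P=\sum_{l\in[L]}B_{l}MB_{l}$, one has $\tilde{S}_{\mathrm{sum}}=\Theta Z P Z'\Theta$. Since $\Theta Z$ has full column rank $K$, the nonzero eigenvalues of $\tilde{S}_{\mathrm{sum}}$ coincide with those of $P^{1/2}(\Theta Z)'(\Theta Z)P^{1/2}=P^{1/2}MP^{1/2}$, which yields $\lambda_{K}(\tilde{S}_{\mathrm{sum}})\geq\lambda_{K}(M)\lambda_{K}(P)$. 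Because $M$ is the $K\times K$ diagonal matrix with $k$-th entry $\sum_{i\in\mathcal{C}_{k}}\theta^{2}(i)$, we have $\lambda_{K}(M)\geq\theta_{\mathrm{min}}^{2}n_{\mathrm{min}}$. To bound $\lambda_{K}(P)$ from below, I would use the Loewner inequality $B_{l}MB_{l}\succeq\lambda_{K}(M)B_{l}^{2}$ (valid since $B_{l}$ is symmetric and $M\succeq\lambda_{K}(M)I_{K\times K}$), apply it termwise before summing to get $P\succeq\lambda_{K}(M)\sum_{l\in[L]}B_{l}^{2}$, and convert this to $\lambda_{K}(P)\geq\theta_{\mathrm{min}}^{2}n_{\mathrm{min}}|\lambda_{K}(\sum_{l\in[L]}B_{l}^{2})|$ by monotonicity of $\lambda_{K}$. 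Assumption \ref{Assum22} then supplies $|\lambda_{K}(\sum_{l\in[L]}B_{l}^{2})|\geq c_{2}L$, so that
\begin{align*}
|\lambda_{K}(\tilde{S}_{\mathrm{sum}})|\geq\theta_{\mathrm{min}}^{2}n_{\mathrm{min}}\cdot\theta_{\mathrm{min}}^{2}n_{\mathrm{min}}\cdot c_{2}L=c_{2}\theta_{\mathrm{min}}^{4}n_{\mathrm{min}}^{2}L.
\end{align*}
Substituting this into the Davis--Kahan bound immediately gives the claimed rate $O(\sqrt{K}\|S_{\mathrm{sum}}-\tilde{S}_{\mathrm{sum}}\|/(\theta_{\mathrm{min}}^{4}n_{\mathrm{min}}^{2}L))$.

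The main obstacle I anticipate is the eigenvalue lower bound itself, specifically the two places where eigenvalues must be propagated through a nested quadratic form: reducing $\lambda_{K}(\Theta Z P Z'\Theta)$ to $\lambda_{K}(P^{1/2}MP^{1/2})$, and then peeling off the inner factor $M$ from the sum $P=\sum_{l\in[L]}B_{l}MB_{l}$. The second reduction is the more delicate one, because $P$ is a sum of conjugated copies of $M$ rather than a single product, so one must apply the Loewner bound to each summand $B_{l}MB_{l}$ separately and only then sum and pass to $\lambda_{K}$. Everything else---the application of Lemma 5.1 of \citep{lei2015consistency} and the diagonal evaluation of $\lambda_{K}(M)$---is routine and parallels the corresponding steps in the proof of Lemma \ref{BoundUhatU}.
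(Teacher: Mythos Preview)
Your proposal is correct and follows the same two-step skeleton as the paper: apply Lemma~5.1 of \citep{lei2015consistency} to get the Davis--Kahan bound, then lower-bound $|\lambda_{K}(\tilde{S}_{\mathrm{sum}})|$ by $c\,\theta_{\mathrm{min}}^{4}n_{\mathrm{min}}^{2}L$ using Assumption~\ref{Assum22}. Where you differ is in the mechanics of that lower bound. The paper squares $\tilde{S}_{\mathrm{sum}}$, repeatedly applies cyclic-permutation identities for nonzero eigenvalues and product-type inequalities $\lambda_{K}(AB)\geq\lambda_{K}(A)\lambda_{K}(B)$ to peel off the $\Theta$ and $Z'Z$ factors, and at one point passes from $(\sum_{l}B_{l}Z'\Theta^{2}ZB_{l})^{2}$ to $(Z'\Theta^{2}Z)^{2}(\sum_{l}B_{l}^{2})^{2}$ under an $O(\cdot)$ sign. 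Your route---reducing to the $K\times K$ matrix $P^{1/2}MP^{1/2}$ via the $WW'/W'W$ spectrum identity and then using the Loewner bound $B_{l}MB_{l}\succeq\lambda_{K}(M)B_{l}^{2}$ termwise before summing---is more transparent and avoids that heuristic replacement step; it also exploits directly that $\tilde{S}_{\mathrm{sum}}$, $M$, $P$, and $\sum_{l}B_{l}^{2}$ are all PSD so that magnitude ordering and Loewner monotonicity of $\lambda_{K}$ coincide. Both approaches land on the same estimate, but yours is the cleaner justification.
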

\begin{proof}
According to Lemma 5.1 \citep{lei2015consistency}, there exists n orthogonal matrix $Q_{s}$ such that
\begin{align*}
\|\hat{V}Q_{s}-V\|_{F}\leq\frac{2\sqrt{2K}\|S_{\mathrm{sum}}-\tilde{S}_{\mathrm{sum}}\|}{|\lambda_{K}(\tilde{S}_{\mathrm{sum}})|}.
\end{align*}
By Assumption \ref{Assum22} and the fact that $Z'\Theta^{2}Z$ is a diagonal matrix, we have
\begin{align*}
&|\lambda_{K}(\tilde{S}_{\mathrm{sum}})|=\sqrt{\lambda_{K}((\sum_{l\in[L]}\Omega^{2}_{l})^{2})}\\
&=\sqrt{\lambda_{K}((\sum_{l\in[L]}\Theta ZB_{l}Z'\Theta^{2}ZB_{l}Z'\Theta)^{2})}\\
&=\sqrt{\lambda_{K}(\Theta^{2}(\sum_{l\in[L]}ZB_{l}Z'\Theta^{2}ZB_{l}Z')^{2}\Theta^{2})}\\
&=\sqrt{\lambda_{K}(\Theta^{4}(\sum_{l\in[L]}ZB_{l}Z'\Theta^{2}ZB_{l}Z')^{2})}\\
&\geq\sqrt{\lambda_{K}(\Theta^{4})\lambda_{K}((\sum_{l\in[L]}ZB_{l}Z'\Theta^{2}ZB_{l}Z')^{2})}\\
&=\sqrt{\lambda_{K}(\Theta^{4})\lambda_{K}(Z'Z(\sum_{l\in[L]}B_{l}Z'\Theta^{2}ZB_{l})^{2}Z'Z)}\\
&\geq\sqrt{\lambda_{K}(\Theta^{4})\lambda^{2}_{K}(Z'Z)\lambda_{K}((\sum_{l\in[L]}B_{l}Z'\Theta^{2}ZB_{l})^{2})}\\
&=O(\sqrt{\lambda_{K}(\Theta^{4})\lambda^{2}_{K}(Z'Z)\lambda_{K}((Z'\Theta^{2}Z)^{2}(\sum_{l\in[L]}B^{2}_{l})^{2})})\\
&=O(\sqrt{\lambda^{8}_{K}(\Theta)\lambda^{4}_{K}(Z'Z)\lambda^{2}_{K}(\sum_{l\in[L]}B^{2}_{l})})\\
&=O(\lambda^{4}_{K}(\Theta)\lambda^{2}_{K}(Z'Z)|\lambda_{K}(\sum_{l\in[L]}B^{2}_{l})|)\\
&=O(\theta^{4}_{\mathrm{min}}n^{2}_{\mathrm{min}}|\lambda_{K}(\sum_{l\in[L]}B^{2}_{l})|)\\
&=O(\theta^{4}_{\mathrm{min}}n^{2}_{\mathrm{min}}L).
\end{align*}
Then we get $\|\hat{V}Q_{s}-V\|_{F}=O(\frac{\sqrt{K}\|S_{\mathrm{sum}}-\tilde{S}_{\mathrm{sum}}\|}{\theta^{4}_{\mathrm{min}}n^{2}_{\mathrm{min}}L})$.
\end{proof}
\begin{lem}\label{BoundVstarhatVsatr}
Under $\mathrm{MLDCSBM}(Z,\Theta,\mathcal{B})$, we have
\begin{align*}
\|\hat{V}_{*}Q_{s}-V_{*}\|_{F}=O(\frac{\theta_{\mathrm{max}}\sqrt{Kn_{\mathrm{max}}}\|S_{\mathrm{sum}}-\tilde{S}_{\mathrm{sum}}\|}{\theta^{5}_{\mathrm{min}}n^{2}_{\mathrm{min}}L}),
\end{align*}
\end{lem}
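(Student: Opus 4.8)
The plan is to mirror the argument already used for Lemma \ref{BoundUstarhatUsatr}, since $V$ plays for $\tilde{S}_{\mathrm{sum}}$ exactly the role that $U$ plays for $\Omega_{\mathrm{sum}}$. First I would set $\mu_{s}=\mathrm{min}_{i\in[n]}\|V(i,:)\|_{F}$ and invoke the standard row-normalization perturbation inequality, which yields $\|\hat{V}_{*}Q_{s}-V_{*}\|_{F}\leq\frac{2\|\hat{V}Q_{s}-V\|_{F}}{\mu_{s}}$ by basic algebra. This reduces the claim to two ingredients: a lower bound on $\mu_{s}$, and the bound on $\|\hat{V}Q_{s}-V\|_{F}$ already established in Lemma \ref{BoundVhatV}.

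The crux is the lower bound on $\mu_{s}$. By Lemma \ref{EigOsum2} the columns of $V$ span the same space as $\Theta Z$, so I can write $V=\Theta Z M$ for some $K\times K$ matrix $M$; consequently $V(i,:)=\theta(i)M(\ell(i),:)$, hence $\|V(i,:)\|_{F}=\theta(i)\|M(\ell(i),:)\|_{F}$. The orthonormality $V'V=I_{K\times K}$ becomes $M'(Z'\Theta^{2}Z)M=I_{K\times K}$. Writing $D=Z'\Theta^{2}Z$, which is the $K\times K$ diagonal matrix with $(k,k)$ entry $d_{k}=\sum_{i:\ell(i)=k}\theta^{2}(i)$, I see that $D^{1/2}M$ is orthogonal, so $M=D^{-1/2}O$ for some orthogonal $O$ and therefore $\|M(k,:)\|_{F}=d_{k}^{-1/2}$. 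Combining these gives $\|V(i,:)\|_{F}=\theta(i)/\sqrt{d_{\ell(i)}}$, and using $\theta(i)\geq\theta_{\mathrm{min}}$ together with $d_{\ell(i)}\leq n_{\mathrm{max}}\theta^{2}_{\mathrm{max}}$ yields $\mu_{s}\geq\frac{\theta_{\mathrm{min}}}{\theta_{\mathrm{max}}\sqrt{n_{\mathrm{max}}}}$, i.e. $\frac{1}{\mu_{s}}\leq\frac{\theta_{\mathrm{max}}\sqrt{n_{\mathrm{max}}}}{\theta_{\mathrm{min}}}$. (Alternatively, this is the analogue for $V$ of the $1/\mu$ estimate from the proof of Lemma 7 of \citep{qing2023community}.)

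Finally, substituting $\frac{1}{\mu_{s}}\leq\frac{\theta_{\mathrm{max}}\sqrt{n_{\mathrm{max}}}}{\theta_{\mathrm{min}}}$ and the Lemma \ref{BoundVhatV} bound $\|\hat{V}Q_{s}-V\|_{F}=O(\frac{\sqrt{K}\|S_{\mathrm{sum}}-\tilde{S}_{\mathrm{sum}}\|}{\theta^{4}_{\mathrm{min}}n^{2}_{\mathrm{min}}L})$ into the perturbation inequality produces $\|\hat{V}_{*}Q_{s}-V_{*}\|_{F}=O(\frac{\theta_{\mathrm{max}}\sqrt{Kn_{\mathrm{max}}}\|S_{\mathrm{sum}}-\tilde{S}_{\mathrm{sum}}\|}{\theta^{5}_{\mathrm{min}}n^{2}_{\mathrm{min}}L})$, as required. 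The only genuine work lies in the lower bound on $\mu_{s}$, which amounts to reading off the dependence on $\theta$ and on community sizes from the factorization $V=\Theta Z M$; the rest is the perturbation inequality plus Lemma \ref{BoundVhatV} and is routine. I expect no real obstacle beyond carefully tracking the diagonal matrix $D$ and the orthogonal factor $O$.
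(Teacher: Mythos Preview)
Your proposal is correct and follows essentially the same approach as the paper: set $\mu_{s}=\min_{i}\|V(i,:)\|_{F}$, use the row-normalization perturbation inequality $\|\hat{V}_{*}Q_{s}-V_{*}\|_{F}\leq 2\|\hat{V}Q_{s}-V\|_{F}/\mu_{s}$, bound $1/\mu_{s}\leq\theta_{\mathrm{max}}\sqrt{n_{\mathrm{max}}}/\theta_{\mathrm{min}}$, and combine with Lemma~\ref{BoundVhatV}. The only difference is that the paper simply cites the proof of Lemma~7 in \citep{qing2023community} for the $1/\mu_{s}$ estimate, whereas you spell out the factorization $V=\Theta Z M$ with $M=D^{-1/2}O$ explicitly; your derivation of that bound is exactly what the cited lemma provides.
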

\begin{proof}
Set $\mu_{s}=\mathrm{min}_{i\in[n]}\|V(i,:)\|_{F}$. We have $\|\hat{V}_{*}Q_{s}-V_{*}\|_{F}\leq\frac{2\|\hat{V}Q_{s}-V\|_{F}}{\mu_{s}}$. Since $\frac{1}{\mu_{s}}\leq\frac{\theta_{\mathrm{max}}\sqrt{n_{\mathrm{max}}}}{\theta_{\mathrm{min}}}$ by the proof of Lemma 7 \citep{qing2023community}, this lemma holds based on the results in Lemma \ref{BoundVhatV}.
\end{proof}

According to the proof of Theorem 2 \citep{qing2023community}, $\hat{f}_{NDSoSA}=O(\frac{K}{n_{\mathrm{min}}}\|\hat{V}_{*}Q_{s}-V_{*}\|^{2}_{F})$. By Lemma \ref{BoundVstarhatVsatr}, we get $\hat{f}_{NDSoSA}=O(\frac{\theta^{2}_{\mathrm{max}}K^{2}n_{\mathrm{max}}\|S_{\mathrm{sum}}-\tilde{S}_{\mathrm{sum}}\|^{2}}{\theta^{10}_{\mathrm{min}}n^{5}_{\mathrm{min}}\lambda^{2}_{K}(\sum_{l\in[L]}B^{2}_{l})})=O(\frac{\theta^{2}_{\mathrm{max}}K^{2}n_{\mathrm{max}}\|S_{\mathrm{sum}}-\tilde{S}_{\mathrm{sum}}\|^{2}}{\theta^{10}_{\mathrm{min}}n^{5}_{\mathrm{min}}L^{2}})$. Finally, by Lemma \ref{boundSsum}, we complete the proof.
\end{proof}
\subsection{Consistency results for NSoSA}\label{MainNSoSA}
Here, we provide consistency results for NSoSA and subsequently demonstrate why NDSoSA consistently outperforms NSoSA. Recall that NSoSA employs the K-means algorithm on the normalized version of the matrix constructed from the leading $K$ eigenvectors of $\sum_{l\in[L]}A^{2}_{l}$, to establish NSoSA's consistency, we need to obtain the upper bound of $\|\sum_{_{l\in[L]}}(A^{2}_{l}-\Omega^{2}_{l})\|$. Since $\|\sum_{_{l\in[L]}}(A^{2}_{l}-\Omega^{2}_{l})\|=\|S_{\mathrm{sum}}-\tilde{S}_{\mathrm{sum}}+\sum_{l\in[L]}D_{l}\|$, we have $|\sum_{_{l\in[L]}}(A^{2}_{l}-\Omega^{2}_{l})\|\leq\|S_{\mathrm{sum}}-\tilde{S}_{\mathrm{sum}}\|+\|\sum_{l\in[L]}D_{l}\|=\|S_{\mathrm{sum}}-\tilde{S}_{\mathrm{sum}}\|+\mathrm{max}_{i\in[n]}\sum_{l\in[L]}D_{l}(i,i)=|S_{\mathrm{sum}}-\tilde{S}_{\mathrm{sum}}\|+\mathrm{max}_{i\in[n]}\sum_{l\in[L]}\sum_{j\in[n]}A_{l}(i,j)$, where the upper bound of $\|S_{\mathrm{sum}}-\tilde{S}_{\mathrm{sum}}\|$ is given in Lemma \ref{boundSsum}. The following lemma bounds $\mathrm{max}_{i\in[n]}\sum_{l\in[L]}\sum_{j\in[n]}A_{l}(i,j)$.
\begin{lem}\label{maxDil}
Under $\mathrm{MLDCSBM}(Z,\Theta,\mathcal{B})$, when $\theta_{\mathrm{max}}\|\theta\|_{1}L\geq\mathrm{log}(n+L)$, with probability at least $1-O(\frac{1}{n+L})$, we have
\begin{align*}
\|\sum_{l\in[L]}D_{l}\|=O(\theta_{\mathrm{max}}\|\theta\|_{1}L).
\end{align*}
\end{lem}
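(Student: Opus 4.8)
The plan is to exploit the fact that $\sum_{l\in[L]}D_{l}$ is a diagonal matrix, so its spectral norm equals its largest diagonal entry. Writing $W_{i}=\sum_{l\in[L]}\sum_{j\in[n]}A_{l}(i,j)$ for the total degree of node $i$ across all layers, we have $\|\sum_{l\in[L]}D_{l}\|=\mathrm{max}_{i\in[n]}W_{i}$, which reduces the claim to a scalar concentration-plus-union-bound argument. First I would record the mean and variance of each $W_{i}$. For fixed $i$, the summands $\{A_{l}(i,j)\}_{l\in[L],j\in[n]}$ are independent Bernoulli variables, and since $\Omega_{l}(i,j)=\theta(i)\theta(j)B_{l}(\ell(i),\ell(j))\leq\theta_{\mathrm{max}}\theta(j)$, the mean satisfies $\mu_{i}:=\mathbb{E}[W_{i}]=\sum_{l\in[L]}\sum_{j\in[n]}\Omega_{l}(i,j)\leq\theta_{\mathrm{max}}\|\theta\|_{1}L$, while the variance obeys $\mathrm{Var}(W_{i})=\sum_{l\in[L]}\sum_{j\in[n]}\Omega_{l}(i,j)(1-\Omega_{l}(i,j))\leq\mu_{i}\leq\theta_{\mathrm{max}}\|\theta\|_{1}L$.

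Next I would apply the scalar Bernstein inequality (the dimension-one specialization of Theorem \ref{Bern}, equivalently its classical scalar form) to $W_{i}-\mu_{i}$, using $R=1$ because $|A_{l}(i,j)-\Omega_{l}(i,j)|\leq1$ and the variance proxy $\sigma^{2}\leq\theta_{\mathrm{max}}\|\theta\|_{1}L$. Choosing $t=\frac{\alpha+1+\sqrt{(\alpha+1)(\alpha+19)}}{3}\sqrt{\theta_{\mathrm{max}}\|\theta\|_{1}L\mathrm{log}(n+L)}$ exactly as in the proof of Lemma \ref{boundAsum}, the identical algebraic simplification gives $\mathbb{P}(W_{i}\geq\mu_{i}+t)\leq\frac{1}{(n+L)^{\alpha+1}}$ for each fixed $i$; the extra factor $n$ appearing in Lemma \ref{boundAsum} is absent here because $W_{i}$ is a scalar. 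A union bound over the $n$ nodes then yields $\mathbb{P}(\mathrm{max}_{i\in[n]}W_{i}\geq\mathrm{max}_{i\in[n]}\mu_{i}+t)\leq n\cdot\frac{1}{(n+L)^{\alpha+1}}\leq\frac{1}{(n+L)^{\alpha}}$, and setting $\alpha=1$ delivers the stated probability $1-O(\frac{1}{n+L})$.

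Finally I would absorb the fluctuation term into the mean term. On the good event, $\mathrm{max}_{i\in[n]}W_{i}\leq\theta_{\mathrm{max}}\|\theta\|_{1}L+O(\sqrt{\theta_{\mathrm{max}}\|\theta\|_{1}L\mathrm{log}(n+L)})$, and invoking the hypothesis $\theta_{\mathrm{max}}\|\theta\|_{1}L\geq\mathrm{log}(n+L)$ gives $\sqrt{\theta_{\mathrm{max}}\|\theta\|_{1}L\mathrm{log}(n+L)}\leq\theta_{\mathrm{max}}\|\theta\|_{1}L$, so the deviation is dominated by the mean and $\|\sum_{l\in[L]}D_{l}\|=O(\theta_{\mathrm{max}}\|\theta\|_{1}L)$, as required. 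The only point needing care, and the closest thing to an obstacle, is the independence structure feeding the concentration step: within a single row $i$ the variables $A_{l}(i,j)$ are genuinely independent across $(l,j)$, so the row-wise scalar route is clean, whereas a matrix-Bernstein formulation applied to $\sum_{l\in[L]}(D_{l}-\mathbb{E}[D_{l}])$ would have to contend with the symmetry coupling between the $(i,i)$ and $(j,j)$ diagonal contributions that both involve the single variable $A_{l}(i,j)=A_{l}(j,i)$; phrasing the argument row-wise sidesteps this dependence entirely.
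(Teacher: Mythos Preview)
Your proposal is correct and follows essentially the same approach as the paper's proof: both reduce to a row-wise scalar Bernstein bound on the total degree $W_i=\sum_{l}\sum_{j}A_l(i,j)$, using the same choice of deviation level inherited from the proof of Lemma~\ref{boundAsum}, and then absorb the fluctuation into the mean via the hypothesis $\theta_{\mathrm{max}}\|\theta\|_{1}L\geq\mathrm{log}(n+L)$. Your version is in fact slightly more careful, making the union bound over $i\in[n]$ explicit and setting $\alpha=1$ accordingly, whereas the paper's proof sets $\alpha=0$ without spelling out how the maximum over $i$ is handled.
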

\begin{proof}
Since $\mathbb{E}[A_{l}(i,j)-\Omega_{l}(i,j)]=0, |A_{l}(i,j)-\Omega_{l}(i,j)|\leq1$ for $l\in[L], i\in[n], j\in[n]$, and $\sigma_{0}=\sum_{l\in[L]}\sum_{j\in[n]}\mathbb{E}[(A_{l}(i,j)-\Omega_{l}(i,j))^{2}]=\sum_{l\in[L]}\sum_{j\in[n]}\Omega_{l}(i,j)(1-\Omega_{l}(i,j))\leq\sum_{l\in[L]}\sum_{j\in[n]}\theta(i)\theta(j)\leq\theta_{\mathrm{max}}\|\theta\|_{1}L$, for any $t_{0}>0$, by Theorem \ref{Bern}, we have
\begin{align*}
\mathbb{P}(|\sum_{l\in[L]}\sum_{j\in[n]}(A_{l}(i,j)-\Omega_{l}(i,j))|\geq t_{0})\leq\mathrm{exp}(\frac{-t^{2}_{0}/2}{\theta_{\mathrm{max}}\|\theta\|_{1}L+t_{0}/3}).
\end{align*}
Set $t_{0}=\frac{\alpha+1+\sqrt{(\alpha+1)(\alpha+19)}\sqrt{\theta_{\mathrm{max}}\|\theta\|_{1}L\mathrm{log}(n+L)}}{3}$ for any $\alpha\geq0$, we have
\begin{align*}
&\mathbb{P}(|\sum_{l\in[L]}\sum_{j\in[n]}(A_{l}(i,j)-\Omega_{l}(i,j))|\geq t_{0})\\
&\leq\mathrm{exp}(\frac{-(\alpha+1)\mathrm{log}(n+L)}{\frac{18}{(\sqrt{\alpha+1}+\sqrt{\alpha+19})^{2}}+\frac{2\sqrt{\alpha+1}}{\sqrt{\alpha+1}+\sqrt{\alpha+19}}\sqrt{\frac{\mathrm{log}(n+L)}{\theta_{\mathrm{max}}\|\theta\|_{1}L}}})\\
&\leq\frac{1}{(n+L)^{\alpha+1}},
\end{align*}
where the last inequality holds by the assumption $\theta_{\mathrm{max}}\|\theta\|_{1}L\geq\mathrm{log}(n+L)$. Thus, with probability at least $1-O(\frac{1}{n+L})$, we have
\begin{align*}
|\sum_{l\in[L]}\sum_{j\in[n]}(A_{l}(i,j)-\Omega_{l}(i,j))|\leq t_{0},
\end{align*}
which gives that
\begin{align*}
\sum_{l\in[L]}\sum_{j\in[n]}A_{l}(i,j)&\leq t_{0}+\sum_{l\in[L]}\sum_{j\in[n]}\Omega_{l}(i,j)\\
&\leq t_{0}+\sum_{l\in[L]}\sum_{j\in[n]}\theta(i)\theta(j)\\
&\leq t_{0}+\theta_{\mathrm{max}}\|\theta\|_{1}L\\
&=O(\theta_{\mathrm{max}}\|\theta\|_{1}L),
\end{align*}
where the last equality holds by the assumption $\theta_{\mathrm{max}}\|\theta\|_{1}L\geq\mathrm{log}(n+L)$. Set $\alpha=0$, this lemma holds.
\end{proof}
By Lemma \ref{maxDil}, we have
\begin{align*}
\|\sum_{l\in[L]}(A^{2}_{l}-\Omega^{2}_{l})\|\leq\|S_{\mathrm{sum}}-\tilde{S}_{\mathrm{sum}}\|+O(\theta_{\mathrm{max}}\|\theta\|_{1}L).
\end{align*}
Following a similar proof of Theorem \ref{mainNDSoSA}, we know that NSoSA's error rate is
\begin{align*}
\hat{f}_{NSoSA}&=O(\frac{\theta^{2}_{\mathrm{max}}K^{2}n_{\mathrm{max}}\|\sum_{l\in[L]}(A^{2}_{l}-\Omega^{2}_{l})\|^{2}}{\theta^{10}_{\mathrm{min}}n^{5}_{\mathrm{min}}L^{2}})\\
&=O(\frac{\theta^{2}_{\mathrm{max}}K^{2}n_{\mathrm{max}}(\|S_{\mathrm{sum}}-\tilde{S}_{\mathrm{sum}}\|+O(\theta_{\mathrm{max}}\|\theta\|_{1}L))^{2}}{\theta^{10}_{\mathrm{min}}n^{5}_{\mathrm{min}}L^{2}})\\
&=\hat{f}_{NDSoSA}+O(\frac{K^{2}\theta^{4}_{\mathrm{max}}\|\theta\|^{2}_{1}n_{\mathrm{max}}}{\theta^{10}_{\mathrm{min}}n^{5}_{\mathrm{min}}}).
\end{align*}
It is clear that $\hat{f}_{NSoSA}\geq\hat{f}_{NDSoSA}$ and this explains why NDSoSA almost always outperforms NSoSA. Furthermore, under the same conditions as Corollary \ref{CorSsum}, we have
\begin{align*}
\hat{f}_{NSoSA}&=\hat{f}_{NDSoSA}+O(\frac{1}{\rho^{2}n^{2}})\\
&=O(\frac{\mathrm{log}(n+L)}{\rho^{2}n^{2}L})+O(\frac{1}{n^{2}})+O(\frac{1}{\rho^{2}n^{2}})\\
&=O(\frac{\mathrm{log}(n+L)}{\rho^{2}n^{2}L})+O(\frac{1}{\rho^{2}n^{2}}).
\end{align*}
Following a similar analysis as that of Section \ref{CompareDA}, we find that NSoA outperforms NSoSA only when $L\gg\mathrm{log}(n+L)$.
\bibliographystyle{elsarticle-num}
\bibliography{refMLDCSBM}

\begin{thebibliography}{10}
\expandafter\ifx\csname url\endcsname\relax
  \def\url#1{\texttt{#1}}\fi
\expandafter\ifx\csname urlprefix\endcsname\relax\def\urlprefix{URL }\fi
\expandafter\ifx\csname href\endcsname\relax
  \def\href#1#2{#2} \def\path#1{#1}\fi

\bibitem{mucha2010community}
P.~J. Mucha, T.~Richardson, K.~Macon, M.~A. Porter, J.-P. Onnela, Community
  structure in time-dependent, multiscale, and multiplex networks, Science
  328~(5980) (2010) 876--878.

\bibitem{kivela2014multilayer}
M.~Kivel{\"a}, A.~Arenas, M.~Barthelemy, J.~P. Gleeson, Y.~Moreno, M.~A.
  Porter, Multilayer networks, Journal of Complex Networks 2~(3) (2014)
  203--271.

\bibitem{boccaletti2014structure}
S.~Boccaletti, G.~Bianconi, R.~Criado, C.~I. Del~Genio, J.~G{\'o}mez-Gardenes,
  M.~Romance, I.~Sendina-Nadal, Z.~Wang, M.~Zanin, The structure and dynamics
  of multilayer networks, Physics Reports 544~(1) (2014) 1--122.

\bibitem{pilosof2017multilayer}
S.~Pilosof, M.~A. Porter, M.~Pascual, S.~K{\'e}fi, The multilayer nature of
  ecological networks, Nature Ecology \& Evolution 1~(4) (2017) 0101.

\bibitem{de2023more}
M.~De~Domenico, More is different in real-world multilayer networks, Nature
  Physics 19~(9) (2023) 1247--1262.

\bibitem{papalexakis2013more}
E.~E. Papalexakis, L.~Akoglu, D.~Ience, Do more views of a graph help?
  community detection and clustering in multi-graphs, in: Proceedings of the
  16th International Conference on Information Fusion, IEEE, 2013, pp.
  899--905.

\bibitem{jia2023clustering}
W.~Jia, X.~Ma, Clustering of multi-layer networks with structural relations and
  conservation of features, Applied Soft Computing 140 (2023) 110272.

\bibitem{narayanan2010simultaneous}
M.~Narayanan, A.~Vetta, E.~E. Schadt, J.~Zhu, Simultaneous clustering of
  multiple gene expression and physical interaction datasets, PLoS
  Computational Biology 6~(4) (2010) e1000742.

\bibitem{bakken2016comprehensive}
T.~E. Bakken, J.~A. Miller, S.-L. Ding, S.~M. Sunkin, K.~A. Smith, L.~Ng,
  A.~Szafer, R.~A. Dalley, J.~J. Royall, T.~Lemon, et~al., A comprehensive
  transcriptional map of primate brain development, Nature 535~(7612) (2016)
  367--375.

\bibitem{zhang2017finding}
J.~Zhang, J.~Cao, Finding common modules in a time-varying network with
  application to the drosophila melanogaster gene regulation network, Journal
  of the American Statistical Association 112~(519) (2017) 994--1008.

\bibitem{holland1983stochastic}
P.~W. Holland, K.~B. Laskey, S.~Leinhardt, Stochastic blockmodels: First steps,
  Social Networks 5~(2) (1983) 109--137.

\bibitem{han2015consistent}
Q.~Han, K.~Xu, E.~Airoldi, Consistent estimation of dynamic and multi-layer
  block models, in: International Conference on Machine Learning, PMLR, 2015,
  pp. 1511--1520.

\bibitem{pensky2019spectral}
M.~Pensky, T.~Zhang, {Spectral clustering in the dynamic stochastic block
  model}, Electronic Journal of Statistics 13~(1) (2019) 678 -- 709.

\bibitem{paul2020spectral}
S.~Paul, Y.~Chen, {Spectral and matrix factorization methods for consistent
  community detection in multi-layer networks}, Annals of Statistics 48~(1)
  (2020) 230 -- 250.

\bibitem{jing2021community}
B.-Y. Jing, T.~Li, Z.~Lyu, D.~Xia, Community detection on mixture multilayer
  networks via regularized tensor decomposition, Annals of Statistics 49~(6)
  (2021) 3181--3205.

\bibitem{lei2020consistent}
J.~Lei, K.~Chen, B.~Lynch, Consistent community detection in multi-layer
  network data, Biometrika 107~(1) (2020) 61--73.

\bibitem{lei2023bias}
J.~Lei, K.~Z. Lin, Bias-adjusted spectral clustering in multi-layer stochastic
  block models, Journal of the American Statistical Association 118~(544)
  (2023) 2433--2445.

\bibitem{su2023spectral}
W.~Su, X.~Guo, X.~Chang, Y.~Yang, Spectral co-clustering in multi-layer
  directed networks, Computational Statistics \& Data Analysis (2024) 107987.

\bibitem{karrer2011stochastic}
B.~Karrer, M.~E. Newman, Stochastic blockmodels and community structure in
  networks, Physical Review E 83~(1) (2011) 016107.

\bibitem{qin2013regularized}
T.~Qin, K.~Rohe, Regularized spectral clustering under the degree-corrected
  stochastic blockmodel, Advances in Neural Information Processing Systems 26.

\bibitem{lei2015consistency}
J.~{Lei}, A.~{Rinaldo}, Consistency of spectral clustering in stochastic block
  models, Annals of Statistics 43~(1) (2015) 215--237.

\bibitem{SCORE}
J.~{Jin}, {Fast community detection by SCORE}, Annals of Statistics 43~(1)
  (2015) 57--89.

\bibitem{chen2018convexified}
Y.~Chen, X.~Li, J.~Xu, {Convexified modularity maximization for
  degree-corrected stochastic block models}, Annals of Statistics 46~(4) (2018)
  1573 -- 1602.

\bibitem{jing2022community}
B.-Y. Jing, T.~Li, N.~Ying, X.~Yu, Community detection in sparse networks using
  the symmetrized laplacian inverse matrix (slim), Statistica Sinica 32~(1)
  (2022) 1--22.

\bibitem{qing2024applications}
H.~Qing, J.~Wang, Applications of dual regularized laplacian matrix for
  community detection, Advances in Data Analysis and Classification 18~(4)
  (2024) 1001--1043.

\bibitem{paul2021null}
S.~Paul, Y.~Chen, Null models and community detection in multi-layer networks,
  Sankhya A (2021) 1--55.

\bibitem{paul2016consistent}
S.~Paul, Y.~Chen, {Consistent community detection in multi-relational data
  through restricted multi-layer stochastic blockmodel}, Electronic Journal of
  Statistics 10~(2) (2016) 3807 -- 3870.

\bibitem{joseph2016impact}
A.~{Joseph}, B.~{Yu}, Impact of regularization on spectral clustering, Annals
  of Statistics 44~(4) (2016) 1765--1791.

\bibitem{deng2024subsampling}
J.~Deng, D.~Huang, Y.~Ding, Y.~Zhu, B.~Jing, B.~Zhang, Subsampling spectral
  clustering for stochastic block models in large-scale networks, Computational
  Statistics \& Data Analysis 189 (2024) 107835.

\bibitem{tropp2012user}
J.~A. Tropp, User-friendly tail bounds for sums of random matrices, Foundations
  of Computational Mathematics 12 (2012) 389--434.

\bibitem{wang2020spectral}
Z.~Wang, Y.~Liang, P.~Ji, Spectral algorithms for community detection in
  directed networks, Journal of Machine Learning Research 21~(153) (2020)
  1--45.

\bibitem{mao2021estimating}
X.~Mao, P.~Sarkar, D.~Chakrabarti, Estimating mixed memberships with sharp
  eigenvector deviations, Journal of the American Statistical Association
  116~(536) (2021) 1928--1940.

\bibitem{xu2023covariate}
S.~Xu, Y.~Zhen, J.~Wang, Covariate-assisted community detection in multi-layer
  networks, Journal of Business \& Economic Statistics 41~(3) (2023) 915--926.

\bibitem{erdos1960evolution}
P.~Erd6s, A.~R{\'e}nyi, On the evolution of random graphs, Publ. Math. Inst.
  Hungar. Acad. Sci 5 (1960) 17--61.

\bibitem{abbe2018community}
E.~Abbe, Community detection and stochastic block models: recent developments,
  Journal of Machine Learning Research 18~(177) (2018) 1--86.

\bibitem{qing2024bipartite}
H.~Qing, J.~Wang, Bipartite mixed membership distribution-free model. a novel
  model for community detection in overlapping bipartite weighted networks,
  Expert Systems with Applications 235 (2024) 121088.

\bibitem{newman2004finding}
M.~E. Newman, M.~Girvan, Finding and evaluating community structure in
  networks, Physical Review E 69~(2) (2004) 026113.

\bibitem{newman2002assortative}
M.~E. Newman, Assortative mixing in networks, Physical Review Letters 89~(20)
  (2002) 208701.

\bibitem{newman2003mixing}
M.~E. Newman, Mixing patterns in networks, Physical Review E 67~(2) (2003)
  026126.

\bibitem{nepusz2008fuzzy}
T.~Nepusz, A.~Petr{\'o}czi, L.~N{\'e}gyessy, F.~Bazs{\'o}, Fuzzy communities
  and the concept of bridgeness in complex networks, Physical Review E 77~(1)
  (2008) 016107.

\bibitem{radicchi2004defining}
F.~Radicchi, C.~Castellano, F.~Cecconi, V.~Loreto, D.~Parisi, Defining and
  identifying communities in networks, Proceedings of the National Academy of
  Sciences 101~(9) (2004) 2658--2663.

\bibitem{arroyo2021inference}
J.~Arroyo, A.~Athreya, J.~Cape, G.~Chen, C.~E. Priebe, J.~T. Vogelstein,
  Inference for multiple heterogeneous networks with a common invariant
  subspace, Journal of Machine Learning Research 22~(142) (2021) 1--49.

\bibitem{nie2018multiview}
F.~Nie, L.~Tian, X.~Li, Multiview clustering via adaptively weighted
  procrustes, in: Proceedings of the 24th ACM SIGKDD International Conference
  on Knowledge Discovery \& Data Mining, 2018, pp. 2022--2030.

\bibitem{strehl2002cluster}
A.~Strehl, J.~Ghosh, Cluster ensembles---a knowledge reuse framework for
  combining multiple partitions, Journal of Mchine Learning Research 3~(Dec)
  (2002) 583--617.

\bibitem{danon2005comparing}
L.~Danon, A.~Diaz-Guilera, J.~Duch, A.~Arenas, Comparing community structure
  identification, Journal of Statistical Mechanics: Theory and Experiment
  2005~(09) (2005) P09008.

\bibitem{bagrow2008evaluating}
J.~P. Bagrow, Evaluating local community methods in networks, Journal of
  Statistical Mechanics: Theory and Experiment 2008~(05) (2008) P05001.

\bibitem{hubert1985comparing}
L.~Hubert, P.~Arabie, Comparing partitions, Journal of Classification 2 (1985)
  193--218.

\bibitem{vinh2009information}
N.~X. Vinh, J.~Epps, J.~Bailey, Information theoretic measures for clusterings
  comparison: is a correction for chance necessary?, in: Proceedings of the
  26th annual International Conference on Machine Learning, 2009, pp.
  1073--1080.

\bibitem{qing2024finding}
H.~Qing, Finding mixed memberships in categorical data, Information Sciences
  (2024) 120785.

\bibitem{snijders2006new}
T.~A. Snijders, P.~E. Pattison, G.~L. Robins, M.~S. Handcock, New
  specifications for exponential random graph models, Sociological Methodology
  36~(1) (2006) 99--153.

\bibitem{chen2006wiring}
B.~L. Chen, D.~H. Hall, D.~B. Chklovskii, Wiring optimization can relate
  neuronal structure and function, Proceedings of the National Academy of
  Sciences 103~(12) (2006) 4723--4728.

\bibitem{magnani2013combinatorial}
M.~Magnani, B.~Micenkova, L.~Rossi, Combinatorial analysis of multiple
  networks, arXiv preprint arXiv:1303.4986.

\bibitem{huang2009network}
W.-Q. Huang, X.-T. Zhuang, S.~Yao, A network analysis of the chinese stock
  market, Physica A: Statistical Mechanics and its Applications 388~(14) (2009)
  2956--2964.

\bibitem{chi2010network}
K.~T. Chi, J.~Liu, F.~C. Lau, A network perspective of the stock market,
  Journal of Empirical Finance 17~(4) (2010) 659--667.

\bibitem{li2022undirected}
B.~Li, Y.~Yang, Undirected and directed network analysis of the chinese stock
  market, Computational Economics 60~(3) (2022) 1155--1173.

\bibitem{cardillo2013emergence}
A.~Cardillo, J.~G{\'o}mez-Gardenes, M.~Zanin, M.~Romance, D.~Papo, F.~d. Pozo,
  S.~Boccaletti, Emergence of network features from multiplexity, Scientific
  Reports 3~(1) (2013) 1344.

\bibitem{de2015structural}
M.~De~Domenico, V.~Nicosia, A.~Arenas, V.~Latora, Structural reducibility of
  multilayer networks, Nature Communications 6~(1) (2015) 6864.

\bibitem{stark2006biogrid}
C.~Stark, B.-J. Breitkreutz, T.~Reguly, L.~Boucher, A.~Breitkreutz, M.~Tyers,
  Biogrid: a general repository for interaction datasets, Nucleic Acids
  Research 34~(suppl\_1) (2006) D535--D539.

\bibitem{qing2023community}
H.~Qing, J.~Wang, Community detection for weighted bipartite networks,
  Knowledge-Based Systems 274 (2023) 110643.

\bibitem{ma2021determining}
S.~Ma, L.~Su, Y.~Zhang, Determining the number of communities in
  degree-corrected stochastic block models, Journal of Machine Learning
  Research 22~(69) (2021) 1--63.

\bibitem{jin2023optimal}
J.~Jin, Z.~T. Ke, S.~Luo, M.~Wang, Optimal estimation of the number of network
  communities, Journal of the American Statistical Association 118~(543) (2023)
  2101--2116.

\bibitem{yildirimoglu2018identification}
M.~Yildirimoglu, J.~Kim, Identification of communities in urban mobility
  networks using multi-layer graphs of network traffic, Transportation Research
  Part C: Emerging Technologies 89 (2018) 254--267.

\bibitem{fortunato2010community}
S.~Fortunato, Community detection in graphs, Physics Reports 486~(3-5) (2010)
  75--174.

\bibitem{papadopoulos2012community}
S.~Papadopoulos, Y.~Kompatsiaris, A.~Vakali, P.~Spyridonos, Community detection
  in social media: Performance and application considerations, Data Mining and
  Knowledge Discovery 24 (2012) 515--554.

\bibitem{gasparetti2021community}
F.~Gasparetti, G.~Sansonetti, A.~Micarelli, Community detection in social
  recommender systems: a survey, Applied Intelligence 51~(6) (2021) 3975--3995.

\bibitem{huang2021survey}
X.~Huang, D.~Chen, T.~Ren, D.~Wang, A survey of community detection methods in
  multilayer networks, Data Mining and Knowledge Discovery 35 (2021) 1--45.

\end{thebibliography}
\end{document}